\setlist[itemize]{itemsep=0pt}
\setlist[enumerate]{itemsep=0pt}
\Crefname{lemma}{Lemma}{Lemmas}
\Crefname{fact}{Fact}{Facts}
\Crefname{theorem}{Theorem}{Theorems}
\Crefname{corollary}{Corollary}{Corollaries}
\Crefname{claim}{Claim}{Claims}
\Crefname{example}{Example}{Examples}
\Crefname{problem}{Problem}{Problems}
\Crefname{definition}{Definition}{Definitions}
\Crefname{notation}{Notation}{Notations}
\Crefname{assumption}{Assumption}{Assumptions}
\Crefname{subsection}{Subsection}{Subsections}
\Crefname{section}{Section}{Sections}
\newtheorem{theorem}{Theorem}[section]
\newtheorem*{theorem*}{Theorem}
\newtheorem{proposition}[theorem]{Proposition}
\newtheorem*{proposition*}{Proposition}
\newtheorem{lemma}[theorem]{Lemma}
\newtheorem*{lemma*}{Lemma}
\newtheorem{corollary}[theorem]{Corollary}
\newtheorem*{corollary*}{Corollary}
\newtheorem*{conjecture*}{Conjecture}
\newtheorem{fact}[theorem]{Fact}
\newtheorem*{fact*}{Fact}
\newtheorem*{exercise*}{Exercise}
\newtheorem*{hypothesis*}{Hypothesis}
\theoremstyle{definition}
\newtheorem{definition}[theorem]{Definition}
\newtheorem{exercise-easy}[theorem]{Exercise}
\newtheorem{exercise-med}[theorem]{Exercise}
\newtheorem{exercise-hard}[theorem]{Exercise$^\star$}
\newtheorem*{claim*}{Claim}
\newtheorem{remark}[theorem]{Remark}
\newtheorem*{remark*}{Remark}
\newtheorem*{observation*}{Observation}
\DeclareMathOperator*{\sd}{\mathrm d}
\DeclareMathOperator*{\E}{\mathbb E}
\renewcommand{\Pr}{\operatorname*{\mathbf{Pr}}}
\newcommand{\eps}{\varepsilon}
\newcommand{\abs}[1]{\left| #1 \right|}
\newcommand{\pbra}[1]{\left( #1 \right)}
\newcommand{\sbra}[1]{\left[ #1 \right]}
\newcommand{\cbra}[1]{\left\{ #1 \right\}}
\renewcommand{\mid}{\,\middle\vert\,}
\newcommand{\bin}{\{0,1\}}
\newcommand{\binTF}{\{\mathsf{True},\mathsf{False}\}}
\newcommand{\True}{\mathsf{True}}
\newcommand{\False}{\mathsf{False}}
\newcommand{\Qmark}{\star}
\newcommand{\dist}{\mathsf{dist}}
\newcommand{\poly}{\mathsf{poly}}
\newcommand{\vbl}{\mathsf{vbl}}
\newcommand{\indicator}{\mathsf{1}}
\newcommand{\Final}{\mathsf{Final}}
\newcommand{\sigmain}{{\sigma_\mathsf{in}}}
\newcommand{\sigmaout}{\sigma_\mathsf{out}}
\newcommand{\Lin}{\mathsf{Lin}}
\newcommand{\Token}{\mathsf{Token}}
\newcommand{\ext}{\mathsf{ext}}
\newcommand{\UpdTime}{\mathsf{UpdTime}}
\newcommand{\leafs}{\mathsf{leafs}}
\newcommand{\tpath}{\mathsf{path}}
\newcommand{\Trans}{\mathsf{Trans}}
\newcommand{\childs}{\mathsf{childs}}
\newcommand{\KL}{\mathsf{KL}}
\newcommand{\Glauber}{P_\textsf{Glauber}}
\newcommand{\BChains}{P_\textsf{BChains}}
\newcommand{\Naturale}{\mathbf{e}}
\newcommand{\pibm}{{\bm{\pi}}}
\newcommand{\Rbb}{\mathbb{R}}
\newcommand{\Zbb}{\mathbb{Z}}
\newcommand{\Ccal}{\mathcal{C}}
\newcommand{\Dcal}{\mathcal{D}}
\newcommand{\Ecal}{\mathcal{E}}
\newcommand{\Mcal}{\mathcal{M}}
\newcommand{\Pcal}{\mathcal{P}}
\newcommand{\Scal}{\mathcal{S}}
\newcommand{\Tcal}{\mathcal{T}}
\newcommand{\Ucal}{\mathcal{U}}
\newcommand{\Xcal}{\mathcal{X}}
\newcommand{\Fsf}{\mathsf{F}}
\newcommand{\Psf}{\mathsf{P}}
\renewcommand{\tilde}{\widetilde}
\renewcommand{\hat}{\widehat}
\title{Perfect Sampling for (Atomic) Lov\'asz Local Lemma}
\author{
Kun He\thanks{Institute of Computing Technology, Chinese Academy of Sciences. Email: \texttt{hekun.threebody@foxmail.com}}
\and
Xiaoming Sun\thanks{Institute of Computing Technology, Chinese Academy of Sciences. Email: \texttt{sunxiaoming@ict.ac.cn}}
\and
Kewen Wu\thanks{Department of EECS, University of California at Berkeley. Email: \texttt{shlw\_kevin@hotmail.com}}
}
\date{}
\begin{document}
\maketitle

\begin{abstract}
We give a Markov chain based perfect sampler for uniform sampling solutions of constraint satisfaction problems (CSP).
Under some mild Lov\'asz local lemma conditions where each constraint of the CSP has a small number of forbidden local configurations, our algorithm is accurate and efficient: it outputs a \emph{perfect} uniform random solution and its expected running time is \emph{quasilinear} in the number of variables.
Prior to our work, perfect samplers are only shown to exist for CSPs under much more restrictive conditions (Guo, Jerrum, and Liu, JACM'19).

Our algorithm has two components: 
\begin{itemize}
\item A simple perfect sampling algorithm using \emph{bounding chains} (Huber, STOC'98; Haggstrom and Nelander, Scandinavian Journal of Statistics'99). 

This sampler is efficient if each variable domain is small. 
\item A simple but powerful \emph{state tensorization} trick to reduce large domains to smaller ones. 

This trick is a generalization of \emph{state compression} (Feng, He, and Yin, STOC'21).
\end{itemize}
The crux of our analysis is a simple \emph{information percolation} argument which allows us to achieve bounds even beyond current best approximate samplers (Jain, Pham, and Vuong, ArXiv'21).

Previous related works either use intricate algorithms or need sophisticated analysis or even both. Thus we view the simplicity of \emph{both} our algorithm and analysis as a strength of our work.
\end{abstract}

\section{Introduction}\label{sec:introduction}
The \emph{constraint satisfaction problem} (CSP) is one of the most important topics in computer science (both theoretically and practically). A CSP is a collection of constraints defined on a set of variables, and a solution to the CSP is an assignment of variables that satisfies all the constraints.
For any given CSP, it is natural to ask the following questions:
\begin{itemize}
\item \textsc{Decision.} Can we decide efficiently if the CSP \emph{has} a solution?
\item \textsc{Search.} If the CSP is satisfiable, can we \emph{find} a solution efficiently?
\item \textsc{Sampling.} If we can efficiently find a solution, can we efficiently \emph{sample} a uniform random solution from the whole solution space?
\end{itemize}
These questions, each deepening one above, progressively enhance our understanding on the computational complexity of CSPs.
One can easily imagine the hardness of fully resolving these broad questions.
Thus, not surprisingly, despite enormous results centered around them, we only have partial answers. 
Here we mention those related to our work.

\paragraph*{The Decision Problem.} 
A fundamental criterion for the existence of solutions is given by the famous Lov{\'a}sz local lemma (LLL)~\cite{EL75}.
Interpreting the space of all possible assignments as a probability space and the violation of each constraint as a bad event, the local lemma provides a sufficient condition for the existence of an assignment to avoid all the bad events.
This sufficient condition, commonly referred to as the \emph{local lemma regime}, is characterized in terms of the violation probability of each constraint and the dependency relation among the constraints.

\paragraph*{The Search Problem.} 
The \emph{algorithmic LLL} (also called \emph{constructive LLL}) provides efficient algorithms to find a solution in the local lemma regime.
Plenty of works have been devoted to this topic \cite{beck1991algorithmic,alon1991parallel,molloy1998further,CS00, moser2009constructive,moser2010constructive,Kolipaka2011MoserAT,haeupler2011new,HS17,HS19}.
The Moser-Tardos algorithm \cite{moser2010constructive} is a milestone along this line: it finds a solution efficiently up to a sharp condition known as the Shearer’s bound~\cite{shearer85,Kolipaka2011MoserAT}.

\paragraph*{The Sampling Problem.} 
The \emph{sampling LLL} asks for efficient algorithms to sample a uniform random solution from all solutions in the local lemma regime. 
It serves as a standard toolkit for the probabilistic inference problem in graphical models~\cite{Moi19}, and has many applications in the theory of computing, such as all-terminal network reliability~\cite{GJL19,GJ19a,guo2018tight}. With a better understanding of the decision and search problem, much attention has been devoted to the sampling LLL in recent years~\cite{GJL19,Moi19,guo2019counting,galanis2019counting,FGYZ20,feng2020sampling,Vishesh20towards,Vishesh21sampling}. Since this is also our focus, we elaborate it in the next subsection.

\subsection{Sampling Lov{\'a}sz Local Lemma}\label{sec:sampling_lll}

To state the long list of results on sampling LLL, we need the following notations.
Given a CSP, let $n$ be the number of variables, 
$k$ be the maximum number of variables in each constraint,
$Q$ be the maximum number of values that each variable can take,
$\Delta$ be the maximum constraint degree, 
$p$ be the maximum violation probability of a constraint, 
and $N$ be the maximum number of forbidden local configurations for each constraint.
A constraint is called \emph{atomic} if it only has one forbidden local configuration.
A CSP is called atomic if all of its constraints are atomic, i.e., $N=1$.

For example in Boolean $k$-CNF formula, each constraint will depend on exactly $k$ Boolean variables and thus $Q=2,p=2^{-k},N=1$.
For hypergraph coloring, each vertex is allowed to choose a color from $\cbra{1,2,\ldots,Q}$ and each edge contains exactly $k$ vertices. The constraints require no edge is monochromatic. Therefore $N=Q$, $p=Q^{1-k}$, and $\Delta$ equals the maximum edge degree in the hypergraph.
For general CSPs, each constraint may depend on different number of variables and each variable may have different domain size.

The sampling LLL turns out to be computationally more challenging than the algorithmic LLL.
For example, for $k$-CNF the Moser-Tardos algorithm can efficiently find a solution if $\Delta\lesssim2^k$ where $\gtrsim$ informally hides lower order terms.
However, it is intractable to approximately sample a uniform solution if $\Delta\gtrsim2^{k/2}$, even when the formula is monotone~\cite{BGGGS19}.

On the algorithmic side, most efforts are on the \emph{approximate} sampling, where the output distribution is close to uniform under \emph{total variation distance}.
The breakthrough of Moitra~\cite{Moi19} shows $k$-CNF solutions can be sampled in time $n^{\poly(k\Delta)}$ if $\Delta\lesssim2^{k/60}$, where they novelly use the algorithmic LLL to \emph{mark/unmark} variables and then convert the problem into solving linear programs of size $n^{\poly(k\Delta)}$. We remark that this algorithm is deterministic if we only need a multiplicative approximation of the number of solutions, which is another topic closely related with approximate sampling \cite{JVV86}.
Moitra's method has been successfully applied to hypergraph colorings~\cite{guo2019counting} and random CNF formulas~\cite{galanis2019counting}\footnote{\cite{galanis2019counting} only provides an approximate counting algorithm for random CNF formulas. But with a close inspection, their algorithm can be turned to do approximate sampling. This follows from standard reductions and noticing fixing \emph{bad variables} (defined in \cite{galanis2019counting}) does not influence their (deterministic) algorithm.}. 

Recently, a much faster algorithm for sampling solutions of $k$-CNF is given in~\cite{FGYZ20}, which implements a Markov chain on the assignments of the marked variables chosen via Moitra's method.
The resulting sampling algorithm has a near linear running time $\widetilde{O}\pbra{n^{1.001}}$ together with an improved regime $\Delta\lesssim2^{k/20}$, where $\widetilde{O}$ hides $\poly(N,k,\Delta,Q,\log(n))$.
We also remark that this algorithm is inherently randomized even if we move to approximate counting.  

This \emph{nonadapive} mark/unmark approach seems to only work for the Boolean variables, where each variable has two possible values.
To extend the approach to general CSPs, Feng, He, and Yin~\cite{feng2020sampling} introduced \emph{states compression}, which considerably expands the applicability of the method used in~\cite{FGYZ20}. 
Their sampling algorithm runs in time $\widetilde{O}\pbra{n^{1.001}}$ if $p^{1/350}\cdot\Delta\lesssim 1/N$. 
This algorithm is limited to the special cases of CSPs where each constraint is violated by a small number of local configurations (i.e., $N$ is small).

Recently, Jain, Pham, and Vuong~\cite{Vishesh20towards}, shaving the dependency on $N$, provides a sampling algorithm with running time $n^{\poly(\Delta,k,\log(Q))}$ when $p^{1/7}\cdot\Delta\lesssim 1$. They revisit Moitra's mark/unmark framework and use it in an \emph{adaptive} way. This is the first polynomial time algorithm (assuming $\Delta,k,Q=O(1)$) for general CSPs in local lemma regime.
By a highly sophisticated information percolation argument, they~\cite{Vishesh21sampling} also prove that the sampling algorithm in~\cite{feng2020sampling} runs in time $\widetilde{O}\pbra{n^{1.001}}$ if $p^{0.142}\cdot\Delta\lesssim 1/N$.

\begin{table}[ht]
\centering
\begin{tabular}{| l | l | l | l | l |}
		\hline
		Method & $k$-CNF & Hypergraph Coloring & General CSPs & Time \\ 
		\hline
		\cite{Moi19} & $\Delta\lesssim2^{k/60}$ &  &  & $n^{\poly(k\Delta)}$\\
		\cite{guo2019counting} &  & $\Delta\lesssim Q^{k/16}$ &  & $n^{\poly(k\Delta\log(Q))}$\\
		\cite{FGYZ20} & $\Delta\lesssim2^{k/20}$ &  &  & $\widetilde{O}\pbra{n^{1.001}}$ \\
		\cite{feng2020sampling} & $\Delta\lesssim2^{k/13}$ & $\Delta\lesssim Q^{k/9}$ & $p^{1/350}\cdot\Delta\lesssim1/N$ & $\widetilde{O}\pbra{n^{1.001}}$\\
		\cite{Vishesh20towards} & $\Delta\lesssim2^{k/7}$ &  $\Delta\lesssim Q^{k/7}$ & $p^{1/7}\cdot\Delta\lesssim1$ &$n^{\poly(k\Delta\log(Q))}$\\
		\cite{Vishesh21sampling} & $\Delta\lesssim2^{0.175k}$ &  $\Delta\lesssim Q^{k/3}$ & $p^{0.142}\cdot\Delta\lesssim1/N$ & $\widetilde{O}\pbra{n^{1.001}}$\\
		\hline
\end{tabular}
\caption{Approximate sampling algorithms in the local lemma regime.}\label{tab:regime}
\end{table}

\Cref{tab:regime} summarizes the efficient regimes of these algorithms.
We emphasize that all these sampling results, via standard reductions~\cite{jerrum1986random,vstefankovivc2009adaptive}, also imply efficient algorithms for (random) approximate counting, which estimates the number of solutions within some multiplicative error.
In addition, for algorithms using Moitra's linear programming approach~\cite{Moi19,guo2019counting,galanis2019counting,Vishesh20towards}, their approximate counting counterparts are deterministic.
For the approaches using Markov chains~\cite{FGYZ20,feng2020sampling,Vishesh21sampling}, the running time of their approximate counting counterparts is $\widetilde{O}(m\cdot T)$, where $T$ is the running time of the corresponding approximate sampling algorithm and $m$ is the number of constraints.

Though much progress has been made for the \emph{approximate} sampling, much less are known for the \emph{perfect} sampling.
As far as we know, the only result on the perfect sampling in the local lemma regime is due to Guo, Jerrum, and Liu~\cite{GJL19}, which provides a perfect sampler for the \emph{extremal} CSPs where any two constraints sharing common variables cannot be violated simultaneously by the same assignment. 
Though there are known reductions from approximate sampling/counting to perfect sampling~\cite{jerrum1986random}, it is unclear how to adopt them here considering the local lemma conditions.

Meanwhile, perfect sampling is an important topic in theoretical computer science. 
Plenty of works have been devoted to the study of perfect samplers \cite{jerrum1986random,haggstrom1999exact,huber1998exact,huber2004perfect,Bhandari020Improve,Vishesh2020Perfectly,fill1997interruptible,fill2000extension,abraham2012fully,FVY19}. 
Apart from its mathematical interest, one advantage of perfect sampler over approximate sampler is that the quality of the output is never in question. In contrast, some solution may never be outputted by an approximate sampler. Consider the following simple example: Let $\Dcal_1$ and $\Dcal_2$ be two distributions where $\Dcal_1$ is uniform over $\cbra{1,2,\ldots,n}$ and $\Dcal_2$ is uniform over $\cbra{\sqrt n+1,\sqrt n+2,\ldots,n}$. 
Then the total variation distance between $\Dcal_1$ and $\Dcal_2$ is only $1/\sqrt n=o(1)$. Thus $\Dcal_2$ is considered a good approximation of $\Dcal_1$ while the last $\sqrt n$ items are never sampled. 
This is indeed the case for \cite{FGYZ20,feng2020sampling,Vishesh21sampling}, and is undesirable if the CSP is used for addressing social problems and the missing solutions are contributed by the minority or the underrepresented. 
Besides the potential drawback in social fairness, this also leads to the following reasonable worry: is it possible that some solution is inherently harder to find than others. Fortunately our work shows this is not the case.

Perfect sampling is also advantageous for practical purposes and heuristic algorithms.
To perform a sampling task, Markov chain is arguably the most common approach. By the convergence theorem for Markov chains \cite{levin2017markov}, it is usually easy to show the chain mixes to the desired distribution almost surely.
However to provide a good bound for the mixing time is in general a difficult task.
On the other hand, if the mixing time is unknown or poorly analyzed, it is not sure when to stop the chain so the output distribution is close enough to the desired one.
However given the Markov chain, there are known techniques, like \emph{coupling from the past} \cite{PW96}, to convert it into a perfect sampler, which always gives desired distribution when it stops even if we may not know any bounds on its expected running time.

\subsection{Our Results}\label{sec:our_results}

In this paper, we provide perfect samplers for solutions of atomic CSPs in the local lemma regime. Though in previous paragraphs we only focus on sampling perfect uniform solution, our algorithm in fact works for general underlying distribution.

Let $\Phi$ be an atomic CSP with variable set $V$ and $|V|=n$ where each $v\in V$ is endowed with a distribution $\Dcal_v$ supported on finite domain $\Omega_v$.
\begin{itemize}
\item Let $p$ be the maximum violation probability of a constraint under distribution $\prod_{v\in V}\Dcal_v$.
\item Let $\Delta$ be the maximum constraint degree.
\item Let $Q$ be the maximum size of a variable domain $\Omega_v$.
\item Let $k$ be the maximum number of variables that a constraint depends on.
\end{itemize}
Let $\mu$ be the distribution of solutions of $\Phi$ under $\prod_{v\in V}\Dcal_v$, i.e.,
$$
\mu(\sigma)=\Pr_{\sigma'\sim\prod_{v\in V}\Dcal_v}\sbra{\sigma'=\sigma\mid\sigma'\text{ is a solution of }\Phi}
\quad\text{for each }\sigma\in\prod_{v\in V}\Omega_v.
$$

The original Lov{\'a}sz local lemma~\cite{EL75} states if $p\cdot\Delta\lesssim1$ then $\Phi$ has a solution, i.e., $\mu$ is well-defined. Then the algorithmic LLL \cite{moser2010constructive} shows one can efficiently find a solution under the same condition.
Our main theorem shows one can efficiently sample a solution distributed as $\mu$ under a similar condition.

\begin{theorem}[\Cref{thm:generalcsp_arbitrary}, Informal]\label{thm:main-informal}
If $p^\gamma\cdot\Delta\lesssim1/c$ where
$$
\gamma=\frac{3+\ln(c+1)-\sqrt{\ln^2(c+1)+6\ln(c+1)}}9
\quad\text{and}\quad
c=\max\cbra{2,\max_{v\in V}\max_{q,q'\in\Omega_v}\frac{\Dcal_v(q)}{\Dcal_v(q')}},
$$
then our algorithm runs in expected time $\poly(k,Q,\Delta)\cdot n\log(n)$ and outputs a random solution distributed perfectly as $\mu$.
\end{theorem}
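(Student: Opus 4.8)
The plan is to combine the three ingredients advertised in the abstract. First I would set up the \emph{bounding chain} perfect sampler~\cite{huber1998exact,haggstrom1999exact}. Run the systematic-scan Glauber dynamics for $\mu$ --- resample the variables in a fixed cyclic order, each from its conditional marginal given the rest --- and alongside the true chain maintain, for every variable $v$, a set $S_v\subseteq\Omega_v$ of still-possible values that provably contains the true value. When $v$ is updated, inspect the constraints touching $v$: if under the current bounding sets some incident constraint is not yet forced to be satisfied, $S_v$ may have to retain several values; otherwise $S_v$ collapses to a single value drawn from $\Dcal_v$. Once every $S_v$ is a singleton the chain has \emph{coalesced}, and by the standard bounding-chain argument the current assignment is exactly $\mu$-distributed. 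I would check the consistency conditions that make this construction valid and bound the per-step cost by $\poly(k,\Delta,Q)$.

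Second, to handle large $Q$ I would apply \emph{state tensorization} before running the chain, generalizing the states compression of~\cite{feng2020sampling}. Each variable $v$ is replaced by a family of ``token'' variables that jointly encode the value of $v$ through a tree/product decomposition of $(\Omega_v,\Dcal_v)$ into pieces of small (ideally size-two) domain, chosen so that each token's marginal is as balanced as the imbalance $c$ permits and so that every forbidden configuration of $\Phi$ lifts to a forbidden configuration on the tokens. I would isolate this as a lemma quantifying the resulting map from $(Q,c,p,\Delta)$ and the decomposition parameter $\theta$ to the new domain sizes, the new (coarsened) violation probability $p'$, and the new degree.

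Third comes the \emph{information percolation} analysis of the coalescence time on the tensorized instance. The key structural fact: if a token variable $v$ has not coalesced by time $T$, there must exist a ``witness'' in the space--time dependency graph --- a connected subtree rooted at $(v,T)$ and growing backward in time --- each of whose steps is either a resampling that hit the small but positive per-token \emph{discrepancy} probability (controlled by $c$ and the token domain size) or an incident constraint that was not yet determined to be satisfiable (costing a factor $\lesssim p'$). A union bound counting at most $(\poly(k)\,\Delta)^{\ell}$ witnesses of size $\ell$, each of weight at most $\pbra{p' + \text{discrepancy}}^{\ell}$, makes the non-coalescence probability decay geometrically once $p'\cdot\poly(k)\cdot\Delta$ drops below an absolute constant; summing over the $\poly(Q)\cdot n$ tokens and over $T$ then yields expected coalescence --- hence expected running time --- $\poly(k,Q,\Delta)\cdot n\log n$. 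It remains to optimize $\theta$: more aggressive compression shrinks both the token domains and the per-token discrepancy (speeding coalescence) but weakens the effective local-lemma condition $p'\cdot\poly(k)\cdot\Delta\lesssim1$, and balancing these competing constraints --- the second of which brings in a $\ln(c+1)$ from the balancedness of the tokens --- is an elementary calculus problem whose optimality condition works out to $(3\gamma-1)^2 = 2\gamma\ln(c+1)$; its relevant root is exactly the stated $\gamma = \frac{3+\ln(c+1)-\sqrt{\ln^2(c+1)+6\ln(c+1)}}{9}$.

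The step I expect to be the main obstacle is making this percolation argument tight enough to reach the claimed exponent. Three points need care: (i) defining the space--time witness so that non-coalescence genuinely \emph{forces} one --- delicate because it must track how the set updates interact with the scan order and with constraints that are only partially determined; (ii) getting the per-token discrepancy bound sharp in $c$, i.e.\ showing the decomposition can always be arranged so each token is balanced up to a ratio controlled by $c$, so that $c$ enters only through $\ln(c+1)$ rather than something cruder; and (iii) absorbing the combinatorial $\poly(k)\cdot\Delta$ witness-counting factor without degrading the exponent --- it is precisely the interplay of this factor with the tensorization trade-off that pins down the particular shape of $\gamma$.
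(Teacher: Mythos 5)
Your overall architecture matches the paper's (bounding chains with coupling from the past over a systematic scan, state tensorization down to near-binary domains, and an information-percolation witness argument for coalescence), and you even land on the correct optimality equation $(3\gamma-1)^2=2\gamma\ln(c+1)$. But there is a genuine gap: you run the (tensorized) Glauber dynamics and its bounding chain on \emph{all} variables, with no marking. The paper's algorithm first constructs a marking $\Mcal\subseteq V$ so that every constraint keeps a prescribed fraction of its log-probability mass on \emph{unmarked} variables, runs the chain only on $\Mcal$, and extends to $V\setminus\Mcal$ at the end by rejection sampling on small components. The marking is not cosmetic: without it, the conditional marginal of a single variable given all the others has no ``local uniformity'' (it can be a point mass), so the per-update probability that your bounding set $S_v$ fails to collapse is not small and is not controlled by $c$ and the token domain size alone, the percolation witnesses do not die out, and in fact the single-site dynamics on the full solution space need not even be irreducible (the connectivity issue for $k$-CNF that motivated marked variables in the first place). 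Local uniformity -- the fact that, conditioned on any marked assignment, the residual CSP on unmarked variables still satisfies an LLL condition $\Naturale\alpha\Delta\le1$ -- is exactly what lets the paper's \textsf{SafeSampling} step output $\Qmark$ only with tiny probability and what makes rejection sampling on the connected components efficient.

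This also undermines where you claim the exponent comes from. Tensorization in the paper is essentially canonical (a balanced Huffman-type binary tree) and it preserves both $p$ and $\Delta$ exactly, so there is no tunable compression parameter $\theta$ trading coalescence speed against the local-lemma condition; that knob simply does not exist at the tensorization stage. The actual trade-off is over the marking density $\eta$ and a concentration slack $\tau$: one needs $p^{1-\eta-\tau}\Delta\lesssim1$ for the unmarked (rejection-sampling/local-uniformity) side, $\Delta^2 p^{\eta-\tau}\lesssim1$ for the percolation side, and the marking itself is shown to exist by Hoeffding plus the Moser--Tardos constructive LLL, which costs $p^{2\tau^2/\ln(c+1)}\Delta\lesssim1$ -- the $\ln(c+1)$ enters through the range of each binary token's log-probability in that concentration bound. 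Equalizing $1-\eta-\tau$, $(\eta-\tau)/2$, and $2\tau^2/\ln(c+1)$ is what yields your quadratic and the stated $\gamma$. So to repair the proposal you must add the marking step (random marking with density $\eta$, per-constraint balance events avoided via algorithmic LLL), restrict the bounding chain to marked variables, and add the final extension step; the percolation and counting machinery you describe then goes through essentially as you envision.
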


We remark that for the uniform case (i.e., $\Dcal_v$ is the uniform distribution), we have $c=2$ and $\gamma>0.145$, which already beats $1/7$ from \cite{Vishesh20towards} and $0.142$ from \cite{Vishesh21sampling}.

\Cref{thm:main-informal} is proved by a black-box reduction, using our state tensorization trick, to the perfect sampling algorithm on small variable domains. It is possible to get improved bounds for specific underlying distributions by a finer analysis. 
We take the uniform distribution as a starting example and improve $0.145$ to $0.175$ for general atomic CSPs.

\begin{theorem}[\Cref{cor:generalcsp_uniform}, Informal]\label{thm:uniform-informal}
If $p^{0.175}\cdot\Delta\lesssim1$ and each $\Dcal_v$ is the uniform distribution, then our algorithm runs in expected time $\poly(k,Q,\Delta)\cdot n\log(n)$ and outputs a perfect uniform random solution.
\end{theorem}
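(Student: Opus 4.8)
The plan is to derive \Cref{thm:uniform-informal} as a corollary of \Cref{thm:main-informal} together with a refined analysis of the bounding-chain sampler in the small-domain regime, specializing everything to the uniform case $\Dcal_v=\unif(\Omega_v)$ (so $c=2$). First I would recall the two-layer structure promised in the introduction: the state tensorization trick reduces a general atomic CSP with domain size $Q$ to a collection of CSPs on small domains (essentially binary, after iterating the tensorization), preserving the solution distribution $\mu$; then the bounding-chain perfect sampler of Huber / H\"aggstr\"om--Nelander is run on the small-domain instance. The running time $\poly(k,Q,\Delta)\cdot n\log n$ comes from: (i) each bounding-chain step touches $O(1)$ variables and their $O(\Delta)$ neighboring constraints, each of arity $\le k$, and (ii) the coupling time of the bounding chain is $O(n\log n)$ in expectation, which is exactly what the information percolation argument must establish.

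The heart of the improvement over the generic exponent $\gamma>0.145$ to the exponent $0.175$ is that the generic $\gamma$ in \Cref{thm:main-informal} is the worst-case exponent that survives the black-box tensorization reduction for \emph{arbitrary} $\Dcal_v$; for the uniform distribution one can tensorize more efficiently (splitting a domain of size $Q$ into two halves keeps the conditional marginals uniform and balanced, so $c$ stays exactly $2$ at every level and no loss is incurred in the ratio), and one can also directly track the violation probability $p$ of the \emph{tensorized} constraints rather than bounding it crudely. So the steps are: (1) set up the tensorization for uniform $\Dcal_v$, verifying that after the reduction we are sampling from a binary-domain atomic CSP whose constraint-violation probability $p'$ and degree $\Delta'$ satisfy $(p')^{1/3}\Delta'\lesssim 1$ whenever $p^{0.175}\Delta\lesssim1$ — here the $0.175\approx 0.524/3$ bookkeeping comes from how arity and violation probability transform under tensorization; (2) invoke the bounding-chain perfect sampler on this instance; (3) run the information percolation analysis to show the bounding chain couples in expected $O(n\log n)$ steps under the condition $(p')^{1/3}\Delta'\lesssim1$; (4) translate back through the (distribution-preserving) reduction and account for the polynomial overhead.

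The main obstacle I expect is step (3): proving the $O(n\log n)$ coupling time for the bounding chain under a condition as weak as $(p')^{1/3}\Delta'\lesssim1$ (equivalently the $p^{0.175}\Delta\lesssim1$ regime). The information percolation argument needs to show that the event "variable $v$ is still not coupled at time $t$" forces the existence of a connected "witness structure" of constraints in the dependency graph — a sequence of recent failed-constraint updates emanating from $v$ — whose probability decays geometrically in its size. One must (a) define the right notion of a percolation path in the space-time dependency graph of bounding-chain updates, (b) show that a long non-coupling history implies a long such path, and (c) bound the number of such paths of length $\ell$ by $(C\Delta)^\ell$ while each contributes probability at most $(p')^{\ell/3}\cdot(\text{something})$, so that $C\Delta\cdot(p')^{1/3}<1$ closes the union bound. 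The delicacy is getting the exponent $1/3$ (rather than, say, $1/2$) out of the per-constraint contribution — this is presumably where the paper's "simple information percolation" refinement, tracking both the bounding-chain discrepancy set and the genuinely-failed constraints, buys the sharper constant that yields $0.175$ instead of $1/7$.

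Finally I would assemble: under $p^{0.175}\Delta\lesssim1$ and uniform $\Dcal_v$, the tensorized binary instance lies in the regime where the bounding chain couples in $O(n\log n)$ expected steps, each step costs $\poly(k,Q,\Delta)$, and the output — pulled back through the reduction — is an exactly uniform random solution of $\Phi$; the constant hidden in $\lesssim$ and the $\poly$ factor are whatever the percolation union bound and the per-step simulation cost dictate. I do not expect steps (1), (2), (4) to present real difficulties beyond careful bookkeeping of how $(p,\Delta,Q,k)$ evolve under tensorization; the entire weight of the theorem rests on the percolation estimate of step (3).
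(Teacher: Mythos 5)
There is a genuine gap, and it sits exactly where you put the weight of the proof. First, your accounting of the reduction is wrong: state tensorization preserves both the violation probability and the constraint degree exactly ($p(\Phi^\otimes)=p(\Phi)$ and $\Delta(\Phi^\otimes)=\Delta(\Phi)$, \Cref{fct:tensorized_atomic_constraint_satisfaction_problem}), so there is no transformed pair $(p',\Delta')$ and no ``$0.175\approx0.524/3$'' bookkeeping coming from how arity and violation probability evolve under tensorization. Second, and more importantly, your outline omits the marking $\Mcal$ entirely, and the marking is where the exponent $0.175$ actually arises. The sampler's hypotheses are stated in terms of the marking-dependent quantities: the unmarked portion of each constraint must satisfy $\Naturale\alpha\Delta\le1$ (local uniformity via the local lemma, costing exponent $1-\eta-\tau_1$), the marked portion must satisfy $\Delta^2\lambda\lesssim1$ (this is what the component-size and information-percolation arguments consume, costing exponent $(\eta-\tau_2)/2$), and the existence of such a marking is itself proved by running Moser--Tardos on ``bad marking'' events whose probability is bounded by a Chernoff/KL exponent $\KL(\eta\pm\tau\|\eta)$. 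Optimizing $\min\cbra{1-\eta-\tau_1,\ (\eta-\tau_2)/2,\ \KL(\eta+\tau_1\|\eta),\ \KL(\eta-\tau_2\|\eta)}$ gives $0.175$ (\Cref{lem:generalcsp_uniform_simple}); the gain over the black-box $0.145$/$0.171$ comes from replacing Hoeffding by the tight KL bound and from the ``binary is the worst case'' randomized joint construction of tensorization and marking for non-power-of-two uniform domains (\Cref{prop:binary_is_the_worst}), which is a substantive step, not bookkeeping.

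Consequently your step (3) asks for something that is neither proved nor needed: the coalescence analysis is \emph{not} re-run under a condition like $p^{1/3}\Delta\lesssim1$ on the full constraint probability, and no ``per-constraint exponent $1/3$'' is extracted from percolation. The percolation union bound only ever sees the marked part of each constraint (through $\lambda$, i.e.\ roughly $p^{\eta-\tau_2}$ against $\Delta^2$), and pushing it to exponent $1/3$ of the full probability would place you in a regime ($\Delta\gtrsim2^{k/3}$ for $k$-CNF, after discounting the marking) beyond what this paper or any prior work achieves; if the theorem really required that estimate, the proof would not close. To repair the proposal you should reintroduce the marked/unmarked split, derive the three competing exponent constraints above, and prove the marking exists via \Cref{thm:MT_LLL} with the KL-Chernoff bound, handling general uniform domain sizes by the randomized tree constructions that keep $\E\sbra{X(v,q)}=\eta\log(1/|\Omega_v|)$ with bounded range.
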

We remark that this $0.175$ also matches previous best bound for approximately uniform sampling solutions of $k$-CNF formula \cite{Vishesh21sampling}. In fact, in our analysis binary domains are the worst case for general atomic CSPs: the bound on $k$-CNF formula is the bottleneck for the bound on general atomic CSPs.

Indeed, \Cref{thm:uniform-informal} can be further improved if variable domains are large. We use hypergraph coloring as an illustrating example, the bound of which matches the current best bound of approximate samplers \cite{Vishesh21sampling}.
\begin{theorem}[\Cref{thm:hypergraph_coloring}, Informal]\label{thm:coloring-informal}
Let $H$ be a hypergraph on $n$ vertices where each edge contains exactly $k$ vertices. Let $\Delta$ be the edge degree of $H$.
Assume each vertex can choose a color from $\cbra{1,2,\ldots,Q}$, and a coloring of vertices is \emph{proper} if it does not produce monochromatic edge.

If $\Delta\lesssim Q^{k/3}$, then our algorithm runs in expected time $\poly(k,Q,\Delta)\cdot n\log(n)$ and outputs a perfect uniform random proper coloring of $H$.
\end{theorem}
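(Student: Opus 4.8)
The plan is to derive \Cref{thm:coloring-informal} from the two components already behind \Cref{thm:main-informal} and \Cref{thm:uniform-informal} — the bounding‑chain perfect sampler, which is efficient when each variable domain is small, and the state‑tensorization reduction — specialized so as to exploit the product structure of a monochromatic‑edge event and thereby sharpen the exponent to $1/3$, matching the approximate sampler of \cite{Vishesh21sampling}. Concretely, I would encode each color in $\cbra{1,\dots,Q}$ in mixed radix as a tuple in $\cbra{1,\dots,q}^{t}$ with $q^{t}=Q$ and split every vertex variable $v$ into $t$ independent sub‑variables $v^{(1)},\dots,v^{(t)}$, each uniform on $\cbra{1,\dots,q}$ (when $Q$ is not a perfect power one allows a different radix in each coordinate, or inserts a bounded rejection at the finest level — a routine but unavoidable technicality). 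Because the encoding is a bijection, a uniform solution of the tensorized instance pushes forward to a uniform proper colouring of $H$; and the structural fact that makes the reduction useful is that ``edge $e$ is monochromatic'' is exactly $\bigcap_{i=1}^{t}\cbra{\text{the vertices of }e\text{ agree on coordinate }i}$, an intersection of $t$ mutually independent events, each of probability $q^{1-k}$, whose product is the original $p=Q^{1-k}$. Thus the tensorized instance has domain size only $q$ (so the bounding chain's per‑step cost is $\poly(q,k,\Delta)$), its per‑constraint violation probability is still $p=Q^{1-k}$, and its constraint degree is still $\Delta$.

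With the reduction in place, the remaining task is to bound the expected running time of the bounding‑chain sampler — a systematic scan over the $t\cdot n$ sub‑variables — run on the tensorized instance, and here I would follow the information‑percolation analysis of \cite{Vishesh21sampling} adapted to bounding chains. The two coloring‑specific inputs are: (i) \emph{local uniformity} — a resampled sub‑variable hits any given value with probability within a constant factor of $1/q$, independently of the rest of the configuration; and (ii) \emph{smallness with separation} — a disagreement between the two bounding‑chain trajectories crosses a constraint only if every one of its $t$ coordinate‑events is (currently) monochromatic, and moreover the $Q$ forbidden configurations of an edge are pairwise maximally apart (any two differ at every vertex), so no union bound over the $N=Q$ forbidden configurations is needed and the effective small parameter seen along a percolation path is the probability of one specific monochromatic assignment, $Q^{-k}$, rather than $p$. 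Feeding (i)–(ii) into the percolation counting — decomposing each percolation path into ``free'' and ``frozen'' segments, paying $1/q$ per free resample and a power of $Q^{-k}$ per frozen run, and optimizing the threshold between short and long frozen runs — yields that the expected number of coupling failures touching a fixed variable is finite once $\Delta\cdot Q^{-k/3}\lesssim 1$ up to lower‑order $\poly(q,k)$ factors, i.e. $\Delta\lesssim Q^{k/3}$, which is precisely the hypothesis of \Cref{thm:coloring-informal}. The reduction to $O(\log n)$‑size components, the correctness of the bounding chain, and the final $n\log n$ factor are all inherited from the proof of \Cref{thm:main-informal}.

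I expect the exponent $1/3$ to be the main obstacle: as in \cite{Vishesh21sampling} it is not the output of a soft argument but of a tight accounting inside the percolation step, where one must simultaneously pick the radix $q$ (equivalently the number $t$ of coordinates), pick the cut‑off separating short from long frozen runs, and balance the $1/q$ cost of a free resample against the $Q^{-k}$ gain of a long frozen run so that the resulting branching factor drops below $1$ exactly at $\Delta\approx Q^{k/3}$; any slack in these choices degrades the exponent, and it is only because the domain is large that there is room to tune $q$ and gain over the generic $0.175$ of \Cref{thm:uniform-informal}. A secondary difficulty is the mixed‑radix encoding for general $Q$: if the coordinates are only near‑uniform then the sub‑variable priors are only near‑uniform, and one has to verify that both local uniformity (i) and the cross‑coordinate independence used in (ii) survive with the resulting constants, or else absorb the imbalance into a bounded rejection. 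Everything else — convergence and correctness of the bounding chain, the component decomposition, and the bookkeeping for the $n\log n$ factor — I expect to go through exactly as in the proofs of \Cref{thm:main-informal} and \Cref{thm:uniform-informal}.
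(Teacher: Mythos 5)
Your reduction step (split each color into sub-variables via a product/tree encoding, note that the monochromatic event factors across coordinates, push the uniform measure forward) is indeed the paper's state tensorization, but after that your plan diverges from what is actually needed and leaves the decisive step unproven. The paper does \emph{not} redo any information-percolation accounting for coloring: it tensorizes each $([Q],\Ucal)$ with a complete \emph{binary} tree of depth $D=\lceil\log Q\rceil$ and then chooses a \emph{marking} of the sub-variables so that, along each vertex's root-to-leaf path, the marked bits carry probability roughly $Q^{-2/3}$ and the unmarked bits roughly $Q^{-1/3}$ (the cut at level $R=\lfloor\tfrac23\log Q\rfloor$). This yields $\alpha(\Phi^\otimes,\Mcal,C^\otimes)\le(4/Q^{1/3})^k$ and $\lambda(\Phi^\otimes,\Mcal,C^\otimes)\le k^2D^2\cdot4\cdot(8/Q^{2/3})^k$, and then \Cref{thm:binary_domains} (the already-proved bounding-chain theorem for binary domains) is invoked as a black box with $\Delta(\Phi^\otimes)=Q\Delta$; both conditions $\Naturale\alpha\Delta(\Phi^\otimes)\le1$ and $\Delta(\Phi^\otimes)^2\lambda\le1/100$ collapse to $\Delta\le Q^{(1/3-o_{Q,k}(1))k}$. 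The exponent $1/3$ is thus produced entirely by the $2/3$--$1/3$ allocation of each vertex's entropy between marked and unmarked sub-variables, not by tuning a radix $q$ or by a free/frozen path decomposition.

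Your proposal never constructs a marking at all: you run the systematic scan ``over the $t\cdot n$ sub-variables,'' and your input (i) (local uniformity within a constant factor of $1/q$) is precisely what fails without unmarked slack in every constraint — in the paper's framework that slack is what makes $\Naturale\alpha\Delta\le1$ hold, keeps $\beta$ close to $1$, and drives both the rejection-sampling components and the coalescence bound. In place of the marking you propose to adapt the percolation analysis of \cite{Vishesh21sampling} (free/frozen segments, threshold optimization, radix tuning) and you yourself flag that extracting $1/3$ from that accounting is ``the main obstacle'' — i.e., the step that actually delivers the theorem is left as a plan rather than a proof, and it is a much heavier route than what the statement requires given \Cref{thm:atomiccspsampling}. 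Two smaller inaccuracies in the bookkeeping: in the atomic encoding each edge becomes $Q$ atomic constraints, so the constraint degree is $Q\Delta$ (not $\Delta$) and the per-constraint violation probability is $Q^{-k}$ (not $Q^{1-k}$); these only cost $Q^{o(k)}$ factors, but your claim that ``no union bound over the $N=Q$ forbidden configurations is needed'' is, in the paper, simply the statement that atomicity converts $N$ into a degree factor $Q$, not a separate separation argument.
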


Finally we briefly discuss the connection between our result and other topics.
\begin{itemize}
\item \textbf{Non-atomic CSPs.}
For a non-atomic CSP, let $N$ be the maximum number of forbidden local configurations. We can convert it into an atomic CSP by decomposing the non-atomic constraints to atomic ones as in~\cite{feng2020sampling}, which only increases the constraint degree $\Delta$ to at most $\Delta\cdot N$.
Then our perfect sampler is still efficient if $N$ is small.
\item \textbf{Approximate Sampling.}
Our perfect sampler is a Las Vegas algorithm with quasilinear expected running time $T$. It is well-known that terminating the algorithm after $T/\eps$ steps gives an $\eps$-approximate sampler under total variation distance. In particular, the local lemma condition of our approximate sampler is the same as our perfect sampler, which breaks the current best record for atomic CSPs by \cite{Vishesh20towards,Vishesh21sampling}.

We remark that one can obtain a better bound by analyzing the moments of the running time of our perfect sampler. We do not make the effort here since this is not our focus and this may require stronger local lemma conditions.
\item \textbf{Approximate Counting.}
One way to reduce counting to sampling is to start from a CSP with no constraint, then add clauses one by one and use the self-reducibility \cite{JVV86}. 
Another strategy is to use the simulated annealing approach developed in \cite{BezakovaSVV08,StefankovicVV09,FGYZ20}. Both reductions produce efficient randomized approximate counting algorithms. We refer interested readers to their paper for details.

Similarly as in the approximate sampling case, the local lemma condition of our approximate counting algorithm is the same as our perfect sampler, which breaks the current best record for atomic CSPs by \cite{Vishesh20towards,Vishesh21sampling}.
\end{itemize}

\subsection{Proof Overview}\label{sec:proof_overview}

To illustrate the idea, we first focus on sampling a uniform solution of $k$-CNF formula:
\begin{itemize}
\item There are $n$ Boolean variables. Each variable is endowed with the uniform distribution over $\bin$, and appears in at most $d$ constraints.
\item Each constraint is a clause depending on exactly $k$ variables and has exactly one forbidden local assignment.
\end{itemize}
For example, $(x_1\lor x_2\lor\neg x_3)\land(x_1\lor x_5\lor x_7)\land(x_2\lor\neg x_4\lor\neg x_6)$ is a $3$-CNF formula where $n=6,m=3$ and $k=3,d=2$.

Similar as previous works to deal with the connectivity issue of Glauber dynamics \cite{wigderson2019mathematics}, the first step of our algorithm is to mark variables so every clause has a certain amount of marked and unmarked variables.
Let $V$ be the set of variables and $\Mcal\subseteq V$ be the set of marked variables.
Then by the local lemma (\Cref{thm:HSS_LLL}), for any $\sigma\in\bin^\Mcal$ and any $v\in\Mcal$ the following two distributions are close under total variation distance:
\begin{itemize}
\item An unbiased coin in $\bin$.
\item The distribution of $\sigma'(v)$ where $\sigma'\in\bin^V$ is a uniform random solution conditioning on $\sigma'(\Mcal\setminus\cbra{v})=\sigma(\Mcal\setminus\cbra{v})$.
\end{itemize}
We call this \emph{local uniformity}.

To sample a solution \emph{approximately}, previous works~\cite{FGYZ20,feng2020sampling,Vishesh21sampling} simulate an idealized Glauber dynamics $\Glauber$ (\Cref{alg:systematicscan}) on the assignments of the marked variables as follows:
\begin{itemize}
\item[(A)] Initialize $\sigma(v)\sim\bin$ uniformly and independently for each $v\in\Mcal$.
\item[(B)] Going forward from time $0$ to $T\to+\infty$, let $v_{t}$ be the variable selected at time $t\ge0$.
Iteratively find all clauses that are not yet satisfied by $\sigma(\Mcal\setminus\cbra{v_t})$ and are connected to $v_{t}$ (\Cref{alg:component}). Let $\Phi'$ be this sub-$k$-CNF.

Then update $\sigma(v_{t})\gets\sigma'(v_t)$, where $\sigma'\in\bin^V$ is a uniform random solution of $\Phi'$ conditioning on $\sigma'(\Mcal\setminus\cbra{v_t})=\sigma(\Mcal\setminus\cbra{v_t})$. Algorithmically, $\sigma'$ is provided via rejection sampling (\Cref{alg:rejectionsampling}) on $\Phi'$.
\item[(C)] After Step (A) (B), extend $\sigma$ to unmarked variables $V\setminus\Mcal$ by sampling a uniform random solution conditioning on $\sigma(\Mcal)$.
\end{itemize}

To sample a solution perfectly, we simulate \emph{bounding chains} $\BChains$ (\Cref{alg:boundingchain}) of $\Glauber$ as follows:
The algorithm guarantees at any point, each variable $v\in\Mcal$ is assigned with a value in $\cbra{0,1,\Qmark}$ where $\Qmark$ represents uncertainty.
\begin{itemize}
\item[(1)] Initialize $\sigma(v)=\Qmark$ for each $v\in\Mcal$.
\item[(2)] Going forward from time $-T$ to $-1$, let $v_{t}$ be the variable selected at time $-T\le t<0$.
Iteratively find all clauses that are not yet satisfied by $\sigma(\Mcal\setminus\cbra{v_t})$ and are connected with $v_{t}$ (\Cref{alg:component}). Let $\Phi'$ be this sub-CSP.
\begin{itemize}
\item If all marked variables connected to $v_t$ in $\Phi'$ have value $0$ or $1$, we say $v_{t}$ is \emph{coupled}.
Then we update $\sigma(v_t)$ by rejection sampling on $\Phi'$ and $\sigma(v_t)$ is always updated to $0$ or $1$.
\item Otherwise $\sigma(v_{t})$ is updated based on the local uniformity, which may be assigned to $\Qmark$ with small probability (\textsf{SafeSampling} subroutine in \Cref{alg:boundingchain}).
\end{itemize}
\item[(3)] After Step (1) (2), 
\begin{itemize}
\item if some marked variable has value $\Qmark$, then we double $T$ and re-run Step (1) (2);\footnote{We remark that the randomness is reused. That is, the randomness used for time $t<0$ is the same one regardless of the starting time $-T$.}
\item otherwise we stop and extend $\sigma$ to unmarked variables $V\setminus \Mcal$ by sampling a uniform random solution conditioning on $\sigma(\Mcal)$. 
\end{itemize}
\end{itemize}

To simplify the analysis of the algorithm, we use \emph{systematic scan} for $\Glauber$ and $\BChains$ rather than random scan~\cite{he2016scan}. 
Specifically, at time $t\in\Zbb$ the algorithm always updates the variable with index $(t \mod m)$ (\Cref{alg:systematicscan}) where $m=|\Mcal|$.

Let $\mu^\Mcal$ be the distribution of a uniform random solution projected on the marked variables $\Mcal$.
Our goal is to prove the following claims for $\BChains$:
\begin{itemize}
\item \textsc{Correctness.} When we stop in Step (3), $\sigma(\Mcal)$ has distribution $\mu^\Mcal$ (\Cref{sec:the_distribution_after_boundingchain_subroutines}).
\item \textsc{Efficiency.} In expectation, each update in Step (2) is efficient. (\Cref{sec:moment_bounds_on_the_running_time})
\item \textsc{Coalescence.} In expectation, we stop with $T=O(n\log(n))$ (\Cref{sec:concentration_bounds_for_the_coalescence}).
\end{itemize}

\paragraph*{Proof of Correctness.} 
Firstly we show $\Glauber$ converges to $\mu_\Mcal$ in Step (B) when $T\to+\infty$. Though it is a time inhomogeneous Markov chain, we are able to embed it into a time homogeneous Markov chain $P'$ by viewing $|\Mcal|$ consecutive updates as one step. Then it is easy to check $P'$ is aperiodic and irreducible with unique stationary distribution $\mu^\Mcal$.
After that, we unpack $P'$ to show $\Glauber$ also converges to $\mu^\Mcal$ (\Cref{lem:systematicscan_final_distribution}).

Next, we use the idea of \emph{coupling from the past} \cite{PW96} and \emph{bounding chains} \cite{huber1998exact,haggstrom1999exact}. 
\begin{itemize}
\item \textsc{Coupling from the Past.} 
Observe that for any positive integer $L$ if we run $\Glauber$ from $-L\cdot m$ to $-1$, it has the \emph{same} distribution as we run it from $0$ to $L\cdot m-1$. Thus by the argument above, Step (B) also has distribution $\mu^\Mcal$ if we run $\Glauber$ from time $-\infty$ to $-1$.
\item \textsc{Bounding Chains.}
For each $t\in\Zbb$, if $\sigma(\Mcal)$ in $\BChains$ has no $\Qmark$, then the update process is exactly $\Glauber$. This means $\BChains$ is a \emph{coupling} of $\Glauber$ (\Cref{prop:coupling}). 
Note that we use $\Qmark$ to denote uncertainty which includes all possible assignments that we need to couple. Thus when $\BChains$ stops at time $T$ with $\hat\sigma\in\bin^\Mcal$ at Step (3), any assignment, going through the updates from time $-T$ to $-1$, converges to $\hat\sigma$.
\end{itemize}
Combining the two observations above, we know $\hat\sigma$ is distributed exactly as $\mu^\Mcal$.

\paragraph*{Proof of Efficiency.}
We first remark that only the rejection sampling is time consuming, and its running time is a geometric distribution with expectation controlled by the local uniformity (\Cref{prop:rejectionsampling}).
Thus to bound its expectation, it suffices to bound the size of $\Phi'$ (\Cref{prop:component_size}). 
This uses the same $2$-tree argument as in previous works and we briefly explain here.

Since $k$-CNF has bounded degree, if $\Phi'$ is large then we can find a large independent set $S$ of clauses in $\Phi'$. Note that clauses in $\Phi'$ are connected, thus we can further assume $S$ is a $2$-tree --- $S$ will be connected if we link any two clauses in $S$ that are at distance $2$. Intuitively, a $2$-tree is an independent set that is not very spread out.
Then it suffices to union bound the probability that some large $2$-tree growing out of $v_t$ survives after previous updates.

One potential pitfall is the total running time of $\BChains$ depends on both $T$ and the running time of each update, where they can be arbitrarily correlated. Thus we need to calculate the second moment of the update time (\Cref{sec:moment_bounds_on_the_running_time}) and apply Cauchy-Schwarz inequality to break the correlation (\Cref{sec:putting_everything_together}).

\paragraph*{Proof of Coalescence.} 
To upper bound the round $T$, we employ the \emph{information percolation} argument similarly used in \cite{lubetzky2016information,HSZ19,Vishesh21sampling}. 
For the sake of analysis, we assume Step (2) in $\BChains$ also includes unmarked variables though it does nothing for the update, i.e., $v_t$ is the variable with index $(t\mod n)$ where $n=|V|$.

The crucial observation is the following.
If $\sigma(v_{t_0})$ is updated to $\Qmark$ at time $t_0$, then at that point there must be some variable $u\neq v_{t_0}$ with value $\Qmark$ and connected to $v_{t_0}$.
Let $t_1$ be the last update time of $u$ before $t_0$, and thus $u = v_{t_1}$.
Then we can find another variable $u'\neq v_{t_1}$ with value $\Qmark$ and connected to $v_{t_1}$ at time $t_1$.
Continuing this process until we reach the initialization phase, we will find a list of time $0>t_0>t_1>\cdots>t_\ell\ge-T$ such that for each time $t_i$,
\begin{itemize}
\item $\sigma(v_{t_{i}})$ is updated to $\Qmark$,
\item $v_{t_i}$ is connected to $v_{t_{i+1}}$ and $v_{t_{i+1}}$ has value $\Qmark$.
\end{itemize}

To express constraints through time, we define the extended constraint $(e,C)$ (\Cref{def:extended_constraints}), where $C$ is a clause and $e=\cbra{t'_1,\ldots,t'_k}\subseteq\cbra{-T,\ldots,-1}$ is a time sequence such that
\begin{itemize}
\item $v_{t'_1},\ldots,v_{t'_k}$ are the variables $C$ depends on,
\item $t'_1,\ldots,t'_k$ are succinct rounds of update for each $v_{t'_1},\ldots,v_{t'_k}$.
\end{itemize}
Since each $v_{t_i}$ and $v_{t_{i+1}}$ are connected at time $t_i$ as we discussed above, it means we are able to find extended constraints $(e^i_1,C^i_1),\ldots,(e^i_{s_i},C^i_{s_i})$ such that $t_i\in e^i_1$ and $t_{i+1}\in e^i_{s_i}$ and $e^i_1,\ldots,e^i_{s_i}$ are connected over $\cbra{-T,\ldots,-1}$. 
Thus all the edges $\{e^i_j\}_{i,j}$ form a connected sub-hypergraph $H$ on vertex set $\cbra{-T,\ldots,-1}$.

Since each extended constraint $(e,C)$ represents succinct rounds of update for variables in $C$, $e$ has range less than $n=|V|$, i.e., $\max_{t_1,t_2\in e}|t_1-t_2|<n$. Thus the longest path $\Pcal$ in $H$ has length $|\Pcal|=\Omega(T/n)$. On the other hand, Step (2) of $\BChains$ only finds clauses that are \emph{not} satisfied by marked variables, which means each fixed extended constraint $(e,C)$ appears in $H$ with extremely low (roughly $2^{-k}$) probability.

Putting everything together, if we do not stop at round $T$, we should find a path $\Pcal$ of extended constraints of length $|\Pcal|=\Omega(T/n)$. 
Meanwhile, each fixed extended constraint is found with probability at most roughly $2^{-k}$. Thus any fixed $\Pcal$ exists with probability at most $2^{-k\cdot|\Pcal|/2}$, since extended constraints in odd positions of $\Pcal$ form an independent set.
Moreover, it is easy to see each extended constraint overlaps with $O(k^2d)$ many other extended constraints, which provides an upper bound $O(k^2d)^{|\Pcal|}$ for the number of possible $\Pcal$. 
By union bound, the probability that we do not stop at round $T$ is roughly 
$$
n\cdot\pbra{\frac{k^4d^2}{2^k}}^{\Omega\pbra{T/n}}\ll n\cdot 2^{-T/n},
$$
where we assume $k^4d^2\ll2^k$ and the additional $n$ comes from choosing $t_0\in\cbra{-1,\ldots,-n}$, i.e., the last update resulting in $\Qmark$.

We remark that to deal with general atomic CSPs, we need to be more careful with the union bound (\Cref{sec:concentration_bounds_for_the_coalescence}). This is because in general atomic CSPs constraints may depend on different number of variables.
Nevertheless, our analysis is much simpler than the one in~\cite{Vishesh21sampling}.
One main reason is that we use systematic scan instead of random scan, which makes the updates of each variable well behave through time.
Moreover, our main data structure, extended constraint (\Cref{def:extended_constraints}), is also much simpler than the discrepancy checks used in their argument.

\paragraph*{State Tensorization.}
The marking process is only efficient when the variable domains are small; otherwise it cannot guarantee useful local uniformity.
Similarly for $\Qmark$, it compresses too much information when the domain is large.
Therefore to deal with large domains, we introduce a simple \emph{state tensorization} trick to perform the reduction.

For intuition, let's consider the following concrete atomic CSP $\Phi$:
\begin{itemize}
\item The variables are $u,v$ where $u$ is endowed with distribution $\Dcal_u$ by $\Dcal_u(a)=\Dcal_u(b)=\Dcal_u(c)=1/3$; and $v$ is endowed with $\Dcal_v$ by $\Dcal_v(A)=\Dcal_v(B)=1/4,\Dcal_v(C)=1/3,\Dcal_v(D)=1/6$.
\item The constraints are $C_1,C_2$ where $C_1=\False$ iff $u=a$; and $C_2=\False$ iff $u=c,v=B$.
\end{itemize}
Then we describe one possible state tensorization as follows (See \Cref{fig:state_tensorization_example}):
\begin{itemize}
\item Define variables $u_1,u_2$ where $u_1$ is endowed with $\Dcal_{u_1}$ by $\Dcal_{u_1}(0)=2/3,\Dcal_{u_1}(1)=1/3$; and $u_2$ is endowed with $\Dcal_{u_2}(0)=\Dcal_{u_2}(1)=1/2$.

We interpret $u=a$ if $(u_1,u_2)=(0,0)$; $u=b$ if $(u_1,u_2)=(0,1)$; and $u=c$ if $u_1=1$.
\item Define variables $v_1,v_2,v_3$ where $v_1$ is endowed with $\Dcal_{v_1}$ by $\Dcal_{v_1}(0)=\Dcal_{v_1}(1)=1/2$; and $v_2$ is endowed with $\Dcal_{v_2}(0)=\Dcal_{v_2}(1)=1/2$; and $v_3$ is endowed with $\Dcal_{v_3}(0)=2/3,\Dcal_{v_3}(1)=1/3$.

We interpret $v=A$ if $(v_1,v_2)=(0,0)$; $u=C$ if $(v_1,v_3)=(1,0)$, etc.
\end{itemize}

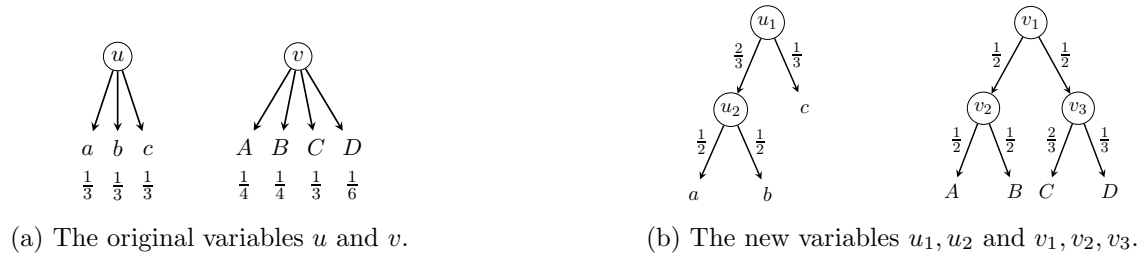
\begin{figure}[ht]
\centering
\begin{subfigure}[b]{0.4\textwidth}
\centering
\scalebox{0.8}{
\begin{tikzpicture}[emptyC/.style={draw,circle,inner sep=2pt}, outE/.style={->,>=stealth,thick}]
\node[emptyC] (v0) at (0,0) {$u$};
\node (q1) [below=of v0,xshift=-5mm] {$\phantom{A}a\phantom{A}$};
\node (qq1) [below=of q1,yshift=10mm] {$\frac13$};
\node (q2) [below=of v0] {$\phantom{A}b\phantom{A}$};
\node (qq2) [below=of q2,yshift=10mm] {$\frac13$};
\node (q3) [below=of v0,xshift=5mm] {$\phantom{A}c\phantom{A}$};
\node (qq3) [below=of q3,yshift=10mm] {$\frac13$};
\draw[outE] (v0) -- (q1);
\draw[outE] (v0) -- (q2);
\draw[outE] (v0) -- (q3);

\node[emptyC] (u0) at (3,0) {$v$};
\node (r1) [below=of u0, xshift=-9mm] {$A$};
\node (rr1) [below=of r1,yshift=10mm] {$\frac14$};
\node (r2) [below=of u0, xshift=-3mm] {$B$};
\node (rr2) [below=of r2,yshift=10mm] {$\frac14$};
\node (r3) [below=of u0, xshift=3mm] {$C$};
\node (rr3) [below=of r3,yshift=10mm] {$\frac13$};
\node (r4) [below=of u0, xshift=9mm] {$D$};
\node (rr4) [below=of r4,yshift=10mm] {$\frac16$};
\draw[outE] (u0) -- (r1);
\draw[outE] (u0) -- (r2);
\draw[outE] (u0) -- (r3);
\draw[outE] (u0) -- (r4);
\end{tikzpicture}}
\caption{The original variables $u$ and $v$.}
\end{subfigure}
\hfill
\begin{subfigure}[b]{0.5\textwidth}
\centering
\scalebox{0.7}{
\begin{tikzpicture}[emptyC/.style={draw,circle,inner sep=2pt}, outE/.style={->,>=stealth,thick}]
\node[emptyC] (v0) at (0,0) {$u_1$};
\node[emptyC] (v1) [below=of v0, xshift=-7mm] {$u_2$};
\node (q1) [below=of v1, xshift=-7mm] {$\phantom{A}a\phantom{A}$};
\node (q2) [below=of v1, xshift=7mm] {$\phantom{A}b\phantom{A}$};
\node (q3) [below=of v0, xshift=7mm] {$\phantom{A}c\phantom{A}$};
\draw[outE] (v0) --node[xshift=-2mm, yshift=2mm]{$\frac23$} (v1);
\draw[outE] (v0) --node[xshift=2mm, yshift=2mm]{$\frac13$} (q3);
\draw[outE] (v1) --node[xshift=-2mm, yshift=2mm]{$\frac12$} (q1);
\draw[outE] (v1) --node[xshift=2mm, yshift=2mm]{$\frac12$} (q2);

\node[emptyC] (u0) at (5,0) {$v_1$};
\node[emptyC] (u1) [below=of u0, xshift=-9mm] {$v_2$};
\node[emptyC] (u2) [below=of u0, xshift=9mm] {$v_3$};
\node (r1) [below=of u1, xshift=-6mm] {$A$};
\node (r2) [below=of u1, xshift=6mm] {$B$};
\node (r3) [below=of u2, xshift=-6mm] {$C$};
\node (r4) [below=of u2, xshift=6mm] {$D$};
\draw[outE] (u0) --node[xshift=-2mm, yshift=2mm]{$\frac12$} (u1);
\draw[outE] (u0) --node[xshift=2mm, yshift=2mm]{$\frac12$} (u2);
\draw[outE] (u1) --node[xshift=-2mm, yshift=2mm]{$\frac12$} (r1);
\draw[outE] (u1) --node[xshift=2mm, yshift=2mm]{$\frac12$} (r2);
\draw[outE] (u2) --node[xshift=-2mm, yshift=2mm]{$\frac23$} (r3);
\draw[outE] (u2) --node[xshift=2mm, yshift=2mm]{$\frac13$} (r4);
\end{tikzpicture}}
\caption{The new variables $u_1,u_2$ and $v_1,v_2,v_3$.}
\end{subfigure}
\caption{An example for state tensorization.}\label{fig:state_tensorization_example}
\end{figure}

Hence after the state tensorization, $C_1=\False$ iff $u_1=0,u_2=0$; and $C_2=\False$ iff $u_1=1,v_1=0,v_2=1$. Moreover, sampling the value of $u,v$ from $\Dcal_u\times\Dcal_v$ is equivalent to first sampling the value of $u_1,u_2,v_1,v_2,v_3$ from $\Dcal_{u_1}\times\Dcal_{u_2}\times\Dcal_{v_1}\times\Dcal_{v_2}\times\Dcal_{v_3}$ and then interpret the value of $u,v$ from them. Therefore, to obtain a random solution under distribution $\Dcal_u\times\Dcal_v$, it suffices to first obtain a random solution under the product distribution of $u_1,u_2,v_1,v_2,v_3$ and then interpret them back.

Most importantly, this reduction does not change the violation probability of any individual constraint, nor change the dependency relation among constraints. 
These essentially guarantees that the desired local lemma condition does not deteriorate after the reduction.
The formal description of the reduction can be found in \Cref{sec:large_domains:state_tensorization}.

We remark that state tensorization, combined with the marking $\Mcal$, generalizes the state compression technique in \cite{feng2020sampling}.
On the other hand, state tensorization is similar to standard gadget reduction in the study of complexity theory. For example by encoding large alphabets using binary bits, one can show Boolean CSPs are no easier to solve than CSPs with large variable domains for polynomial time algorithms. However we are not aware of such simple idea being used for perform sampling tasks.

\paragraph*{Organization.} 
We give formal definitions in \Cref{sec:preliminaries}. Useful subroutines are provided in \Cref{sec:useful_subroutines} and then we describe and analyze our main algorithm in \Cref{sec:the_atomiccspsampling_algorithm}. We discuss our result for different applications in \Cref{sec:applications}. 

\section{Preliminaries}\label{sec:preliminaries}

We use $\Naturale\approx2.71828$ to denote the \emph{natural base}.
We use $\log(\cdot)$ and $\ln(\cdot)$ to denote the logarithm with base $2$ and $\Naturale$ respectively. We use $[N]$ to denote $\cbra{1,2,\ldots,N}$; and use $\Zbb$ to denote the set of all integers. 
We say $V$ is a \emph{disjoint union} of $\pbra{V_i}_{i\in[s]}$ if $V=\bigcup_{i\in[s]}V_i$ and $V_i\cap V_j=\emptyset$ holds for any distinct $i,j\in[s]$. 
For positive integer $m$, $t\mod m=t-m\cdot\lfloor t/m\rfloor$ for non-negative integer $t$; and $t\mod m=t\cdot(1-m)\mod m$ for negative integer $t$.

For any index set $I$ and domains $\pbra{\Omega_i}_{i\in I}$, we use $\prod_{i\in I}\Omega_i$ to denote their product space. For some vector $\mathsf{vec}\in\prod_{i\in I}\Omega_i$, we use $\mathsf{vec}(i)\in\Omega_i$ to denote the entry of $\mathsf{vec}$ indexed by $i$; and use $\mathsf{vec}(J)\in\prod_{i\in J}\Omega_i$ to denote the entries of $\mathsf{vec}$ on indices $J\subseteq I$. 

For a finite set $\Xcal$ and a distribution $\Dcal$ over $\Xcal$, we use $x\sim\Dcal$ to denote that $x$ is a random variable sampled from $\Xcal$ according to distribution $\Dcal$. 
For two events $\Ecal_1,\Ecal_2$ with $\Pr\sbra{\Ecal_2}=0$, we define the conditional probability $\Pr\sbra{\Ecal_1(x)\mid\Ecal_2(x)}=0$. We say event $\Ecal$ happens \emph{almost surely} if $\Pr\sbra{\Ecal}=1$.

\paragraph*{Constraint Satisfaction Problems.}
Let $V$ be a set of variables with finite domains $\pbra{\Omega_v}_{v\in V}$. 
A \emph{constraint} $C$ on $V$ is a mapping $C\colon\prod_{v\in V}\Omega_v\to\binTF$.
We say $C$ \emph{depends on} $v\in V$ if there exists $\sigma_1,\sigma_2\in\prod_{v\in V}\Omega_v$ such that $C(\sigma_1)\neq C(\sigma_2)$ and $\sigma_1,\sigma_2$ differ in (and only in) $v$. 
We use $\vbl(C)$ to denote the set of variables that $C$ depends on, then $C$ can be viewed as a mapping from $\prod_{v\in\vbl(C)}\Omega_v$ to $\binTF$.

For convenience we use $\sigma_\False^C\subseteq\prod_{v\in\vbl(C)}\Omega_v$ (resp., $\sigma_\True^C\subseteq\prod_{v\in\vbl(C)}\Omega_v$) to denote the set of falsifying (resp., satisfying) assignments of $C$. 
More generally, for $\Ccal$ being a set of constraints, we use $\sigma_\False^\Ccal$ (resp., $\sigma_\True^\Ccal$) to denote the set of falsifying (resp., satisfying) assignments of $\Ccal$, i.e., $C(\sigma)=\False$ for all $\sigma\in\sigma_\False^\Ccal$ and \emph{some} $C\in\Ccal$ (resp., $C(\sigma)=\True$ for all $\sigma\in\sigma_\True^\Ccal$ and \emph{all} $C\in\Ccal$).

For sampling LLL, we also need to specify the underlying distribution. Assume each $v\in V$ has some distribution $\Dcal_v$ supported on $\Omega_v$.\footnote{This means $\Dcal_v(U)>0$ for all $U\in\Omega_v$. One natural choice is the uniform distribution.}
Define $\mu_\True^\Ccal$ as the distribution of solutions of $\Ccal$ induced by $\pbra{\Dcal_v}_{v\in V}$, i.e.,
$$
\mu_\True^\Ccal(\sigma)=\Pr_{\sigma'\sim\prod_{v\in V}\Dcal_v}\sbra{\sigma'=\sigma\mid\sigma'\in\sigma_\True^\Ccal}
\quad\text{for each $\sigma\in\prod_{v\in V}\Omega_v$}.
$$

\begin{definition}[(Atomic) Constraint Satisfaction Problem]\label{def:(atomic)_constraint_satisfaction_problem}
A \emph{constraint satisfaction problem} is specified by $\Phi=\pbra{V,\pbra{\Omega_v,\Dcal_v}_{v\in V},\Ccal}$ where $\Ccal$ is a set of constraints on $V$ and each $\Dcal_v$ is a distribution supported on $\Omega_v$.

We say $\Phi$ is \emph{atomic} if 
$\abs{\sigma_\False^C}=1$ for all $C\in\Ccal$. In this case, we abuse the notation to define $\sigma_\False^C$ as the unique falsifying assignment of $C$.

In addition, we define the following measures for $\Phi$:
\begin{itemize}
\item the \emph{width} is $k=k(\Phi)=\max_{C\in\Ccal}\abs{\vbl(C)}$;
\item the \emph{variable degree} is $d=d(\Phi)=\max_{v\in V}\abs{\cbra{C\in\Ccal\mid v\in\vbl(C)}}$;
\item the \emph{constraint degree} is $\Delta=\Delta(\Phi)=\max_{C\in\Ccal}\abs{\cbra{C'\in\Ccal\mid\vbl(C)\cap\vbl(C')\neq\emptyset}}$;\footnote{Here $\Delta$ is one plus the maximum degree of the dependency graph of $\Phi$ since $C\in\cbra{C'\in\Ccal\mid\vbl(C)\cap\vbl(C')\neq\emptyset}$.}
\item the \emph{domain size} is $Q=Q(\Phi)=\max_{v\in V}\abs{\Omega_v}$;
\item the \emph{maximal individual falsifying probability} is 
$$
p=p(\Phi)
=\max_{C\in\Ccal}\Pr_{\sigma\sim\prod_{v\in\vbl(C)}\Dcal_v}\sbra{C(\sigma)=\False}
=\max_{C\in\Ccal}\sum_{\sigma\in\sigma_\False^C}\prod_{v\in\vbl(C)}\Dcal_v(\sigma(v)).
$$
\end{itemize}
\end{definition}
We will simply use $k,d,\Delta,Q,p$ when $\Phi$ is clear from the context. In addition we assume $\Delta\ge2$, $d\ge2$, and $|V|\ge2$ since otherwise the constraints in $\Phi$ are independent and the sampling problem becomes trivial.

\paragraph*{Lov\'asz Local Lemma.}
The Lov\'asz local lemma provides sufficient conditions to guarantee the existence of a solution of CSPs. 
\begin{theorem}[\cite{EL75}]\label{thm:LLL}
Let $\Phi=\pbra{V,\pbra{\Omega_v,\Dcal_v}_{v\in V},\Ccal}$ be a CSP. If $\Naturale p\Delta\le1$, then 
$$
\Pr_{\sigma\sim\prod_{v\in V}\Dcal_v}\sbra{\sigma\in\sigma_\True^\Ccal}\ge
\pbra{1-\Naturale p}^{\abs{\Ccal}}>0.
$$
\end{theorem}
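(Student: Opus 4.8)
The statement to prove is the classical Lovász Local Lemma (Theorem~\ref{thm:LLL}): if $\Naturale p\Delta \le 1$, then $\Pr_{\sigma\sim\prod_v\Dcal_v}[\sigma\in\sigma_\True^\Ccal] \ge (1-\Naturale p)^{|\Ccal|} > 0$.

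The plan is to run the standard inductive argument for the symmetric LLL in the variable/CSP setting. First I would set up the probability space: on $\prod_{v\in V}\Dcal_v$, let $A_C$ denote the bad event ``constraint $C$ is violated,'' so $\Pr[A_C] = \Pr_{\sigma\sim\prod_{v\in\vbl(C)}\Dcal_v}[C(\sigma)=\False]\le p$ by definition of $p$. Two constraints $C, C'$ are dependent only if $\vbl(C)\cap\vbl(C')\neq\emptyset$; since each $C$ shares variables with at most $\Delta-1$ other constraints (recall $\Delta$ counts $C$ itself), the dependency graph has maximum degree at most $\Delta-1$, and each $A_C$ is mutually independent of the collection of events $\{A_{C'} : \vbl(C')\cap\vbl(C)=\emptyset\}$ because those events are determined by disjoint sets of independent coordinates.

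The core step is the usual induction: I claim that for every constraint $C$ and every subset $\Scal\subseteq\Ccal\setminus\{C\}$,
\[
\Pr\!\left[A_C \,\middle|\, \bigwedge_{C'\in\Scal}\overline{A_{C'}}\right] \le \Naturale p.
\]
This is proved by induction on $|\Scal|$. For the base case $\Scal=\emptyset$, $\Pr[A_C]\le p\le \Naturale p$. For the inductive step, partition $\Scal$ into $\Scal_1 = \{C'\in\Scal : \vbl(C')\cap\vbl(C)\neq\emptyset\}$ and $\Scal_2 = \Scal\setminus\Scal_1$, so $|\Scal_1|\le \Delta-1$. Write the conditional probability as a ratio, bound the numerator $\Pr[A_C\wedge\bigwedge_{\Scal_1}\overline{A_{C'}} \mid \bigwedge_{\Scal_2}\overline{A_{C'}}]\le\Pr[A_C\mid\bigwedge_{\Scal_2}\overline{A_{C'}}] = \Pr[A_C] \le p$ using mutual independence of $A_C$ from $\Scal_2$, and lower bound the denominator $\Pr[\bigwedge_{\Scal_1}\overline{A_{C'}}\mid\bigwedge_{\Scal_2}\overline{A_{C'}}]$ by expanding as a product over the (at most $\Delta-1$) constraints in $\Scal_1$ and applying the inductive hypothesis to each factor, giving at least $(1-\Naturale p)^{\Delta-1}$. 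Hence the conditional probability is at most $p/(1-\Naturale p)^{\Delta-1}$, and the elementary inequality $(1-1/\Naturale)^{\Delta-1}\ge 1/\Naturale$ (which holds for all $\Delta\ge 1$, since $(1-1/t)^{t}$ is increasing toward $1/\Naturale$) together with $\Naturale p\Delta\le 1$, i.e. $p \le \frac{1}{\Naturale\Delta}$, yields $p/(1-\Naturale p)^{\Delta-1}\le \Naturale p\cdot\frac{(1-1/\Naturale)^{\Delta-1}\cdot\Delta^{-1}\cdot\Delta}{(1-\Naturale p)^{\Delta-1}}$; a short computation shows this is at most $\Naturale p$. (I need to double-check that $1-\Naturale p \ge 1 - 1/\Delta > 0$ so the denominator is positive and the factor $(1-1/\Naturale)^{\Delta-1}/(1-\Naturale p)^{\Delta-1}\le 1$; this is where $\Delta\ge 2$ is convenient.) Finally, telescoping over an arbitrary ordering $C_1,\dots,C_{|\Ccal|}$ of $\Ccal$,
\[
\Pr\!\left[\bigwedge_{i=1}^{|\Ccal|}\overline{A_{C_i}}\right] = \prod_{i=1}^{|\Ccal|}\Pr\!\left[\overline{A_{C_i}}\,\middle|\,\bigwedge_{j<i}\overline{A_{C_j}}\right]\ge (1-\Naturale p)^{|\Ccal|} > 0,
\]
which is exactly the claimed bound since $\{\sigma\in\sigma_\True^\Ccal\} = \bigwedge_i\overline{A_{C_i}}$.

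The main obstacle is purely the bookkeeping around the constant $\Naturale$: getting the inequality $p/(1-\Naturale p)^{\Delta-1}\le\Naturale p$ to come out cleanly from $\Naturale p\Delta\le 1$ requires the standard estimate $(1-\tfrac1\Delta)^{\Delta-1}\ge\tfrac1\Naturale$, and one must be slightly careful that all the conditional probabilities appearing are well-defined (i.e. the conditioning events have positive probability), which is guaranteed inductively since each $\Pr[\overline{A_{C'}}\mid\cdots]\ge 1-\Naturale p > 0$. There is no real difficulty beyond this; the argument is a textbook application of the Lovász Local Lemma, included here only to fix notation and the $p,\Delta$ conventions used in the rest of the paper.
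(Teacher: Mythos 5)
Your proposal is correct: it is the standard inductive proof of the symmetric Lov\'asz Local Lemma in the variable setting, which is precisely how the cited result is established --- the paper states \Cref{thm:LLL} without proof, citing \cite{EL75}, so there is nothing different to compare against. One small slip to fix: mid-argument you write the key estimate as $(1-1/\Naturale)^{\Delta-1}\ge 1/\Naturale$, which is false for large $\Delta$; the inequality you actually need (and correctly state in your closing paragraph) is $(1-1/\Delta)^{\Delta-1}\ge 1/\Naturale$, which together with $\Naturale p\Delta\le 1$ gives $p/(1-\Naturale p)^{\Delta-1}\le p/(1-1/\Delta)^{\Delta-1}\le\Naturale p$, completing the induction.
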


The following more general version, first stated in \cite{haeupler2011new}, can be proved with minor modification of the original proof of the Lov\'asz local lemma \cite{EL75}.
\begin{theorem}[{\cite[Theorem 2.1]{haeupler2011new}}]\label{thm:HSS_LLL}
Let $\Phi=\pbra{V,\pbra{\Omega_v,\Dcal_v}_{v\in V},\Ccal}$ be a CSP. 
If $\Naturale p\Delta\le1$, then $\sigma_\True^\Ccal\neq\emptyset$ and for any constraint $B$ (not necessarily from $\Ccal$) we have
$$
\Pr_{\sigma\sim\mu_\True^\Ccal}\sbra{B(\sigma)=\True}
\le\pbra{1-\Naturale p}^{-\abs{\Gamma(B)}}\Pr_{\sigma\sim\prod_{v\in V}\Dcal_v}\sbra{B(\sigma)=\True},
$$
where $\Gamma(B)=\cbra{C\in\Ccal\mid\vbl(C)\cap\vbl(B)\neq\emptyset}$.
\end{theorem}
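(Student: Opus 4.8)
The plan is to carry out the classical inductive proof of the Lov\'asz local lemma~\cite{EL75} inside the product measure $\prod_{v\in V}\Dcal_v$, with the uniform weight $\Naturale p$ attached to every constraint, and then to extract the bound for $B$ as a corollary. All probabilities below are over $\sigma\sim\prod_{v\in V}\Dcal_v$. The heart of the matter is the following standard claim, which I call $(\star)$: for every $C\in\Ccal$ and every $S\subseteq\Ccal\setminus\cbra{C}$,
$$
\Pr\sbra{C(\sigma)=\False\mid\textstyle\bigwedge_{C'\in S}C'(\sigma)=\True}\le\Naturale p.
$$
(Proving this simultaneously shows every conditioning event that appears has positive probability, so all the conditional probabilities are well defined.)

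\emph{Step 1 (proving $(\star)$).} I would induct on $\abs S$, the base case $S=\emptyset$ being $\Pr\sbra{C(\sigma)=\False}\le p\le\Naturale p$. For the inductive step, split $S=S_1\sqcup S_2$ with $S_1=S\cap\Gamma(C)$ and $S_2=S\setminus\Gamma(C)$, and rewrite the left side of $(\star)$ by Bayes as a fraction whose numerator is at most $\Pr\sbra{C(\sigma)=\False\mid\bigwedge_{C'\in S_2}C'(\sigma)=\True}$ and whose denominator is $\Pr\sbra{\bigwedge_{C'\in S_1}C'(\sigma)=\True\mid\bigwedge_{C'\in S_2}C'(\sigma)=\True}$. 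The numerator equals $\Pr\sbra{C(\sigma)=\False}\le p$ because the events indexed by $S_2$ depend only on coordinates disjoint from $\vbl(C)$ and the measure is a product; the denominator, expanded by the chain rule, is at least $(1-\Naturale p)^{\abs{S_1}}\ge(1-\Naturale p)^{\Delta-1}$ by applying the induction hypothesis to the (strictly smaller) conditioning sets that arise, using $\abs{S_1}\le\abs{\Gamma(C)}-1\le\Delta-1$ since $C\in\Gamma(C)\setminus S_1$. Hence the left side of $(\star)$ is at most $p/(1-\Naturale p)^{\Delta-1}$, and $p/(1-\Naturale p)^{\Delta-1}\le\Naturale p$ follows from $\Naturale p\Delta\le1$ (so $\Naturale p\le1/\Delta$) together with the elementary inequality $(1-1/\Delta)^{\Delta-1}\ge\Naturale^{-1}$.

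\emph{Step 2 (existence and the conditional estimate).} Telescoping $(\star)$ along an arbitrary ordering of $\Ccal$ gives $\Pr\sbra{\sigma\in\sigma_\True^\Ccal}\ge(1-\Naturale p)^{\abs\Ccal}>0$, so $\sigma_\True^\Ccal\neq\emptyset$ and $\mu_\True^\Ccal$ is well defined. For the main inequality, partition $\Ccal=\Gamma(B)\sqcup\Ccal'$ where $\Ccal'$ is the set of constraints not sharing a variable with $B$, so $\bigcup_{C'\in\Ccal'}\vbl(C')$ is disjoint from $\vbl(B)$. By Bayes, $\Pr\sbra{B(\sigma)=\True\mid\sigma\in\sigma_\True^\Ccal}$ equals $\Pr\sbra{B(\sigma)=\True}$ times $\Pr\sbra{\sigma\in\sigma_\True^\Ccal\mid B(\sigma)=\True}\big/\Pr\sbra{\sigma\in\sigma_\True^\Ccal}$. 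Factoring each of the last two probabilities as (the probability that all of $\Ccal'$ is satisfied) $\times$ (the probability that all of $\Gamma(B)$ is satisfied, conditioned on all of $\Ccal'$ being satisfied, and in the numerator additionally on $B(\sigma)=\True$), and using that the $\Ccal'$-factor is independent of the event $\cbra{B(\sigma)=\True}$ by disjointness of coordinate blocks, the two $\Ccal'$-factors cancel and the ratio reduces to
$$
\frac{\Pr\sbra{\bigwedge_{C\in\Gamma(B)}C(\sigma)=\True\mid B(\sigma)=\True\wedge\bigwedge_{C'\in\Ccal'}C'(\sigma)=\True}}{\Pr\sbra{\bigwedge_{C\in\Gamma(B)}C(\sigma)=\True\mid\bigwedge_{C'\in\Ccal'}C'(\sigma)=\True}}\le\frac1{(1-\Naturale p)^{\abs{\Gamma(B)}}},
$$
the numerator being trivially $\le1$ and the denominator being bounded below by the chain rule and $\abs{\Gamma(B)}$ applications of $(\star)$ (each conditioning set used there is a set of satisfied-constraints not containing the constraint in question). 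Multiplying back by $\Pr\sbra{B(\sigma)=\True}$ yields exactly $\Pr_{\sigma\sim\mu_\True^\Ccal}\sbra{B(\sigma)=\True}\le(1-\Naturale p)^{-\abs{\Gamma(B)}}\Pr_{\sigma\sim\prod_v\Dcal_v}\sbra{B(\sigma)=\True}$.

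\emph{Main obstacle.} The only genuinely delicate point is Step 1. The conditioning events lying outside $\Gamma(C)$ must be removed via product-space independence of $\cbra{C(\sigma)=\False}$ from them --- not via the induction --- while the events inside $\Gamma(C)$ are peeled off one at a time so that the induction hypothesis is invoked only on strictly smaller conditioning sets; one must also keep the bookkeeping $\abs{S_1}\le\Delta-1$ (rather than $\Delta$) correct. Once $(\star)$ is established, Step 2 is routine, modulo the one-line estimate $(1-1/\Delta)^{\Delta-1}\ge\Naturale^{-1}$. This is precisely the ``minor modification'' of the proof of~\cite{EL75} alluded to in the statement.
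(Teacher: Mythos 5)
Your proof is correct, and it is exactly the argument the paper has in mind: the paper does not reprove this statement but cites it as \cite[Theorem 2.1]{haeupler2011new}, noting it follows by a ``minor modification'' of the classical inductive proof of the local lemma, which is precisely your claim $(\star)$ (conditional falsification probability at most $\Naturale p$, proved by induction on the conditioning set, with the $\abs{S_1}\le\Delta-1$ bookkeeping and $(1-1/\Delta)^{\Delta-1}\ge\Naturale^{-1}$) followed by the chain-rule/telescoping step over $\Gamma(B)$. No gaps to report.
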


\paragraph*{Hypergraphs.}
Here we give definitions related with hypergraphs. All the definitions directly translate to graphs when every edge in the hypergraph contains two vertices.

Let $H$ be a hypergraph with finite \emph{vertex set} $V(H)$ and finite \emph{edge set} $E(H)$. Each edge $e\in E(H)$ is a non-empty subset of $V(H)$. We \emph{allow} multiple occurrence of a same edge. 
For any CSP $\Phi=\pbra{V,\pbra{\Omega_v,\Dcal_v}_{v\in V},\Ccal}$ we naturally view it as a hypergraph $H(\Phi)$ where 
\begin{equation}\label{eq:csp_to_hypergraph}
V(H(\Phi))=V
\quad\text{and}\quad
E(H(\Phi))=\cbra{\vbl(C)}_{C\in\Ccal}.
\end{equation}
Similar as the measures of CSPs, we define the following measures for a hypergraph $H$:
\begin{itemize}
\item the \emph{width} is $k=k(H)=\max_{e\in E(H)}|e|$;
\item the \emph{vertex degree} is $d=d(H)=\max_{v\in V(H)}\abs{\cbra{e\in E(H)\mid v\in e}}$;
\item the \emph{edge degree} is $\Delta=\Delta(H)=\max_{e\in E(H)}\abs{\cbra{e'\in E(H)\mid e\cap e'\neq\emptyset}}$.
\end{itemize}

For any two vertices $u,v\in V(H)$, we say they are \emph{adjacent} if there exists some $e\in E(H)$ such that $u\in e$ and $v\in e$; we say they are \emph{connected} if there exists a vertex sequence $w_1,w_2,\ldots,w_d\in V(H)$ such that $w_1=u,w_d=v$ and each $w_i,w_{i+1}$ are adjacent. 
Then hypergraph $H$ is \emph{connected} if any two vertices $u,v\in V(H)$ are connected. Furthermore, we have the following basic fact regarding connected hypergraphs.
\begin{fact}\label{fct:path_in_connected_hypergraph}
Assume $H$ is a connected hypergraph. Then for any $e,e'\in E(H)$, there exists a sequence of edges $e_1,e_2,\ldots,e_\ell$ such that the following holds.
\begin{itemize}
\item $e_1=e$, $e_\ell=e'$, and $e_i\cap e_{i+1}\neq\emptyset$ for all $i\in[\ell-1]$.
\item $e_i\cap e_j=\emptyset$ for all $i,j\in[\ell]$ with $|i-j|>1$.
\end{itemize}
\end{fact}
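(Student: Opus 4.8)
\textbf{Proof proposal for \Cref{fct:path_in_connected_hypergraph}.}

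The plan is to extract a minimal-length ``chain'' of edges joining $e$ to $e'$ and then argue that minimality forces the non-overlapping property. First I would set up the notion of a chain: a finite sequence $e_1,\dots,e_\ell$ with $e_1=e$, $e_\ell=e'$, and $e_i\cap e_{i+1}\neq\emptyset$ for every $i\in[\ell-1]$. The very first task is to show that at least one such chain exists. This follows from connectedness of $H$: pick any vertex $u\in e$ and any vertex $v\in e'$; by definition of connectedness there is a vertex walk $w_1=u,\dots,w_d=v$ with consecutive vertices adjacent, and each adjacency $w_j\sim w_{j+1}$ is witnessed by some edge $f_j\in E(H)$ containing both. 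Prepending $e$ and appending $e'$ to the sequence $f_1,\dots,f_{d-1}$ gives a chain from $e$ to $e'$ (consecutive edges share at least the relevant $w_j$, and $e\cap f_1\ni u$, $f_{d-1}\cap e'\ni v$; one has to handle the degenerate cases $u=v$ or $d=1$ separately, but those are trivial). So the set of chains from $e$ to $e'$ is non-empty, and among them I would choose one of \emph{minimum length} $\ell$, call it $e_1,\dots,e_\ell$.

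Next I would prove the second bullet by contradiction using minimality. Suppose $e_i\cap e_j\neq\emptyset$ for some $i,j$ with $j>i+1$. Then the sequence obtained by deleting the intermediate edges $e_{i+1},\dots,e_{j-1}$, namely $e_1,\dots,e_i,e_j,\dots,e_\ell$, is still a chain from $e$ to $e'$: all the old consecutive-intersection conditions outside the deleted range survive, and the new adjacency $e_i\cap e_j\neq\emptyset$ holds by assumption. But this chain has length $\ell-(j-i-1)<\ell$ since $j-i-1\geq1$, contradicting the minimality of $\ell$. Hence no such pair $i,j$ exists, which is exactly the statement $e_i\cap e_j=\emptyset$ for all $i,j$ with $|i-j|>1$. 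Note this also automatically rules out repeated edges at distance $>1$.

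I do not expect any serious obstacle here; this is a standard ``shortest walk has no shortcuts'' argument transported to hypergraphs. The only points requiring a little care are (i) making the existence-of-a-chain step fully rigorous in the edge cases where $e$ and $e'$ already intersect or share a vertex with the walk's endpoints (so that the chain may need length as small as $1$ or $2$), and (ii) being careful that deleting a block from a chain genuinely leaves a chain — i.e.\ that we only ever remove a \emph{contiguous} block $e_{i+1},\dots,e_{j-1}$ strictly between two retained edges, so the spliced sequence still has all consecutive intersections. Both are routine. One could alternatively phrase the whole thing as: take the shortest path between $e$ and $e'$ in the ``line graph'' of $H$ (vertices $=$ edges of $H$, adjacency $=$ non-empty intersection), which exists by connectedness; a shortest path in any graph is an induced path, which translates back to precisely the two claimed properties.
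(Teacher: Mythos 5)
Your proof is correct: existence of a chain follows from the vertex-walk definition of connectedness (witnessing each adjacency by an edge and prepending $e$, appending $e'$), and taking a minimum-length chain rules out any intersection $e_i\cap e_j\neq\emptyset$ with $|i-j|>1$, since splicing out the block $e_{i+1},\ldots,e_{j-1}$ would yield a strictly shorter chain. The paper states this as a basic fact without proof, and your ``shortest path in the line graph is induced'' argument is exactly the standard justification it implicitly relies on, so there is nothing to reconcile.
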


A hypergraph $H'$ is a \emph{sub-hypergraph} of $H$ if $V(H')\subseteq V(H)$ and $E(H')\subseteq E(H)$. 
If in addition $e\cap V(H')=\emptyset$ holds for all $e\in E(H)\setminus E(H')$, we say $H'$ is an \emph{induced sub-hypergraph} of $H$.

\paragraph*{Marking.}
Apart from the CSP $\Phi=\pbra{V,\pbra{\Omega_v,\Dcal_v}_{v\in V},\Ccal}$ itself, our algorithms will also need a subset of $V$ which we call \emph{marking}. Both the correctness and efficiency of our algorithms rely on the marking. 

Assume $\Phi$ is atomic and recall our measures for $\Phi$ from \Cref{def:(atomic)_constraint_satisfaction_problem}. We define the following constants given marking $\Mcal$, the meaning of which will be clear as we proceed to the next section:
\begin{itemize}
\item The \emph{maximal conditional falsifying probability} of $(\Phi,\Mcal)$ is 
\begin{equation}\label{eq:alpha}
\alpha=\alpha(\Phi,\Mcal)=\max_{C\in\Ccal}\prod_{v\in\vbl(C)\setminus\Mcal}\Dcal_v(\sigma_\False^C(v)).
\end{equation}
\item When $\Naturale\alpha\le1$, define
\begin{itemize}
\item the \emph{multiplicative bias} of $(\Phi,\Mcal)$ as
\begin{equation}\label{eq:beta}
\beta=\beta(\Phi,\Mcal)=(1-\Naturale\alpha)^{-d};
\end{equation}
\item the \emph{maximal multiplicative-biased falsifying probability} of $(\Phi,\Mcal)$ as
\begin{equation}\label{eq:rho}
\rho=\rho(\Phi,\Mcal)=\max_{C\in\Ccal}\prod_{v\in\vbl(C)\cap\Mcal}\beta\cdot\Dcal_v(\sigma_\False^C(v));
\end{equation}
\item the \emph{maximal multiplicative-biased unpercolated probability} of $(\Phi,\Mcal)$ as
\begin{equation}\label{eq:lambda}
\lambda=\lambda(\Phi,\Mcal)=\max_{C\in\Ccal}\abs{\vbl(C)}^2\prod_{v\in\vbl(C)\cap\Mcal}\pbra{\beta\cdot\Dcal_v(\sigma_\False^C(v))+(\beta-1)\pbra{|\Omega_v|-2}}.
\end{equation}
\end{itemize}
\end{itemize}
When context is clear, we will just use $\alpha,\beta,\rho,\lambda$.

\paragraph*{Wildcard.}
We will reserve $\Qmark$ as a wildcard symbol. 
Our algorithms will use $\Qmark$ to represent all the possibilities in some domain. 

Let $\Phi=\pbra{V,\pbra{\Omega_v,\Dcal_v}_{v\in V},\Ccal}$ be a CSP.
For each $v\in V$ we assume $\Qmark\notin\Omega_v$ and define $\Omega_v^\Qmark=\Omega_v\cup\cbra{\Qmark}$.
For any $C\in\Ccal$ and $\sigma\in\prod_{v\in V}\Omega_v^\Qmark$ we abuse notation to define
\begin{equation}\label{eq:qmark_constraint}
C(\sigma)=\begin{cases}
\False & \exists\sigma'\in\prod_{v\in V}\Omega_v\text{ such that }C(\sigma')=\False\text{ and }\sigma(v)\in\cbra{\sigma'(v),\Qmark}\text{ for all }v\in V,\\
\True & \text{otherwise}.
\end{cases}
\end{equation}

Here we define CSPs projected on assignments with $\Qmark$ coordinates. Intuitively this assignment fixes (and only fixes) the non-$\Qmark$ variables for the CSP. But pedantically we provide the following definition.
\begin{definition}[Projected Constraint Satisfaction Problem]\label{def:projected_constraint_satisfaction_problem}
For any $\sigma\in\prod_{v\in V}\Omega_v^\Qmark$, we define the \emph{projected constraint satisfaction problem} $\Phi|_\sigma=\pbra{V,\pbra{\Omega_v|_\sigma,\Dcal_v|_\sigma}_{v\in V},\Ccal|_\sigma}$ by setting
$$
(\Omega_v|_\sigma,\Dcal_v|_\sigma)=\begin{cases}
(\Omega_v,\Dcal_v) & \sigma(v)=\Qmark,\\
(\cbra{\sigma(v)},\text{point distribution}) & \text{otherwise}
\end{cases}
\quad\text{for all }v\in V
$$
and $\Ccal|_\sigma=\cbra{C|_\sigma\mid C\in\Ccal,C(\sigma)=\False}$ where $C|_\sigma$ has the same evaluation rule as $C\in\Ccal$ but depends on possibly fewer variables, i.e., $\vbl(C|_\sigma)=\cbra{v\in\vbl(C)\mid\sigma(v)=\Qmark}\subseteq\vbl(C)$. Similarly we sometimes view $C|_\sigma$ as constraint only on $\vbl(C|_\sigma)$.
\end{definition}

Recall the measures defined in \Cref{def:(atomic)_constraint_satisfaction_problem}. We note the following simple fact.
\begin{fact}\label{fct:after_projection}
$k(\Phi|_\sigma)\le k(\Phi)$, $d(\Phi|_\sigma)\le d(\Phi|_\sigma)$, $\Delta(\Phi|_\sigma)\le\Delta(\Phi)$, and $Q(\Phi|_\sigma)\le Q(\Phi)$. Moreover, if $\Phi$ is an atomic CSP, then $\Phi|_\sigma$ is an atomic CSP.
\end{fact}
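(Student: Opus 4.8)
The plan is simply to unwind \Cref{def:projected_constraint_satisfaction_problem}; none of the five assertions requires more than a direct check. I would handle the easy measures first. For $Q$: for every $v\in V$ the projected domain $\Omega_v|_\sigma$ is either $\Omega_v$ (when $\sigma(v)=\Qmark$) or a singleton, so $\abs{\Omega_v|_\sigma}\le\abs{\Omega_v}$ and therefore $Q(\Phi|_\sigma)\le Q(\Phi)$. For $k$: by definition $\vbl(C|_\sigma)=\cbra{v\in\vbl(C):\sigma(v)=\Qmark}\subseteq\vbl(C)$, hence $\abs{\vbl(C|_\sigma)}\le\abs{\vbl(C)}\le k(\Phi)$ and so $k(\Phi|_\sigma)\le k(\Phi)$.

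For the two degree bounds the common idea is that $\Ccal|_\sigma$ arises from $\Ccal$ by first deleting the constraints $C$ with $C(\sigma)=\True$ and then replacing each surviving $C$ by $C|_\sigma$, whose scope only shrinks. From the implication $v\in\vbl(C|_\sigma)\Rightarrow v\in\vbl(C)$ I would conclude that the constraints of $\Ccal|_\sigma$ depending on a fixed $v$ inject into $\cbra{C\in\Ccal:v\in\vbl(C)}$, which gives $d(\Phi|_\sigma)\le d(\Phi)$ (the inequality the statement intends). Similarly, $\vbl(C|_\sigma)\cap\vbl(C'|_\sigma)\neq\emptyset\Rightarrow\vbl(C)\cap\vbl(C')\neq\emptyset$, so the dependency-graph neighbourhood of $C|_\sigma$ inside $\Phi|_\sigma$ injects into that of $C$ inside $\Phi$, yielding $\Delta(\Phi|_\sigma)\le\Delta(\Phi)$.

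The only part that needs a sentence of care is atomicity. Assume $\Phi$ is atomic and fix $C|_\sigma\in\Ccal|_\sigma$, so $C(\sigma)=\False$. I would invoke \eqref{eq:qmark_constraint}: there exists $\sigma'\in\prod_{v\in V}\Omega_v$ with $C(\sigma')=\False$ and $\sigma(v)\in\cbra{\sigma'(v),\Qmark}$ for all $v$; since $C$ has a unique falsifying assignment, $\sigma'$ must agree with $\sigma_\False^C$ on $\vbl(C)$, so every frozen coordinate of $\sigma$ inside $\vbl(C)$ already equals the corresponding coordinate of $\sigma_\False^C$. Hence an assignment $\tau$ on $\vbl(C|_\sigma)$ falsifies $C|_\sigma$ exactly when the joint assignment that equals $\tau$ on $\vbl(C|_\sigma)$ and $\sigma_\False^C$ on $\vbl(C)\setminus\vbl(C|_\sigma)$ falsifies $C$, which by atomicity of $C$ happens only for $\tau=\sigma_\False^C(\vbl(C|_\sigma))$. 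So $C|_\sigma$ has exactly one falsifying assignment and $\Phi|_\sigma$ is atomic.

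I do not anticipate any genuine obstacle; the whole statement is a bookkeeping exercise. If anything is delicate it is keeping straight, in the atomicity argument, that the definition \eqref{eq:qmark_constraint} of $C$ evaluated on a $\Qmark$-assignment is precisely what forces the non-$\Qmark$ coordinates of $\sigma$ restricted to $\vbl(C)$ to coincide with those of $\sigma_\False^C$ — without that observation one could not conclude uniqueness of the falsifying assignment of $C|_\sigma$.
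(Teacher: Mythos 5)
Your proof is correct; the paper states this fact without proof, and your direct unwinding of \Cref{def:projected_constraint_satisfaction_problem} — in particular the key observation for atomicity that $C(\sigma)=\False$ together with the uniqueness of the falsifying assignment of $C$ forces the frozen coordinates of $\sigma$ on $\vbl(C)$ to coincide with $\sigma_\False^C$, so that $C|_\sigma$ again has exactly one falsifying assignment — is exactly the intended argument. You also correctly read the statement's typo $d(\Phi|_\sigma)\le d(\Phi|_\sigma)$ as the intended $d(\Phi|_\sigma)\le d(\Phi)$, which your injection argument establishes.
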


\section{Useful Subroutines}\label{sec:useful_subroutines}
In this section, we provide some useful subroutines for later reference in our main algorithm.

\subsection{A \textsf{Component} Subroutine}\label{sec:a_component_subroutine}

Recall notations defined in \Cref{def:projected_constraint_satisfaction_problem}.
We first set up the following \Component{$\Phi,\Mcal,\sigma,u$} subroutine, which uses $u\in V$ and current assignment $\sigma\in\prod_{v\in V}\Omega_v^\Qmark$ to (hopefully) decompose projected CSP $\Phi|_\sigma$ into two disjoint parts: One containing $u$ and one isolated from $u$. 
For our purpose, the input will guarantee that $\Phi$ is atomic, $\sigma(u)=\Qmark$, and $\sigma(v)\equiv\Qmark$ for all $v\in V\setminus\Mcal$.

\begin{algorithm}[ht]
\caption{The \textsf{Component} subroutine}\label{alg:component}
\DontPrintSemicolon
\LinesNumbered
\KwIn{an atomic CSP $\Phi=\pbra{V,\pbra{\Omega_v,\Dcal_v}_{v\in V},\Ccal}$, a marking $\Mcal\subseteq V$, an assignment $\sigma\in\pbra{\prod_{v\in\Mcal}\Omega_v^\Qmark}\times\cbra{\Qmark}^{V\setminus\Mcal}$, and $u\in V$ with $\sigma(u)=\Qmark$}
\KwOut{$\pbra{\Phi',\Token}$ where $\Phi'=\pbra{V',\pbra{\Omega_v|_\sigma,\Dcal_v|_\sigma}_{v\in V'},\Ccal'}$ and $\Token\in\binTF$}
Initialize $V'\gets\cbra{u}$ and $\Ccal'\gets\emptyset$\;
\While{$\exists C|_\sigma\in\Ccal|_\sigma\setminus\Ccal'$ with $\vbl(C|_\sigma)\cap V'\neq\emptyset$}{
\lIf{$\vbl(C|_\sigma)\subseteq\cbra{u}\cup\pbra{V\setminus\Mcal}$}{Update $V'\gets V'\cup\vbl(C|_\sigma)$ and $\Ccal'\gets\Ccal'\cup\cbra{C|_\sigma}$}
\lElse{\Return{$(\Phi',\False)$}}
}
\Return{$\pbra{\Phi',\True}$}
\end{algorithm}

Here we note the following observation regarding \Cref{alg:component}.
\begin{proposition}\label{prop:component}
The following holds for \Component{$\Phi,\Mcal,\sigma,u$}.
\begin{itemize}
\item[(1)] It runs in time $O\pbra{\Delta k|\Ccal'|+dk}$.
\item[(2)] $u\in V'\subseteq V$, $\Ccal'\subseteq\Ccal|_\sigma$, and $\sigma(v)=\Qmark$ for all $v\in V'$. Moreover, $\sigma(v)=\sigma_\False^C(v)$ holds for any $C|_\sigma\in\Ccal'$ and $v\in\pbra{\vbl(C)\cap\Mcal}\setminus\cbra{u}$. 
\item[(3)] $\Phi'$ is an atomic CSP and hypergraph $H(\Phi')$ (Defined in \Cref{eq:csp_to_hypergraph}) is connected.
\item[(4)] If $\Token=\True$, let $V''=V\setminus V'$ and $\Ccal''=\Ccal|_\sigma\setminus\Ccal'$.
Then $\Phi''=\pbra{V'',\pbra{\Omega_v|_\sigma,\Dcal_v|_\sigma}_{v\in V''},\Ccal''}$ is an atomic CSP. Moreover $\sigma_\True^{\Ccal|_\sigma}=\sigma_\True^{\Ccal'}\times\sigma_\True^{\Ccal''}$ and $\mu_\True^{\Ccal|_\sigma}=\mu_\True^{\Ccal'}\times\mu_\True^{\Ccal''}$.
\end{itemize}
\end{proposition}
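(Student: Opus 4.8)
\textbf{Proof proposal for \Cref{prop:component}.}

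The plan is to verify the four items essentially by tracking what the \texttt{while} loop in \Cref{alg:component} does, step by step. Throughout, recall the standing assumptions on the input: $\Phi$ is atomic, $\sigma(u)=\Qmark$, and $\sigma(v)=\Qmark$ for every $v\in V\setminus\Mcal$.

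For item (1), I would observe that each iteration of the \texttt{while} loop either terminates the algorithm (at most once) or adds at least one new constraint $C|_\sigma$ to $\Ccal'$ — since the guard requires $C|_\sigma\notin\Ccal'$ — so there are at most $|\Ccal'|+1$ iterations. To find a qualifying $C|_\sigma$, it suffices to scan the constraints incident to the (at most $k$) variables of any newly added $\vbl(C|_\sigma)$, and each variable lies in at most $d$ constraints, each of width at most $k$; bookkeeping which constraints are already in $\Ccal'$ and which variables are already in $V'$ costs $O(\Delta k)$ per added constraint by examining neighbours in the dependency graph. The additive $dk$ accounts for the initial scan around $u$. (I will not belabour the exact data structures — any reasonable adjacency representation gives the claimed $O(\Delta k|\Ccal'|+dk)$.)

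Item (2) is an invariant maintained by the loop: $V'$ starts as $\cbra{u}$ and only grows by adding $\vbl(C|_\sigma)$ for constraints that pass the test $\vbl(C|_\sigma)\subseteq\cbra{u}\cup(V\setminus\Mcal)$, and $\Ccal'$ only grows by adding such $C|_\sigma\in\Ccal|_\sigma$; hence $u\in V'\subseteq V$ and $\Ccal'\subseteq\Ccal|_\sigma$ at all times. For the ``$\sigma(v)=\Qmark$ for all $v\in V'$'' claim: any $v\in V'\setminus\cbra{u}$ lies in some added $\vbl(C|_\sigma)\subseteq\cbra{u}\cup(V\setminus\Mcal)$, so either $v=u$ (has $\Qmark$ by hypothesis) or $v\in V\setminus\Mcal$ (has $\Qmark$ by hypothesis). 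Also recall $\vbl(C|_\sigma)=\cbra{v\in\vbl(C):\sigma(v)=\Qmark}$ by \Cref{def:projected_constraint_satisfaction_problem}, which immediately gives $\sigma(v)=\Qmark$ for every $v\in\vbl(C|_\sigma)$, consistent with the above. Finally, for $C|_\sigma\in\Ccal'$ the definition of $\Ccal|_\sigma$ forces $C(\sigma)=\False$; combined with the atomicity of $\Phi$ and the definition \Cref{eq:qmark_constraint} of $C(\sigma)$ for $\Qmark$-assignments, every $v\in\vbl(C)$ with $\sigma(v)\neq\Qmark$ must satisfy $\sigma(v)=\sigma_\False^C(v)$; since such a $v$ in particular lies in $\vbl(C)\cap\Mcal$ (as unmarked variables carry $\Qmark$) and can only fail $\sigma(v)=\Qmark$ when $v\neq u$, this yields exactly the stated ``$\sigma(v)=\sigma_\False^C(v)$ for $v\in(\vbl(C)\cap\Mcal)\setminus\cbra{u}$''.

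For item (3): $\Phi'$ is obtained from $\Phi|_\sigma$ by restricting to $V'$ and $\Ccal'$, and $\Phi|_\sigma$ is atomic by \Cref{fct:after_projection}; a sub-CSP of an atomic CSP is atomic (restricting the constraint set and the variable set does not create new falsifying assignments), so $\Phi'$ is atomic. Connectivity of $H(\Phi')$ follows because $V'$ grows only by absorbing $\vbl(C|_\sigma)$ of a constraint that already intersects $V'$ (the loop guard $\vbl(C|_\sigma)\cap V'\neq\emptyset$), so by induction every vertex added is connected in $H(\Phi')$ to $u$; any variable of $V'$ other than $u$ is introduced together with a witnessing edge $\vbl(C|_\sigma)\in E(H(\Phi'))$. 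For item (4), when $\Token=\True$ the loop exited normally, i.e., no constraint $C|_\sigma\in\Ccal|_\sigma\setminus\Ccal'$ has $\vbl(C|_\sigma)\cap V'\neq\emptyset$; equivalently every $C|_\sigma\in\Ccal''$ has $\vbl(C|_\sigma)\subseteq V''$. That $\Phi''$ is atomic is again \Cref{fct:after_projection} plus the sub-CSP observation. The product decompositions $\sigma_\True^{\Ccal|_\sigma}=\sigma_\True^{\Ccal'}\times\sigma_\True^{\Ccal''}$ and $\mu_\True^{\Ccal|_\sigma}=\mu_\True^{\Ccal'}\times\mu_\True^{\Ccal''}$ then hold because $\vbl(\Ccal')\subseteq V'$ and $\vbl(\Ccal'')\subseteq V''$ with $V'\cap V''=\emptyset$: a configuration satisfies all of $\Ccal|_\sigma$ iff its restriction to $V'$ satisfies $\Ccal'$ and its restriction to $V''$ satisfies $\Ccal''$, and the product measure $\prod_{v}\Dcal_v|_\sigma$ factorizes across the partition $V=V'\sqcup V''$, so conditioning on the (independent) events ``$\Ccal'$ satisfied'' and ``$\Ccal''$ satisfied'' yields the product of the conditionals. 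I expect the main (mild) subtlety to be the bookkeeping in item (2) around the role of $u$ versus the other variables and the correct reading of \Cref{eq:qmark_constraint}; the rest is a direct loop-invariant argument.
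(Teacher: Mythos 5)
Your proposal is correct and takes essentially the same approach as the paper's (terse) proof: a loop-invariant argument for items (1)--(3), with connectivity coming from the guard $\vbl(C|_\sigma)\cap V'\neq\emptyset$, and the disjointness of $V',V''$ and of $\Ccal',\Ccal''$ giving the product decomposition in item (4). One small point to tighten in the ``moreover'' part of item (2): you prove $\sigma(v)=\sigma_\False^C(v)$ for the variables of $\vbl(C)$ with $\sigma(v)\neq\Qmark$, so to get the claim for \emph{every} $v\in(\vbl(C)\cap\Mcal)\setminus\{u\}$ you should also note that the admission test $\vbl(C|_\sigma)\subseteq\{u\}\cup(V\setminus\Mcal)$ forces $\sigma(v)\neq\Qmark$ for each such $v$ (since $\vbl(C|_\sigma)$ is exactly the set of $\Qmark$-variables of $C$), which closes the implication in the direction the statement requires.
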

\begin{proof}
Item (2) is evident from the algorithm, $\Phi$ being atomic, and \Cref{def:projected_constraint_satisfaction_problem}.

For Item (3), note that each time we add a constraint $C|_\sigma$ into $\Ccal'$, we add $\vbl(C|_\sigma)$ into $V'$. Thus $\Phi'$ is a CSP.
Since $\Phi$ is atomic, by \Cref{fct:after_projection} $\Phi'$ is also atomic. In addition, we only consider $C|_\sigma$ with $\vbl(C|_\sigma)\cap V'\neq\emptyset$ at \textsf{Line 2}, hence $H(\Phi')$ is connected.

For Item (1), the algorithm can be executed by first checking all (at most $d$) constraints related with $u$, then iteratively checking (at most $\Delta|\Ccal'|$ in total) constraints related with the newly added constraints in $\Ccal'$.
Thus the total running time is $O(k)\cdot\pbra{\Delta|\Ccal'|+d}$.

Now we focus on Item (4) when $\Token=\True$. 
The condition from \textsf{Line 2} implies for any $C|_\sigma\in\Ccal''$, $\vbl(C|_\sigma)\cap V'=\emptyset$ and thus $\vbl(C|_\sigma)\subseteq V''$. Therefore $\Phi''$ is a CSP. Since $\Phi$ is atomic, by \Cref{fct:after_projection} $\Phi''$ is also atomic.
Then the ``moreover'' part follows from $V$ (resp., $\Ccal|_\sigma$) being a disjoint union of $V'$ and $V''$ (resp., $\Ccal'$ and $\Ccal''$).
\end{proof}

\subsection{A \textsf{RejectionSampling} Subroutine}\label{sec:a_rejectionsampling_subroutine}
The following simple perfect sampler, which is based on the standard rejection sampling technique, will be a building block for our main algorithm.

\begin{algorithm}[H]
\caption{The \textsf{RejectionSampling} algorithm}\label{alg:rejectionsampling}
\DontPrintSemicolon
\LinesNumbered
\KwIn{a CSP $\Phi=\pbra{V,\pbra{\Omega_v,\Dcal_v}_{v\in V},\Ccal}$ and a randomness tape $r$}
\KwOut{an assignment $\sigma\in\mu_\True^\Ccal$}
\While{$\True$}{
Sample $\sigma\sim\prod_{v\in V}\Dcal_v$ with fresh randomness from $r$\;
\lIf{$C(\sigma)=\True$ for all $C\in\Ccal$}{\Return{$\sigma$}}
}
\end{algorithm}

We have the following result on \Cref{alg:rejectionsampling} by basic facts of geometric distributions.
\begin{fact}\label{fct:rejectionsampling}
The following holds for \RejectionSampling{$\Phi,r$} over random $r$ if $\sigma_\True^\Ccal\neq\emptyset$.
\begin{itemize}
\item It halts almost surely, and outputs $\sigma\sim\mu_\True^\Ccal$ when it halts.
\item Let $T$ be the number of \textsf{while} iterations it takes before it halts. Then 
$$
\E[T]=\frac1{\Pr_{\sigma\sim\prod_{v\in V}\Dcal_v}\sbra{\sigma\in\sigma_\True^\Ccal}}
\quad\text{and}\quad
\E\sbra{T^2}=2\cdot\pbra{\E[T]}^2-\E[T].
$$ 
\item Let $X$ be its total running time.\footnote{Each \textsf{while} iteration can be performed in time $O(k|\Ccal|+Q|V|)$ where $O(Q)\cdot|V|)$ is for \textsf{Line 2} and $O(k)\cdot|\Ccal|$ is for \textsf{Line 3}.} Then 
$$
\E[X]=O\pbra{\E[T]\cdot\pbra{k|\Ccal|+Q|V|}}
\quad\text{and}\quad
\E\sbra{X^2}=O\pbra{\pbra{\E[T]\cdot\pbra{k|\Ccal|+Q|V|}}^2}.
$$ 
\end{itemize}
\end{fact}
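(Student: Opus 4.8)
The plan is to observe that the \textsf{while} loop performs independent identically distributed Bernoulli trials: in each iteration a fresh $\sigma\sim\prod_{v\in V}\Dcal_v$ is drawn, and the iteration ``succeeds'' (returns) exactly when $\sigma\in\sigma_\True^\Ccal$, an event of probability $q\coloneqq\Pr_{\sigma\sim\prod_{v\in V}\Dcal_v}\sbra{\sigma\in\sigma_\True^\Ccal}$. First I would note $q>0$: since $\sigma_\True^\Ccal\neq\emptyset$ there is a solution $\sigma^\star$, and each $\Dcal_v$ has full support on $\Omega_v$, so $\prod_{v\in V}\Dcal_v(\sigma^\star(v))>0$. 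Hence the number of iterations $T$ is a geometric random variable with parameter $q$, so $\Pr\sbra{T<\infty}=1$ and the algorithm halts almost surely. For the output law: conditioned on the event that iteration $i$ is the first success, the returned $\sigma$ is distributed as $\prod_{v\in V}\Dcal_v$ conditioned on $\sigma\in\sigma_\True^\Ccal$ (independence of the trials means the outcomes of the failed iterations $1,\dots,i-1$ carry no information about the successful draw), which is precisely $\mu_\True^\Ccal$ by its definition. Summing over $i$ gives that the output is exactly $\mu_\True^\Ccal$.

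Next I would compute the moments of $T$ from the standard one-step recursion for a geometric variable: conditioning on whether the first trial succeeds gives $\E[T]=1+(1-q)\E[T]$, hence $\E[T]=1/q$; and $\E[T^2]=q\cdot1+(1-q)\E\sbra{(T+1)^2}=1+(1-q)\pbra{\E[T^2]+2\E[T]+1}$, which after substituting $\E[T]=1/q$ and solving yields $\E\sbra{T^2}=2/q^2-1/q=2\pbra{\E[T]}^2-\E[T]$, as claimed.

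Finally, for the running time $X$, I would bound the cost of a single \textsf{while} iteration: \textsf{Line 2} draws $|V|$ coordinates, each from a domain of size at most $Q$, in time $O(Q|V|)$, and \textsf{Line 3} evaluates $|\Ccal|$ constraints, each depending on at most $k$ variables, in time $O(k|\Ccal|)$; so every iteration takes time at most $c\pbra{k|\Ccal|+Q|V|}$ for an absolute constant $c$, deterministically. Therefore $X\le c\pbra{k|\Ccal|+Q|V|}\cdot T$ pointwise, and taking expectations together with the moment formulas for $T$ gives $\E[X]=O\pbra{\E[T]\pbra{k|\Ccal|+Q|V|}}$ and $\E\sbra{X^2}\le c^2\pbra{k|\Ccal|+Q|V|}^2\E\sbra{T^2}=O\pbra{\pbra{\E[T]\pbra{k|\Ccal|+Q|V|}}^2}$, using $\E[T^2]=2(\E[T])^2-\E[T]=O\pbra{(\E[T])^2}$. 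There is no real obstacle here; the only points requiring a moment's care are justifying $q>0$ from the full-support assumption on the $\Dcal_v$'s and checking that selecting the first successful draw does not bias the conditional law, both of which follow from independence of the iterations.
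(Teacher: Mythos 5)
Your proof is correct and matches the paper's (implicit) argument: the paper simply invokes "basic facts of geometric distributions" plus the footnoted per-iteration cost bound $O(k|\Ccal|+Q|V|)$, and your write-up just fills in those standard details (geometric $T$ with parameter $q>0$ from full support, $\E[T]=1/q$, $\E[T^2]=2/q^2-1/q$, unbiasedness of the first accepted sample, and the pointwise bound $X\le c\,(k|\Ccal|+Q|V|)\,T$).
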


The following result is useful when we perform rejection sampling on a projected CSP, say $\Phi'$ from \Cref{alg:component}.
\begin{proposition}\label{prop:rejectionsampling}
Let $\Phi=\pbra{V,\pbra{\Omega_v,\Dcal_v}_{v\in V},\Ccal}$ be an atomic CSP and $\Mcal\subseteq V$ be a marking. Let $k=k(\Phi)$, $\Delta=\Delta(\Phi)$, $Q=Q(\Phi)$, $\alpha=\alpha(\Phi,\Mcal)$, and $\beta=\beta(\Phi,\Mcal)$.
Let $\sigma\in\pbra{\prod_{v\in\Mcal}\Omega_v^\Qmark}\times\cbra{\Qmark}^{V\setminus\Mcal}$ be an arbitrary assignment and $\Phi'=\pbra{V',\pbra{\Omega_v|_\sigma,\Dcal_v|_\sigma}_{v\in V'},\Ccal'}$ be an arbitrary sub-CSP of $\Phi|_\sigma$ where $V'\subseteq V$ and $\Ccal'\subseteq\Ccal|_\sigma$.

If $\Naturale\alpha\Delta\le1$, then the following holds for \RejectionSampling{$\Phi',r$} over random $r$.
\begin{itemize}
\item[(1)] Let $X$ be its running time. Then $X<+\infty$ almost surely and, if hypergraph $H(\Phi')$ (Defined in \Cref{eq:csp_to_hypergraph}) is connected, we have
$$
\E[X]=O\pbra{\frac{kQ|\Ccal'|+Q}{\pbra{1-\Naturale\alpha}^{\abs{\Ccal'}}}}
\quad\text{and}\quad
\E\sbra{X^2}=O\pbra{\frac{\pbra{kQ|\Ccal'|+Q}^2}{\pbra{1-\Naturale\alpha}^{2\cdot\abs{\Ccal'}}}}.
$$
\item[(2)] Let $\sigma'$ be its output. Then $\sigma'\sim\mu_\True^{\Ccal'}$. Moreover for any $v\in V'$ and any $q\in\Omega_v|_\sigma$, we have
$$
\Pr\sbra{\sigma'(v)=q}\le\beta\cdot\Dcal_v|_\sigma(q).
$$
\end{itemize}
\end{proposition}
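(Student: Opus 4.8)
\textbf{Proof plan for \Cref{prop:rejectionsampling}.}

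The plan is to analyze the two items largely independently, leaning on \Cref{fct:rejectionsampling} for the quantitative bounds and on \Cref{thm:HSS_LLL} (the general LLL) for both the survival-probability estimate in Item (1) and the local-uniformity estimate in Item (2). First I would set up notation: since $\Phi$ is atomic and (by \Cref{fct:after_projection}) so is $\Phi|_\sigma$ and hence $\Phi'$, each constraint $C|_\sigma$ in $\Ccal'$ has a unique falsifying assignment, and its falsifying probability under $\prod_{v\in V'}\Dcal_v|_\sigma$ equals $\prod_{v\in\vbl(C|_\sigma)}\Dcal_v(\sigma_\False^C(v))$; since $\vbl(C|_\sigma)\subseteq\vbl(C)\setminus\Mcal$ (the $\Mcal$-variables have been fixed by $\sigma$), this is at most $\prod_{v\in\vbl(C)\setminus\Mcal}\Dcal_v(\sigma_\False^C(v))\le\alpha$ by the definition \Cref{eq:alpha}. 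So $\Phi'$ has individual falsifying probability at most $\alpha$ and constraint degree at most $\Delta$, hence under the hypothesis $\Naturale\alpha\Delta\le1$ it lies in the LLL regime.

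For Item (1), the almost-sure finiteness of $X$ follows from \Cref{fct:rejectionsampling} once we know $\sigma_\True^{\Ccal'}\ne\emptyset$, which is guaranteed by \Cref{thm:HSS_LLL} applied to $\Phi'$. For the expectation bounds, \Cref{fct:rejectionsampling} reduces everything to estimating $\E[T]=1/\Pr_{\sigma\sim\prod\Dcal_v|_\sigma}[\sigma\in\sigma_\True^{\Ccal'}]$, i.e., to lower bounding the probability that a random assignment satisfies all constraints of $\Phi'$. Here I would invoke \Cref{thm:LLL} (or the product form in \Cref{thm:HSS_LLL}) to get $\Pr[\sigma\in\sigma_\True^{\Ccal'}]\ge(1-\Naturale\alpha)^{|\Ccal'|}$, so $\E[T]\le(1-\Naturale\alpha)^{-|\Ccal'|}$. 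Then since $H(\Phi')$ is connected, $|V'|\le k|\Ccal'|+1$ (each new constraint adds at most $k$ fresh variables, plus the seed), and each \textsf{while} iteration costs $O(k|\Ccal'|+Q|V'|)=O(kQ|\Ccal'|+Q)$; plugging into the $\E[X],\E[X^2]$ formulas of \Cref{fct:rejectionsampling} gives the stated bounds.

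For Item (2), that $\sigma'\sim\mu_\True^{\Ccal'}$ is immediate from \Cref{fct:rejectionsampling}. For the marginal bound, fix $v\in V'$ and $q\in\Omega_v|_\sigma$. If $\sigma(v)\ne\Qmark$ then $\Omega_v|_\sigma$ is a singleton and the claim is trivial ($\beta\ge1$), so assume $\sigma(v)=\Qmark$, i.e., $\Omega_v|_\sigma=\Omega_v$ and $\Dcal_v|_\sigma=\Dcal_v$. The idea is to write $\Pr_{\sigma'\sim\mu_\True^{\Ccal'}}[\sigma'(v)=q]$ as $\Pr_{\tau\sim\mu_\True^{\Ccal'}}[\tau(v)=q]$ and apply \Cref{thm:HSS_LLL} with the auxiliary constraint $B$ defined by $B(\tau)=\True\iff\tau(v)=q$; then $\Gamma(B)=\{C|_\sigma\in\Ccal'\mid v\in\vbl(C|_\sigma)\}$ has size at most $d(\Phi')\le d$, so $\Pr_{\tau\sim\mu_\True^{\Ccal'}}[\tau(v)=q]\le(1-\Naturale\alpha)^{-d}\cdot\Dcal_v(q)=\beta\cdot\Dcal_v(q)$ by the definition \Cref{eq:beta} of $\beta$. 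The main subtlety here, and the one point I would write out carefully, is the bound $|\Gamma(B)|\le d$: we must use that every constraint in $\Ccal'$ touching $v$ comes from $\Ccal|_\sigma$ and hence from some $C\in\Ccal$ with $v\in\vbl(C)$, of which there are at most $d$; the projection only removes variables from $\vbl(C)$, so it cannot create new constraints on $v$. Everything else is bookkeeping with the definitions in \Cref{def:(atomic)_constraint_satisfaction_problem} and the marking constants.
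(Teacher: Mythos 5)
Your proposal follows the paper's proof essentially step by step: bound the falsifying probability of every constraint of $\Phi'$ by $\alpha$, invoke \Cref{thm:LLL} to lower bound the acceptance probability by $(1-\Naturale\alpha)^{|\Ccal'|}$ and feed this into \Cref{fct:rejectionsampling} together with $|V'|\le k|\Ccal'|+1$ (from connectivity of $H(\Phi')$) for Item (1), and apply \Cref{thm:HSS_LLL} to the singleton constraint ``$\sigma'(v)=q$'' with $|\Gamma(B)|\le d(\Phi')\le d$ for Item (2). This is exactly the paper's argument.

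One slip worth fixing: you justify $\prod_{v\in\vbl(C|_\sigma)}\Dcal_v(\sigma_\False^C(v))\le\prod_{v\in\vbl(C)\setminus\Mcal}\Dcal_v(\sigma_\False^C(v))$ by claiming $\vbl(C|_\sigma)\subseteq\vbl(C)\setminus\Mcal$ because ``the $\Mcal$-variables have been fixed by $\sigma$''. That containment is false in general: $\sigma$ ranges over $\pbra{\prod_{v\in\Mcal}\Omega_v^\Qmark}\times\cbra{\Qmark}^{V\setminus\Mcal}$, so marked variables may themselves carry $\Qmark$ (in the bounding chain the variable currently being updated always does), and such variables then belong to $\vbl(C|_\sigma)$. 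Moreover, were your containment true, the inequality would point the wrong way, since a product of factors in $[0,1]$ over a \emph{smaller} index set is larger, not smaller. The correct justification is the reverse containment $\vbl(C)\setminus\Mcal\subseteq\vbl(C|_\sigma)$ (every unmarked variable is $\Qmark$ under $\sigma$), so the product over $\vbl(C|_\sigma)$ includes all factors indexed by $\vbl(C)\setminus\Mcal$ plus possibly more, each at most $1$, which gives the bound $\le\alpha$ by \Cref{eq:alpha}. With this local repair the rest of your argument goes through unchanged.
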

\begin{proof}
Firstly by \Cref{fct:after_projection}, $\Delta(\Phi')\le\Delta$. Then for any $C|_\sigma\in\Ccal'$, we have
\begin{align*}
&\phantom{=}\Pr_{\tilde\sigma\sim\prod_{v\in\vbl(C|_\sigma)}\Dcal_v|_\sigma}\sbra{C|_\sigma(\tilde\sigma)=\False}\\
&=\prod_{v\in\vbl(C|_\sigma)}\Dcal_v(\sigma_\False^C(v))
\tag{since $\Phi$ is atomic and by \Cref{def:projected_constraint_satisfaction_problem}}\\
&\le\prod_{v\in\vbl(C)\setminus\Mcal}\Dcal_v(\sigma_\False^C(v))
\tag{since $\sigma(V\setminus\Mcal)=\Qmark^{V\setminus\Mcal}$ and thus $\vbl(C|_\sigma)\subseteq V\setminus\Mcal$}\\
&\le\alpha.
\tag{by \Cref{eq:alpha}}
\end{align*}
Thus $p(\Phi')\le\alpha$ (Recall $p$ from \Cref{def:(atomic)_constraint_satisfaction_problem}) and we apply \Cref{thm:LLL} to obtain 
$$
\Pr_{\tilde\sigma\sim\prod_{v\in V'}\Dcal_v|_\sigma}\sbra{\tilde\sigma\in\sigma_\True^{\Ccal'}}\ge\pbra{1-\Naturale\alpha}^{\abs{\Ccal'}}>0.
$$
By \Cref{fct:after_projection}, $k(\Phi')\le k$.
Then Item (1) follows immediately from \Cref{fct:rejectionsampling} by noticing $|V'|\le k|\Ccal'|+1$\footnote{The additional $+1$ is for the case $|V'|=1$ and $\Ccal'=\emptyset$.} when $H(\Phi')$ is connected. $\sigma'\sim\mu_\True^{\Ccal'}$ in Item (2) follows from \Cref{fct:rejectionsampling} as well.

Note that $\beta\ge1$. Thus by \Cref{def:projected_constraint_satisfaction_problem} we may safely assume $\sigma(v)=\Qmark$ in Item (2). Let $B(\sigma')$ be the event (i.e., constraint) ``$\sigma'(v)=q$''. Then $\vbl(B)=\cbra{v}$ and 
$$
\Pr_{\sigma'\sim\mu_\True^{\Ccal'}}\sbra{B(\sigma')}=\Pr_{\sigma'\sim\mu_\True^{\Ccal'}}\sbra{\sigma'(v)=q}
\quad\text{and}\quad
\Pr_{\sigma'\sim\prod_{v\in V'}\Dcal_v|_\sigma}\sbra{B(\sigma')}=\Dcal_v|_\sigma(q).
$$ 
By \Cref{fct:after_projection}, $d(\Phi')\le d$.
Thus Item (2) follows naturally from \Cref{thm:HSS_LLL} by the definition of $\beta$ and noticing
\begin{equation}
\abs{\cbra{C|_\sigma\in\Ccal'\mid\vbl(B)\cap\vbl(C|_\sigma)\neq\emptyset}}=\abs{\cbra{C|_\sigma\in\Ccal'\mid v\in\vbl(C|_\sigma)}}\le d(\Phi')\le d.
\tag*{\qedhere}
\end{equation}
\end{proof}

\subsection{A \textsf{SafeSampling} Subroutine}\label{sec:a_safesampling_subroutine}

The following simple \textsf{SafeSampling} complements \textsf{RejectionSampling} when there is uncertainty to update $u\in\Mcal$. Details will be clear in \Cref{sec:coupling_from_the_past_and_the_bounding_chain}.

\begin{algorithm}[ht]
\caption{The \textsf{SafeSampling} algorithm}\label{alg:safesampling}
\DontPrintSemicolon
\LinesNumbered
\KwIn{an atomic CSP $\Phi=\pbra{V,\pbra{\Omega_v,\Dcal_v}_{v\in V},\Ccal}$, a marking $\Mcal\subseteq V$, some $u\in\Mcal$, and a randomness tape $r$}
\KwOut{some value $q\in\Omega_u^\Qmark$}
Recall $\beta=\beta(\Phi,\Mcal)$ from \Cref{eq:beta}\;
Sample $c$ from $\Dcal_u^\Qmark$ using $r$ where $\Dcal_u^\Qmark$ is a distribution over $\Omega_u^\Qmark$ by
$$
\Dcal_u^\Qmark(q)=\begin{cases}
\max\cbra{0,1-\beta\cdot\pbra{1-\Dcal_u(q)}} & q\in\Omega_u,\\
1-\sum_{q'\in\Omega_u}\Dcal_u^\Qmark(q') & q=\Qmark.
\end{cases}
$$\;
\vspace{-20pt}
\Return{$c$}
\end{algorithm}

We note the following observation regarding \textsf{SafeSampling}.
\begin{proposition}\label{prop:safesampling}
The following holds for \SafeSampling{$\Phi,\Mcal,u,r$} over random $r$ if $\Naturale\alpha\Delta\le1$. 
\begin{itemize}
\item[(1)] It runs in time $O(Q)$ where $Q=Q(\Phi)$ from \Cref{def:(atomic)_constraint_satisfaction_problem} and $\Dcal_u^\Qmark$ from \textsf{Line 2} is a distribution.
\item[(2)] For any $q\in\Omega_u$, $\Dcal_u^\Qmark(q)\le\Dcal_u(q)$ and $\Dcal_u^\Qmark(\cbra{q,\Qmark})\le\beta\cdot\Dcal_u(q)+(\beta-1)(|\Omega_u|-2)$.
\item[(3)] Let $\sigma\in\pbra{\prod_{v\in\Mcal}\Omega_v^\Qmark}\times\cbra{\Qmark}^{V\setminus\Mcal}$ be an arbitrary assignment with $\sigma(u)=\Qmark$. Let $\Phi'=\pbra{V',\pbra{\Omega_v|_\sigma,\Dcal_v|_\sigma}_{v\in V'},\Ccal'}$ be an arbitrary sub-CSP of $\Phi|_\sigma$ where $V'\subseteq V$ and $\Ccal'\subseteq\Ccal|_\sigma$.
Let $\sigma'\sim\mu_\True^{\Ccal'}$.\footnote{$\mu_\True^{\Ccal'}$ is well-defined guaranteed by Item (2) of \Cref{prop:rejectionsampling}.} Then for any $q\in\Omega_u|_\sigma=\Omega_u$, we have $\Dcal_u^\Qmark(q)\le\Pr\sbra{\sigma'(u)=q}$.
\end{itemize}
\end{proposition}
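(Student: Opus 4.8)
\textbf{Proof plan for \Cref{prop:safesampling}.} The three items are essentially bookkeeping around the definition of $\Dcal_u^\Qmark$ in \textsf{Line 2} together with the local-uniformity consequence of \Cref{thm:HSS_LLL}; the plan is to dispatch them in order. First I would handle Item (1): the running time is immediate since sampling $c$ requires evaluating $\Dcal_u^\Qmark(q)$ for each of the $\le Q$ values $q\in\Omega_u$ and then the single value $q=\Qmark$, each in $O(1)$ time. The substantive part of Item (1) is checking that $\Dcal_u^\Qmark$ is a genuine distribution: each mass on $q\in\Omega_u$ is nonnegative by the $\max\{0,\cdot\}$ clamp, so I only need the mass on $\Qmark$ to be nonnegative, i.e. $\sum_{q\in\Omega_u}\Dcal_u^\Qmark(q)\le1$. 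Since $\Dcal_u^\Qmark(q)\le1-\beta(1-\Dcal_u(q))\le\Dcal_u(q)$ (using $\beta\ge1$), summing over $q\in\Omega_u$ gives $\sum_q\Dcal_u^\Qmark(q)\le\sum_q\Dcal_u(q)=1$, as needed. (The hypothesis $\Naturale\alpha\Delta\le1$ ensures $\Naturale\alpha\le1$ so $\beta=(1-\Naturale\alpha)^{-d}$ is well-defined and $\ge1$.)

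For Item (2), the bound $\Dcal_u^\Qmark(q)\le\Dcal_u(q)$ was already observed above. For the two-point bound, I would write $\Dcal_u^\Qmark(\{q,\Qmark\})=\Dcal_u^\Qmark(q)+\Dcal_u^\Qmark(\Qmark)=\Dcal_u^\Qmark(q)+1-\sum_{q'\in\Omega_u}\Dcal_u^\Qmark(q')=\Dcal_u^\Qmark(q)+\sum_{q'\in\Omega_u}\bigl(\Dcal_u(q')-\Dcal_u^\Qmark(q')\bigr)$, using $\sum_{q'}\Dcal_u(q')=1$. Now for each $q'\in\Omega_u$, $\Dcal_u(q')-\Dcal_u^\Qmark(q')\le\Dcal_u(q')-\bigl(1-\beta(1-\Dcal_u(q'))\bigr)=\beta-1-(\beta-1)\Dcal_u(q')=(\beta-1)(1-\Dcal_u(q'))$; one has to be slightly careful when the clamp is active (i.e. $1-\beta(1-\Dcal_u(q'))<0$), but in that case $\Dcal_u^\Qmark(q')=0$ and the displayed inequality $\Dcal_u(q')-0\le(\beta-1)(1-\Dcal_u(q'))$ still holds since it rearranges to $\beta(1-\Dcal_u(q'))\ge1-\Dcal_u(q')$, trivially true as $\beta\ge1$ — actually I should double-check it is $\le$ in the right direction; when clamped, $1-\beta(1-\Dcal_u(q'))<0$ means $\beta(1-\Dcal_u(q'))>1>\Dcal_u(q')$ is not quite it, so the cleanest route is: in all cases $\Dcal_u^\Qmark(q')\ge1-\beta(1-\Dcal_u(q'))$ (equality or the clamp raises it to $0\ge$ that negative quantity), hence $\Dcal_u(q')-\Dcal_u^\Qmark(q')\le\Dcal_u(q')-1+\beta(1-\Dcal_u(q'))=(\beta-1)(1-\Dcal_u(q'))$ always. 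Summing, $\sum_{q'\in\Omega_u}\bigl(\Dcal_u(q')-\Dcal_u^\Qmark(q')\bigr)\le(\beta-1)\sum_{q'\in\Omega_u}(1-\Dcal_u(q'))=(\beta-1)(|\Omega_u|-1)$. Combining with $\Dcal_u^\Qmark(q)\le\Dcal_u(q)$: $\Dcal_u^\Qmark(\{q,\Qmark\})\le\Dcal_u(q)+(\beta-1)(|\Omega_u|-1)$. To sharpen this to the claimed $\beta\cdot\Dcal_u(q)+(\beta-1)(|\Omega_u|-2)$ I would instead keep the $q'=q$ term separate: $\Dcal_u^\Qmark(q)+\sum_{q'\ne q}\bigl(\Dcal_u(q')-\Dcal_u^\Qmark(q')\bigr)+\bigl(\Dcal_u(q)-\Dcal_u^\Qmark(q)\bigr)\le\Dcal_u^\Qmark(q)+(\beta-1)(|\Omega_u|-2)+(\beta-1)(1-\Dcal_u(q))$, and now bound $\Dcal_u^\Qmark(q)+(\beta-1)(1-\Dcal_u(q))\le\Dcal_u(q)+(\beta-1)(1-\Dcal_u(q))$ — hmm this gives $(\beta-1)+\Dcal_u(q)(2-\beta)$ which is not obviously $\le\beta\Dcal_u(q)$. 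The right move is to not bound $\Dcal_u^\Qmark(q)$ by $\Dcal_u(q)$ but instead use $\Dcal_u^\Qmark(q)\le1$ trivially when helpful, or observe $\Dcal_u^\Qmark(q)+(\Dcal_u(q)-\Dcal_u^\Qmark(q))=\Dcal_u(q)$ so the ``extra'' copy is what upgrades one $\Dcal_u(q)$ to effectively $\beta\Dcal_u(q)$ after regrouping; I would carry out this regrouping carefully so that exactly $|\Omega_u|-2$ of the $(\beta-1)$-terms survive.

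For Item (3), I would apply \Cref{thm:HSS_LLL} to $\Phi'$ with the constraint $B(\sigma')$ being ``$\sigma'(u)\ne q$'' (so $B$ is true with probability $1-\Pr[\sigma'(u)=q]$), exactly as in the proof of Item (2) of \Cref{prop:rejectionsampling}: since $p(\Phi')\le\alpha$ and $|\Gamma(B)|\le d(\Phi')\le d$, we get $\Pr_{\sigma'\sim\mu_\True^{\Ccal'}}[\sigma'(u)\ne q]\le(1-\Naturale\alpha)^{-d}(1-\Dcal_u(q))=\beta(1-\Dcal_u(q))$, hence $\Pr[\sigma'(u)=q]\ge1-\beta(1-\Dcal_u(q))$. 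Since also $\Pr[\sigma'(u)=q]\ge0$, we conclude $\Pr[\sigma'(u)=q]\ge\max\{0,1-\beta(1-\Dcal_u(q))\}=\Dcal_u^\Qmark(q)$, which is the claim. I expect the only mildly delicate point in the whole proposition to be the algebraic regrouping in the second inequality of Item (2) — getting the constant to be exactly $(\beta-1)(|\Omega_u|-2)$ rather than $(|\Omega_u|-1)$ requires using the $\le1$-total-mass slack once rather than bounding every term by its crude individual bound; everything else is routine.
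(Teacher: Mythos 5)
Your Items (1) and (3) are correct and in substance the paper's own argument: Item (1) is the same nonnegativity check (mass at $\Qmark$ is nonnegative because $\Dcal_u^\Qmark(q)\le\Dcal_u(q)$ termwise), and for Item (3) you apply \Cref{thm:HSS_LLL} directly to the complement event ``$\sigma'(u)\neq q$'' whereas the paper sums the bound $\Pr\sbra{\sigma'(u)=q'}\le\beta\cdot\Dcal_u(q')$ from Item (2) of \Cref{prop:rejectionsampling} over $q'\neq q$; these are the same computation, and your explicit use of $\Pr\sbra{\sigma'(u)=q}\ge0$ to absorb the clamp in $\Dcal_u^\Qmark(q)=\max\cbra{0,1-\beta(1-\Dcal_u(q))}$ is exactly the right way to finish.

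The one genuine gap is the second inequality of Item (2), which you leave at ``I would carry out this regrouping carefully.'' As written your regrouped display is wrong: $\sum_{q'\neq q}\pbra{\Dcal_u(q')-\Dcal_u^\Qmark(q')}\le(\beta-1)\sum_{q'\neq q}\pbra{1-\Dcal_u(q')}=(\beta-1)\pbra{|\Omega_u|-2+\Dcal_u(q)}$, not $(\beta-1)(|\Omega_u|-2)$, and neither of your fallback bounds on $\Dcal_u^\Qmark(q)$ (by $\Dcal_u(q)$ or by $1$) recovers the claimed constant. The resolution, which is how the paper proves it, is to never bound the mass at $q$ at all: $\Dcal_u^\Qmark(\cbra{q,\Qmark})=1-\sum_{q'\in\Omega_u\setminus\cbra{q}}\Dcal_u^\Qmark(q')$, so the term $\Dcal_u^\Qmark(q)$ cancels identically (this is the cancellation $\Dcal_u^\Qmark(q)+\pbra{\Dcal_u(q)-\Dcal_u^\Qmark(q)}=\Dcal_u(q)$ that you gesture at but do not execute). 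Plugging in $\Dcal_u^\Qmark(q')\ge1-\beta\pbra{1-\Dcal_u(q')}$ for the $|\Omega_u|-1$ values $q'\neq q$ and using $\sum_{q'\neq q}\pbra{1-\Dcal_u(q')}=|\Omega_u|-2+\Dcal_u(q)$ gives $\Dcal_u^\Qmark(\cbra{q,\Qmark})\le1-(|\Omega_u|-1)+\beta\pbra{|\Omega_u|-2+\Dcal_u(q)}=\beta\cdot\Dcal_u(q)+(\beta-1)(|\Omega_u|-2)$. In particular it is not that ``exactly $|\Omega_u|-2$ of the $(\beta-1)$-terms survive'': there is an extra $(\beta-1)\Dcal_u(q)$, and that is precisely what upgrades $\Dcal_u(q)$ to $\beta\cdot\Dcal_u(q)$ in the stated bound.
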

\begin{proof}
First note that $\beta\ge1$.
For Item (2), observe that
$$
\Dcal_u^\Qmark(q)=\max\cbra{0,1-\beta\cdot\pbra{1-\Dcal_u(q)}}\le\max\cbra{0,1-1\cdot\pbra{1-\Dcal_u(q)}}=\Dcal_u(q)
$$
and
\begin{align*}
\Dcal_u^\Qmark(\cbra{q,\Qmark})
&=1-\sum_{q'\in\Omega_u\setminus\cbra{q}}\Dcal_u^\Qmark(q')
\le1-\sum_{q'\in\Omega_u\setminus\cbra{q}}\pbra{1-\beta\cdot(1-\Dcal_u(q'))}\\
&=1+(\beta-1)(|\Omega_u|-1)-\beta\cdot\sum_{q'\in\Omega_u\setminus\cbra{q}}\Dcal_u(q')\\
&=1-(\beta-1)(|\Omega_u|-1)-\beta\cdot(1-\Dcal_u(q))
=\beta\cdot\Dcal_u(q)+(\beta-1)(|\Omega_u|-2).
\end{align*}

For Item (1), it suffices to observe, using Item (2), that $\Dcal_u^\Qmark(\Qmark)\ge1-\sum_{q'\in\Omega_u}\Dcal_u(q')=0$.

Finally for Item (3), we simply have
\begin{align*}
\Pr\sbra{\sigma'(u)=q}
&=1-\sum_{q'\in\Omega_u\setminus\cbra{q}}\Pr\sbra{\sigma'(u)=q'}\\
&\ge1-\beta\cdot\sum_{q'\in\Omega_u\setminus\cbra{q}}\Dcal_u(q')
\tag{by Item (2) of \Cref{prop:rejectionsampling} and $\Dcal_u|_\sigma=\Dcal_u$}\\
&=1-\beta\cdot(1-\Dcal_u(q))\\
&\ge\Dcal_u(q)\ge\Dcal_u^\Qmark(q).
\tag{since $\beta\ge1$ and by Item (2)}
\end{align*}
\end{proof}

\section{The \textsf{AtomicCSPSampling} Algorithm}\label{sec:the_atomiccspsampling_algorithm}

We now formally describe our main algorithm \textsf{AtomicCSPSampling} in \Cref{alg:atomiccspsampling}. The missing subroutines will be provided as we prove the correctness and efficiency of \Cref{alg:atomiccspsampling}. 

Intuitively, the $\tilde\sigma$ after the \textsf{while} iterations will be a random assignment over $\Mcal$ (i.e., $\tilde\sigma\in\pbra{\prod_{v\in\Mcal}\Omega_v}\times\cbra{\Qmark}^{V\setminus\Mcal}$) with certain distribution; and the final output $\sigma$ simply extends the assignment of $\tilde\sigma$ to $V\setminus\Mcal$. Putting them together, we will prove $\sigma$ is distributed as $\mu_\True^\Ccal$.

\begin{algorithm}[ht]
\caption{The \textsf{AtomicCSPSampling} algorithm}\label{alg:atomiccspsampling}
\DontPrintSemicolon
\LinesNumbered
\KwIn{an atomic CSP $\Phi=\pbra{V,\pbra{\Omega_v,\Dcal_v}_{v\in V},\Ccal}$ and a marking $\Mcal\subseteq V$}
\KwOut{an assignment $\sigma\in\sigma_\True^\Ccal$}
Assign infinitely long randomness $r_i$ independently for each $i\in\Zbb$\;
Initialize $T\gets1$\;
\While{$\True$}{
$\tilde\sigma\gets\BoundingChain{$\Phi,\Mcal,-T,r_{-T},\ldots,r_{-1}$}$
\tcc*{$\tilde\sigma\in\pbra{\prod_{v\in\Mcal}\Omega_v^\Qmark}\times\cbra{\Qmark}^{V\setminus\Mcal}$}
\lIf{$\tilde\sigma(v)\neq\Qmark$ for all $v\in\Mcal$}{break}
\lElse{Update $T\gets2\cdot T$}
}
$\sigma\gets\FinalSampling{$\Phi,\Mcal,\tilde\sigma$}$\;
\Return{$\sigma$}
\end{algorithm}

Recall measures $\alpha,\rho,\lambda,k,d,\Delta,Q$ defined in \Cref{eq:alpha}, \Cref{eq:rho}, \Cref{eq:lambda}, and \Cref{def:(atomic)_constraint_satisfaction_problem}.
Now we present our main theorem, the proof of which is the focus of the rest of the section.
\begin{theorem}\label{thm:atomiccspsampling}
Let $\Phi=\pbra{V,\pbra{\Omega_v,\Dcal_v}_{v\in V},\Ccal}$ be an atomic CSP. Let $\Mcal\subseteq V$ be a marking. 
If $\Naturale\alpha\Delta\le1$, $\Naturale\Delta^2\rho\le1/32$, and $\Delta^2\lambda\le1/16$, then the following holds for \AtomicCSPSampling{$\Phi,\Mcal$}.
\begin{itemize}
\item \textsc{Correctness.} It halts almost surely and outputs $\sigma\sim\mu_\True^\Ccal$ when it halts.
\item \textsc{Efficiency.} Its expected total running time is $O\pbra{kQ\Delta^3|V|\log(|V|)}$.
\end{itemize}
\end{theorem}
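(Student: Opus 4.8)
The plan is to establish the two halves of \Cref{thm:atomiccspsampling} — \textsc{Correctness} and \textsc{Efficiency} — essentially independently, following the three-pronged outline (Correctness, Efficiency, Coalescence) sketched in \Cref{sec:proof_overview}. For \textsc{Correctness}, I would first formalize the idealized systematic-scan Glauber dynamics $\Glauber$ on $\prod_{v\in\Mcal}\Omega_v$ and show it converges to $\mu_\True^\Ccal$ projected onto $\Mcal$, which I will call $\mu^\Mcal$; the key trick is to bundle $|\Mcal|$ consecutive single-site updates into one step of a time-homogeneous chain $P'$, verify $P'$ is irreducible and aperiodic with stationary distribution $\mu^\Mcal$, and then unpack $P'$ to transfer the conclusion back to $\Glauber$. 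Next I would show $\BoundingChain$ is a faithful coupling of $\Glauber$ using the wildcard $\Qmark$: whenever the current assignment has no $\Qmark$ coordinate, the $\BoundingChain$ update coincides with a $\Glauber$ update (this is where \Cref{prop:rejectionsampling} and \Cref{prop:safesampling} are used — the \textsf{SafeSampling} distribution $\Dcal_u^\Qmark$ is dominated coordinatewise by the true conditional marginal, so $\Qmark$ can be ``resolved'' into any consistent value). Combined with the coupling-from-the-past observation that running $\Glauber$ from $-L|\Mcal|$ to $-1$ has the same law as running it from $0$ to $L|\Mcal|-1$, and with reuse of the randomness tapes $r_{-T},\dots,r_{-1}$ across doublings of $T$, this yields that the output $\tilde\sigma$ (when all coordinates are non-$\Qmark$) is distributed exactly as $\mu^\Mcal$; then \textsf{FinalSampling}, which extends via a conditional sample on $V\setminus\Mcal$, produces $\sigma\sim\mu_\True^\Ccal$. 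The condition $\Naturale\alpha\Delta\le1$ is what makes the projected/conditional CSPs satisfiable (via \Cref{thm:LLL}) so all these conditional distributions are well-defined.

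For \textsc{Efficiency}, I would decompose the expected total running time into (i) the expected number of doublings / the final value of $T$, times the per-round cost, and (ii) control the per-update cost, which is dominated by the rejection sampling on the component CSP $\Phi'$ returned by \textsf{Component}. Using \Cref{prop:component}(1) and \Cref{prop:rejectionsampling}(1), a single update at time $t$ costs $O\big((kQ|\Ccal'|+Q)\cdot(1-\Naturale\alpha)^{-|\Ccal'|}\big)$ plus lower-order terms, so I need a tail bound on $|\Ccal'|$, i.e., on the size of the connected unsatisfied component grown from $v_t$. This is the standard $2$-tree / percolation argument: if $|\Ccal'|$ is large then $\Phi'$ contains a large $2$-tree of constraints, each of which must be unsatisfied by the current marked assignment; each such event has probability governed by $\rho$ (via \Cref{prop:rejectionsampling}(2), the biased marginals on marked variables), and a union bound over $2$-trees of size $s$ gives a bound like $(\Delta^2)^s \rho^{s/2}$-ish, which decays geometrically under $\Naturale\Delta^2\rho\le1/32$. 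Because the total running time couples the (possibly large) $T$ with the (possibly large) per-update costs in a correlated way, I would compute the second moment of each update time (again from \Cref{fct:rejectionsampling}, $\E[X^2]=O((\E[X])^2)$) and then apply Cauchy–Schwarz: $\E[\text{total}] \le \sqrt{\E[T^2]\cdot \E[\sum_t X_t^2/T \text{-type quantity}]}$, more precisely bounding $\E[\sum_{t=-T}^{-1} X_t]$ by $\sqrt{\E[T^2]}\cdot \sqrt{\max_t \E[X_t^2]}$ after conditioning appropriately.

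The coalescence bound — showing $\E[T]=O(|V|\log|V|)$, hence the final $O(kQ\Delta^3|V|\log|V|)$ — is where I expect the main work. The argument is an information-percolation union bound: if $\BoundingChain$ fails to coalesce by round $T$, then tracing back the ``chain of blame'' for the final $\Qmark$ (each $\Qmark$ at time $t_i$ forces a $\Qmark$-valued neighbor updated more recently, via a connected bundle of unsatisfied constraints between them), one extracts a long path in an auxiliary connected hypergraph of \emph{extended constraints} $(e,C)$ on the time axis $\{-T,\dots,-1\}$, as in \Cref{def:extended_constraints}. Since each extended constraint has time-range less than $|V|$ (consecutive updates of its $k$ variables span less than one systematic-scan cycle), this path has length $\Omega(T/|V|)$; each extended constraint independently ``appears'' only if the corresponding constraint is unsatisfied by marked variables AND \textsf{SafeSampling} emitted $\Qmark$, an event of probability controlled by $\lambda$ (the per-coordinate mass $\Dcal_u^\Qmark(\{q,\Qmark\})$ from \Cref{prop:safesampling}(2)); taking every other extended constraint along the path gives an independent family, so a fixed path of length $L$ survives with probability at most $\lambda^{L/2}$-ish, while the number of paths of length $L$ is at most $(O(k^2 d))^L$. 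Under $\Delta^2\lambda\le1/16$ (with $d\le\Delta$, $k\le\Delta$ absorbed), the product decays geometrically in $L$, so $\Pr[T > t] \le |V|\cdot 2^{-\Omega(t/|V|)}$ and $\E[T]=O(|V|\log|V|)$. The delicate point, and the reason the paper warns to be ``more careful with the union bound,'' is that in a general atomic CSP the constraints have varying widths, so the naive ``$2^{-k}$ per constraint'' accounting must be replaced by the $\rho,\lambda$-weighted accounting and the path/independence structure verified for non-uniform $k$; I would handle this by working with the multiplicative-biased quantities $\rho$ and $\lambda$ throughout rather than with $p$ and $2^{-k}$ directly, and by being explicit about which positions of the extended-constraint path form an independent set. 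Finally I would assemble: expected number of rounds $O(|V|\log|V|)$, times expected per-round cost (dominated by $|V|$ updates each of expected cost $O(kQ\Delta^2)$ after the component-size tail bound, with the Cauchy–Schwarz step absorbing correlations), gives total $O(kQ\Delta^3|V|\log|V|)$.
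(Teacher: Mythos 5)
Your proposal follows the paper's own proof essentially step for step: the same correctness argument (embed the systematic scan into a time-homogeneous chain, couple from the past, show \textsf{BoundingChain} is a grand coupling via $\Qmark$, then \textsf{FinalSampling}), the same $2$-tree/second-moment bound under $\Naturale\Delta^2\rho\le1/32$ with Cauchy--Schwarz to decouple $T_\Final$ from the per-update costs, and the same extended-constraint information-percolation union bound under $\Delta^2\lambda\le1/16$, including the $\lambda$-weighted accounting for non-uniform constraint widths. The one small caveat is that the alternate extended constraints along the path are not literally independent; the paper instead applies the conditional single-update bound (\Cref{lem:single_update}), which holds uniformly over the past, via the chain rule across the pairwise-disjoint time sets --- exactly the fix your disjointness observation supports, so this is an imprecision of phrasing rather than a gap.
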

\begin{remark}\label{rmk:atomiccspsampling}
We remark that $\lambda\ge\rho$ since $\beta\ge1$ and domains have at least $2$ elements\footnote{If some domain has only $1$ element, then the corresponding distribution must be a point distribution. Thus we can simply fix the variable to this value and simplify the CSP.}. Thus we can use, say, $\Delta^2\lambda\le1/100$ to dominate both $\Naturale\Delta^2\rho\le1/32$ and $\Delta^2\lambda\le1/16$ and thus simplify the conditions in \Cref{thm:atomiccspsampling}. This indeed only loses minor factors.
However, we choose to present in the current format to make the proofs cleaner as $\Naturale\Delta^2\rho$ and $\Delta^2\lambda$ comes from different places.
\end{remark}

\subsection{The \textsf{BoundingChain} Subroutine}\label{sec:the_boundingchain_subroutine}

Recall our \textsf{SafeSampling} subroutine from \Cref{sec:a_safesampling_subroutine}.
We present the \textsf{BoundingChain} subroutine.

\begin{algorithm}[ht]
\caption{The \textsf{BoundingChain} subroutine}\label{alg:boundingchain}
\DontPrintSemicolon
\LinesNumbered
\KwIn{an atomic CSP $\Phi=\pbra{V,\pbra{\Omega_v,\Dcal_v}_{v\in V},\Ccal}$, a marking $\Mcal\subseteq V$, a starting time $-T$, and randomness tapes $r_{-T},\ldots,r_{-1}$}
\KwOut{an assignment $\sigma\in\prod_{v\in V}\Omega_v^\Qmark$}
Initialize $\sigma(v)\gets\Qmark$ for all $v\in V$
\tcc*{Assume $V=\cbra{v_0,\ldots,v_{n-1}}$}
\For{$t=-T$ \KwTo $-1$}{
$i_t\gets t\mod n$, and $\sigma(v_{i_t})\gets\Qmark$
\tcc*{Update $\sigma(v_{i_t})$ in this round}
$(\Phi_t,\Token_t)\gets\Component{$\Phi,\Mcal,\sigma,v_{i_t}$}$ where $\Phi_t=\pbra{V_t,\pbra{\Omega_v|_\sigma,\Dcal_v|_\sigma}_{v\in V_t},\Ccal_t}$\;
\lIf{$v_{i_t}\in V\setminus\Mcal$}{Continue}
\uElseIf{$(v_{i_t}\in\Mcal)\land(\Token_t=\True)$}{
$\sigma'\gets\RejectionSampling{$\Phi_t,r_t$}$\;
Update $\sigma(v_{i_t})\gets\sigma'(v_{i_t})$
}
\Else(\tcc*[f]{$(v_{i_t}\in\Mcal)\land(\Token_t=\False)$}){
Update $\sigma(v_{i_t})\gets\SafeSampling{$\Phi,\Mcal,v_{i_t},r_t$}$
}
}
\Return{$\sigma$}
\end{algorithm}

\begin{remark}
\Cref{alg:boundingchain} also works if we ignore variables outside $\Mcal$. This is because $\sigma(V\setminus\Mcal)$ is always kept $\Qmark^{V\setminus\Mcal}$. However the current version is more convenient for our analysis in \Cref{sec:concentration_bounds_for_the_coalescence} and does not influence the running time much.
\end{remark}

We first note the following fact regarding each round of update.
\begin{lemma}\label{lem:single_update}
Let $t\in\cbra{-T,\ldots,-1}$ be an arbitrary time with $v_{i_t}\in\Mcal$. Let $\sigma\in\pbra{\prod_{v\in\Mcal}\Omega_v^\Qmark}\times\cbra{\Qmark}^{V\setminus\Mcal}$ be an arbitrary assignment. 
Let $q\in\Omega_{v_{i_t}}^\Qmark$ be the update of $\sigma(v_{i_t})$ in the $t$-th \textsf{for} iteration of \Cref{alg:boundingchain} over random $r_t$.
If $\Naturale\alpha\Delta\le1$, then 
\begin{itemize}
\item[(1)] $q$ is well-defined almost surely;
\item[(2)] for any $q'\in\Omega_{v_{i_t}}$ we have 
$$
\Pr\sbra{q=q'\mid\sigma,t}\le\beta\cdot\Dcal_{v_{i_t}}(q')
\quad\text{and}\quad
\Pr\sbra{q\in\cbra{q',\Qmark}\mid\sigma,t}\le\beta\cdot\Dcal_{v_{i_t}}(q')+(\beta-1)\pbra{\abs{\Omega_{v_{i_t}}}-2}.
$$
\end{itemize}
\end{lemma}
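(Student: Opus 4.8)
The plan is to split into the two cases that the algorithm's branching dictates, since the update $q$ is produced by different subroutines depending on whether $\Token_t=\True$ or $\Token_t=\False$. First I would fix the assignment $\sigma$ and the time $t$, run \Component{$\Phi,\Mcal,\sigma,v_{i_t}$} to obtain $(\Phi_t,\Token_t)$, and observe from Item~(2) of \Cref{prop:component} that $\Phi_t$ is a sub-CSP of $\Phi|_\sigma$ of the form required by \Cref{prop:rejectionsampling} and \Cref{prop:safesampling} (namely with vertex set $V_t\subseteq V$, constraints $\Ccal_t\subseteq\Ccal|_\sigma$, and $\sigma(v)=\Qmark$ for all $v\in V_t$; in particular $\sigma(v_{i_t})=\Qmark$ so $\Omega_{v_{i_t}}|_\sigma=\Omega_{v_{i_t}}$). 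Note that $\Token_t$ and $\Phi_t$ are deterministic functions of $\sigma$ and $t$, so conditioning on $\sigma,t$ also fixes which branch is taken.

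In the branch $\Token_t=\True$, the update is $q=\sigma'(v_{i_t})$ where $\sigma'\gets\RejectionSampling{$\Phi_t,r_t$}$. Since $\Naturale\alpha\Delta\le1$, the proof of \Cref{prop:rejectionsampling} shows $p(\Phi_t)\le\alpha$ and hence (via \Cref{thm:LLL}) $\sigma_\True^{\Ccal_t}\neq\emptyset$, so \RejectionSampling halts almost surely by \Cref{fct:rejectionsampling}; this gives Item~(1) in this branch. For Item~(2), Item~(2) of \Cref{prop:rejectionsampling} gives $\Pr[\sigma'(v_{i_t})=q']\le\beta\cdot\Dcal_{v_{i_t}}|_\sigma(q')=\beta\cdot\Dcal_{v_{i_t}}(q')$ for every $q'\in\Omega_{v_{i_t}}$; and since $q$ never equals $\Qmark$ in this branch, $\Pr[q\in\{q',\Qmark\}\mid\sigma,t]=\Pr[q=q'\mid\sigma,t]\le\beta\cdot\Dcal_{v_{i_t}}(q')\le\beta\cdot\Dcal_{v_{i_t}}(q')+(\beta-1)(|\Omega_{v_{i_t}}|-2)$, using $\beta\ge1$ and $|\Omega_{v_{i_t}}|\ge2$.

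In the branch $\Token_t=\False$, the update is $q=\SafeSampling{$\Phi,\Mcal,v_{i_t},r_t$}$, which halts in time $O(Q)$ and whose output distribution is exactly $\Dcal_{v_{i_t}}^\Qmark$ by \Cref{prop:safesampling} (Item~(1) also confirms $\Dcal_{v_{i_t}}^\Qmark$ is a genuine distribution, needing $\Naturale\alpha\Delta\le1$); this gives Item~(1). For Item~(2), Item~(2) of \Cref{prop:safesampling} states precisely $\Dcal_{v_{i_t}}^\Qmark(q')\le\Dcal_{v_{i_t}}(q')\le\beta\cdot\Dcal_{v_{i_t}}(q')$ and $\Dcal_{v_{i_t}}^\Qmark(\{q',\Qmark\})\le\beta\cdot\Dcal_{v_{i_t}}(q')+(\beta-1)(|\Omega_{v_{i_t}}|-2)$, which are the two desired inequalities. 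Combining the two branches (they are mutually exclusive and exhaustive once $\sigma,t$ are fixed) completes the proof.

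I do not expect any real obstacle here: the lemma is essentially a bookkeeping step that repackages Items~(1)--(2) of \Cref{prop:rejectionsampling} and \Cref{prop:safesampling} into a single uniform statement about one round of \Cref{alg:boundingchain}. The only mild subtlety is making sure the hypotheses of those two propositions are met by $\Phi_t$ — this is exactly what \Cref{prop:component} guarantees, and it is also why the hypothesis on $\sigma$ (that it is $\Qmark$ off $\Mcal$) is stated the way it is. The bound $\Naturale\alpha\Delta\le1$ is used only to guarantee well-definedness (nonemptiness of the solution set for rejection sampling, and that $\Dcal_u^\Qmark$ is a probability distribution for safe sampling); the inequalities in Item~(2) hold for structural reasons ($\beta\ge1$, domain size $\ge2$) once those subroutines are known to behave.
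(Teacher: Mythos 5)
Your proposal is correct and follows essentially the same route as the paper's proof: condition on $\sigma,t$ (which fixes $\Token_t$), then in the $\Token_t=\True$ branch invoke \Cref{prop:component} together with Items (1)--(2) of \Cref{prop:rejectionsampling} (noting $q\neq\Qmark$ there), and in the $\Token_t=\False$ branch invoke Items (1)--(2) of \Cref{prop:safesampling}. The only difference is cosmetic — you unpack the well-definedness argument for rejection sampling slightly more explicitly than the paper does.
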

\begin{proof}
Note that $\sigma$ uniquely determines $\Token_t$, i.e., whether we update $\sigma(v_{i_t})$ by \textsf{RejectionSampling} or \textsf{SafeSampling}. Thus we only need to consider two possibilities separately. 
\begin{itemize}
\item \fbox{\textsf{RejectionSampling}.} By Item (3) of \Cref{prop:component} and Item (1) of \Cref{prop:rejectionsampling}, $q$ is well-defined almost surely.
Since $\sigma(v_{i_t})$ is set to $\Qmark$ right before the update and $q$ is never $\Qmark$ in \textsf{RejectionSampling}, we have $\Dcal_{v_{i_t}}|_\sigma=\Dcal_{v_{i_t}}$ and, by Item (2) of \Cref{prop:rejectionsampling},
$$
\Pr\sbra{q\in\cbra{q',\Qmark}\mid\sigma,t}=\Pr\sbra{q=q'\mid\sigma,t}\le\beta\cdot\Dcal_{v_{i_t}}(q').
$$
\item \fbox{\textsf{SafeSampling}.} By Item (1) of \Cref{prop:safesampling}, $q$ is always well-defined. Then the two bounds follow from Item (2) of \Cref{prop:safesampling}.
\qedhere
\end{itemize}
\end{proof}

We will show the following result for \emph{one single call} of \Cref{alg:boundingchain}.
\begin{proposition}\label{prop:boundingchain}
If $\Naturale\alpha\Delta\le1$, $\Naturale\Delta^2\rho\le1/32$, and $\Delta^2\lambda\le1/16$, then the following holds over random $r_{-T},\ldots,r_{-1}$ for \BoundingChain{$\Phi,\Mcal,-T,r_{-T},\ldots,r_{-1}$}.
\begin{itemize}
\item \textsc{Efficiency.} Let $X_t$ be the running time of the $t$-th \textsf{for} iteration. Then $X_t<+\infty$ almost surely and $\E\sbra{X_t^2}=O\pbra{dk^2\Delta^5Q^2}$.
\item \textsc{Coalescence.} Let $\Ecal$ be the event ``in the returned assignment $\sigma$, there exists some $u\in\Mcal$ such that $\sigma(u)=\Qmark$''. If $T\ge2|V|-1$, we have $\Pr\sbra{\Ecal}\le4|V|\cdot 2^{-T/|V|}$.
\end{itemize}
\end{proposition}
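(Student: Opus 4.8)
\textbf{Proof plan for \Cref{prop:boundingchain}.}

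The \textsc{Efficiency} part I would handle first and quickly. Only the \textsf{RejectionSampling} branch is time-consuming; the \textsf{Component} call and the \textsf{SafeSampling} call are cheap by \Cref{prop:component}(1) and \Cref{prop:safesampling}(1). By Item (1) of \Cref{prop:rejectionsampling} applied to $\Phi_t$ (which is connected by \Cref{prop:component}(3)), $\E[X_t^2]$ is controlled by $\E[(kQ|\Ccal_t|+Q)^2/(1-\Naturale\alpha)^{2|\Ccal_t|}]$. So the whole game is to show that $|\Ccal_t|$, the size of the unsatisfied connected component grown from $v_{i_t}$, has an exponential tail. This is the standard $2$-tree / witness-tree argument: if $|\Ccal_t|=s$ is large then among those $s$ mutually-intersecting-in-a-connected-way constraints (each of constraint-degree $\le\Delta$) one can extract a $2$-tree of size $\Omega(s/\Delta^2)$; a fixed $2$-tree of constraints survives (is unsatisfied by the current partial assignment restricted to $\Mcal$) only if each of its constraints is currently falsified, and by \Cref{lem:single_update} each marked variable of a constraint $C$ carries value in $\{\sigma_\False^C(v),\Qmark\}$ with probability at most $\beta\Dcal_v(\sigma_\False^C(v))+(\beta-1)(|\Omega_v|-2)$, whose product over $v\in\vbl(C)\cap\Mcal$ is at most $\lambda/|\vbl(C)|^2 \le \lambda$ by \Cref{eq:lambda}; since the $2$-tree is an independent set these events are ``independent enough'' (one conditions along the natural order of updates) to multiply, giving survival probability $\lambda^{\Omega(s/\Delta^2)}$. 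A union bound over the at most $(\Delta^2)^{O(s/\Delta^2)}$ possible $2$-trees rooted near $v_{i_t}$, using $\Delta^2\lambda\le1/16$, makes $\Pr[|\Ccal_t|\ge s]$ decay geometrically in $s/\Delta^2$, and then a crude summation against the polynomial-in-$|\Ccal_t|$ and exponential-in-$|\Ccal_t|$ factors (the latter is tamed because $(1-\Naturale\alpha)^{-1}$ is a constant, using $\Naturale\alpha\Delta\le1$) yields $\E[X_t^2]=O(dk^2\Delta^5Q^2)$. The extra $d$ and one extra $\Delta$ factor come from bounding $|\Ccal_t|\le d$ for the trivial part and from the number of ways a $2$-tree can attach to $v_{i_t}$.

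The \textsc{Coalescence} part is the heart of the matter and is where the \emph{information percolation} argument enters. I would set up the \emph{extended constraints} of \Cref{def:extended_constraints}: an extended constraint is a pair $(e,C)$ where $C\in\Ccal$ and $e$ records, for each $v\in\vbl(C)$, the most recent update time of $v$ among $\{-T,\dots,-1\}$ relative to some reference; because the scan is systematic with period $n=|V|$, such an $e$ has range $<n$. The key structural claim: if the event $\Ecal$ holds, i.e.\ some $u\in\Mcal$ ends with value $\Qmark$, then tracing backwards as in the proof overview one produces times $0>t_0>t_1>\cdots>t_\ell\ge -T$ with $\sigma(v_{t_i})$ updated to $\Qmark$ at time $t_i$ and $v_{t_i}$ connected (in the unsatisfied projected CSP at that moment) to $v_{t_{i+1}}$ which also holds $\Qmark$; by \Cref{fct:path_in_connected_hypergraph} each such connection yields a chain of pairwise-intersecting extended constraints, and concatenating across $i$ produces a connected set $H$ of extended constraints whose union of time-coordinates spans all of $\{-T,\dots,-1\}$ down to the initialization. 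Since each extended constraint spans $<n$ time units, $H$ must contain a ``path'' (sequence of pairwise-intersecting extended constraints, intersections only between consecutive ones) of length $L=\Omega(T/n)$.

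Now the probabilistic bound. A \emph{fixed} extended constraint $(e,C)$ appears in the trace only if, at the update time recorded for the last-updated variable of $C$, the constraint $C$ is still unsatisfied; by \Cref{lem:single_update} the probability that the relevant marked variables of $C$ currently carry the falsifying (or $\Qmark$) values is at most $\prod_{v\in\vbl(C)\cap\Mcal}(\beta\Dcal_v(\sigma_\False^C(v))+(\beta-1)(|\Omega_v|-2))$. Along a path of extended constraints the ones in odd positions form an independent set (their variable-sets and time-coordinates are disjoint), so conditioning along the update order these events multiply, giving existence probability at most $\prod_{\text{odd }j}\lambda/|\vbl(C_j)|^2$; I keep the $|\vbl(C_j)|^2$ factors because in a general atomic CSP constraints have varying width and these factors pay for the union bound over how the next extended constraint attaches. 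Counting: each extended constraint overlaps $O(k^2 d)=O(\Delta^2)$-many others (choice of a shared variable, then a neighboring constraint, then its recorded times within a window of size $n$, but the $|\vbl|^2$ charge absorbs the width part), so the number of length-$L$ paths rooted at a fixed start is $\le (\Delta^2)^{L}$-ish, and the start $t_0$ has $\le n$ choices (the last update producing $\Qmark$, in $\{-1,\dots,-n\}$). Putting it together,
$$
\Pr[\Ecal]\ \le\ |V|\cdot\sum_{L=\Omega(T/|V|)}\pbra{O(\Delta^2)\cdot\sqrt{\lambda}}^{L}\ \lesssim\ |V|\cdot\pbra{\Delta^2\sqrt{\lambda}}^{\Omega(T/|V|)},
$$
and since $\Delta^2\lambda\le 1/16$ forces $\Delta^2\sqrt\lambda\le 1/4$ (wait---one must be a bit careful, but the available slack $\Naturale\Delta^2\rho\le1/32$ together with $\Delta^2\lambda\le1/16$ is exactly what is arranged to make the per-step factor at most $1/2^{1/\text{(window)}}$ after accounting for the $n$-sized time windows), one gets $\Pr[\Ecal]\le 4|V|\cdot 2^{-T/|V|}$ for $T\ge 2|V|-1$.

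\textbf{Main obstacle.} The delicate part is not the percolation skeleton but the bookkeeping that turns ``a path of extended constraints of length $\Omega(T/|V|)$'' into the clean bound $4|V|\cdot 2^{-T/|V|}$: one must (i) correctly account for the fact that each extended constraint occupies a time-window of size up to $n=|V|$, so a path of length $L$ only guarantees time-span $\le nL$, hence $L\ge (T)/( |V|)$ roughly, and the per-extended-constraint cost $\Delta^2\sqrt\lambda$ must be beaten down to contribute a factor $\le 2^{-1}$ \emph{per unit of time}, i.e.\ $(\Delta^2\sqrt\lambda)^{1/|V|}$ is not enough---rather one uses that along the path the variables are genuinely distinct so there is one independent ``unsatisfied'' event roughly every $|V|$ time steps, and the constants $1/32$ and $1/16$ are tuned so that $(\text{counting factor})\cdot(\text{survival factor})\le 1/2$ after this amortization; and (ii) handle the general-atomic case where widths vary, which is why the $|\vbl(C)|^2$ weights are carried inside $\lambda$ in \Cref{eq:lambda} and must be matched against the $O(k^2d)$ overlap count constraint-by-constraint rather than uniformly. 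I would isolate (i) as a self-contained counting lemma about weighted paths in the hypergraph of extended constraints and (ii) as a careful choice of the branching bound, and only then assemble the union bound.
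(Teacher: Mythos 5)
Your coalescence sketch follows the paper's route (backward tracing, extended constraints, a path of length $\Omega(T/|V|)$, charging only alternate path elements, and matching the width-dependent weights against the branching of the union bound), and the repair you flag in your ``main obstacle'' paragraph is exactly what the paper does: the per-element probability $\lambda/|\vbl(C)|^2\le(4\Delta|\vbl(C)|)^{-2}$ is split as a square root over the charged half of the path and matched, constraint by constraint, against the branching factor $\Delta\cdot 2|\vbl(C_{i+1})|$ (via Item (2c) of \Cref{fct:extended_constraints}), giving a net factor $1/2$ per element; your displayed bound with a uniform $O(\Delta^2)$ branching and $\sqrt\lambda$ per element does not close by itself, but you correctly identify the fix, so I would not call that part a gap.

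The genuine gap is in the \textsc{Efficiency} part. You bound the survival probability of each constraint in a $2$-tree by the ``falsified-or-$\Qmark$'' event, i.e.\ by $\prod_{v\in\vbl(C)\cap\Mcal}\bigl(\beta\Dcal_v(\sigma_\False^C(v))+(\beta-1)(|\Omega_v|-2)\bigr)\le\lambda$, and you never use the hypothesis $\Naturale\Delta^2\rho\le1/32$. This is quantitatively insufficient: the second moment of \textsf{RejectionSampling} contributes a factor $(1-\Naturale\alpha)^{-2|\Ccal_t|}\le 16^{|\Ccal_t|/\Delta}$, while your tail bound for $|\Ccal_t|$ decays only like $(\Naturale\Delta^2\lambda)^{|\Ccal_t|/\Delta}\le(\Naturale/16)^{|\Ccal_t|/\Delta}$ (even granting the sharper $2$-tree extraction of size $|\Ccal_t|/\Delta$ from \Cref{cor:2-tree_size}, rather than your $\Omega(|\Ccal_t|/\Delta^2)$), so the summand grows like $\Naturale^{|\Ccal_t|/\Delta}$ and the series for $\E[X_t^2]$ diverges. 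The missing idea is structural: by Item (2) of \Cref{prop:component}, every constraint that \textsf{Component} actually collects into $\Ccal_t$ has all its marked variables other than $v_{i_t}$ assigned the \emph{exact} falsifying values (a marked $\Qmark$ variable stops the exploration with $\Token=\False$ instead of being collected). Hence each $2$-tree node is charged only the event ``$\sigma(v)$ was last updated to $\sigma_\False^C(v)$'', whose probability is at most $\beta\Dcal_v(\sigma_\False^C(v))$ by the first bound in \Cref{lem:single_update}, yielding $\rho$ per node as in \Cref{prop:component_size}; the tail then decays like $(\Naturale\Delta^2\rho)^{\ell-1}\le(1/32)^{\ell-1}$, which beats the $16^{\ell}$ growth and gives $\E[X_t^2]=O(dk^2\Delta^5Q^2)$. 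Without this distinction between $\rho$ and $\lambda$ (and the separate hypothesis $\Naturale\Delta^2\rho\le1/32$), your argument does not establish the stated moment bound.
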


\subsubsection{Moment Bounds on the Running Time}\label{sec:moment_bounds_on_the_running_time}

To establish the efficiency part of \Cref{prop:boundingchain}, we need to control the size of $\Phi_t$ in each \textsf{for} iteration. This requires some additional definitions.

\begin{definition}[$2$-tree]\label{def:2-tree}
Let $G$ be an undirected graph. A set of vertices $S\subseteq V(G)$ is a \emph{$2$-tree} if the following holds.
\begin{itemize}
\item $\dist_G(u,v)\ge2$ holds for any distinct $u,v\in S$ where $\dist_G(u,v)$ is the length of the shortest path in $G$ from $u$ to $v$.\footnote{For example $\dist_G(u,u)\equiv 0$ for all $u\in V(G)$, and $\dist_G(u,v)=1$ iff $(u,v)$ is an edge in $E(G)$.}
\item If we add an edge between every $u,v\in S$ with $\dist_G(u,v)=2$, then $S$ is connected.
\end{itemize}
\end{definition}
Intuitively a $2$-tree is an independent set that is not very spread out. The following lemmas bound the number of $2$-trees and show how to extract a large $2$-tree from any connected subgraph.
\begin{lemma}[{\cite[Corollary 5.7]{FGYZ20}}]\label{lem:2-tree_number}
Let $G$ be a graph with maximum degree $d$. Then for any $v\in V(G)$ and integer $\ell\ge1$, the number of $2$-trees in $G$ of size $\ell$ containing $v$ is at most $\pbra{\Naturale d^2}^{\ell-1}/2$.
\end{lemma}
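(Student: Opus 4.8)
The plan is to reduce the count to the classical estimate for the number of subtrees through a fixed vertex in a bounded‑degree graph. First I would introduce the \emph{square graph} $G^2$ on vertex set $V(G)$ in which $u\sim w$ iff $1\le\dist_G(u,w)\le 2$. A crude count bounds its maximum degree: every vertex at $G$‑distance exactly $2$ from $u$ is a neighbor of one of the $\le d$ neighbors of $u$, and each such neighbor contributes $\le d-1$ further endpoints, so, together with the $\le d$ genuine $G$‑neighbors, $\Delta(G^2)\le d+d(d-1)=d^2$.

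Next I would translate Definition~\ref{def:2-tree} into the language of $G^2$. If $S$ is a $2$‑tree, then by definition $S$ becomes connected once we add all edges between $G$‑distance‑$2$ pairs, and all those edges already lie in $G^2$; hence $S$ induces a connected subgraph of $G^2$. In particular a $2$‑tree of size $\ell$ containing $v$ is a connected vertex set of $G^2$ of size $\ell$ through $v$, so its vertex set is spanned by at least one subtree of $G^2$ on $\ell$ vertices containing $v$. Since a $2$‑tree is completely determined by its vertex set, the number of $2$‑trees of size $\ell$ containing $v$ is at most the number of subtrees of $G^2$ on $\ell$ vertices containing $v$; the additional requirement that the vertices of a $2$‑tree be pairwise at $G$‑distance $\ge 2$ can only shrink the family, so we discard it for the upper bound.

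Finally I would invoke the standard fact that in a graph of maximum degree $D$ the number of subtrees on $\ell$ vertices through a prescribed vertex is at most $(\Naturale D)^{\ell-1}/2$ — provable by a generating‑function/Lagrange‑inversion computation on the infinite $D$‑ary tree, or combinatorially by encoding such a subtree via a depth‑first traversal with $\ell-1$ ``descend'' steps (each with $<D$ options) corrected by a Catalan‑type factor — and apply it with $D=\Delta(G^2)\le d^2$, yielding the claimed $(\Naturale d^2)^{\ell-1}/2$. The only genuine ingredient here is that classical subtree bound, which I would cite from \cite{FGYZ20} rather than reprove; everything else is the routine dictionary ``$2$‑tree of $G$ $\Leftrightarrow$ connected subset of $G^2$'' plus the degree estimate $\Delta(G^2)\le d^2$. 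The one point worth stating carefully is the direction of the counting inequality: distinct $2$‑trees are distinct vertex sets (no overcounting when we pass to subtrees of $G^2$), while a subtree of $G^2$ need not be a $2$‑tree — which is precisely why the inequality runs the way we need.
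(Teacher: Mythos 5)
Your proposal is correct and is essentially the argument behind the cited result: the paper itself imports Lemma~\ref{lem:2-tree_number} verbatim from \cite{FGYZ20} without giving a proof, and the proof there proceeds exactly as you describe --- map a $2$-tree injectively to a connected vertex set (hence to a spanning subtree) of the square graph $G^2$, bound $\Delta(G^2)\le d^2$, and invoke the standard $(\Naturale D)^{\ell-1}/2$ bound on the number of $\ell$-vertex subtrees through a fixed vertex in a graph of maximum degree $D$. Your care with the counting direction (distinct $2$-trees are distinct vertex sets, so passing to spanning subtrees of $G^2$ cannot overcount, while connected sets of $G^2$ need not be $2$-trees) is precisely the right justification for the inequality.
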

\begin{lemma}[{\cite[Lemma 4.5]{Vishesh21sampling}}]\label{lem:2-tree_size_JPV}
Let $G$ be a graph with maximum degree $d$. Let $G'$ be a connected subgraph of $G$. Then for any $v\in V(G')$, there exists a $2$-tree $S\subseteq V(G')$ with $v\in S$ and size $|S|\ge|V(G')|/(d+1)$.
\end{lemma}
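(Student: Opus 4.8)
The plan is to build the required $2$-tree greedily, one vertex at a time, starting from $S=\{v\}$ and stopping as soon as $S$ together with its neighbourhood in $G$ covers every vertex of $G'$; a counting argument against this cover then gives the size bound. Throughout, write $N_G[S]$ for the closed $G$-neighbourhood of $S$, i.e.\ $S$ together with every vertex of $G$ that is adjacent in $G$ to some element of $S$. I maintain two invariants: (i) $S\subseteq V(G')$ and $v\in S$; and (ii) $S$ is a $2$-tree with respect to $G$, meaning $\dist_G(u,w)\ge 2$ for all distinct $u,w\in S$ and the graph obtained from $S$ by joining every pair at $G$-distance exactly $2$ is connected. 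The singleton $S=\{v\}$ satisfies both, the $2$-tree conditions holding vacuously.

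\textbf{The growth step.} Suppose $N_G[S]\cap V(G')\ne V(G')$. Since $G'$ is connected and $v\in V(G')$, choose any $y\in V(G')\setminus N_G[S]$ and a path $v=x_0,x_1,\dots,x_r=y$ in $G'$, and let $w=x_j$ be the first vertex of this path outside $N_G[S]$; note $j\ge 1$ because $x_0=v\in S\subseteq N_G[S]$. Its predecessor $w':=x_{j-1}$ lies in $N_G[S]$ by minimality of $j$. Because $E(G')\subseteq E(G)$, the edge $w'w$ of $G'$ is also an edge of $G$, so if $w'\in S$ then $w\in N_G[S]$, a contradiction; hence $w'\in N_G[S]\setminus S$, which means $w'$ is $G$-adjacent to some $s\in S$. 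By the triangle inequality for graph distance, $\dist_G(w,s)\le\dist_G(w,w')+\dist_G(w',s)\le 1+1=2$. On the other hand $w\notin N_G[S]$ forces $\dist_G(w,u)\ge 2$ for every $u\in S$, so in fact $\dist_G(w,s)=2$. I now set $S\leftarrow S\cup\{w\}$: the independence condition persists since $w$ is at $G$-distance $\ge 2$ from all of $S$, and when distance-$2$ edges are added $w$ becomes joined to $s\in S$, so connectivity is preserved. Thus the updated $S$ is again a $2$-tree inside $V(G')$ containing $v$, with strictly more vertices than before.

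\textbf{Termination and the size bound.} Each growth step strictly increases $|S|$ while keeping $S\subseteq V(G')$, so after at most $|V(G')|$ steps the process must halt, i.e.\ we reach a $2$-tree $S\subseteq V(G')$ with $v\in S$ and $N_G[S]\cap V(G')=V(G')$. Since $G$ has maximum degree $d$, we have $|N_G[s]|\le d+1$ for every $s\in S$, hence
$$
|V(G')|=\bigl|N_G[S]\cap V(G')\bigr|\le\sum_{s\in S}\bigl|N_G[s]\bigr|\le(d+1)\,|S|,
$$
which rearranges to $|S|\ge|V(G')|/(d+1)$, as desired.

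\textbf{Expected main obstacle.} The only delicate point is the mismatch between the $2$-tree definition, which is phrased in terms of distances in the ambient graph $G$, and the connectivity we are handed, which is that of the subgraph $G'$. The growth step is engineered precisely to reconcile these: the new vertex $w$ is certified to be at $G$-distance $\ge 2$ from \emph{all} of $S$ (because $w\notin N_G[S]$) and simultaneously at $G$-distance exactly $2$ from \emph{some} $s\in S$ (by walking along a $G'$-path one step past the boundary of $N_G[S]$ and invoking $E(G')\subseteq E(G)$). Getting both of these simultaneously is what makes $S\cup\{w\}$ a legitimate $2$-tree while keeping it inside $V(G')$, and the covering bound $|N_G[s]|\le d+1$ then delivers the stated density automatically.
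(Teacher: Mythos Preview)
Your proof is correct. The paper itself does not prove this lemma; it merely cites it as \cite[Lemma 4.5]{Vishesh21sampling} and uses it as a black box. Your greedy construction---grow $S$ one vertex at a time along a $G'$-path crossing the boundary of $N_G[S]$, then cover-count at termination---is the standard argument and is exactly what the cited reference does.
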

\begin{lemma}[{\cite[Observation 5.5]{FGYZ20}}]\label{lem:2-tree_size_FGYZ}
If a graph $G$ has a $2$-tree of size $\ell>1$ containing $v\in V(G)$, then $G$ also has a $2$-tree of size $\ell-1$ containing $v$.
\end{lemma}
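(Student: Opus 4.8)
The plan is to pass to an auxiliary graph and delete a leaf of a spanning tree. Given a $2$-tree $S\ni v$ in $G$ of size $\ell>1$, I would form the auxiliary graph $T_S$ with vertex set $S$ in which two vertices $u,w$ are adjacent exactly when $\dist_G(u,w)=2$. By the second bullet of \Cref{def:2-tree}, $T_S$ is connected.

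Next, since $T_S$ is connected on $\ell\ge2$ vertices, I would fix any spanning tree $\mathcal{T}$ of $T_S$. A tree on at least two vertices has at least two leaves, so I can pick a leaf $w$ of $\mathcal{T}$ with $w\ne v$, and set $S'=S\setminus\cbra{w}$; then $v\in S'$ and $|S'|=\ell-1\ge1$.

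Then I would check that $S'$ is again a $2$-tree. The first condition of \Cref{def:2-tree} is inherited for free, since $S'\subseteq S$ and hence any two distinct vertices of $S'$ are still at distance at least $2$ in $G$. For the second condition, the key point is that the auxiliary graph associated with $S'$ coincides with the induced subgraph $T_S[S']$, because the relation ``$\dist_G(u,w)=2$'' depends only on $G$ and not on the chosen vertex set. Deleting the leaf $w$ (together with its unique incident edge) from $\mathcal{T}$ leaves a tree on vertex set $S'$ all of whose edges lie in $T_S$, hence in $T_S[S']$; so the auxiliary graph of $S'$ contains a spanning tree and is therefore connected. This is exactly what the second condition of \Cref{def:2-tree} requires, so $S'$ is a $2$-tree of size $\ell-1$ containing $v$.

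As for difficulty: there is essentially no obstacle here — the statement is a one-step combinatorial observation. The only place to be slightly careful is the identification of the ``distance-$2$'' graph on $S'$ with the induced subgraph of the corresponding graph on $S$, which is what legitimizes the leaf-deletion argument; the maximum-degree hypothesis and any further structure of $G$ play no role.
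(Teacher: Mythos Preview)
Your argument is correct and is the standard way to prove this observation. The paper does not supply its own proof of \Cref{lem:2-tree_size_FGYZ}; it simply cites \cite[Observation 5.5]{FGYZ20}, and the leaf-deletion argument you give is exactly the one used there.
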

The following result is an immediate corollary of \Cref{lem:2-tree_size_JPV} and \Cref{lem:2-tree_size_FGYZ}.
\begin{corollary}\label{cor:2-tree_size}
Let $G$ be a graph with maximum degree $d$. Let $G'$ be a connected subgraph of $G$. Then for any $v\in V(G')$ and any integer $\ell\le\lceil|V(G')|/(d+1)\rceil$, there exists a $2$-tree $S\subseteq V(G')$ with $v\in S$ and size $|S|=\ell$.
\end{corollary}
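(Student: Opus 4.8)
The plan is to deduce \Cref{cor:2-tree_size} directly by chaining the two preceding lemmas: first use \Cref{lem:2-tree_size_JPV} to produce a single large $2$-tree inside $V(G')$ through $v$, and then use \Cref{lem:2-tree_size_FGYZ} repeatedly to trim it down to the exact size $\ell$.

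First I would invoke \Cref{lem:2-tree_size_JPV} on the connected subgraph $G'$ (whose maximum degree is at most $d$, since it is a subgraph of $G$) to obtain a $2$-tree $S_0\subseteq V(G')$ with $v\in S_0$ and $|S_0|\ge|V(G')|/(d+1)$. Because $|S_0|$ is an integer, this bound upgrades for free to $|S_0|\ge\lceil|V(G')|/(d+1)\rceil\ge\ell$.

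Next, if $|S_0|=\ell$ we are done; otherwise $|S_0|>\ell\ge1$, and I would apply \Cref{lem:2-tree_size_FGYZ} exactly $|S_0|-\ell$ times, each application decreasing the size of the current $2$-tree by one, until we reach size $\ell$ (the intermediate sizes stay $>1$ precisely because $\ell\ge1$). The one point that deserves a word of justification is that each such step keeps the $2$-tree inside $V(G')$ and keeps $v$ in it; this is immediate from the proof of \Cref{lem:2-tree_size_FGYZ} (Observation 5.5 of \cite{FGYZ20}), which obtains the smaller $2$-tree as a subset of the larger one by deleting a non-$v$ leaf of the auxiliary tree structure, so both containment in $V(G')$ and membership of $v$ are preserved. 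The resulting set is then the desired $2$-tree $S\subseteq V(G')$ with $v\in S$ and $|S|=\ell$.

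Since each step is a direct appeal to a cited lemma, there is essentially no obstacle; the only reason the corollary is not a literal one-line restatement is the bookkeeping that the trimming in \Cref{lem:2-tree_size_FGYZ} respects the constraints $S\subseteq V(G')$ and $v\in S$, which as noted follows from how that lemma is proved.
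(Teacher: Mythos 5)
Your proposal is correct and is exactly the paper's route: the paper states \Cref{cor:2-tree_size} as an immediate consequence of \Cref{lem:2-tree_size_JPV} (to get a $2$-tree through $v$ of size at least $\lceil|V(G')|/(d+1)\rceil$, using integrality) followed by repeated application of \Cref{lem:2-tree_size_FGYZ} to trim to size exactly $\ell$. Your extra remark that the trimming step of \cite{FGYZ20} removes a vertex of the current $2$-tree (so $S\subseteq V(G')$ and $v\in S$ are preserved) is the right justification and is, if anything, more explicit than the paper.
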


Now we show the following concentration bound.
\begin{proposition}\label{prop:component_size}
Let $d=d(\Phi)$, $\Delta=\Delta(\Phi)$, $\alpha=\alpha(\Phi,\Mcal)$, and $\rho=\rho(\Phi,\Mcal)$. Assume $\Naturale\alpha\Delta\le1$.

For any $t\in\cbra{-T,\ldots,-1}$, recall $\Ccal_t$ from \textsf{Line 4} of \Cref{alg:boundingchain}.
Then we have
$$
\Pr\sbra{\abs{\Ccal_t}\ge\ell\cdot\Delta}\le\frac d2\cdot\pbra{\Naturale\Delta^2\rho}^{\ell-1}
\quad\text{for any integer $\ell\ge1$}.
$$
\end{proposition}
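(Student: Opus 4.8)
The plan is to use the standard $2$-tree argument, adapted to the bounding-chain setting, via a union bound over potential large $2$-trees in the dependency graph rooted near $v_{i_t}$. First I would recall from Item (3) of \Cref{prop:component} that $H(\Phi_t)$ is connected, so $\Ccal_t$ is a connected set of constraints. If $|\Ccal_t| \ge \ell\cdot\Delta$, then I want to extract from $\Ccal_t$ a large $2$-tree in the \emph{line graph} (or rather the dependency graph) of $\Phi$; more precisely, form the graph $G$ on vertex set $\Ccal$ (the constraints of $\Phi$) with an edge between $C,C'$ whenever $\vbl(C)\cap\vbl(C')\neq\emptyset$. This graph has maximum degree at most $\Delta-1$, and the constraints of $\Ccal_t$ induce a connected subgraph $G'$ of $G$ that contains at least one constraint $C_0$ with $v_{i_t}\in\vbl(C_0)$ (there are at most $d$ such $C_0$, one of which must be the ``seed''). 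Applying \Cref{cor:2-tree_size} to $G$ and $G'$ with root $C_0$, since $|V(G')| = |\Ccal_t| \ge \ell\cdot\Delta \ge \ell\cdot((\Delta-1)+1)$, we obtain a $2$-tree $S\subseteq\Ccal_t$ of size exactly $\ell$ containing $C_0$.

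Next I would bound the probability that a \emph{fixed} such $2$-tree $S$ actually survives into $\Ccal_t$. The key point, from Item (2) of \Cref{prop:component}, is that every $C|_\sigma \in \Ccal_t$ satisfies $C(\sigma)=\False$, i.e., $C$ is currently violated by the partial assignment $\sigma$ (with the convention \eqref{eq:qmark_constraint} for $\Qmark$-entries); moreover for each marked variable $v\in(\vbl(C)\cap\Mcal)\setminus\{v_{i_t}\}$ we have $\sigma(v)=\sigma_\False^C(v)$, meaning the value at $v$ is \emph{exactly} the forbidden value. Since $S$ is a $2$-tree, its constraints are pairwise non-adjacent, hence their variable sets $\vbl(C)$ are pairwise disjoint; thus the events ``$C$ is violated by $\sigma$'' for $C\in S$ are determined by disjoint blocks of marked variables (ignoring the single shared root variable $v_{i_t}$, which I would handle by dropping at most one constraint from $S$ or by noting it contributes at most a constant). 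Each marked variable $v$ in the current assignment $\sigma$ was set by an earlier \textsf{RejectionSampling}/\textsf{SafeSampling} step, and \Cref{lem:single_update} Item (2) tells us $\Pr[\sigma(v)\in\{q',\Qmark\}]\le\beta\cdot\Dcal_v(q')+(\beta-1)(|\Omega_v|-2)$ — but since here we need $\sigma(v)=\sigma_\False^C(v)$ \emph{exactly} (not $\Qmark$), the relevant bound is $\Pr[\sigma(v)=\sigma_\False^C(v)]\le\beta\cdot\Dcal_v(\sigma_\False^C(v))$. Multiplying over $v\in\vbl(C)\cap\Mcal$ for a single constraint gives probability at most $\prod_{v\in\vbl(C)\cap\Mcal}\beta\cdot\Dcal_v(\sigma_\False^C(v))\le\rho$ by \eqref{eq:rho}; the unmarked variables are all $\Qmark$ so they impose no constraint under \eqref{eq:qmark_constraint}. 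Then by independence of the disjoint blocks (and a careful argument that the updates at different times are driven by independent randomness tapes $r_t$, or rather: conditioning appropriately, the value at each marked variable is stochastically dominated in the required sense), a fixed $2$-tree $S$ of size $\ell$ survives with probability at most $\rho^{\ell-1}$ (losing one factor for the root constraint whose shared variable $v_{i_t}$ is the seed being updated, and which we do not get to bound).

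Finally I would union-bound over all $2$-trees. By \Cref{lem:2-tree_number} applied to $G$ (max degree $\le\Delta-1$, so $\Naturale(\Delta-1)^2 \le \Naturale\Delta^2$), the number of $2$-trees in $G$ of size $\ell$ containing a fixed root $C_0$ is at most $(\Naturale\Delta^2)^{\ell-1}/2$; and there are at most $d$ choices for the root constraint $C_0$ containing $v_{i_t}$. Combining:
\[
\Pr[|\Ccal_t|\ge\ell\cdot\Delta]
\le d\cdot\frac{(\Naturale\Delta^2)^{\ell-1}}{2}\cdot\rho^{\ell-1}
=\frac{d}{2}\cdot(\Naturale\Delta^2\rho)^{\ell-1},
\]
as claimed. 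The main obstacle I anticipate is making the ``disjoint blocks are independent'' step fully rigorous: the values $\sigma(v)$ for $v$ in different constraints of $S$ were written at different (earlier) times by different randomness tapes, and one must argue that conditioning on everything needed to pin down $S$ and the relevant times, the product bound $\rho^{\ell-1}$ still holds. The clean way is to reveal the randomness tape-by-tape (or time-by-time) and apply \Cref{lem:single_update} as a one-step stochastic-domination statement conditional on the entire past, using that disjoint variable blocks are updated by independent randomness; the systematic scan makes the relevant ``last update time'' of each marked variable well-defined, which is exactly the simplification the authors highlight over the random-scan analysis of \cite{Vishesh21sampling}. A secondary nuisance is correctly accounting for the shared seed variable $v_{i_t}$ and the factor $d$ versus the off-by-one in the $2$-tree size (handled by \Cref{lem:2-tree_size_FGYZ}/\Cref{cor:2-tree_size}), but these only affect constants in the already-stated bound.
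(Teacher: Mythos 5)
Your proposal is correct and follows essentially the same route as the paper's own proof: line graph of maximum degree $\Delta-1$, extraction of a size-$\ell$ $2$-tree rooted at a constraint containing $v_{i_t}$ via \Cref{cor:2-tree_size}, a per-constraint bound of $\rho$ from Item (2) of \Cref{prop:component} and \Cref{lem:single_update} (with one factor lost for the constraint containing $v_{i_t}$), and a union bound over at most $d\cdot(\Naturale\Delta^2)^{\ell-1}/2$ trees via \Cref{lem:2-tree_number}. Even your resolution of the "independence" subtlety — revealing the randomness in time order and applying \Cref{lem:single_update} conditionally on the past — is exactly the chain-rule argument the paper uses in its footnote, so no gap remains.
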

\begin{proof}
Construct the \emph{line graph} $\Lin(\Phi)=\pbra{V^\Phi,E^\Phi}$ of $\Phi=\pbra{V,\pbra{\Omega_v,\Dcal_v}_{v\in V},\Ccal}$ where 
$$
V^\Phi=\Ccal
\quad\text{and}\quad
E^\Phi=\cbra{\cbra{e_1,e_2}\in\Ccal\times\Ccal\mid\vbl(e_1)\cap\vbl(e_2)\neq\emptyset,e_1\neq e_2}.
$$ 
Then $\Lin(\Phi)$ is an undirected graph with maximum degree $\Delta-1$.

Let $\sigma$ and $v_{i_t}$ be the assignment and variable to update at time $t$ respectively.
Let $G$ be the subgraph of $\Lin(\Phi)$ induced by vertex set $V(G)=\cbra{C\in\Ccal\mid C|_\sigma\in\Ccal_t}$. Then by Item (3) of \Cref{prop:component}, $G$ is a connected subgraph of $\Lin(\Phi)$.\footnote{Actually Item (3) of \Cref{prop:component} says the subgraph of $\Lin(\Phi|_\sigma)$ induced by vertex set $\Ccal_t$ is connected, which implies $G$ is a connected subgraph of $\Lin(\Phi)$ as $\vbl(C|_\sigma)\subseteq\vbl(C)$.}
For any $C|_\sigma\in\Ccal_t$ with $v_{i_t}\in\vbl(C)$, by \Cref{cor:2-tree_size} there exists a $2$-tree $\tilde S\subseteq V(G)$ with $C\in\tilde S$ and size $|\tilde S|=\ell$ provided $\ell\le\lceil \abs{\Ccal_t}/\Delta\rceil$. 
Define 
$$
\Scal=\cbra{\text{$2$-tree }S\subseteq V^\Phi\mid\pbra{|S|=\ell}\land\pbra{\exists C\in S,v_{i_t}\in\vbl(C)}}.
$$
Then by \Cref{lem:2-tree_number} and noticing there are at most $d$ choices of $C$, we have
$$
\abs{\Scal}\le\frac{d\cdot\pbra{\Naturale\pbra{\Delta-1}^2}^{\ell-1}}2\le\frac{d\cdot\pbra{\Naturale\Delta^2}^{\ell-1}}2.
$$

By Item (2) of \Cref{prop:component}, $\sigma(v)=\sigma_\False^C(v)$ for any $C|_\sigma\in\Ccal_t$ and $v\in\pbra{\vbl(C)\cap\Mcal}\setminus\cbra{v_{i_t}}$.
Note that $\sigma(v)$ is initialized as $\Qmark$. Thus $\sigma(v)$ must be updated before time $t$. 
Let $\UpdTime(v,t)$ be the last update time of $v$ before the $t$-th \text{for} iteration in \Cref{alg:boundingchain}, and let $\Ecal_{v,C}$ be the event ``$\sigma(v)$ is updated to $\sigma_\False^C(v)$ in the $\UpdTime(v,t)$-th \textsf{for} iteration''.
Recall the definition of $\rho$ from \Cref{eq:rho}, then we have\footnote{We remark that for the fourth inequality below, we do \emph{not} assume any independence between $\Ecal_{v,C}$. We simply use the chain rule of conditional probability in the order of time. For example, if $\UpdTime(v_1,t)<\UpdTime(v_2,t)$, then $\Pr\sbra{\Ecal_{v_1,C_1}\land\Ecal_{v_2,C_2}}=\Pr\sbra{\Ecal_{v_1,C_1}}\cdot\Pr\sbra{\Ecal_{v_2,C_2}\mid\Ecal_{v_1,C_1}}$ and then we apply Item (2) of \Cref{lem:single_update} twice.}
\begin{align*}
\Pr\sbra{\abs{\Ccal_t}\ge\ell\cdot\Delta}
&\le\Pr\sbra{\Ecal_{v,C},\forall C|_\sigma\in\Ccal_t,\forall v\in\pbra{\vbl(C)\cap\Mcal}\setminus\cbra{v_{i_t}}}\\
&\le\Pr\sbra{\Ecal_{v,C},\forall C\in\tilde S,\forall v\in\pbra{\vbl(C)\cap\Mcal}\setminus\cbra{v_{i_t}}}\\
&\le\sum_{S\in\Scal}\Pr\sbra{\Ecal_{v,C},\forall C\in S,\forall v\in\pbra{\vbl(C)\cap\Mcal}\setminus\cbra{v_{i_t}}}
\tag{by union bound}\\
&\le\sum_{S\in\Scal}\prod_{C\in S}\prod_{v\in\pbra{\vbl(C)\cap\Mcal}\setminus\cbra{v_{i_t}}}\min\cbra{1,\beta\cdot\Dcal_v(\sigma_\False^C(v))}
\tag{since $\pbra{\vbl(C)}_{C\in S}$ are pairwise disjoint and by Item (2) of \Cref{lem:single_update}}\\
&\le\sum_{S\in\Scal}\prod_{C\in S,v_{i_t}\notin\vbl(C)}\prod_{v\in\vbl(C)\cap\Mcal}\beta\cdot\Dcal_v(\sigma_\False^C(v))\\
&\le\sum_{S\in\Scal}\rho^{|S|-1}
\le\frac d2\cdot\pbra{\Naturale\Delta^2\rho}^{\ell-1}.
\tag*{\qedhere}
\end{align*}
\end{proof}

Now we obtain moment bounds for the running time of each \textsf{for} iteration in \Cref{alg:boundingchain}.
\begin{proof}[Proof of the Efficiency Part of \Cref{prop:boundingchain}]
Let $Y_t$ and $Z_t$ be the running time of \textsf{Line 4} and \textsf{Line 5-11} respectively.
By Item (1) of \Cref{prop:component}, we have $\E\sbra{Y_t^2\mid\abs{\Ccal_t}}=O\pbra{\pbra{\Delta k\abs{\Ccal_t}+dk}^2}$.
By Item (3) of \Cref{prop:component}, Item (1) of \Cref{prop:rejectionsampling}, and Item (1) of \Cref{prop:safesampling}, we also have
$$
\E\sbra{Z_t^2\mid\abs{\Ccal_t}}=O\pbra{\max\cbra{1,Q^2,\frac{\pbra{kQ\abs{\Ccal_t}+Q}^2}{(1-\Naturale\alpha)^{2\cdot\abs{\Ccal_t}}}}}=O\pbra{\frac{\pbra{kQ\abs{\Ccal_t}+Q}^2}{(1-\Naturale\alpha)^{2\cdot\abs{\Ccal_t}}}}.
$$
By \Cref{prop:component_size}, we have 
$$
\Pr\sbra{\abs{\Ccal_t}\ge\ell\cdot\Delta}\le\frac d2\cdot\pbra{\Naturale\Delta^2\rho}^{\ell-1}\le\frac d2\cdot\pbra{\frac1{32}}^{\ell-1}
\quad\text{for any integer }\ell\ge1.
$$
Since $X_t=Y_t+Z_t+O(1)$ and $(a+b+c)^2\le4\cdot(a^2+b^2+c^2)$, we have 
\begin{align*}
\E\sbra{X_t^2}
&=O\pbra{1+\E\sbra{Y_t^2}+\E\sbra{Z_t^2}}
=O\pbra{1+\sum_{L=0}^{+\infty}\Pr\sbra{\abs{\Ccal_t}=L}\E\sbra{Y_t^2+Z_t^2\mid\abs{\Ccal_t}=L}}\\
&\le O\pbra{d^2k^2}+\sum_{\ell=0}^{+\infty}\Pr\sbra{\abs{\Ccal_t}\ge\ell\cdot\Delta}\cdot\Delta\cdot O\pbra{\Delta^2k^2\pbra{\Delta(\ell+1)}^2+\frac{\pbra{Qk\Delta(\ell+1)}^2}{\pbra{1-\Naturale\alpha}^{2(\ell+1)\Delta}}}
\tag{bucketing $L\in[\ell\cdot\Delta,(\ell+1)\cdot\Delta)$}\\
&\le O\pbra{d^2k^2}+\sum_{\ell=0}^{+\infty}\Pr\sbra{\abs{\Ccal_t}\ge\ell\cdot\Delta}\cdot\Delta\cdot O\pbra{\Delta^2k^2\pbra{\Delta(\ell+1)}^2+\pbra{Qk\Delta(\ell+1)}^2\cdot16^{\ell+1}}
\tag{since $\Naturale\alpha\ge1/\Delta$ and $\Delta\ge2$, we have $(1-\Naturale\alpha)^\Delta\ge1/4$}\\
&\le O\pbra{d^2k^2}+O\pbra{d\Delta}\sum_{\ell=1}^{+\infty}\pbra{\frac1{32}}^{\ell-1}\pbra{\Delta^2k^2\pbra{\Delta(\ell+1)}^2+\pbra{Qk\Delta(\ell+1)}^2\cdot16^{\ell+1}}\\
&=O\pbra{d^2k^2+dk^2\Delta^5+dk^2\Delta^3Q^2}=O\pbra{dk^2\Delta^5Q^2}.
\tag{since $d\le\Delta$}
\end{align*}
\end{proof}

\subsubsection{Concentration Bounds for the Coalescence}\label{sec:concentration_bounds_for_the_coalescence}

Here we analyze the coalescence part of \Cref{prop:boundingchain}, which is also the stopping condition for the \textsf{while} iterations in \Cref{alg:atomiccspsampling}. We use \emph{information percolation} argument and need additional setup. 

We follow the notation convention in \Cref{sec:moment_bounds_on_the_running_time}: $V=\cbra{v_0,\ldots,v_{n-1}}$ and $i_t=t\mod n$ for $t\in\cbra{-T,\ldots,-1}$.  
$\UpdTime(v,t)$ is the last update time of $v$ \emph{before} time $t$, i.e.,
\begin{equation}\label{eq:updtime}
\UpdTime(v,t)=\max\cbra{-T-1,\max\cbra{t'<t\mid v_{i_{t'}}=v}}.
\end{equation}
The additional $-T-1$ is to set up the boundary condition corresponding to the initialization step.
\begin{definition}[Extended Constraints]\label{def:extended_constraints}
For any $C\in\Ccal$ and $e=\cbra{t_1,\ldots,t_m}\subseteq\cbra{-T,\ldots,-1}$, we say $(e,C)$ is an \emph{extended constraint} if the following holds.
\begin{itemize}
\item[(1)] $m=|\vbl(C)|$ and $\vbl(C)=\cbra{v_{i_{t_1}},v_{i_{t_2}},\ldots,v_{i_{t_m}}}$.
\item[(2)] $e=\cbra{\UpdTime(v,t_\mathsf{max}+1)\mid v\in\vbl(C)}$ where $t_\mathsf{max}=\max_{t'\in e}t'$.
\end{itemize}
\end{definition}
Intuitively an extended constraint of $C$ is a consecutive rounds of updates in $\vbl(C)$.
\begin{fact}\label{fct:extended_constraints}
The following holds for extended constraints.
\begin{itemize}
\item[(1)] If $(e_1,C_1)$ and $(e_2,C_2)$ are two extended constraints and $\vbl(C_1)\cap\vbl(C_2)=\emptyset$, then $e_1\cap e_2=\emptyset$.
\item[(2)] If $(e,C)$ is an extended constraint, then 
\begin{itemize}
\item[(2a)] $0\le t_\mathsf{max}(e)-t_\mathsf{min}(e)<n$, where $t_\mathsf{min}(e)=\min_{t'\in e}t'$;
\item[(2b)] $e=\cbra{\UpdTime(v,t')\mid v\in\vbl(C)}$ for any $t_\mathsf{max}+1\le t'\le t_\mathsf{min}+n$;
\item[(2c)] for any $C'\in\Ccal$, we have 
$$
\abs{\cbra{e'\mid (e',C')\text{ is an extended constraint with }e\cap e'\neq\emptyset}}<2\cdot\abs{\vbl(C')}.
$$
\end{itemize}
\end{itemize}
\end{fact}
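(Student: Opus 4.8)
The plan is to exploit the rigid periodic structure of the systematic scan: since $i_t = t \bmod n$, for any variable $v$ the set of update times $\{t' : v_{i_{t'}} = v\}$ is a full arithmetic progression of common difference $n$, so within any window of $n$ consecutive integers $v$ is updated exactly once. In particular, whenever $\UpdTime(v,s) > -T-1$ (equivalently $v$ has been updated at least once in $\cbra{-T,\dots,s-1}$) it equals the unique integer in $\cbra{s-n,\dots,s-1}$ congruent to the residue of $v$'s scan position. I will use throughout the elementary consequence of \Cref{def:extended_constraints}(1) that for an extended constraint $(e,C)$ the map $t\mapsto v_{i_t}$ is a bijection from $e$ onto $\vbl(C)$; in particular $t\in e$ implies $v_{i_t}\in\vbl(C)$, so intersection statements about times translate into intersection statements about variable sets.

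Part (1) is then immediate: if $t\in e_1\cap e_2$ then $v_{i_t}\in\vbl(C_1)$ and $v_{i_t}\in\vbl(C_2)$, contradicting $\vbl(C_1)\cap\vbl(C_2)=\emptyset$. For (2a), write $e=\cbra{\UpdTime(v,t_{\mathsf{max}}+1):v\in\vbl(C)}$ from \Cref{def:extended_constraints}(2). Since $e\subseteq\cbra{-T,\dots,-1}$, every element of $e$ is $>-T-1$, hence equals the last update time of its variable before $t_{\mathsf{max}}+1$ and therefore lies in $\cbra{t_{\mathsf{max}}-n+1,\dots,t_{\mathsf{max}}}$ by periodicity. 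So $t_{\mathsf{min}}(e)\ge t_{\mathsf{max}}(e)-n+1$, and with $t_{\mathsf{min}}\le t_{\mathsf{max}}$ this gives $0\le t_{\mathsf{max}}-t_{\mathsf{min}}<n$. For (2b), fix $v\in\vbl(C)$ and let $t_v=\UpdTime(v,t_{\mathsf{max}}+1)\in e$; the next update of $v$ after $t_v$ is at $t_v+n\ge t_{\mathsf{min}}+n$. Hence for any $t'$ with $t_{\mathsf{max}}+1\le t'\le t_{\mathsf{min}}+n$ we have $t_v<t'\le t_v+n$, so no update of $v$ occurs in $\cbra{t_v+1,\dots,t'-1}$ and $\UpdTime(v,t')=t_v$; taking the union over $v\in\vbl(C)$ yields $\cbra{\UpdTime(v,t'):v\in\vbl(C)}=e$.

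The main work, and the step I expect to be most delicate, is (2c). The idea is to parameterize: every extended constraint $(e',C')$ is determined by $C'$ together with $s:=t_{\mathsf{max}}(e')$, since $e'=\cbra{\UpdTime(v,s+1):v\in\vbl(C')}=:e'(s)$ by \Cref{def:extended_constraints}(2); so it suffices to bound the number of admissible values $s$, and a necessary condition for admissibility is $v_{i_s}\in\vbl(C')$ (this is precisely what forces $\max e'(s)=s$). Next, $e'(s)\cap e\neq\emptyset$ pins $s$ near $e$: if $t\in e'(s)\cap e$ then $t=\UpdTime(v,s+1)$ for some $v\in\vbl(C')$, which forces $v=v_{i_t}\in\vbl(C)\cap\vbl(C')$ and $t\le s\le t+n-1$. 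So, writing $W=\cbra{t\in e:v_{i_t}\in\vbl(C')}$ and $a=\min W$, $b=\max W$, every relevant $s$ satisfies $v_{i_s}\in\vbl(C')$ and $a\le s\le b+n-1$; by (2a), $b-a\le t_{\mathsf{max}}(e)-t_{\mathsf{min}}(e)\le n-1$, so the relevant $s$ lie in a window of at most $2n-1$ consecutive integers.

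A short counting argument then finishes the bound: split $[a,b+n-1]$ into $[a,a+n-1]$, which contains exactly $\abs{\vbl(C')}$ positions whose scan variable lies in $\vbl(C')$ (every variable occurs once per period of length $n$), and $[a+n,b+n-1]$, a translate of $[a,b-1]$; the latter is contained in the window $[a,b]$ of length $b-a+1\le n$, which already contains the position $b$ with $v_{i_b}\in\vbl(C')$, so $[a,b-1]$ contributes at most $\abs{\vbl(C')}-1$, for a total of at most $2\abs{\vbl(C')}-1<2\abs{\vbl(C')}$. The degenerate case $a=b$ (in particular $W$ a singleton) contributes only $\abs{\vbl(C')}$, which is even easier. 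The subtleties to watch are the boundary value $-T-1$ hidden in $\UpdTime$ — controlled because $e$ and all $e'(s)$ of interest lie in $\cbra{-T,\dots,-1}$ — and squeezing the count from $2\abs{\vbl(C')}$ down to the strict bound, which is exactly why the position $b$ is singled out.
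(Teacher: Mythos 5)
Your proof is correct and rests on the same underlying idea as the paper's: the periodicity of the systematic scan (each variable is updated exactly once in any $n$ consecutive steps), which gives (1), (2a), (2b) directly and, for (2c), lets you confine the possible $e'$ to a window of fewer than $2n$ positions and extract the same bound $2\abs{\vbl(C')}-1$. The only difference is bookkeeping: you parameterize $e'$ by its anchor $s=t_{\mathsf{max}}(e')$ and count the positions $s$ with $v_{i_s}\in\vbl(C')$ in the interval $[a,b+n-1]$, whereas the paper realizes $e'$ as a consecutive block of the sorted scan-time sequence $S'$ and counts admissible offsets $o'$ — these are equivalent arguments, and your handling of the $-T-1$ boundary and the ``$-1$'' refinement (excluding the position $b$ from the second window, mirroring the paper's $i_{\mathsf{max}}<i_{\mathsf{min}}+m'$) is sound.
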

\begin{proof}
Item (1) is evident from Item (1) of \Cref{def:extended_constraints}. 

For Item (2), we assume $\abs{\vbl(C)}=m$ and $\vbl(C)=\cbra{v_{a_1},\ldots,v_{a_m}}$. Let 
$$
S=\cbra{-T,\ldots,-1}\cap\cbra{a_i-j\cdot n\mid i\in[m],j\in\Zbb}=\cbra{b_1,b_2,\ldots,b_{T(C)}}
$$
where $-T\le b_1<\cdots<b_{T(C)}\le-1$. Note that $b_i\equiv b_{i+m}\mod n$ for all $i\in[T(C)-m]$. If $(e,C)$ is an extended constraint, then by Item (2) of \Cref{def:extended_constraints}, $e$ consists of a consecutive interval of $S$, i.e., $e=\cbra{b_o,b_{o+1},\ldots,b_{o+m-1}}$ for some $o\in[T(C)-m+1]$. Thus $t_\mathsf{max}(e)-t_\mathsf{min}(e)=b_{o+m-1}-b_o<n$ (since $b_i\equiv b_{i+m}\mod n$) which verifies Item (2a). 
Since either $b_o+n=b_{o+m}$ or $b_o+n\ge0$, we know $\UpdTime(v_{a_i},t')=\UpdTime(v_{a_i},b_{o+m-1}+1)$ for all $i\in[m]$ and $b_{o+m-1}+1\le t'\le b_o+n$ which verifies Item (2b).

Now we prove Item (2c).
By Item (1), assume without loss of generality $\vbl(C)\cap\vbl(C')\neq\emptyset$.
Let $m'=|\vbl(C')|$ and  $\vbl(C')=\cbra{v_{a_1'},\ldots,v_{a_{m'}'}}$ and define
$$
S'=\cbra{-T,\ldots,-1}\cap\cbra{a_i'-j\cdot n\mid i\in[m'],j\in\Zbb}=\cbra{b_1',\ldots,b_{T(C')}'}
$$
where $-T\le b_1'<\cdots<b_{T(C')}'\le-1$. Then similarly, we have $b_i'\equiv b_{i+m'}'\mod n$ for all $i\in[T(C')-m']$ and $e'=\cbra{b_{o'}',\ldots,b_{o'+m'-1}'}$ for some $o'\in[T(C')-m'+1]$. Let $i_\mathsf{min}=\min\cbra{i\in[T(C')]\mid b_i'\in e}$ and $i_\mathsf{max}=\max\cbra{i\in[T(C')]\mid b_i'\in e}$. Then $e\cap e'\neq\emptyset$ iff $i_\mathsf{min}\le o'+m'-1$ and $i_\mathsf{max}\ge o'$. 
Therefore there are at most $i_\mathsf{max}-i_\mathsf{min}+m'$ choices of $o'$. 
Since $b_{i_\mathsf{min}}',b_{i_\mathsf{max}}'\in e$, by Item (2a) we know $b_{i_\mathsf{max}}'-b_{i_\mathsf{min}}'<n$. Hence $i_\mathsf{max}<i_\mathsf{min}+m'$. In all, there are at most $2m'-1$ choices of $e'$.
\end{proof}

\begin{definition}[Extended Hypergraph]\label{def:extended_hypergraph}
\emph{Extended hypergraph} $H^\ext=\pbra{V^\ext,E^\ext}$ has vertex set $V^\ext=\cbra{-T,\ldots,-1}$ and extended constraints as edges:
$$
E^\ext=\cbra{e\subseteq\cbra{-T,\ldots,-1}\mid (e,C)\text{ is an extended constraint}}.
$$
Moreover, we label each edge $e$ with $C$ if it is added into $E^\ext$ by extended constraint $(e,C)$. We allow multiple occurrence of the same edge but the labels are different.
\end{definition}

Define $\sigma_0$ as the final returned assignment in \Cref{alg:boundingchain}.
For each $t\in\cbra{-T,\ldots,-1}$, let $\sigma_t$ be the assignment at \textsf{Line 4} of the $t$-th \textsf{for} iteration in \Cref{alg:boundingchain}. In particular $\sigma_t(v_{i_t})=\Qmark$ due to \textsf{Line 3}.
 
Now we present the following algorithm informally described in \Cref{sec:proof_overview} to sequentially find constraints that are not satisfied during the \textsf{BoundingChain} process. We remark that this algorithm is only for our analysis, and we do \emph{not} run it during \textsf{AtomicCSPSampling}.

\begin{algorithm}[ht]
\caption{Find failed constraints during the \textsf{BoundingChain} process}\label{alg:find_failed_constraints_during_the_boundingchain_process}
\DontPrintSemicolon
\LinesNumbered
\KwIn{assignments $\pbra{\sigma_t}_{t\in\cbra{-T,\ldots,0}}$ defined above and some $u\in\Mcal$ with $\sigma_0(u)=\Qmark$}
\KwOut{$H'=\pbra{V',E'}$ where $V'\subseteq V^\ext$ and $E'\subseteq E^\ext$}
Set $t_0\gets\UpdTime(u,0)$ and initialize $V'\gets\cbra{t_0},E'\gets\emptyset$\;
\FailedConstraints{$t_0$}\;
\Return{$\pbra{V',E'}$}\;
\Procedure{\FailedConstraints{$t$}}{
\lIf(\tcc*[f]{$(v_{i_t}\in\Mcal)\land(\Token_t=\False)$}){$t<-T+n-1$}{\Return{}}
Initialize $V_t\gets\cbra{v_{i_t}}$ and $\Ccal_t\gets\emptyset$\;
\While{$\exists C|_{\sigma_t}\in\Ccal|_{\sigma_t}\setminus\Ccal_t$ with $\vbl(C|_{\sigma_t})\cap V_t\neq\emptyset$}{
$e\gets\cbra{\UpdTime(v,t+1)\mid v\in\vbl(C)}$
\tcc*{$(e,C)$ is an extended constraint}
Update $\Ccal_t\gets\Ccal_t\cup\cbra{C|_{\sigma_t}}$ and $V_t\gets V_t\cup\vbl(C|_{\sigma_t})$\;
Update $E'\gets E'\cup\cbra{e}$ and $V'\gets V'\cup e$
\tcc*{$e$ is labeled by $C$}
}
\ForEach{$v\in(V_t\cap\Mcal)\setminus\cbra{v_{i_t}}$}{
\FailedConstraints{$\UpdTime(v,t)$}
}
}
\end{algorithm}

We have the following observation regarding \Cref{alg:find_failed_constraints_during_the_boundingchain_process}.
\begin{lemma}\label{lem:find_failed_constraints_during_the_boundingchain_process}
\Cref{alg:find_failed_constraints_during_the_boundingchain_process} halts always. Furthermore, if $T\ge2n-1$ then 
\begin{itemize}
\item[(1)] for each $(e,C)$ from \textsf{Line 8} when we execute \FailedConstraints{$t$}, 
\begin{itemize}
\item[(1a)] it is an extended constraint, 
\item[(1b)] for each $v\in\vbl(C)$, the assignment on $v$ is updated to $\Qmark$ or $\sigma_\False^C(v)$ in the $\UpdTime(v,t+1)$-th \textsf{for} iteration in \Cref{alg:boundingchain};
\end{itemize}
\item[(2)] each time we call \FailedConstraints{$t$}, $t$ is already in $V'$;
\item[(3)] $H'$ is a connected sub-hypergraph of $H^\ext$;
\item[(4)] there exists some $e_0,e_1\in E'$ such that $t_\mathsf{max}(e_0)\ge-n$ and $t_\mathsf{min}(e_1)<-T+n-1$.
\end{itemize}
\end{lemma}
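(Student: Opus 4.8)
The plan is to analyze Algorithm \ref{alg:find_failed_constraints_during_the_boundingchain_process} by following its recursive structure and the definition of $\UpdTime$. First I would establish termination: every call $\FailedConstraints(t)$ either returns immediately (when $t < -T+n-1$) or recurses only on times $\UpdTime(v,t) < t$ that are strictly smaller, and all times live in the finite set $\cbra{-T-1, -T, \ldots, -1}$, so the recursion depth is bounded and the \textsf{while} loop terminates because $\Ccal_t \subseteq \Ccal$ is finite. This gives the ``halts always'' claim without needing $T \ge 2n-1$.

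For item (1), I would argue as in the \textsf{Component} subroutine analysis: when $\FailedConstraints(t)$ runs its \textsf{while} loop, it is exactly enumerating the connected component of $v_{i_t}$ among constraints $C$ with $C(\sigma_t) = \False$, where $\sigma_t$ has $\sigma_t(v_{i_t}) = \Qmark$. For (1a), I need that $e = \cbra{\UpdTime(v,t+1) \mid v \in \vbl(C)}$ together with $C$ forms an extended constraint in the sense of \Cref{def:extended_constraints}. Condition (2) of that definition is immediate once I check that $t_{\mathsf{max}}(e) = t$, which holds because $v_{i_t} \in \vbl(C)$ gives $\UpdTime(v_{i_t}, t+1) = t$, and no other $\UpdTime(v,t+1)$ can exceed $t$ (an update time before $t+1$ is at most $t$). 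Condition (1) — that the $v_{i_{t'}}$ over $t' \in e$ are exactly $\vbl(C)$ and $|e| = |\vbl(C)|$ — follows since the $\UpdTime(v,t+1)$ for distinct $v$ are distinct (they differ mod $n$) and recover $v$. For (1b): since $C(\sigma_t) = \False$ and $\sigma_t$ agrees with the ``current'' assignment on all variables except $v_{i_t}$ (which is $\Qmark$), and since $C$ is atomic, each $v \in \vbl(C)$ must have $\sigma_t(v) \in \cbra{\Qmark, \sigma_\False^C(v)}$; and $\sigma_t(v)$ is precisely the value that was written in the $\UpdTime(v, t+1)$-th iteration (or $\Qmark$ from initialization when $\UpdTime(v,t+1) = -T-1$, which is consistent). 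The condition $T \ge 2n-1$ enters here to guarantee that $\UpdTime(v,t+1)$ is an \emph{actual} update in $\cbra{-T,\ldots,-1}$ rather than the boundary sentinel in the cases we care about — more precisely it ensures the extended-constraint interval (of range $< n$ by \Cref{fct:extended_constraints}) fits inside the time window, so items (2b)/(2c) of that fact apply.

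Items (2), (3), (4) I would handle together by induction on the recursion tree. Item (2): the root call has $t_0 = \UpdTime(u,0) \in V'$ by \textsf{Line 1}; and whenever we recurse via \textsf{Line 13} on $\UpdTime(v,t)$ for $v \in (V_t \cap \Mcal)\setminus\cbra{v_{i_t}}$, this variable $v$ lies in $\vbl(C)$ for some $C$ that was added, so $\UpdTime(v,t) = \UpdTime(v,(t-1)+1)$ — wait, I need $\UpdTime(v,t)$ to be an endpoint of the edge $e$ that was just inserted; here I would use that $e = \cbra{\UpdTime(w,t+1) \mid w \in \vbl(C)}$ contains $\UpdTime(v,t+1)$, and since $v \ne v_{i_t}$ we have $v_{i_{t'}} \ne v_{i_t}$ for the relevant time, so $\UpdTime(v,t+1) = \UpdTime(v,t)$; thus $\UpdTime(v,t) \in e \subseteq V'$ already. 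Item (3): $H'$ is built only from extended-constraint edges (by (1a)) so $H' \subseteq H^\ext$; connectivity follows because every newly added edge $e$ contains a vertex already in $V'$ — namely $\UpdTime(v_{i_t}, t+1) = t$, which is in $V'$ by item (2) — and $V'$ only grows by adding vertices of such edges, plus the recursion vertices which lie on already-added edges by the previous sentence. Item (4): the root edge(s) inserted at the top-level call $\FailedConstraints(t_0)$ have $t_{\mathsf{max}} = t_0 = \UpdTime(u,0) \ge -n$ (the last update of $u$ within the last $n$ steps), giving $e_0$; and since the algorithm recurses down to (but not past) $t < -T+n-1$ and, by the coalescence assumption that $\sigma_0(u) = \Qmark$ forces a chain of $\Qmark$-updates reaching the initialization, some branch must insert an edge at a call with $t \ge -T+n-1$ whose interval dips below $-T+n-1$, giving $e_1$ — here I would lean on item (1b) and the fact that a $\Qmark$ at time $t$ was produced only via \textsf{SafeSampling} on a non-coupled update, which by the \textsf{Component} logic requires a connected $\Qmark$-variable, continuing the recursion until the initialization window is touched.

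The main obstacle I anticipate is item (4), specifically pinning down that the recursion genuinely reaches a time below $-T+n-1$: this requires arguing that as long as $\sigma_0(u) = \Qmark$, the ``$\Qmark$ came from somewhere'' chain described in the proof overview cannot terminate early, i.e., every $\Qmark$ at a non-initial time is traceable (via a failed constraint containing another $\Qmark$-valued marked variable) to an earlier $\Qmark$, and this is exactly what drives a recursive call strictly deeper. Making the bookkeeping of $\UpdTime$ versus the boundary sentinel $-T-1$ precise — and checking the off-by-one in $-T+n-1$ — will be the fiddly part; everything else is a direct unwinding of the definitions.
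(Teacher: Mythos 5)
Your route is the same as the paper's: termination from strictly decreasing call times; (1a) by ruling out the boundary value $-T-1$; (1b) from the wildcard semantics of $C(\sigma_t)=\False$; (2) from $\UpdTime(v,t)=\UpdTime(v,t+1)\in e$; (3) because every new edge contains a vertex already in $V'$; and (4) because a $\Qmark$-update forces $\Token_t=\False$, hence a marked $\Qmark$-variable in the explored component, hence a deeper recursive call, so the recursion only stops below $-T+n-1$, and the terminal call's time lies in $V'$ (by (2)) and hence in some edge $e_1$. Your sketch of (4), which you flag as the delicate point, is exactly the paper's argument and does go through once the invariant below is made explicit. However, two steps as you wrote them are incorrect and need repair.

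First, in (1a) you claim $t_\mathsf{max}(e)=t$ ``because $v_{i_t}\in\vbl(C)$.'' The \textsf{while} loop enumerates a whole component: a constraint added after the first one only needs $\vbl(C|_{\sigma_t})\cap V_t\neq\emptyset$ and typically does not contain $v_{i_t}$, so $t_\mathsf{max}(e)$ can be strictly smaller than $t$. Condition (2) of \Cref{def:extended_constraints} still holds, but for a different reason: for each $v\in\vbl(C)$ there is no update of $v$ in the interval $(\UpdTime(v,t+1),\,t+1)$, hence none in $(\UpdTime(v,t+1),\,t_\mathsf{max}(e)+1)$, so $\UpdTime(v,t_\mathsf{max}(e)+1)=\UpdTime(v,t+1)$. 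Also, the reason the sentinel $-T-1$ never enters $e$ is the early-return guard $t<-T+n-1$ in \textsf{FailedConstraints}: whenever the \textsf{while} loop runs we have $t\ge-T+n-1$, so $\UpdTime(v,t+1)\ge t+1-n\ge-T$ for every $v$; the hypothesis $T\ge2n-1$ is needed only to push the root call $t_0=\UpdTime(u,0)\ge-n\ge-T+n-1$ past that guard. Your appeal to \Cref{fct:extended_constraints} here is circular, since that fact is about objects already known to be extended constraints. Second, in (1b) the sentence ``$\sigma_t(v)$ is precisely the value written in the $\UpdTime(v,t+1)$-th iteration'' is false for $v=v_{i_t}$: $\sigma_t(v_{i_t})=\Qmark$ is the placeholder set at \textsf{Line 3} of \Cref{alg:boundingchain}, not the value that iteration eventually writes. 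For that variable you need the invariant the paper carries through its induction: every call \FailedConstraints{$t$} is made only for a time whose update actually produced $\Qmark$ (for the root because the value written at $t_0=\UpdTime(u,0)$ persists to $\sigma_0(u)=\Qmark$; for recursive calls because $\sigma_t(v)=\Qmark$ persists from iteration $\UpdTime(v,t)$). You already use this invariant implicitly in your discussion of (4); state it and invoke it in (1b), and the proof matches the paper's.
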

\begin{proof}
Since $\UpdTime(v,t)<t$ for all $v\in V$ and $t\in\cbra{-T,\ldots,0}$, \Cref{alg:find_failed_constraints_during_the_boundingchain_process} always halts.

We prove Item (1) by induction on the calls of \FailedConstraints{$t$}. The first call $t_0$ represents the final update of the assignment on $u\in\Mcal$, which results in $\sigma_0(u)=\Qmark$. 
\begin{itemize}
\item \fbox{Item (1a) for $t_0$.} 
Note that $0>t_0\ge0-n\ge-T+n-1$. Then $\UpdTime(v,t_0+1)=t_0\ge-T$ for $v=v_{i_{t_0}}$; and $\UpdTime(v,t_0+1)\ge-T$ for all $v\neq v_{i_{t_0}}$. This means $-T-1\notin e$ and thus $(e,C)$ is an extended constraint.
\item \fbox{Item (1b) for $t_0$.} 
Since $C|_{\sigma_{t_0}}\in\Ccal|_{\sigma_{t_0}}$ at \textsf{Line 7}, we know $C(\sigma_{t_0})=\False$ and, by \Cref{eq:qmark_constraint}, $\sigma_{t_0}(v)\in\cbra{\sigma_\False^C(v),\Qmark}$ for each $v\in\vbl(C)$. 
This means, if $v\neq v_{i_{t_0}}$, the assignment on $v$ is updated to such value in the $\UpdTime(v,t_0)=\UpdTime(v,t_0+1)$-th \textsf{for} iteration in \Cref{alg:boundingchain}. 
Meanwhile if $v=v_{i_{t_0}}$, then the assignment on $v$ is updated to $\Qmark$ in this $t_0=\UpdTime(v,t_0+1)$-th \textsf{for} iteration, resulting in $\sigma_0(v)=\Qmark$.
\end{itemize}
To complete the induction, we note that each later call of \textsf{FailedConstraints} relies on some $v$ from \textsf{Line 13} when we execute some \FailedConstraints{$\UpdTime(v,t)$}. This means $v\in\Mcal\setminus\cbra{v_{i_t}}$ and $\sigma_t(v)=\Qmark$ and $t\ge-T+n-1$. Thus the assignment on $v$ is updated to $\Qmark$ in the $-T\le\UpdTime(v,t)$-th \textsf{for} iteration in \Cref{alg:boundingchain}. Then the argument above also goes through with almost no change.

For Item (2), note that $V'$ is initialized as $\cbra{t_0}$. Upon \textsf{Line 13}, we have $\UpdTime(v,t)=\UpdTime(v,t+1)$ since $v\neq v_{i_t}$, which has been added into $V'$ by \textsf{Line 10} earlier.

Now we turn to Item (3). By Item (1), $E'\subseteq E^\ext$. Meanwhile by \textsf{Line 10}, $E'$ are edges over vertex set $V'$. Thus it suffices to show $H'=(V',E')$ is connected. By Item (2), we only need to show vertices added during the \textsf{while} iterations in \FailedConstraints{$t$} is connected to $t$. 
This can be proved by induction:
When we find $C|_{\sigma_t}$ satisfying the condition at \textsf{Line 7}, fix some $v'\in\vbl(C|_{\sigma_t})\cap V_t$. 
Then for the edge $e$ constructed in the next step, each $\UpdTime(v,t+1)\in e$ is connected to $\UpdTime(v',t+1)\in e$.
Note that either (A) $v'=v_{i_t}$ and thus $\UpdTime(v',t+1)=t$, or (B) $\UpdTime(v',t+1)$ was added into $V'$ in an earlier time and connected to $t$. Hence each $\UpdTime(v,t+1)\in e$ is connected to $t$ as desired.

Finally we prove Item (4). 
Each time we call \FailedConstraints{$t$}, it implies the assignment on $v_{i_t}\in\Mcal$ is updated to $\Qmark$ in the $t$-th \textsf{for} iteration in \Cref{alg:boundingchain}.
This means $\Token_t=\False$ and \textsf{SafeSampling} is performed in \Cref{alg:boundingchain}. 
By comparing \Cref{alg:component} and \textsf{FailedConstraints}, we know $v_{i_t}$ is connected by falsified constraints to some $v\in\pbra{V_t\cap\Mcal}\setminus\cbra{v_{i_t}}$ that $\sigma_t(v)=\Qmark$. This implies at least one round of \textsf{Line 13} here will be executed. Thus the recursion of \Cref{alg:find_failed_constraints_during_the_boundingchain_process} continues until $t<-T+n-1$. 
By Item (2), there exists some $t_1\in V'$ with $t_1<-T+n-1$. Meanwhile $t_0\in V'$ and $t_0\ge-n$. 
Hence by Item (3), there exists $e_0,e_1\in E'$ such that $t_0\in e_0$ and $t_1\in e_1$; and $t_\mathsf{max}(e_0)\ge t_0\ge-n$, $t_\mathsf{min}(e_1)\le t_1<-T+n-1$.
\end{proof}

Finally we complete the proof of \Cref{prop:boundingchain}.
\begin{proof}[Proof of the Coalescence Part of \Cref{prop:boundingchain}]
Since $\sigma_0$ is defined as the final returned assignment, we know event $\Ecal$ (Defined in \Cref{prop:boundingchain}) is ``there exists some $u\in\Mcal$ such that $\sigma_0(u)=\Qmark$''. 
Using \Cref{alg:find_failed_constraints_during_the_boundingchain_process} we obtain $H'=(V',E')$. 
By \Cref{fct:path_in_connected_hypergraph} and Item (3) (4) of \Cref{lem:find_failed_constraints_during_the_boundingchain_process}, there exists a path $\tilde e_1,\tilde e_2,\ldots,\tilde e_\ell\in E'\subseteq E^\ext$ with labels $\tilde C_1,\tilde C_2,\ldots,\tilde C_\ell$ such that
\begin{itemize}
\item[(a)] $\tilde e_i\cap\tilde e_{i+1}\neq\emptyset$ and thus $t_\mathsf{min}(\tilde e_i)\le t_\mathsf{max}(\tilde e_{i+1})$ for all $i\in[\ell-1]$;
\item[(b)] $\tilde e_i\cap\tilde e_j=\emptyset$ for all $i,j\in[\ell]$ with $|i-j|>1$;
\item[(c)] $-n\le t_\mathsf{max}(\tilde e_1)<0$;
\item[(d)] $-T\le t_\mathsf{min}(\tilde e_\ell)<-T+n-1$.
\end{itemize}
Meanwhile
\begin{align*}
t_\mathsf{max}(\tilde e_1)
&=t_\mathsf{max}(\tilde e_\ell)+\sum_{i=1}^{\ell-1}\pbra{t_\mathsf{max}(\tilde e_i)-t_\mathsf{max}(\tilde e_{i+1})}\\
&\le t_\mathsf{min}(\tilde e_\ell)+n-1+\sum_{i=1}^{\ell-1}\pbra{t_\mathsf{min}(\tilde e_i)+n-1-t_\mathsf{max}(\tilde e_{i+1})}
\tag{by Item (2a) of \Cref{fct:extended_constraints}}\\
&\le \pbra{-T+n-2}+\ell\cdot(n-1).
\tag{by Item (a) (d) above}
\end{align*}
Thus by Item (c) above, we have $\ell\ge-2+\lceil T/(n-1)\rceil$. 
For convenience we truncate the tail of the path so that $\ell$ is the largest even number no more than $-2+\lceil T/(n-1)\rceil$. Thus $\ell\ge-3+\lceil T/(n-1)\rceil$.
We remark that since the tail is truncated, Item (d) above is not necessarily true now.

Now for any fixed path $e_1,\ldots,e_\ell\in E^\ext$ with labels $C_1,\ldots,C_\ell$ satisfying Item (a) (b) (c) above, we bound the probability that this path, denoted by $P$, appears in $H'$. Assume 
\begin{equation}\label{eq:even_positions_assumption}
\prod_{i=1}^{\ell/2}\abs{\vbl(C_{2i-1})}\le\prod_{i=1}^{\ell/2}\abs{\vbl(C_{2i})}
\end{equation}
and the other case can be analyzed similarly. 
For each $t\in\cbra{-T,\ldots,-1}$ and $C\in\Ccal$ with $v_{i_t}\in\vbl(C)$ we define $\Ecal_{t,C}$ as the event `` the assignment on $v_{i_t}$ is updated to $\Qmark$ or $\sigma_\False^C(v)$ in the $t$-th \textsf{for} iteration in \Cref{alg:boundingchain}''.
By Item (1b) of \Cref{lem:find_failed_constraints_during_the_boundingchain_process}, $e_i$, labeled by $C_i$, appearing in $H'$ implies $\bigwedge_{t\in e_i}\Ecal_{t,C_i}$ happens.
Recall the definition of $\lambda$ from \Cref{eq:lambda}, then we have\footnote{We remark that for the third inequality below, we do \emph{not} assume any independence between $\Ecal_{t,C}$. For example, if $t_1<t_2$, then $\Pr\sbra{\Ecal_{t_1,C_4}\land\Ecal_{t_2,C_2}}=\Pr\sbra{\Ecal_{t_1,C_4}}\cdot\Pr\sbra{\Ecal_{t_2,C_2}\mid\Ecal_{t_1,C_4}}$ and then we apply Item (2) of \Cref{lem:single_update} twice.}
\begin{align*}
\Pr\sbra{P\text{ appears in }H'}
&\le\Pr\sbra{e_{2i}\text{ appears in $H'$ with label $C_{2i}$ for all $i\in[\ell/2]$}}\\
&\le\Pr\sbra{\bigwedge_{i=1}^{\ell/2}\bigwedge_{t\in e_{2i}}\Ecal_{t,C_{2i}}}
=\Pr\sbra{\bigwedge_{i=1}^{\ell/2}\bigwedge_{t\in e_{2i},v_{i_t}\in\Mcal}\Ecal_{t,C_{2i}}}\\
&\le\prod_{i=1}^{\ell/2}\prod_{v\in\vbl(C_{2i})\cap\Mcal}\pbra{\beta\cdot\Dcal_v(\sigma_\False^C(v))+(\beta-1)(|\Omega_v|-2)}
\tag{since $(e_{2i})_{i\in[\ell/2]}$ are pairwise disjoint and by Item (2) of \Cref{lem:single_update}}\\
&\le\prod_{i=1}^{\ell/2}\pbra{4\Delta\abs{\vbl(C_{2i})}}^{-2}\le\prod_{i=1}^\ell\pbra{4\Delta\abs{\vbl(C_i)}}^{-1}.
\tag{since $\Delta^2\lambda\le1/16$ and by \Cref{eq:even_positions_assumption}}
\end{align*}

Now it suffices to union bound over all possible paths, i.e., $\Pr\sbra{\Ecal}\le\sum_P\Pr\sbra{P\text{ appears in }H'}$ where $P$ is some fixed path $e_1,\ldots,e_\ell\in E^\ext$ with labels $C_1,\ldots,C_\ell$ satisfying Item (a) (b) (c) above.
First by Item (c), there are at most $n$ possible $t_\mathsf{max}(e_1)$. Then by Item (1) of \Cref{fct:extended_constraints} there are at most $d$ possible $C_1$ given $t_\mathsf{max}(e_1)$. By Item (2) of \Cref{def:extended_constraints} this determines $(e_1,C_1)$.
Now given $(e_1,C_1)$, we enumerate the rest of the path by a rooted tree $\Tcal(e_1,C_1)$ constructed as follows:
\begin{itemize}
\item $\Tcal(e_1,C_1)$ has depth $2(\ell-1)$ and the root is labeled with $(e_1,C_1)$.
\item Given a node $z$ with label $(e_i,C_i),i\in[\ell-1]$, we construct the next two layers differently.
\begin{itemize}
\item For each $C_{i+1}\in\Ccal$ with $\vbl(C_i)\cap\vbl(C_{i+1})\neq\emptyset$, we create a child node $z'$ with label $C_{i+1}$ and link to $z$. 
\begin{flushright}($z'$ has at most $\Delta$ possibilities.)\end{flushright}
\item For each $z'$, assume its label is $C_{i+1}$. We find some $e_{i+1}$ such that $e_i\cap e_{i+1}\neq\emptyset$ and $(e_{i+1},C_{i+1})$ is an extended constraint, and then create a child node $z''$ with label $(e_{i+1},C_{i+1})$ and link to $z'$.
\begin{flushright}($z''$, given $z'$, has at most $2\cdot|\vbl(C_{i+1})|$ possibilities by Item (2c) of \Cref{fct:extended_constraints}.)\end{flushright}
\item We move to each $z''$ and repeat the construction.
\end{itemize}
\end{itemize}
Each leaf of $\Tcal(e_1,C_1)$ represents a path $P$ which satisfies Item (a) (c) already, and
\begin{itemize}
\item either, $P$ does not satisfy Item (b) and thus does not contribute to the union bound;
\item or, $P$ is valid and $\Pr\sbra{P\text{ appears in }H'}\le\prod_{i=1}^\ell\pbra{4\Delta\abs{\vbl(C_i)}}^{-1}$ as above.
\end{itemize}
Now we put weight on each internal node $z$ of $\Tcal(e_1,C_1)$ as follows:
\begin{itemize}
\item If $z$ has label $(e,C)$, then its weight is $w(z)=(\sqrt2\Delta)^{-1}$.
\item Otherwise $z$ has label $C$, then its weight is $w(z)=\pbra{2\sqrt2\cdot|\vbl(C)|}^{-1}$.
\end{itemize}
This means $\Pr\sbra{P\text{ appears in }H'}\le(4\Delta)^{-1}\prod_{\text{internal node }z\text{ in }P}w(z)$ for each valid $P$ in $\Tcal(e_1,C_1)$ where the $(4\Delta)^{-1}$ factor is because there are only $\ell-1$ internal nodes for each case along the path. 
Thus
\begin{align*}
\sum_{\text{valid }P\text{ in }\Tcal(e_1,C_1)}\Pr\sbra{P\text{ appears in }H'}
&\le\sum_{\text{valid }P\text{ in }\Tcal(e_1,C_1)}(4\Delta)^{-1}\prod_{\text{internal node }z\text{ in }P}w(z)\\
&\le(4\Delta)^{-1}\sum_{P\text{ in }\Tcal(e_1,C_1)}\prod_{\text{internal node }z\text{ in }P}w(z)\\
&\le(\sqrt2)^{-2(\ell-1)}/(4\Delta)=2^{-\ell}/(2\Delta),
\end{align*}
where the last inequality can be proved by induction on the depth and noticing $(\sqrt2\cdot w(z))^{-1}$ is at least the number of child nodes of $z$ for each internal node $z\in\Tcal(e_1,C_1)$.

Putting everything together, we have
\begin{align*}
\Pr\sbra{\Ecal}
&\le\sum_{\text{valid }(e_1,C_1)}\sum_{\text{valid }P\text{ in }\Tcal(e_1,C_1)}\Pr\sbra{P\text{ appears in }H'}\\
&\le nd\cdot 2^{-\ell}/(2\Delta)
\tag{$(e_1,C_1)$ has at most $nd$ choices}\\
&\le n\cdot 2^{2-\frac T{n-1}}\le 4n\cdot 2^{-\frac Tn}.
\tag{since $\ell\ge-3+\lceil T/(n-1)\rceil$ and $d\le\Delta$}
\end{align*}
\end{proof}

\subsection{The Distribution after \textsf{BoundingChain} Subroutines}\label{sec:the_distribution_after_boundingchain_subroutines}

Recall in \AtomicCSPSampling{$\Phi,\Mcal$} (\Cref{alg:atomiccspsampling}), we keep doubling $T$ and performing the corresponding \BoundingChain{$\Phi,\Mcal,-T,r_{-T},\ldots,r_{-1}$} until the returned assignment has no $\Qmark$ on $\Mcal$. 
Thus before we present the \textsf{FinalSampling} subroutine, we pause for now to analyze these \textsf{BoundingChain} calls \emph{in a whole}.

\begin{definition}[Projected Distribution]\label{def:projected_distribution}
Given an atomic CSP $\Phi=\pbra{V,\pbra{\Omega_v,\Dcal_v}_{v\in V},\Ccal}$ and a marking $\Mcal\subseteq V$, let $\Lambda=\pbra{\prod_{v\in\Mcal}\Omega_v}\times\cbra{\Qmark}^{V\setminus\Mcal}$ and define the \emph{projected distribution} $\mu^\Mcal\in\Rbb^\Lambda$ by
$$
\mu^\Mcal(\sigma)=\Pr_{\sigma'\sim\mu_\True^\Ccal}\sbra{\sigma'(\Mcal)=\sigma(\Mcal)}
\quad\text{for all $\sigma\in\Lambda$}.
$$
\end{definition}

We will show the distribution after all \textsf{BoundingChain} subroutines is exactly $\mu^\Mcal$.
\begin{proposition}\label{prop:boundingchain_distribution}
If $\Naturale\alpha\Delta\le1$, $\Naturale\Delta^2\rho\le1/32$, and $\Delta^2\lambda\le1/16$, then the \textsf{while} iterations in \AtomicCSPSampling{$\Phi,\Mcal$} halts almost surely and the final assignment $\tilde\sigma$ has distribution $\mu^\Mcal$.
\end{proposition}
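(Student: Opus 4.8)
The plan is to obtain both conclusions from a coupling-from-the-past argument in which the target is the idealized systematic-scan Glauber dynamics $\Glauber$ on $\prod_{v\in\Mcal}\Omega_v$ and the coalescence bound of \Cref{prop:boundingchain} supplies termination. Here $\Glauber$ is by definition the Markov chain run by \Cref{alg:boundingchain} when it is started from a $\Qmark$-free assignment: in that regime Item~(4) of \Cref{prop:component} together with \Cref{fct:rejectionsampling} shows that no update introduces a $\Qmark$ and that the update of $\sigma(v_{i_t})$ is exactly a resampling of its conditional marginal under $\mu_\True^\Ccal$ (restricted to $\Mcal$) given the current values on $\Mcal\setminus\cbra{v_{i_t}}$. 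Termination is immediate: the \textsf{while} loop of \Cref{alg:atomiccspsampling} fails to halt precisely when the event $\Ecal_T$ of \Cref{prop:boundingchain} holds for every $T=2^{j}$, and since the tapes $r_i$ are drawn once and reused, this is the event $\bigcap_{j\ge0}\Ecal_{2^{j}}$, whose probability is at most $\Pr\sbra{\Ecal_{2^{J}}}\le 4|V|\cdot2^{-2^{J}/|V|}$ for every large $J$, hence $0$.

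For correctness I would argue in two steps. \textbf{Step 1 (the idealized dynamics converges, from any window).} Bundling $n=|V|$ consecutive updates of $\Glauber$ into a single macro-step turns the time-inhomogeneous chain into a time-homogeneous chain $P'$ on $\prod_{v\in\Mcal}\Omega_v$. Each single-site update is a Gibbs update for $\mu^\Mcal$, hence reversible and so stationary for it; and \Cref{thm:HSS_LLL} (local uniformity) shows every marked configuration is reached with positive probability within one sweep, so $P'$ is irreducible and aperiodic with unique stationary law $\mu^\Mcal$. Therefore $(P')^{L}\to\mu^\Mcal$, and because every single-step kernel of $\Glauber$ preserves $\mu^\Mcal$ this upgrades to the statement that the law of $\Glauber$ run over \emph{any} integer window of length $H$, started from any fixed configuration, converges to $\mu^\Mcal$ as $H\to\infty$ — this is the coupling-from-the-past observation that the window $\sbra{-H,-1}$ is interchangeable with $\sbra{0,H-1}$. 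In particular, using the algorithm's tapes $r_{-T},\ldots,r_{-1}$, the law of $\Glauber$ run over $\sbra{-T,-1}$ from any fixed $\sigma_\str\in\prod_{v\in\Mcal}\Omega_v$ tends to $\mu^\Mcal$ as $T\to\infty$ through powers of two.

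\textbf{Step 2 ($\BChains$ is a bounding chain for $\Glauber$, and conclusion).} I would show by induction on $t\in\cbra{-T,\ldots,-1}$ that, run on the same tapes, the $\BChains$ assignment and the $\Glauber$ assignment started from an arbitrary $\Qmark$-free $\sigma_\str$ agree, at every time, on every coordinate at which $\BChains$ is not $\Qmark$. The inductive step follows the two branches of \Cref{alg:boundingchain}: when $v_{i_t}$ is coupled (the component $\Phi_t$ produced by \Cref{alg:component} has all its marked variables $\Qmark$-free), the induction hypothesis forces $\BChains$ and $\Glauber$ to perform rejection sampling on the same projected CSP, so their updates of $v_{i_t}$ can be coupled to coincide; when $v_{i_t}$ is not coupled, Items~(2)--(3) of \Cref{prop:safesampling} say the \textsf{SafeSampling} law $\Dcal_{v_{i_t}}^\Qmark$ assigns each concrete value $q$ no more mass than the $\Glauber$-probability of $q$ under any completion of the component, and dumps the deficit on $\Qmark$; hence the two updates can be coupled so that $\BChains$ outputs either $\Qmark$ or the value output by $\Glauber$. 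In particular, on the event that \Cref{alg:boundingchain} returns $\tilde\sigma$ with no $\Qmark$ on $\Mcal$ — the event under which the loop exits — every $\Glauber$ trajectory from every $\Qmark$-free start ends at $\tilde\sigma(\Mcal)$. Combining: for fixed tapes the loop stabilizes a.s. at a finite $T^\star$, and for every power of two $T\ge T^\star$ the output $\tilde\sigma(\Mcal)$ equals the result of $\Glauber$ run over $\sbra{-T,-1}$ from a fixed $\sigma_\str$ (Step 2); letting $T\to\infty$ through powers of two, that result converges in law to $\mu^\Mcal$ (Step 1) while being a.s. eventually equal to $\tilde\sigma(\Mcal)$, so $\tilde\sigma(\Mcal)\sim\mu^\Mcal$, i.e. $\tilde\sigma\sim\mu^\Mcal$ as an element of $\Lambda$ in \Cref{def:projected_distribution}.

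\textbf{Main obstacle.} The substantive part is Step 2 — verifying that the single $\Qmark$-annotated assignment carried along by \Cref{alg:boundingchain} genuinely realizes a grand coupling of \emph{all} $\Glauber$ trajectories. The delicacy is that the component $\Phi_t$ seen by $\BChains$ may be strictly larger than the one seen by $\Glauber$, since a $\Qmark$ can falsify a constraint that a concrete assignment satisfies; one must therefore argue that on the coordinates that end up agreeing the two samplers nevertheless face identical projected CSPs, and that the wildcard output of \textsf{SafeSampling} really does dominate every value a not-yet-coupled coordinate could take under $\Glauber$. Step 1 and the termination bound are, by contrast, routine given \Cref{thm:HSS_LLL} and \Cref{prop:boundingchain}.
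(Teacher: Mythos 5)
Your proposal is correct and follows essentially the same route as the paper: convergence of the idealized systematic scan via the time-homogeneous embedding (\Cref{lem:systematicscan_final_distribution}), coupling from the past with the bounding chain as a monotone grand coupling (\Cref{lem:coupling_from_the_past}, \Cref{prop:coupling}), and almost-sure termination from the coalescence tail bound of \Cref{prop:boundingchain} (the paper phrases this via Borel--Cantelli). The "main obstacle" you flag in Step 2 is exactly what the paper resolves with the explicit per-site coupling of \Cref{alg:coupling}, using the residual distribution built from \Cref{prop:safesampling} and the fact that, when the update is coupled, both chains see the identical projected component.
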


We introduce and analyze the following \SystematicScan{$\Phi,\Mcal,\sigmain,L,R,r_L,\ldots,r_R$} algorithm, then show it couples with \textsf{BoundingChain}.

\begin{algorithm}[ht]
\caption{The \textsf{SystematicScan} algorithm}\label{alg:systematicscan}
\DontPrintSemicolon
\LinesNumbered
\KwIn{an atomic CSP $\Phi=\pbra{V,\pbra{\Omega_v,\Dcal_v}_{v\in V},\Ccal}$, a marking $\Mcal\subseteq V$, an assignment $\sigmain\in\pbra{\prod_{v\in\Mcal}\Omega_v}\times\cbra{\Qmark}^{V\setminus\Mcal}$, a starting time $L$, a stopping time $R$, and randomness tapes $r_L,\ldots,r_R$.}
\KwOut{an assignment $\sigma\in\pbra{\prod_{v\in\Mcal}\Omega_v}\times\cbra{\Qmark}^{V\setminus\Mcal}$.}
Initialize $\sigma\gets\sigmain$\;
\For(\tcc*[f]{Assume $V=\cbra{v_0,\ldots,v_{n-1}}$}){$t=L$ \KwTo $R$}{
$i_t\gets t\mod n$, and $\sigma(v_{i_t})\gets\Qmark$
\tcc*{Update $\sigma(v_{i_t})$ in this round}
$\Phi_t=\pbra{V_t,\pbra{\Omega_v|_\sigma,\Dcal_v|_\sigma}_{v\in V_t},\Ccal_t}\gets\Component{$\Phi,\Mcal,\sigma,v_{i_t}$}$\;
\tcc*{Ignore the returned $\Token$ since it is always $\True$ here}
\lIf{$v_{i_t}\in V\setminus\Mcal$}{Continue}
\Else{
$\sigma'\gets\RejectionSampling{$\Phi_t,r_t$}$\;
Update $\sigma\gets\sigma'(v_{i_t})$
}
}
\Return{$\sigma$}
\end{algorithm}

\subsubsection{Convergence of \textsf{SystematicScan}}\label{sec:convergence_of_systematicscan}

We first show \textsf{SystematicScan} converges to $\mu^\Mcal$. We set up basic notations for Markov chains.

Let $\Lambda$ be a finite state space; for our purpose it will be $\pbra{\prod_{v\in\Mcal}\Omega_v}\times\cbra{\Qmark}^{V\setminus\Mcal}$. We view any distribution $\mu$ over $\Lambda$ as a \emph{horizontal} vector in $\Rbb^\Lambda$ where $\sum_{a\in\Lambda}\mu(a)=1$ and $\mu(a)\ge0$ holds for all $a\in\Lambda$.
We denote $\indicator_a\in\Rbb^\Lambda$ as the point distribution of $a\in\Lambda$, i.e., $\indicator_a(b)=1$ iff $a=b$. 

A \emph{Markov chain} $\pbra{X_t}_{t\ge0}$ over $\Lambda$ is given by its transition matrices $\pbra{\Psf_t}_{t\ge0}$ where each $\Psf_t\in\Rbb^{\Lambda\times\Lambda}$ has non-negative entries and $\sum_{b\in\Lambda}\Psf_t(a,b)\equiv1$ for all $a\in\Lambda$. Then $X_t=X_0\Psf_0\Psf_1\cdots\Psf_{t-1}$ where $X_0\in\Rbb^\Lambda$ is the starting distribution. In particular, 
\begin{itemize}
\item if $\Psf_t\equiv\Psf$ for all $t\ge0$, then $\pbra{X_t}_{t\ge0}$ is a \emph{time homogeneous Markov chain} given by $\Psf$;
\item if $\pbra{\Psf_t}_{t\ge0}$ are possibly different, then $\pbra{X_t}_{t\ge0}$ is a \emph{time inhomogeneous Markov chain}.
\end{itemize}

Assume $\pbra{X_t}_{t\ge0}$ is a time homogeneous Markov chain over $\Lambda$ given by transition matrix $\Psf$. We say $\Psf$ is 
\begin{itemize}
\item \emph{irreducible} if for any $a,b\in\Lambda$, there exists some integer $t\ge0$ such that $\Psf^t(a,b)>0$;
\item \emph{aperiodic} if for any $a\in\Lambda$, $\mathsf{gcd}\cbra{\text{integer }t>0\mid\Psf^t(a,a)>0}=1$;\footnote{$\mathsf{gcd}$ stands for greatest common divisor.}
\item \emph{stationary with respect to distribution $\mu$} if $\mu\Psf=\mu$;
\item \emph{reversible with respect to distribution $\mu$} if $\mu(a)\Psf(a,b)=\mu(b)\Psf(b,a)$ holds for all $a,b\in\Lambda$.
\end{itemize}
Here we note the following two classical results.
\begin{fact}[e.g., {\cite[Proposition 1.20]{levin2017markov}}]\label{fct:reversible_to_stationary}
If $\Psf$ is reversible with respect to $\mu$, then $\Psf$ is also stationary with respect to $\mu$.
\end{fact}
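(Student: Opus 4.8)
The plan is to unfold the matrix identity $\mu\Psf=\mu$ coordinate by coordinate and reduce it to the detailed balance equations together with the row-stochasticity of $\Psf$. Since distributions are treated as horizontal vectors in $\Rbb^\Lambda$, the claim $\mu\Psf=\mu$ is equivalent to showing $(\mu\Psf)(b)=\mu(b)$ for every fixed state $b\in\Lambda$.

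Fix $b\in\Lambda$. By definition of the vector-matrix product, $(\mu\Psf)(b)=\sum_{a\in\Lambda}\mu(a)\Psf(a,b)$. Now apply the reversibility hypothesis termwise: for each $a\in\Lambda$ we have $\mu(a)\Psf(a,b)=\mu(b)\Psf(b,a)$, so
\[
(\mu\Psf)(b)=\sum_{a\in\Lambda}\mu(b)\Psf(b,a)=\mu(b)\sum_{a\in\Lambda}\Psf(b,a).
\]
Finally, since $\Psf$ is a stochastic matrix its rows sum to $1$, i.e.\ $\sum_{a\in\Lambda}\Psf(b,a)=1$ (this is part of the definition of a transition matrix as stated earlier in the excerpt). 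Hence $(\mu\Psf)(b)=\mu(b)$. As $b\in\Lambda$ was arbitrary, $\mu\Psf=\mu$, which is exactly the assertion that $\Psf$ is stationary with respect to $\mu$.

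There is essentially no obstacle here: the only point requiring any care is keeping the direction of the reversibility equation consistent with the row-vector convention (we sum over the \emph{first} coordinate $a$ of $\Psf(a,b)$, and reversibility rewrites $\mu(a)\Psf(a,b)$ as $\mu(b)\Psf(b,a)$, after which the sum over $a$ hits the second coordinate of $\Psf(b,\cdot)$ and collapses to $1$). Finiteness of $\Lambda$ makes all the sums manifestly well-defined, so no convergence issues arise; the whole argument is two lines of elementary manipulation.
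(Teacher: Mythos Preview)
Your proof is correct. The paper does not give its own proof of this fact but simply cites it from \cite{levin2017markov}; your argument is exactly the standard one-line verification found there.
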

\begin{theorem}[The Convergence Theorem, e.g., {\cite[Theorem 4.9]{levin2017markov}}]\label{thm:the_convergence_theorem}
Suppose $\pbra{X_t}_{t\ge0}$ is an irreducible and aperiodic time homogeneous Markov chain over finite state space $\Lambda$ with stationary distribution $\mu$ and transition matrix $\Psf$. Then for any $X_0$, we have $\lim_{t\to+\infty}X_t=\mu$.\footnote{The convergence is entry-wise.}
\end{theorem}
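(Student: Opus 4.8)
The plan is to prove the Convergence Theorem by the classical coupling argument, reducing everything to a single Doeblin-type minorization for a high power of $\Psf$. The first step is to show that irreducibility, aperiodicity, and finiteness of $\Lambda$ together produce an integer $r\ge1$ with $\Psf^r(a,b)>0$ for \emph{all} $a,b\in\Lambda$. Fixing $a$, the set $T_a=\cbra{t>0:\Psf^t(a,a)>0}$ is closed under addition (concatenating two closed walks at $a$ yields a longer one) and has greatest common divisor $1$ by aperiodicity; the standard numerical-semigroup fact (an additive subsemigroup of $\Zbb_{\ge0}$ with gcd $1$ is cofinite) then yields $N_a$ with $t\in T_a$ for every $t\ge N_a$. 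For any pair $(a,b)$, irreducibility gives $s_{a,b}$ with $\Psf^{s_{a,b}}(a,b)>0$, whence $\Psf^t(a,b)>0$ for all $t\ge N_a+s_{a,b}$; since $\Lambda$ is finite I take $r=\max_{a,b}\pbra{N_a+s_{a,b}}$.

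Next I set $\delta=\min_{a,b\in\Lambda}\Psf^r(a,b)$, which is positive as the minimum of finitely many positive numbers and satisfies $\delta<1$ once $\abs{\Lambda}\ge2$ (the case $\abs{\Lambda}=1$ being trivial). Because $\Psf^r(a,b)\ge\delta\ge\delta\cdot\mu(b)$ for all $a,b$, the kernel $\Psf^r$ obeys the minorization $\Psf^r(a,\cdot)\ge\delta\cdot\mu(\cdot)$ uniformly in $a$. I then construct a Markovian coupling of the subsampled chains $\tilde X_m:=X_{mr}$, started from the given $X_0$, and $\tilde Y_m:=Y_{mr}$, started from $Y_0\sim\mu$, each with transition matrix $\Psf^r$: given $\pbra{\tilde X_m,\tilde Y_m}=(a,b)$ with $a\ne b$, toss a coin of heads-probability $\delta$; on heads draw $W\sim\mu$ and set $\tilde X_{m+1}=\tilde Y_{m+1}:=W$, and on tails draw $\tilde X_{m+1}$ and $\tilde Y_{m+1}$ \emph{independently} from the residual distributions $\pbra{\Psf^r(a,\cdot)-\delta\mu}/(1-\delta)$ and $\pbra{\Psf^r(b,\cdot)-\delta\mu}/(1-\delta)$ (genuine distributions by the minorization); once $\tilde X_m=\tilde Y_m$ the chains move together forever using shared randomness. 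A one-line computation ($\delta\mu+(1-\delta)\cdot\pbra{\Psf^r(a,\cdot)-\delta\mu}/(1-\delta)=\Psf^r(a,\cdot)$) shows each marginal is a faithful copy of the $\Psf^r$-chain, so $\tilde Y_m\sim\mu\Psf^{mr}=\mu$ for all $m$ since $\mu$ is stationary. With $\tau=\inf\cbra{m:\tilde X_m=\tilde Y_m}$, each not-yet-coalesced block coalesces with probability at least $\delta$, so $\Pr\sbra{\tau>m}\le(1-\delta)^m$, and the coupling inequality gives $\TVdist\pbra{X_{mr},\mu}\le\Pr\sbra{\tilde X_m\ne\tilde Y_m}\le(1-\delta)^m$.

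To reach general $t$, write $t=mr+s$ with $0\le s<r$; since stochastic matrices are non-expansive in total variation and $\mu\Psf^s=\mu$, we get $\TVdist\pbra{X_t,\mu}=\TVdist\pbra{X_{mr}\Psf^s,\mu\Psf^s}\le\TVdist\pbra{X_{mr},\mu}\le(1-\delta)^{\lfloor t/r\rfloor}\to0$, and since $\abs{X_t(b)-\mu(b)}\le2\cdot\TVdist\pbra{X_t,\mu}$ for every $b\in\Lambda$, the convergence $X_t\to\mu$ holds entrywise, as claimed. The only genuinely non-routine ingredient is the first step — extracting a single uniform power $r$ with $\Psf^r>0$ from aperiodicity, irreducibility, and finiteness — whose crux is the number-theoretic fact above; once that is in hand the Doeblin/coupling machinery is entirely standard. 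An alternative is to invoke Perron--Frobenius for the primitive matrix $\Psf^r$ (simple top eigenvalue $1$, all other eigenvalues of modulus $<1$, hence $\Psf^{mr}$ converges to the rank-one matrix whose every row equals $\mu$), but this merely relocates the work into a heavier cited theorem, so I would present the coupling proof, which is self-contained given only the definitions already in the excerpt.
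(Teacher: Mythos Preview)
Your proof is correct and follows the classical Doeblin minorization/coupling route: extract a uniform power $r$ with $\Psf^r>0$ everywhere, obtain the minorization $\Psf^r(a,\cdot)\ge\delta\mu(\cdot)$, couple via a common $\mu$-draw with probability $\delta$, and push the resulting geometric bound through arbitrary times by non-expansiveness. Each step is sound.

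However, there is nothing to compare against: the paper does \emph{not} prove this theorem. It is stated as a classical fact with a citation to \cite[Theorem~4.9]{levin2017markov}, and is then invoked as a black box in the proof of \Cref{lem:systematicscan_final_distribution}. So your proposal supplies a self-contained argument where the paper simply cites the literature. If anything, your write-up is more than the paper needs; in the context of this work it would suffice to quote the result.
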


Now we turn to \textsf{SystematicScan} and follow the notation convention in \Cref{alg:systematicscan}: $V=\cbra{v_0,\ldots,v_{n-1}}$ and $i_t=t\mod n$. Recall we also fix $\Lambda=\pbra{\prod_{v\in\Mcal}\Omega_v}\times\cbra{\Qmark}^{V\setminus\Mcal}$.

\begin{definition}[One-step Transition Matrix]\label{def:one-step_transition_matrix}
For any $i\in\cbra{0,\ldots,n-1}$, define the \emph{one-step transition matrix} on $v_i\in V$ as $\Psf_i\in\Rbb^{\Lambda\times\Lambda}$ where
$$
\Psf_i(\sigma_1,\sigma_2)=
\Pr_{\sigma'\sim\mu_\True^\Ccal}\sbra{\sigma'(\Mcal)=\sigma_2(\Mcal)\mid \sigma'(\Mcal\setminus\cbra{v_i})=\sigma_1(\Mcal\setminus\cbra{v_i})}.
$$
\end{definition}

We first prove some useful facts and also connect one-step transition matrices to \textsf{SystematicScan}. 
\begin{proposition}\label{prop:systematicscan}
If $\Naturale\alpha\Delta\le1$, then the following holds.
\begin{itemize}
\item[(1)] Each $\Psf_i$ is well-defined. 
\item[(2)] For any $i\in\cbra{0,\ldots,n-1}$ and $\sigma_1,\sigma_2\in\Lambda$, we have $\Psf_i(\sigma_1,\sigma_2)\equiv0$ if $\sigma_1(\Mcal\setminus\cbra{v_i})\neq\sigma_2(\Mcal\setminus\cbra{v_i})$; and $\Psf_i(\sigma_1,\sigma_2)>0$ if otherwise.
\item[(3)] $\mu^\Mcal\Psf_{i_1}\cdots\Psf_{i_m}=\mu^\Mcal$ holds for any sequence $i_1,i_2,\ldots,i_m\in\cbra{0,\ldots,n-1}$ of finite length $m$.
\item[(4)] For any $L\le R$ and $\sigmain$, $\SystematicScan{$\Phi,\Mcal,\sigmain,L,R,r_L,\ldots,r_R$}$ halts almost surely over random $r_L,\ldots,r_R$ and its output distribution is $\mu=\indicator_\sigmain\Psf_{i_L}\Psf_{i_{L+1}}\cdots\Psf_{i_R}$.
\end{itemize}
\end{proposition}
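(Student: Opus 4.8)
The plan is to prove the four items in order: items (1) and (2) will come from the Lov\'asz local lemma applied to CSPs obtained by freezing some marked variables, item (3) from a one-line detailed-balance computation, and item (4) from an induction along the scan that tracks the law of the current assignment one update at a time. The enabling observation is that $\Naturale\alpha\Delta\le1$ forces $\alpha<1$, so in \Cref{eq:alpha} the empty product cannot occur; hence every constraint has at least one variable outside $\Mcal$, and in particular no constraint lies inside $\Mcal\setminus\{v_i\}$. Consequently, for any $S\subseteq\Mcal$ and any assignment $\tau$ on $S$, the projected CSP $\Phi|_{\sigma''}$ with $\sigma''=\tau$ on $S$ and $\sigma''\equiv\Qmark$ elsewhere (see \Cref{def:projected_constraint_satisfaction_problem}) has every constraint depending on at least one variable with falsifying probability at most $\alpha$ and, by \Cref{fct:after_projection}, constraint degree at most $\Delta$; \Cref{thm:LLL} then gives $\Pr_{\sigma'\sim\prod_v\Dcal_v}[\sigma'\in\sigma_\True^\Ccal\mid\sigma'(S)=\tau]>0$ (the case $S=\emptyset$ also shows $\mu_\True^\Ccal$ is well defined).

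Item (1) follows by taking $S=\Mcal\setminus\{v_i\}$: every conditioning event in \Cref{def:one-step_transition_matrix} has positive $\mu_\True^\Ccal$-probability, so each row of $\Psf_i$ is a bona fide conditional distribution on $\Lambda$ summing to $1$. For item (2), if $\sigma_1(\Mcal\setminus\{v_i\})\neq\sigma_2(\Mcal\setminus\{v_i\})$ the target event $\{\sigma'(\Mcal)=\sigma_2(\Mcal)\}$ is disjoint from the conditioning event, so $\Psf_i(\sigma_1,\sigma_2)=0$; otherwise take $S=\Mcal$, $\tau=\sigma_2(\Mcal)$ to get $\mu_\True^\Ccal(\{\sigma':\sigma'(\Mcal)=\sigma_2(\Mcal)\})>0$, and since this event is contained in the conditioning event, $\Psf_i(\sigma_1,\sigma_2)>0$. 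For item (3) it suffices, by \Cref{fct:reversible_to_stationary} and iteration, to show each $\Psf_i$ is reversible with respect to $\mu^\Mcal$. When $\sigma_1,\sigma_2$ disagree on $\Mcal\setminus\{v_i\}$, both sides of detailed balance vanish by item (2); when they agree, writing $\tau=\sigma_1(\Mcal\setminus\{v_i\})$, $P_\tau=\mu_\True^\Ccal(\sigma'(\Mcal\setminus\{v_i\})=\tau)$ and $r_q=\Pr_{\sigma'\sim\mu_\True^\Ccal}[\sigma'(v_i)=q\mid\sigma'(\Mcal\setminus\{v_i\})=\tau]$, one has $\mu^\Mcal(\sigma_1)=P_\tau r_{\sigma_1(v_i)}$ and $\Psf_i(\sigma_1,\sigma_2)=r_{\sigma_2(v_i)}$, so $\mu^\Mcal(\sigma_1)\Psf_i(\sigma_1,\sigma_2)=P_\tau r_{\sigma_1(v_i)}r_{\sigma_2(v_i)}$ is symmetric in $\sigma_1,\sigma_2$; hence $\mu^\Mcal\Psf_i=\mu^\Mcal$ and the claim follows.

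For item (4), note first that \textsf{SystematicScan} only calls \textsf{RejectionSampling}, always on a connected sub-CSP of $\Phi|_\sigma$ with per-constraint falsifying probability at most $\alpha$, so it halts almost surely by Item (1) of \Cref{prop:rejectionsampling}; moreover the $\Token$ returned by \textsf{Component} is always $\True$ here because the current $\sigma$ assigns a non-$\Qmark$ value to every marked variable other than the one being updated. I would then induct on $t$ from $L$ to $R$, maintaining that the law of $\sigma$ at the top of iteration $t$ is $\indicator_\sigmain\Psf_{i_L}\cdots\Psf_{i_{t-1}}$. If $v_{i_t}\notin\Mcal$ then $\sigma$ is unchanged and one checks directly that $\Psf_{i_t}$ is the identity matrix (its conditioning set is all of $\Mcal$). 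If $v_{i_t}\in\Mcal$, then after setting $\sigma(v_{i_t})\gets\Qmark$ the assignment equals the restriction $\sigma''$ of the previous $\sigma$ with $v_{i_t}$ freed, Item (4) of \Cref{prop:component} gives $\mu_\True^{\Ccal|_{\sigma''}}=\mu_\True^{\Ccal_t}\times\mu_\True^{\Ccal_t''}$ with $v_{i_t}\in V_t$, so the value produced by \textsf{RejectionSampling} has the same law as the $v_{i_t}$-marginal of $\mu_\True^{\Ccal|_{\sigma''}}$. Unwinding \Cref{def:projected_constraint_satisfaction_problem} (the dropped constraints are exactly those already satisfied by the frozen marked variables) identifies $\mu_\True^{\Ccal|_{\sigma''}}$ with $\mu_\True^\Ccal$ conditioned on $\sigma'(\Mcal\setminus\{v_{i_t}\})$ being the frozen values, so this marginal together with the unchanged coordinates is exactly $\Psf_{i_t}(\sigma,\cdot)$, which advances the induction. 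The output is the assignment at ``time $R+1$'', i.e.\ $\indicator_\sigmain\Psf_{i_L}\Psf_{i_{L+1}}\cdots\Psf_{i_R}$.

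The step I expect to require the most care is the bookkeeping in item (4): rigorously identifying $\mu_\True^{\Ccal|_{\sigma''}}$ with the conditioned measure $\mu_\True^\Ccal(\,\cdot\mid\sigma'(\Mcal\setminus\{v_{i_t}\})=\sigma(\Mcal\setminus\{v_{i_t}\}))$, and checking that the single-site update via \textsf{Component} followed by \textsf{RejectionSampling} implements exactly the kernel $\Psf_{i_t}$ rather than some other single-site kernel. The local-lemma arguments for items (1)--(3) are then routine once the observation that no constraint lies inside $\Mcal\setminus\{v_i\}$ is recorded.
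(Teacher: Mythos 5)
Your proposal is correct and takes essentially the same route as the paper: items (1)--(2) via the local lemma applied to the CSPs with the marked variables frozen (the paper packages exactly this through \Cref{prop:rejectionsampling}), item (3) by checking each $\Psf_i$ is reversible with respect to $\mu^\Mcal$ and invoking \Cref{fct:reversible_to_stationary}, and item (4) by induction along the scan, identifying one update with $\Psf_{i_t}$ via Item (4) of \Cref{prop:component} and Item (2) of \Cref{prop:rejectionsampling}. The bookkeeping you flag (identifying $\mu_\True^{\Ccal|_{\sigma''}}$ with the conditioned measure, and noting $\Token$ is always $\True$ here) is precisely what the paper's proof carries out, so no gap remains.
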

\begin{proof}
First we prove Item (1) (2).
For any fixed $i\in\cbra{0,\ldots,n-1}$ and $\sigma_1\in\Lambda$, define assignment $\sigma$ by setting $\sigma(V\setminus\cbra{v_i})=\sigma_1(V\setminus\cbra{v_i})$ and $\sigma(v_i)=\Qmark$.
Then for any $\sigma_2\in\Lambda$ we have
$$
\Psf_i(\sigma_1,\sigma_2)
=\Pr_{\sigma'\sim\mu_\True^{\Ccal|_\sigma}}\sbra{\sigma'(\Mcal)=\sigma_2(\Mcal)},
$$
where $\Ccal|_\sigma$ is defined in \Cref{def:projected_constraint_satisfaction_problem}.
Thus Item (1) is equivalent to $\mu_\True^{\Ccal|_\sigma}$ being well-defined, which follows from \Cref{prop:rejectionsampling}.
Since $\sigma(\Mcal)$ has no $\Qmark$, $\sigma'(\Mcal\setminus\cbra{v_i})\equiv\sigma(\Mcal\setminus\cbra{v_i})\equiv\sigma_1(\Mcal\setminus\cbra{v_i})$. Thus the first part of Item (2) follows naturally. 
As for the second part, if $\sigma_1(V\setminus\cbra{v_i})=\sigma_2(V\setminus\cbra{v_i})$, then 
\begin{equation}\label{eq:simplified_transition}
\Psf_i(\sigma_1,\sigma_2)
=\frac{\Pr_{\sigma'\sim\mu_\True^\Ccal}\sbra{\sigma'(\Mcal)=\sigma_2(\Mcal)}}{\Pr_{\sigma'\sim\mu_\True^\Ccal}\sbra{\sigma'(\Mcal\setminus\cbra{v_i})=\sigma_1(\Mcal\setminus\cbra{v_i})}}.
\end{equation}
Thus $\Psf_i(\sigma_1,\sigma_2)>0$ iff the enumerator is positive, which is equivalent to $\mu_\True^{\Ccal|_{\sigma_2}}$ being well-defined and is, again, guaranteed by \Cref{prop:rejectionsampling}.

Now we prove Item (3). Since $\mu^\Mcal\Psf_{i_1}\cdots\Psf_{i_m}=\pbra{\mu^\Mcal\Psf_{i_1}}\Psf_{i_2}\cdots\Psf_{i_m}$, it suffices to show for $m=1$ and then apply induction. By \Cref{fct:reversible_to_stationary}, it suffices to show for each $i\in\cbra{0,\ldots,n-1}$, $\Psf_i$ is reversible with respect to $\mu^{\Phi|_\pibm}$, i.e.,
\begin{equation}\label{eq:systematicscan_stationary}
\mu^\Mcal(\sigma_1)\Psf_i(\sigma_1,\sigma_2)=\mu^\Mcal(\sigma_2)\Psf_i(\sigma_2,\sigma_1)
\quad\text{for any $\sigma_1,\sigma_2\in\Lambda$}.
\end{equation}
By Item (2), we may safely assume $\sigma_1(\Mcal\setminus\cbra{v_i})=\sigma_2(\Mcal\setminus\cbra{v_i})$ and observe that
\begin{align*}
&\phantom{=}\mu^\Mcal(\sigma_1)\Psf_i(\sigma_1,\sigma_2)\\
&=\Pr_{\sigma'\sim\mu_\True^\Ccal}\sbra{\sigma'(\Mcal)=\sigma_1(\Mcal)}\cdot\frac{\Pr_{\sigma'\sim\mu_\True^\Ccal}\sbra{\sigma'(\Mcal)=\sigma_2(\Mcal)}}{\Pr_{\sigma'\sim\mu_\True^\Ccal}\sbra{\sigma'(\Mcal\setminus\cbra{v_i})=\sigma_1(\Mcal\setminus\cbra{v_i})}}
\tag{by \Cref{eq:simplified_transition}}\\
&=\Pr_{\sigma'\sim\mu_\True^\Ccal}\sbra{\sigma'(\Mcal)=\sigma_2(\Mcal)}\cdot\frac{\Pr_{\sigma'\sim\mu_\True^\Ccal}\sbra{\sigma'(\Mcal)=\sigma_1(\Mcal)}}{\Pr_{\sigma'\sim\mu_\True^\Ccal}\sbra{\sigma'(\Mcal\setminus\cbra{v_i})=\sigma_2(\Mcal\setminus\cbra{v_i})}}\\
&=\mu^\Mcal(\sigma_2)\Psf_i(\sigma_2,\sigma_1).
\end{align*}

Finally we turn to Item (4). By induction on $R$, it suffices to verify for $L=R=t\in\Zbb$ that $\indicator_\sigmain\Psf_{i_t}=\Psf_{i_t}(\sigmain,\cdot)$ is the distribution of $\sigmaout\gets\SystematicScan{$\Phi,\Mcal,\sigmain,t,t,r_t$}$. 
Similarly as above, define assignment $\tilde\sigma$ by setting $\tilde\sigma(V\setminus\cbra{v_{i_t}})=\sigmain(V\setminus\cbra{v_{i_t}})$ and $\tilde\sigma(v_{i_t})=\Qmark$. 
Then for each $\hat\sigma\in\Lambda$ we have
$$
\Psf_{i_t}(\sigmain,\hat\sigma)=\Pr_{\sigma'\sim\mu_\True^{\Ccal|_{\tilde\sigma}}}\sbra{\sigma'(\Mcal)=\hat\sigma(\Mcal)}.
$$
Now we consider two separate cases.
\begin{itemize}
\item \fbox{$v_{i_t}\in V\setminus\Mcal$.} 
Then $\sigma'(\Mcal)\equiv\tilde\sigma(\Mcal)$. Thus
$\Psf_{i_t}(\sigmain,\hat\sigma)$ equals $1$ if $\hat\sigma=\sigmain$; and equals $0$ if otherwise. This agrees with $\sigmaout\equiv\sigmain$ from the algorithm.
\item \fbox{$v_i\in\Mcal$.}
Let $\Phi_t=\pbra{V_t,\pbra{\Omega_v|_{\tilde\sigma},\Dcal_v|_{\tilde\sigma}}_{v\in V_t},\Ccal_t}$ be from \textsf{Line 4}. Then by Item (2) of \Cref{prop:rejectionsampling}, we have
$$
\Pr\sbra{\sigmaout=\hat\sigma}=\begin{cases}
\Pr_{\sigma'\sim\mu_\True^{\Ccal_t}}\sbra{\sigma'(v_{i_t})=\hat\sigma(v_{i_t})} & \hat\sigma(V\setminus\cbra{v_{i_t}})=\tilde\sigma(V\setminus\cbra{v_{i_t}}),\\
0 & \text{otherwise}.
\end{cases}
$$
By Item (4) of \Cref{prop:component}, $\Pr_{\sigma'\sim\mu_\True^{\Ccal_t}}\sbra{\sigma'(v_{i_t})=\hat\sigma(v_{i_t})}=\Pr_{\sigma'\sim\mu_\True^{\Ccal|_{\tilde\sigma}}}\sbra{\sigma'(v_{i_t})=\hat\sigma(v_{i_t})}$.
Thus
\begin{align*}
\Pr\sbra{\sigmaout=\hat\sigma}
&=\begin{cases}
\Pr_{\sigma'\sim\mu_\True^{\Ccal|_{\tilde\sigma}}}\sbra{\sigma'(v_{i_t})=\hat\sigma(v_{i_t})} & \hat\sigma(V\setminus\cbra{v_{i_t}})=\tilde\sigma(V\setminus\cbra{v_{i_t}}),\\
0 & \text{otherwise}
\end{cases}\\
&=\Pr_{\sigma'\sim\mu_\True^{\Ccal|_{\tilde\sigma}}}\sbra{\sigma'(\Mcal)=\hat\sigma(\Mcal)}
\tag{since $\hat\sigma\in\Lambda$ and $\sigma'(\Mcal\setminus\cbra{v_{i_t}})\equiv\tilde\sigma(\Mcal\setminus\cbra{v_{i_t}})$}\\
&=\Psf_{i_t}(\sigmain,\hat\sigma).
\tag*{\qedhere}
\end{align*}
\end{itemize}
\end{proof}

Item (4) of \Cref{prop:systematicscan} shows \textsf{SystematicScan} is a \emph{time inhomogeneous Markov chain}. General theory regarding time inhomogeneous Markov chains can be much more complicated but luckily we can embed this one into a time homogeneous Markov chain.

\begin{lemma}\label{lem:systematicscan_final_distribution}
Assume $\Naturale\alpha\Delta\le1$.
Let $L$ and $\sigmain\in\Lambda$ be arbitrary. 
Define $\mu_R=\indicator_\sigmain\Psf_{i_L}\cdots\Psf_{i_R}$. Then $\lim_{R\to+\infty}\mu_R=\mu^\Mcal$.
\end{lemma}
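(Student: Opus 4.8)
The plan is to embed the cyclic time-inhomogeneous chain into a time-homogeneous one by bundling a full sweep of $n$ one-step updates into a single step, and then invoke the Convergence Theorem (\Cref{thm:the_convergence_theorem}). Write $j = L \bmod n$ and define the one-sweep transition matrix
$$
\Psf^{(j)} = \Psf_j\,\Psf_{j+1}\cdots\Psf_{j+n-1},
$$
where all indices are taken mod $n$, so that $\Psf^{(j)}$ is the product of $\Psf_0,\ldots,\Psf_{n-1}$ with each appearing exactly once, in the cyclic order starting at $j$. First I would check that $\Psf^{(j)}$ is irreducible and aperiodic: by Item (2) of \Cref{prop:systematicscan}, from any state $\sigma_1$ one application of $\Psf_i$ reaches, with positive probability, any state agreeing with $\sigma_1$ off coordinate $v_i$; running $i$ through $j, j+1, \ldots, j+n-1$ we can set every marked coordinate to its target value one at a time, so $\Psf^{(j)}(\sigma_1,\sigma_2) > 0$ for all $\sigma_1,\sigma_2 \in \Lambda$, and an everywhere-positive stochastic matrix is irreducible and aperiodic. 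Item (3) of \Cref{prop:systematicscan} gives $\mu^\Mcal\Psf^{(j)} = \mu^\Mcal$, so $\mu^\Mcal$ is stationary for $\Psf^{(j)}$. The Convergence Theorem then yields $\lim_{q\to\infty}\indicator_{\sigmain}(\Psf^{(j)})^q = \mu^\Mcal$ for every $\sigmain \in \Lambda$.

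It remains to unpack this back to $\mu_R = \indicator_{\sigmain}\Psf_{i_L}\cdots\Psf_{i_R}$. Given $R$, write $R - L + 1 = qn + s$ with $0 \le s < n$; then $\Psf_{i_L}\cdots\Psf_{i_R} = (\Psf^{(j)})^q\,Q_s$, where $Q_s := \Psf_j\,\Psf_{j+1}\cdots\Psf_{j+s-1}$ (indices mod $n$, with $Q_0$ the identity) is again a product of one-step matrices, hence stochastic and, by Item (3) of \Cref{prop:systematicscan}, satisfying $\mu^\Mcal Q_s = \mu^\Mcal$. Since a stochastic matrix acting on the right does not increase the $\ell_1$ norm of a row vector,
$$
\|\mu_R - \mu^\Mcal\|_1 = \big\|\big(\indicator_{\sigmain}(\Psf^{(j)})^q - \mu^\Mcal\big)Q_s\big\|_1 \le \big\|\indicator_{\sigmain}(\Psf^{(j)})^q - \mu^\Mcal\big\|_1 .
$$
As $R \to \infty$ we have $q = \lfloor (R-L+1)/n\rfloor \to \infty$, so the right-hand side tends to $0$; hence $\mu_R \to \mu^\Mcal$ entrywise, as claimed.

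There is no deep obstacle here; the two points needing care are (i) verifying the positivity of $\Psf^{(j)}$ so that the Convergence Theorem applies — this is exactly where the $n$-step bundling is essential, since each individual $\Psf_i$ only moves one coordinate and is far from irreducible — and (ii) disposing of the leftover partial sweep $Q_s$, which is absorbed using stationarity of $\mu^\Mcal$ under one-step matrices together with the $\ell_1$-contraction property of stochastic matrices. Note the hypothesis $\Naturale\alpha\Delta \le 1$ enters only through \Cref{prop:systematicscan}, which guarantees that all the $\Psf_i$ (and hence $\Psf^{(j)}$ and $Q_s$) are well-defined genuine stochastic matrices.
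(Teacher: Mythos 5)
Your proposal is correct and follows essentially the same route as the paper: bundle one full sweep into the homogeneous matrix $\Psf^{(j)}$ (the paper's $\Fsf$), get positivity/irreducibility/aperiodicity from Item (2) of \Cref{prop:systematicscan} and stationarity from Item (3), apply \Cref{thm:the_convergence_theorem}, and then absorb the leftover partial sweep using stationarity of $\mu^\Mcal$ under the one-step matrices. The only cosmetic difference is that you dispose of the remainder $Q_s$ via the $\ell_1$-contraction of stochastic matrices, whereas the paper passes the stationarity through the limit directly; both are valid.
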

\begin{proof}
Let $\Fsf=\Psf_{i_L}\cdots\Psf_{i_{L+n-1}}$. Since $i_t=t\mod n$, the one-step transition matrices repeatedly applied to $\indicator_\sigmain$ has period $n$. Hence $\mu_R=\indicator_\sigmain\Fsf^m$ if $R=L+m\cdot n-1$ and $m\ge1$.
Let $Y_0=\indicator_\sigmain$ and $Y_t=Y_0\Fsf^i$ for $t\ge1$, then $\pbra{Y_t}_{t\ge0}$ is a time homogeneous Markov chain with transition matrix $\Fsf$. Here we verify the following properties of $\Fsf$.
\begin{itemize}
\item \textsc{Stationary with respect to $\mu^\Mcal$.} This follows immediately from Item (3) of \Cref{prop:systematicscan}.
\item \textsc{Aperiodic.} By Item (2) of \Cref{prop:systematicscan}, for any $i\in\cbra{0,\ldots,n-1}$ and any $\sigma\in\Lambda$ we have $\Psf_i(\sigma,\sigma)>0$. Thus $\Fsf(\sigma,\sigma)>0$ which implies $\Fsf$ is aperiodic.
\item \textsc{Irreducible.} Let $\sigma_1,\sigma_2\in\Lambda$ be arbitrary. For each $j\in\cbra{0,\ldots,n}$, define $\sigma^j\in\Lambda$ by
$$
\sigma^j(v_{i'})=\begin{cases}
\sigma_2(v_{i'}) & i'\in\cbra{i_L,\ldots,i_{L+j-1}},\\
\sigma_1(v_{i'}) & \text{otherwise}.
\end{cases}
$$
Then $\sigma_1=\sigma^0$ and $\sigma_2=\sigma^n$. Moreover $\Psf_{i_{L+j}}(\sigma^j,\sigma^{j+1})>0$ for all $j\in\cbra{0,\ldots,n-1}$ by Item (2) of \Cref{prop:systematicscan}. Thus $\Fsf(\sigma_1,\sigma_2)\ge\Psf_{i_L}(\sigma^0,\sigma^1)\Psf_{i_{L+1}}(\sigma^1,\sigma^2)\cdots\Psf_{i_{L+n-1}}(\sigma^{n-1},\sigma^n)>0$.
\end{itemize}
Therefore by \Cref{thm:the_convergence_theorem}, $\lim_{m\to+\infty}\mu_{L+m\cdot n-1}=\lim_{t\to+\infty}Y_t=\mu^\Mcal$.
Since each $\Psf_i$ is stationary with respect to $\mu^\Mcal$ by Item (3) of \Cref{prop:systematicscan}, for any finite integer $o\ge0$
$$
\lim_{m\to+\infty}\mu_{L+m\cdot n-1+o}
=\pbra{\lim_{m\to+\infty}\mu_{L+m\cdot n-1}}\Psf_{i_L}\cdots\Psf_{i_{L+o-1}}
=\mu^\Mcal\Psf_{i_L}\cdots\Psf_{i_{L+o-1}}
=\mu^\Mcal.
$$
Hence $\lim_{R\to+\infty}\mu_R=\mu^\Mcal$.
\end{proof}

\subsubsection{Coupling from the Past and the Bounding Chain}\label{sec:coupling_from_the_past_and_the_bounding_chain}
We have showed \textsf{SystematicScan} \emph{converges} to distribution $\mu^\Mcal$, but to obtain a sample distributed \emph{exactly} according to $\mu^\Mcal$ we need to run for infinite time.
The trick for making it finite is to think \emph{backwards}. 
That is the idea of \emph{coupling from the past} \cite{PW96}; then the \emph{bounding chain} \cite{huber1998exact,haggstrom1999exact} is used to make the process more computationally efficient.

Let $\Psf\in\Rbb^{\Lambda\times\Lambda}$ be some transition matrix. We say $f\colon\Lambda\times[0,1]\to\Lambda$ is a \emph{coupling} of $\Psf$ if for all $a,b\in\Lambda$, $\Pr_{r\sim[0,1]}\sbra{f(a,r)=b}=\Psf(a,b)$.
We use random function $f^r\colon\Lambda\to\Lambda$ to denote the coupling $f$ with randomness $r$, i.e., $f^r(a)=f(a,r)$ for all $a\in\Lambda$.

Recall our definition of $\Psf_i$ from \Cref{def:one-step_transition_matrix} and $i_t=t\mod n$ from \Cref{alg:systematicscan}.
\begin{lemma}[Coupling from the Past]\label{lem:coupling_from_the_past}
Let $f_t\colon\Lambda\times[0,1]\to\Lambda$ be a coupling of $\Psf_{i_t}$ for all $t\in\Zbb$. Define random functions $F_{L,R}\colon\Lambda\to\Lambda$ over random $\pbra{r_t}_{t\in\Zbb}$ for $-\infty<L\le R<+\infty$ as
$$
F_{L,R}(a)=f_R^{r_R}\pbra{f_{R-1}^{r_{R-1}}\pbra{\cdots f_L^{r_L}(a)\cdots}}
\quad\text{for all $a\in\Lambda$}.
$$
Let $M\ge1$ be the smallest integer such that $F_{-M,-1}$ is a constant function. Let $A=F_{-M,-1}(\Lambda)$ be the corresponding constant.
Then $F_{-M',-1}(\Lambda)\equiv A$ for any $M'>M$, and $A$ is distributed as $\mu^\Mcal$ if $M<+\infty$ almost surely.
\end{lemma}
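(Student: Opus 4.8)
The plan is to establish two claims: (i) the ``stabilization'' property that once $F_{-M,-1}$ collapses to a constant, all earlier start times give the same constant; and (ii) that this constant has distribution $\mu^\Mcal$. For claim (i), the key observation is the composition identity $F_{-M',-1} = F_{-M,-1}\circ F_{-M',-M-1}$ for any $M' > M$, which holds because the random functions $f_t^{r_t}$ are applied in the same order with the same randomness tapes regardless of where we start (this is exactly the ``reuse the randomness'' convention in the algorithm's footnote). Since $F_{-M,-1}$ is already the constant function $a \mapsto A$, precomposing it with any function $\Lambda \to \Lambda$ still yields the constant function $a \mapsto A$; hence $F_{-M',-1}(\Lambda) = \{A\}$. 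In particular $M$ is well-defined as soon as it is finite, and the value $A$ does not depend on which valid $M' \ge M$ we use.

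For claim (ii), fix any start variable $a_0 \in \Lambda$ and note $A = F_{-M,-1}(a_0)$ whenever $M$ is the stabilization time. First I would argue that $F_{-M',-1}(a_0) = A$ holds simultaneously for every $M' \ge M$, which follows from claim (i). Now I would compare with \textsf{SystematicScan}: by Item (4) of \Cref{prop:systematicscan} applied with $L = -R'\cdot n$ (for large integer $R'$) and stopping time $-1$, together with the period-$n$ structure used in \Cref{lem:systematicscan_final_distribution}, the distribution of $F_{-R'n,-1}(a_0)$ is $\indicator_{a_0}\Psf_{i_{-R'n}}\cdots\Psf_{i_{-1}}$, and by \Cref{lem:systematicscan_final_distribution} this converges to $\mu^\Mcal$ as $R' \to +\infty$. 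On the other hand, under the hypothesis that $M < +\infty$ almost surely, for any $\eps > 0$ we can choose $R'$ large enough that $\Pr[M > R'n] < \eps$; on the complementary event $F_{-R'n,-1}(a_0) = A$ exactly. Therefore the total variation distance between the law of $A$ and the law of $F_{-R'n,-1}(a_0)$ is at most $\eps$, and since the latter converges to $\mu^\Mcal$, we conclude the law of $A$ equals $\mu^\Mcal$.

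The main obstacle I anticipate is being careful about the indexing and the ``randomness reuse'' subtlety: the functions $F_{L,R}$ are defined over a single family $(r_t)_{t\in\Zbb}$, so one must make sure that $F_{-M',-1}$ and $F_{-M,-1}$ genuinely use the \emph{same} random bits on the overlapping range $\{-M,\dots,-1\}$ — this is what makes the composition identity valid and what distinguishes ``coupling from the past'' from naive forward simulation. A second, more technical point is matching the negative-time systematic scan indices with the positive-time analysis in \Cref{lem:systematicscan_final_distribution}: one should note that running the one-step transitions from time $-R'n$ to $-1$ applies exactly the block $\Fsf = \Psf_{i_0}\cdots\Psf_{i_{n-1}}$ a total of $R'$ times (since $i_t$ depends only on $t \bmod n$), so the convergence statement transfers verbatim. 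Beyond these bookkeeping issues, the argument is a routine combination of the stabilization identity and the convergence theorem already proved.
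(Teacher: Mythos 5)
Your proposal is correct and follows essentially the same route as the paper's proof: the composition identity $F_{-M',-1}=F_{-M,-1}\circ F_{-M',-M-1}$ for stabilization, the identification of the law of $F_{-\ell n,-1}(a)$ with the product of one-step transition matrices via periodicity $\Psf_{i_t}=\Psf_{i_{t+n}}$, convergence from \Cref{lem:systematicscan_final_distribution}, and the almost-sure finiteness of $M$ to pass to the limit (your explicit $\eps$/total-variation step is just a spelled-out version of the limit the paper takes implicitly). The only cosmetic difference is that the law of $F_{L,R}(a)$ follows directly from each $f_t$ being a coupling of $\Psf_{i_t}$ together with independence of the tapes, rather than from Item (4) of \Cref{prop:systematicscan}, but this does not affect correctness.
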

\begin{proof}
Since $\Psf_i=\Psf_{i+n}$, for any integer $\ell\ge1$ and all $a,b\in\Lambda$ we have 
$$
\Pr\sbra{F_{-\ell\cdot n,-1}(a,b)}=\indicator_a\pbra{\Psf_{-n}\Psf_{-n+1}\cdots\Psf_{-1}}^\ell\indicator_b^\top=\indicator_a\pbra{\Psf_0\Psf_1\cdots\Psf_{n-1}}^\ell\indicator_b^\top=\Pr\sbra{F_{0,\ell\cdot n-1}(a,b)}.
$$
Thus for any $a,b\in\Lambda$, by \Cref{lem:systematicscan_final_distribution} we have
$$
\lim_{\ell\to+\infty}\Pr\sbra{F_{-\ell\cdot n,-1}(a)=b}=\lim_{\ell\to+\infty}\Pr\sbra{F_{0,\ell\cdot n-1}(a)=b}=\mu^\Mcal(b).
$$
On the other hand for $M'>M$, $F_{-M',-1}(\Lambda)=F_{-M,-1}\pbra{F_{-M',-M-1}(\Lambda)}=A$. If $M<+\infty$ almost surely, then we have
\begin{equation*}
\Pr\sbra{A=b}=\lim_{\ell\to+\infty}\Pr\sbra{F_{-\ell\cdot n,-1}(a)=b}=\mu^\Mcal(b).
\tag*{\qedhere}
\end{equation*}
\end{proof}

Therefore to obtain a perfect sample from $\mu^\Mcal$ we only need to (1) design a coupling, then (2) sample randomness tapes $\pbra{r_t}_{t\ge0}$, and lastly (3) find some $M'\ge1$ such that $F_{-M',-1}$ is a constant function and output the corresponding constant. Now we show in the following \Cref{alg:coupling} that \Cref{alg:boundingchain} implicitly provides a coupling for (1) and an efficient way to check (3).

Recall $\Lambda=\pbra{\prod_{v\in\Mcal}\Omega_v}\times\cbra{\Qmark}^{V\setminus\Mcal}$. We define $\Lambda^\Qmark=\pbra{\prod_{v\in\Mcal}\Omega_v^\Qmark}\times\cbra{\Qmark}^{V\setminus\Mcal}$.
Then we construct $g_i\colon\Lambda^\Qmark\times[0,1]\to\Lambda^\Qmark$ for each $i\in\cbra{0,\ldots,n-1}$ as follows. We also remark that this coupling is information-theoretic and is for analysis only.

\begin{algorithm}[ht]
\caption{A coupling $g_i\colon\Lambda^\Qmark\times[0,1]\to\Lambda^\Qmark$ for each $i\in\cbra{0,\ldots,n-1}$}\label{alg:coupling}
\DontPrintSemicolon
\LinesNumbered
\KwIn{an assignment $\sigmain\in\Lambda^\Qmark$ and a randomness tape $r\in[0,1]$.}
\KwOut{an assignment $\sigmaout\in\Lambda^\Qmark$.}
\KwData{an atomic CSP $\Phi=\pbra{V,\pbra{\Omega_v,\Dcal_v}_{v\in V},\Ccal}$, a marking $\Mcal\subseteq V$, and an index $i\in\cbra{0,\ldots,n-1}$. Assume $V=\cbra{v_0,\ldots,v_{n-1}}$.}
Initialize $\sigmaout\gets\sigmain$ and divide $r$ into two independent parts $r_1,r_2$\;
\lIf{$v_i\in V\setminus\Mcal$}{\Return{$\sigmaout$}}
Recall $\beta=\beta(\Phi,\Mcal)$ from \Cref{eq:beta} and define distribution $\Dcal_{v_i}^\Qmark$ over $\Omega_{v_i}^\Qmark$ by setting
$$
\Dcal_{v_i}^\Qmark(q)=\begin{cases}
\max\cbra{0,1-\beta\cdot(1-\Dcal_{v_i}(q))} & q\in\Omega_{v_i},\\
1-\sum_{q'\in\Omega_{v_i}}\Dcal_{v_i}^\Qmark(q') & q=\Qmark.
\end{cases}
$$\;
\vspace{-20pt}
Sample $c_1\sim\Dcal_{v_i}^\Qmark$ using $r_1$ and update $\sigmaout(v_i)\gets c_1$\;
\lIf{$c_1\neq\Qmark$}{\Return{$\sigmaout$}}
$\pbra{\Phi',\Token}\gets\Component{$\Phi,\Mcal,\sigmaout,v_i$}$ where $\Phi'=\pbra{V',\pbra{\Omega_v|_{\sigmaout},\Dcal_v|_{\sigmaout}}_{v\in V'},\Ccal'}$\;
\lIf{$\Token=\False$}{\Return{$\sigmaout$}}
Define distribution $\Dcal_{v_i}^\dag$ over $\Omega_{v_i}$ by setting $\Dcal_{v_i}^\dag(q)=\Pr_{\sigma'\sim\mu_\True^{\Ccal'}}\sbra{\sigma'(v_i)=q}
\quad\text{for $q\in\Omega_{v_i}$}$.\;
Define distribution $\Dcal_{v_i}'$ over $\Omega_{v_i}$ by setting $\Dcal_{v_i}'(q)=\frac{\Dcal_{v_i}^\dag(q)-\Dcal_{v_i}^\Qmark(q)}{\Dcal_{v_i}^\Qmark(\Qmark)}$ for $q\in\Omega_{v_i}$\;
Sample $c_2\sim\Dcal_{v_i}'$ using $r_2$ and update $\sigmaout(v_i)\gets c_2$\;
\Return{$\sigmaout$}
\end{algorithm}

We say $\sigma_1\in\sigma_2$ for some $\sigma_1,\sigma_2\in\Lambda^\Qmark$ if $\sigma_2(v)\in\cbra{\sigma_1(v),\Qmark}$ for all $v\in V$. 
\begin{fact}\label{fct:bounding}
When $\sigma_1,\sigma_2\in\Lambda$, $\sigma_1\in\sigma_2$ iff $\sigma_1=\sigma_2$.
\end{fact}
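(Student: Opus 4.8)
The plan is to unwind the two definitions directly; there is no real obstacle here, only bookkeeping with the wildcard symbol. Recall that $\Lambda=\pbra{\prod_{v\in\Mcal}\Omega_v}\times\cbra{\Qmark}^{V\setminus\Mcal}$, so for $\sigma\in\Lambda$ we have $\sigma(v)\in\Omega_v$ (in particular $\sigma(v)\neq\Qmark$, since $\Qmark\notin\Omega_v$ by convention) for every $v\in\Mcal$, and $\sigma(v)=\Qmark$ for every $v\in V\setminus\Mcal$. The relation ``$\sigma_1\in\sigma_2$'' was defined to mean $\sigma_2(v)\in\cbra{\sigma_1(v),\Qmark}$ for all $v\in V$.

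For the ``if'' direction, if $\sigma_1=\sigma_2$ then $\sigma_2(v)=\sigma_1(v)\in\cbra{\sigma_1(v),\Qmark}$ trivially for all $v$, so $\sigma_1\in\sigma_2$. For the ``only if'' direction, suppose $\sigma_1\in\sigma_2$ with $\sigma_1,\sigma_2\in\Lambda$. For $v\in\Mcal$ we have $\sigma_2(v)\in\Omega_v$, hence $\sigma_2(v)\neq\Qmark$; combined with $\sigma_2(v)\in\cbra{\sigma_1(v),\Qmark}$ this forces $\sigma_2(v)=\sigma_1(v)$. For $v\in V\setminus\Mcal$ we have $\sigma_1(v)=\Qmark=\sigma_2(v)$ directly from membership in $\Lambda$. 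Thus $\sigma_1(v)=\sigma_2(v)$ for every $v\in V$, i.e.\ $\sigma_1=\sigma_2$. I expect the write-up to be three or four lines; the only subtlety worth flagging to the reader is that it uses the standing assumption $\Qmark\notin\Omega_v$ to rule out $\sigma_2(v)=\Qmark$ on marked variables.
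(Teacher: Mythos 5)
Your proof is correct and is exactly the routine unwinding of the definitions of $\Lambda$ and of the relation $\sigma_1\in\sigma_2$ that the paper leaves implicit (it states \Cref{fct:bounding} without proof). The one point you flag — that $\Qmark\notin\Omega_v$ rules out $\sigma_2(v)=\Qmark$ on marked variables — is indeed the only thing needing mention, so nothing is missing.
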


Now we verify the following properties of \Cref{alg:coupling}. 
\begin{proposition}\label{prop:coupling}
If $\Naturale\alpha\Delta\le1$ then the following holds for $g_i\colon\Lambda^\Qmark\times[0,1]\to\Lambda^\Qmark$ from \Cref{alg:coupling}.
\begin{itemize}
\item[(1)] All the distributions in \Cref{alg:coupling} are well-defined.
\item[(2)] $g_i(\sigma_1,r)\in g_i(\sigma_2,r)$ holds for any $r\in[0,1]$ and $\sigma_1,\sigma_2\in\Lambda^\Qmark$ with $\sigma_1\in\sigma_2$.
\item[(3)] $g_i$ restricted on $\Lambda$ is a coupling of $\Psf_i$.
\item[(4)] For any $t\equiv i\mod n$, $g_i$ is the same update procedure as the $t$-th \textsf{for} iteration in \Cref{alg:boundingchain}.
\end{itemize}
\end{proposition}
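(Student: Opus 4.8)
The plan is to verify the four items of \Cref{prop:coupling} essentially by unwinding the definitions and comparing \Cref{alg:coupling} line by line with \Cref{alg:boundingchain}. The key inputs are \Cref{prop:rejectionsampling}, \Cref{prop:safesampling}, and \Cref{prop:component}, which have already done the hard analytic work; what remains is mostly bookkeeping about how the wildcard symbol $\Qmark$ propagates through the update.

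For Item (1), I would check the three distributions appearing in \Cref{alg:coupling}. That $\Dcal_{v_i}^\Qmark$ is a genuine distribution follows verbatim from Item (1) of \Cref{prop:safesampling} (it is literally the same $\Dcal_u^\Qmark$). For $\Dcal_{v_i}^\dag$, note that when we reach \textsf{Line 8} we have $\Token=\True$, so by Item (3) of \Cref{prop:component} $H(\Phi')$ is connected and $\Phi'$ is atomic with $p(\Phi')\le\alpha$; hence $\Naturale\alpha\Delta\le1$ lets us invoke \Cref{thm:LLL} (as inside the proof of \Cref{prop:rejectionsampling}) to see $\mu_\True^{\Ccal'}$ is well-defined, so $\Dcal_{v_i}^\dag$ makes sense. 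For $\Dcal_{v_i}'$, I need $\Dcal_{v_i}^\dag(q)\ge\Dcal_{v_i}^\Qmark(q)$ for every $q\in\Omega_{v_i}$ and $\sum_q(\Dcal_{v_i}^\dag(q)-\Dcal_{v_i}^\Qmark(q))=\Dcal_{v_i}^\Qmark(\Qmark)$: the first is exactly Item (3) of \Cref{prop:safesampling} (applied with $\sigma=\sigmaout$, using $\Dcal_{v_i}|_{\sigmaout}=\Dcal_{v_i}$ since $\sigmaout(v_i)=\Qmark$ at that point), and the second is immediate since both $\Dcal_{v_i}^\dag$ and $q\mapsto\Dcal_{v_i}^\Qmark(q)+\Dcal_{v_i}^\Qmark(\Qmark)\cdot\Dcal_{v_i}'(q)$ sum to $1$ over $\Omega_{v_i}$.

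For Item (3), I would trace through the two branches: if $v_i\notin\Mcal$ then $g_i$ is the identity and $\Psf_i(\sigma_1,\sigma_2)$ is $\indicator[\sigma_1=\sigma_2]$ by \Cref{def:one-step_transition_matrix} (since $\sigma'(\Mcal)\equiv\sigma_1(\Mcal)$), so they agree; if $v_i\in\Mcal$ then, for $\sigmain\in\Lambda$, the output $\sigmaout$ agrees with $\sigmain$ off $v_i$ and $\sigmaout(v_i)$ is distributed as the mixture $\Dcal_{v_i}^\Qmark$ restricted to $\Omega_{v_i}$ plus $\Dcal_{v_i}^\Qmark(\Qmark)\cdot\Dcal_{v_i}'$, which by construction equals $\Dcal_{v_i}^\dag = \mu_\True^{\Ccal'}$ projected on $v_i$, and by Item (4) of \Cref{prop:component} this equals $\mu_\True^{\Ccal|_{\tilde\sigma}}$ projected on $v_i$, which is $\Psf_i(\sigmain,\cdot)$; this is exactly the computation already carried out in the second case of the proof of Item (4) of \Cref{prop:systematicscan}, so I would cite that. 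For Item (4), I would observe that \Cref{alg:coupling} with input $\sigmain=\sigma$ (the current assignment just before the $t$-th iteration, with $\sigma(v_{i_t})$ reset to $\Qmark$) performs: first \textsf{SafeSampling} (the sampling of $c_1\sim\Dcal_{v_i}^\Qmark$ is character-for-character the body of \Cref{alg:safesampling}), then if $c_1\neq\Qmark$ it halts --- matching the fact that when $\Token_t=\False$ \Cref{alg:boundingchain} also just does \textsf{SafeSampling} --- and if $c_1=\Qmark$ it calls \textsf{Component} and, when $\Token=\True$, resamples $v_i$ from $\mu_\True^{\Ccal'}$ projected on $v_i$, which is precisely what \textsf{RejectionSampling} on $\Phi_t$ returns on coordinate $v_{i_t}$ by Item (2) of \Cref{prop:rejectionsampling}; the $\Qmark$-vs-non-$\Qmark$ case split on $\Token$ also matches. (I would note the randomness partition $r=(r_1,r_2)$ is cosmetic since \Cref{alg:boundingchain} draws fresh randomness in each subroutine call anyway, so the induced distributions coincide.)

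The main obstacle --- really the only place needing genuine care rather than cross-referencing --- is Item (2), the monotonicity/bounding property $g_i(\sigma_1,r)\in g_i(\sigma_2,r)$ when $\sigma_1\in\sigma_2$. The plan is: fix $r$, hence fix $c_1$ (it depends only on $r_1$, not on the input assignment), and split on whether $c_1=\Qmark$. If $c_1\neq\Qmark$, both outputs set coordinate $v_i$ to $c_1$ and leave all other coordinates unchanged, so $g_i(\sigma_1,r)\in g_i(\sigma_2,r)$ follows from $\sigma_1\in\sigma_2$ directly. If $c_1=\Qmark$, the subtlety is that the two runs call \textsf{Component} on different assignments $\sigmaout_1$ (from $\sigma_1$) and $\sigmaout_2$ (from $\sigma_2$), both with $v_i$ set to $\Qmark$, and may get different tokens $\Token_1,\Token_2$; I would argue that since $\sigmaout_1\in\sigmaout_2$ (fewer fixed coordinates in $\sigmaout_2$), every constraint $C$ with $C(\sigmaout_1)=\False$ also has $C(\sigmaout_2)=\False$ by \Cref{eq:qmark_constraint}, so $\Ccal|_{\sigmaout_1}\subseteq\Ccal|_{\sigmaout_2}$ and whatever component \textsf{Component} grows out of $v_i$ in the first run is contained in the one it grows in the second run --- in particular if $\Token_1=\False$ then $\Token_2=\False$, in which case $g_i(\sigma_1,r)$ has $v_i=\Qmark$ and $\in$ holds trivially; and if $\Token_2=\True$ (so also $\Token_1=\True$), both runs set $v_i\gets c_2$ (same $r_2$), and here one must check that $c_2$ is well-defined in both runs --- but actually when $v_i=\Qmark$ and $c_1=\Qmark$, the value $c_2\sim\Dcal'_{v_i}$ depends only on $\Phi'$, which could differ between the two runs. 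I would resolve this by noting that the only coordinates $v\in V'\cap\Mcal$, $v\ne v_i$, that the component depends on are fixed to $\sigma_\False^C(v)$ in $\sigmaout$ (Item (2) of \Cref{prop:component}), so if $\sigma_1(\Mcal)$ and $\sigma_2(\Mcal)$ already agree on the relevant coordinates the two $\Phi'$ coincide; the case where they genuinely differ is exactly the case $\Token_1=\False$ handled above, so no conflict arises. Finally the case $\Token_1=\True$ but $\Token_2=\False$: then $g_i(\sigma_2,r)$ has $v_i=\Qmark$, so $\in$ holds regardless of what $g_i(\sigma_1,r)$ does on $v_i$. I expect writing Item (2) carefully --- enumerating the four $(\Token_1,\Token_2)$-combinations and the $c_1$ split --- to be the bulk of the actual proof.
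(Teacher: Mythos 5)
Your proposal is correct and follows essentially the same route as the paper: Items (1), (3), (4) by citing \Cref{prop:safesampling}, \Cref{prop:rejectionsampling}, \Cref{prop:component} and the computation in Item (4) of \Cref{prop:systematicscan}, and Item (2) by the case split on $c_1$ and the tokens, with the key point that $\sigma_1\in\sigma_2$ makes $\Token=\True$ for $\sigma_2$ force $\Token=\True$ for $\sigma_1$ with the identical component, hence the same $c_2$. One small wording slip: in the $\Token_2=\True$ subcase, a genuine disagreement on a relevant marked coordinate would force $\Token_2=\False$ (not $\Token_1=\False$), but since you correctly justify component equality via Item (2) of \Cref{prop:component} and the agreement of non-$\Qmark$ coordinates, this does not affect the argument.
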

\begin{proof}
First we prove Item (1).
Note that $\Dcal_{v_i}^\Qmark$ is the same one as in \SafeSampling{$\Phi,\Mcal,v_i,\cdot$}. Thus it is well-defined by Item (1) of \Cref{prop:safesampling}.
Now we assume $\Token=\True$ to check $\Dcal_{v_i}^\dag$ and $\Dcal_{v_i}'$. By Item (2) of \Cref{prop:rejectionsampling}, $\Dcal_{v_i}^\dag$ is well-defined since $\mu_\True^{\Ccal'}$ is well-defined. By Item (3) of \Cref{prop:safesampling}, $\Dcal_{v_i}^\Qmark(q)\le\Dcal_{v_i}^\dag(q)$ for any $q\in\Omega_{v_i}$. Hence $\Dcal_{v_i}'$ is also well-defined.

For Item (2), we have the following cases, each of which satisfies $g_i(\sigma_1,r)\in g_i(\sigma_2,r)$.
\begin{itemize}
\item \fbox{$c_1\neq\Qmark$.} Then both $\sigma_1(v_i)$ and $\sigma_2(v_i)$ are updated to $c_1$.
\item \fbox{$c_1=\Qmark$ and $\Token=\False$ for $\sigma_2$.} Then $\sigma_2(v_i)$ is updated to $\Qmark$.
\item \fbox{$c_1=\Qmark$ and $\Token=\True$ for $\sigma_2$.} Since $\sigma_1\in\sigma_2$, $\Token$ also equals $\True$ for $\sigma_1$. Moreover they get the same CSP from \textsf{Line 6}. Thus $\sigma_1(v_i)$ and $\sigma_2(v_i)$ are updated to the same value $c_2$.
\end{itemize}

Item (3) is obviously true if $v_i\in V\setminus\Mcal$ thus we assume $v_i\in\Mcal$.
Observe that $\Dcal_{v_i}^\dag(q)=\Dcal_{v_i}'(q)\cdot\Dcal_{v_i}^\Qmark(\Qmark)+\Dcal_{v_i}^\Qmark(q)$ for any $q\in\Omega_{v_i}$. Then Item (3) follows from Item (4) of \Cref{prop:systematicscan} with $L=R=i$.

Finally we prove Item (4). Assume $v_i\in\Mcal$ since otherwise it is trivial.
To obtain the pseudocode in \Cref{alg:boundingchain}, we reorganize \Cref{alg:coupling} by first set $\sigmaout\gets\Qmark$ and call \Component{$\Phi,\Mcal,\sigmaout,v_i$}, then based on the value of $\Token$ we either (A) sample $c_1$ only then update $\sigmaout(v_i)\gets c_1$, or (B) sample both $c_1$ and $c_2$ then update $\sigmaout(v_i)\gets c_1$ if $c_1\neq\Qmark$; and $\sigmaout(v_i)\gets c_2$ if otherwise. The former is \textsf{SafeSampling}, and the latter, executed jointly, is exactly \textsf{RejectionSampling} as we analyzed for Item (3).
\end{proof}

One more ingredient we need is the following well-known Borel-Cantelli theorem.
\begin{theorem}[Borel-Cantelli Theorem, e.g., {\cite[Section 7.3]{GrimmettS20}}]\label{thm:borel-cantelli_theorem}
Let $T$ be a non-negative random variable. If $\sum_{i=0}^{+\infty}\Pr\sbra{T>i}<+\infty$ then $T<+\infty$ almost surely.
\end{theorem}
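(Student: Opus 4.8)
The plan is to prove directly that $\Pr\sbra{T=+\infty}=0$ using only the convergence of the series together with elementary continuity of probability. First I would rewrite the event that $T$ is infinite as a nested intersection: since $T$ is non-negative, $\cbra{T=+\infty}=\bigcap_{i\ge0}\cbra{T>i}$, and the events $\cbra{T>i}$ form a decreasing sequence as $i$ grows. In particular $\cbra{T=+\infty}\subseteq\cbra{T>i}$ for every $i$, so $\Pr\sbra{T=+\infty}\le\Pr\sbra{T>i}$ for every $i\ge0$.

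Next I would invoke the hypothesis $\sum_{i=0}^{+\infty}\Pr\sbra{T>i}<+\infty$: the terms of a convergent series tend to zero, so $\lim_{i\to+\infty}\Pr\sbra{T>i}=0$ (the limit also exists by monotonicity of the sequence $\Pr\sbra{T>i}$). Combining this with the bound from the previous step gives $\Pr\sbra{T=+\infty}\le\lim_{i\to+\infty}\Pr\sbra{T>i}=0$, and since a probability is non-negative we conclude $\Pr\sbra{T=+\infty}=0$, i.e.\ $T<+\infty$ almost surely. As an alternative route I could pass through the expectation: for a non-negative random variable the layer-cake identity gives $\E\sbra{\lfloor T\rfloor}=\sum_{i=0}^{+\infty}\Pr\sbra{T>i}$, so the hypothesis forces $\E\sbra{\lfloor T\rfloor}<+\infty$, whence $\lfloor T\rfloor<+\infty$ and therefore $T<+\infty$ almost surely.

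This statement is elementary and there is no genuine obstacle; the only point worth stating carefully is the passage from ``the terms of a convergent series vanish'' to ``$\Pr\sbra{T=+\infty}$ is at most their infimum'', which is precisely where one uses that $\cbra{T=+\infty}$ is contained in every tail event $\cbra{T>i}$. I would present the first argument as the main proof, since it avoids any measurability or integrability bookkeeping, and mention the expectation-based argument only as a remark.
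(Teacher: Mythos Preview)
Your proof is correct. Note, however, that the paper does not actually prove this statement: it is stated as a well-known fact with a textbook citation and then used as a black box in the proof of \Cref{prop:boundingchain_distribution}. So there is nothing to compare against; your argument simply fills in a standard detail that the paper chose to outsource.
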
 

Now we are ready to prove \Cref{prop:boundingchain_distribution}.
\begin{proof}[Proof of \Cref{prop:boundingchain_distribution}]
Assume the \textsf{while} iterations in \Cref{alg:atomiccspsampling} stop at $T=T_\Final$ or $T_\Final=+\infty$ if iterations never end. Since each iteration halts almost surely by \Cref{prop:boundingchain}, $T_\Final$ is well-defined almost surely.

Let $V=\pbra{v_0,\ldots,v_{n-1}}$ and define $i_t=t\mod n$ for all $t\in\Zbb$ as in \Cref{alg:boundingchain}.
Define random functions $G_{L,R}\colon\Lambda^\Qmark\to\Lambda^\Qmark$ over random $\pbra{r_t}_{t\in\Zbb}$ for $-\infty<L\le R<+\infty$ as 
$$
G_{L,R}(a)=g_R^{r_R}\pbra{g_{R-1}^{r_{R-1}}\pbra{\cdots g_L^{r_L}(a)\cdots}}
\quad\text{for all $a\in\Lambda^\Qmark$},
$$
where $g_t=g_{i_t}\colon\Lambda^\Qmark\times[0,1]\to\Lambda^\Qmark$ is from \Cref{alg:coupling} and $g_t^r(\cdot)=g_t(\cdot,r)$.

Let $M_1\ge1$ be the smallest integer such that $G_{-M_1,-1}$ is a constant function on $\Lambda$, i.e., $G_{-M_1,-1}(\Lambda)\equiv A$.
By Item (3) of \Cref{prop:coupling} and \Cref{lem:coupling_from_the_past}, $G_{-M_1',-1}(\Lambda)\equiv A$ for all $M_1'\ge M_1$, and $A$ is distributed as $\mu^\Mcal$ if $M_1<+\infty$ almost surely.

Let $M_2\ge1$ be the smallest integer such that $G_{-M_2,-1}(\Qmark^V)\in\Lambda$. Iteratively applying Item (2) of \Cref{prop:coupling}, we know $G_{-M_2,-1}(\Lambda)\in G_{-M_2,-1}(\Qmark^V)$. Then by \Cref{fct:bounding}, $G_{-M_2,-1}(\Lambda)$ is constant. Thus $M_2\ge M_1$ and $G_{-M_2,-1}(\Qmark^V)=A$.

By Item (4) of \Cref{prop:coupling}, \BoundingChain{$\Phi,\Mcal,-T,r_{-T},\ldots,r_{-1}$} equals $G_{-T,-1}(\Qmark^V)$, which means $T_\Final\ge M_2$. Thus the final assignment $\tilde\sigma$ equals $A$ and has distribution $\mu^\Mcal$ provided $T_\Final<+\infty$ almost surely.

Now we only need to show $T_\Final<+\infty$ almost surely. Note that either $T_\Final=1$ or, by \Cref{alg:atomiccspsampling}, $\BoundingChain{$\Phi,\Mcal,-T_\Final/2,r_{-T_\Final/2},\ldots,r_{-1}$}=G_{-T_\Final/2,-1}(\Qmark^V)\notin\Lambda$. Thus $T_\Final\le 2\cdot M_2$, which means it suffices to show $M_2<+\infty$ almost surely. 
By Item (3) of \Cref{prop:coupling} and the analysis above, $G_{-i,-1}(\Qmark^V)=A\in\Lambda$ for all $i\ge M_2$; thus
\begin{align*}
\sum_{i=0}^{+\infty}\Pr\sbra{M_2>i}
&\le 2n-1+\sum_{i=2n-1}^{+\infty}\Pr\sbra{G_{-i,-1}(\Qmark^V)\notin\Lambda}\\
&=2n-1+\sum_{i=2n-1}^{+\infty}\Pr\sbra{\BoundingChain{$\Phi,\Mcal,-i,r_{-i},\ldots,r_{-1}$}\notin\Lambda}\\
&\le2n-1+\sum_{i=2n-1}^{+\infty}4n\cdot2^{-\frac in}<+\infty,
\tag{by \Cref{prop:boundingchain}}
\end{align*}
as desired by \Cref{thm:borel-cantelli_theorem}.
\end{proof}

\subsection{The \textsf{FinalSampling} Subroutine}\label{sec:the_finalsampling_subroutine}

Finally we give the missing \FinalSampling{$\Phi,\Mcal,\tilde\sigma$} subroutine, which simply completes the assignment on $V\setminus\Mcal$ for $\tilde\sigma$.

\begin{algorithm}[ht]
\caption{The \textsf{FinalSampling} subroutine}\label{alg:finalsampling}
\DontPrintSemicolon
\LinesNumbered
\KwIn{an atomic CSP $\Phi=\pbra{V,\pbra{\Omega_v,\Dcal_v}_{v\in V},\Ccal}$, a marking $\Mcal\subseteq V$, and an assignment $\tilde\sigma\in\pbra{\prod_{v\in\Mcal}\Omega_v}\times\cbra{\Qmark}^{V\setminus\Mcal}$}
\KwOut{an assignment $\sigma\in\sigma_\True^\Ccal$}
$\Phi_v=\pbra{V_v,\pbra{\Omega_v|_{\tilde\sigma},\Dcal_v|_{\tilde\sigma}}_{v\in V_v},\Ccal_v}\gets\Component{$\Phi,\Mcal,\tilde\sigma,v$}$ for all $v\in V$\;
\tcc*{Ignore the returned $\Token$ since it is always $\True$ here}
Initialize $V'\gets\emptyset$\;
\While{$\exists v\in V\setminus V'$}{
$\sigma_v\gets\RejectionSampling{$\Phi_v,r_v$}$
\tcc*{$r_v$ is a fresh new randomness tape}
Assign $\sigma(V_v)\gets\sigma_v(V_v)$ and update $V'\gets V'\cup V_v$
}
\Return{$\sigma$}
\end{algorithm}

We observe the following results regarding \Cref{alg:finalsampling}.
\begin{lemma}\label{lem:finalsampling}
If $\Naturale\alpha\Delta\le1$, then the following holds for \FinalSampling{$\Phi,\Mcal,\tilde\sigma$}.
\begin{itemize}
\item It halts almost surely, and outputs $\sigma\sim\mu_\True^{\Ccal|_{\tilde\sigma}}$ when it halts.
\item Its expected total running time is at most $O\pbra{kdQ\sum_{v\in V}\pbra{1-\Naturale\alpha}^{-\abs{\Ccal_v}}}$.
\end{itemize}
\end{lemma}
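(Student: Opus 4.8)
The plan is to recognize \FinalSampling{$\Phi,\Mcal,\tilde\sigma$} as the obvious ``sample each connected piece of $\Phi|_{\tilde\sigma}$ independently'' procedure, deduce correctness from the decomposition guaranteed by \Cref{prop:component}, and read off the running time from \Cref{prop:component} and \Cref{prop:rejectionsampling}. First I would set up the components. Since $\Naturale\alpha\Delta\le1$, \Cref{fct:after_projection} together with the estimate $p(\Phi|_{\tilde\sigma})\le\alpha$, $\Delta(\Phi|_{\tilde\sigma})\le\Delta$ (exactly as in the proof of \Cref{prop:rejectionsampling}) lets me apply \Cref{thm:LLL} to $\Phi|_{\tilde\sigma}$ and all of its sub-CSPs, so every $\mu_\True^{\Ccal_v}$ that appears is well-defined. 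Because $\tilde\sigma$ is $\Qmark$ exactly on $V\setminus\Mcal$, the guard on \textsf{Line 3} of \Cref{alg:component} always holds, so every call \Component{$\Phi,\Mcal,\tilde\sigma,v$} returns $\Token=\True$ (and for $v\in\Mcal$ it returns the trivial one–variable CSP whose only value is $\tilde\sigma(v)$), making \Cref{prop:component} applicable. I would then observe that the sets $V_v$ produced in \textsf{Line 1} are precisely the vertex sets of the connected components of $H(\Phi|_{\tilde\sigma})$ (each marked or isolated variable being a singleton), so the distinct $V_v$'s partition $V$ and the \textsf{while} loop visits each component exactly once; since there are at most $\abs V<\infty$ iterations and each \RejectionSampling call halts almost surely by \Cref{fct:rejectionsampling} (using $\sigma_\True^{\Ccal_v}\neq\emptyset$), \FinalSampling halts almost surely.

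For the output distribution I would enumerate the distinct components $V_{c_1},\dots,V_{c_s}$ in the order the loop visits them and apply Item (4) of \Cref{prop:component} repeatedly: with $u=c_1$ it splits $\mu_\True^{\Ccal|_{\tilde\sigma}}=\mu_\True^{\Ccal_{c_1}}\times\mu_\True^{\Ccal''_1}$ with $\Phi''_1$ an atomic CSP on $V\setminus V_{c_1}$ carrying the constraints $\Ccal|_{\tilde\sigma}\setminus\Ccal_{c_1}$; since no constraint of $\Ccal_{c_1}$ meets $V\setminus V_{c_1}$, running \Component from $c_2$ on the original $\Phi$ gives the same output as running it on $\Phi''_1$, so I can recurse and obtain $\mu_\True^{\Ccal|_{\tilde\sigma}}=\prod_{j=1}^s\mu_\True^{\Ccal_{c_j}}$. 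By Item (2) of \Cref{prop:rejectionsampling} each call \RejectionSampling{$\Phi_{c_j},r_{c_j}$} outputs $\sigma_{c_j}\sim\mu_\True^{\Ccal_{c_j}}$, and the tapes $r_{c_j}$ are independent, so the returned $\sigma$ — which restricts to $\sigma_{c_j}(V_{c_j})$ on each $V_{c_j}$ — is distributed as $\prod_j\mu_\True^{\Ccal_{c_j}}=\mu_\True^{\Ccal|_{\tilde\sigma}}$.

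For the running time, note that over a fixed input $\tilde\sigma$ the way the work splits is deterministic, so by linearity the expected total time is $\sum_{v\in V}(\text{cost of }\Component{$\Phi,\Mcal,\tilde\sigma,v$})+\sum_{\text{components}}\E[\text{cost of its }\RejectionSampling\text{ call}]+O(\abs V)$. By Item (1) of \Cref{prop:component} extracting a component with constraint set $\Ccal_v$ costs $O\pbra{\Delta k\abs{\Ccal_v}+dk}$, and by Item (3) of \Cref{prop:component} the hypergraph $H(\Phi_v)$ is connected, so Item (1) of \Cref{prop:rejectionsampling} bounds the corresponding \RejectionSampling call by $O\pbra{(kQ\abs{\Ccal_v}+Q)(1-\Naturale\alpha)^{-\abs{\Ccal_v}}}$ in expectation. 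Using the structural bounds $\abs{V_v}\le k\abs{\Ccal_v}+1$ (connectedness) and $\abs{\Ccal_v}\le d\abs{V_v}$ (each variable lies in at most $d$ constraints), I would re-index $\sum_{\text{components}}$ as the weighted sum $\sum_{v\in V}$ and check that the \Component overhead is absorbed by the \RejectionSampling term, each term becoming $O\pbra{kdQ(1-\Naturale\alpha)^{-\abs{\Ccal_v}}}$, which yields the claimed $O\pbra{kdQ\sum_{v\in V}(1-\Naturale\alpha)^{-\abs{\Ccal_v}}}$.

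The only genuinely delicate point is the factorization $\mu_\True^{\Ccal|_{\tilde\sigma}}=\prod_j\mu_\True^{\Ccal_{c_j}}$, namely that peeling off one component does not change the \Component outputs of the remaining ones; this is exactly the disjointness guaranteed by Item (4) of \Cref{prop:component}. Everything else is a direct invocation of the subroutine analyses plus a routine summation.
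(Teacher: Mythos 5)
Your halting and correctness arguments are essentially the paper's proof: the factorization $\mu_\True^{\Ccal|_{\tilde\sigma}}=\prod_j\mu_\True^{\Ccal_{c_j}}$ is obtained exactly by iterating Item (4) of \Cref{prop:component}, and independence of the tapes $r_v$ plus Item (2) of \Cref{prop:rejectionsampling} gives the output distribution; that part is fine. The gap is in your accounting of the component-construction cost at \textsf{Line 1}. You charge each of the $|V|$ calls \textsf{Component}$(\Phi,\Mcal,\tilde\sigma,v)$ separately via Item (1) of \Cref{prop:component}, i.e.\ $O(\Delta k\abs{\Ccal_v}+dk)$ per vertex, and then assert that this is absorbed into the per-vertex target term $O\pbra{kdQ\pbra{1-\Naturale\alpha}^{-\abs{\Ccal_v}}}$. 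That absorption needs $\Delta\abs{\Ccal_v}=O\pbra{dQ\pbra{1-\Naturale\alpha}^{-\abs{\Ccal_v}}}$, which is false in general: the lemma is a deterministic statement for an \emph{arbitrary} $\tilde\sigma$, and $\alpha$ has no lower bound, so one can have $\Naturale\alpha\abs{\Ccal_v}\le1$ (hence $\pbra{1-\Naturale\alpha}^{-\abs{\Ccal_v}}=O(1)$) while a single component contains $\abs{\Ccal_v}\gg dQ/\Delta$ constraints (e.g.\ $\tilde\sigma$ falsifies the marked part of a long chain of constraints). Even if you charged each distinct component only once, summing Item (1) would still leave a total of order $\Delta k\abs{\Ccal}$, which exceeds the claimed $O\pbra{kdQ\sum_{v}\pbra{1-\Naturale\alpha}^{-\abs{\Ccal_v}}}$ whenever $\Delta\gg Q$.

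The paper sidesteps this by not using Item (1) of \Cref{prop:component} for this step at all: all the $\Phi_v$ are constructed \emph{jointly} by a single connected-components pass over $\Ccal$ and $V$, in time $O\pbra{k\abs{\Ccal}+|V|}=O\pbra{kd|V|}$, which is dominated by $kdQ\sum_{v\in V}\pbra{1-\Naturale\alpha}^{-\abs{\Ccal_v}}\ge kdQ|V|$ since every summand is at least $1$. With that replacement, your re-indexing of the \textsf{RejectionSampling} cost from components to vertices (dividing by $|V_v|$ and using $\abs{\Ccal_v}\le d|V_v|$) is correct and identical to the paper's, and the proof goes through.
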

\begin{proof}
All the $\Phi_v$ can be easily constructed with one pass of $\Ccal$ and $V$ which takes time $O\pbra{k|\Ccal|+|V|}$.

By \Cref{prop:rejectionsampling}, each iteration of \textsf{Line 4} halts almost surely and generates $\sigma_v\sim\mu_\True^{\Ccal_v}$ in expected time $O\pbra{\pbra{kQ\abs{\Ccal_v}+Q}\cdot\pbra{1-\Naturale\alpha}^{-\abs{\Ccal_v}}}$. 
Therefore \textsf{FinalSampling} halts almost surely and its expected total running time is at most
\begin{align*}
&\phantom{\le}O\pbra{k|\Ccal|+|V|+\sum_{v\text{ needed by \textsf{Line 4}}}\pbra{kQ\abs{\Ccal_v}+Q}\cdot\pbra{1-\Naturale\alpha}^{-\abs{\Ccal_v}}}\\
&=O\pbra{k|\Ccal|+|V|+\sum_{v\in V}\frac{kQ\abs{\Ccal_v}+Q}{|V_v|}\cdot\pbra{1-\Naturale\alpha}^{-\abs{\Ccal_v}}}\\
&\le O\pbra{kdQ\sum_{v\in V}\pbra{1-\Naturale\alpha}^{-\abs{\Ccal_v}}}.
\tag{since $\abs{\Ccal_v}\le d\abs{V_v}$ and $|\Ccal|\le d|V|$}
\end{align*}
Moreover when \textsf{FinalSampling} halts, by iteratively applying Item (4) of \Cref{prop:component} we have
\begin{equation*}
\sigma\sim\prod_{v\text{ needed by \textsf{Line 4}}}\mu_\True^{\Ccal_v}=\mu_\True^{\Ccal|_{\tilde\sigma}}.
\tag*{\qedhere}
\end{equation*}
\end{proof}

Combining \Cref{prop:boundingchain_distribution}, we analyze the performance of \Cref{alg:finalsampling} in \Cref{alg:atomiccspsampling}.
\begin{proposition}\label{prop:finalsampling}
If $\Naturale\alpha\Delta\le1$, $\Naturale\Delta^2\rho\le1/32$, and $\Delta^2\lambda\le1/16$, then the following holds for the \textsf{FinalSampling} in \Cref{alg:atomiccspsampling}.
\begin{itemize}
\item It halts almost surely, and outputs $\sigma\sim\mu_\True^\Ccal$ when it halts.  
\item Its expected running time is at most $O\pbra{d^2kQ\Delta|V|}$.
\end{itemize}
\end{proposition}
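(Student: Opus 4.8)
The plan is to glue together \Cref{prop:boundingchain_distribution}, which tells us that the \textsf{while} loop of \Cref{alg:atomiccspsampling} terminates almost surely and leaves $\tilde\sigma$ distributed as the projected distribution $\mu^\Mcal$ of \Cref{def:projected_distribution}, with \Cref{lem:finalsampling}, which describes one call of \textsf{FinalSampling} for a \emph{fixed} input. Since $\mu^\Mcal$ is by definition supported on assignments $\tilde\sigma$ with $\Pr_{\sigma'\sim\mu_\True^\Ccal}\sbra{\sigma'(\Mcal)=\tilde\sigma(\Mcal)}>0$, for every such $\tilde\sigma$ we have $\sigma_\True^{\Ccal|_{\tilde\sigma}}\neq\emptyset$, so \Cref{lem:finalsampling} applies: conditioned on $\tilde\sigma$, \textsf{FinalSampling} halts almost surely and outputs $\sigma\sim\mu_\True^{\Ccal|_{\tilde\sigma}}$. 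Integrating over $\tilde\sigma\sim\mu^\Mcal$ immediately gives that \textsf{FinalSampling} halts almost surely, and that the output law is $\sigma\mapsto\sum_{\tilde\sigma}\mu^\Mcal(\tilde\sigma)\,\mu_\True^{\Ccal|_{\tilde\sigma}}(\sigma)$. Only the unique $\tilde\sigma$ that agrees with $\sigma$ on $\Mcal$ and equals $\Qmark$ on $V\setminus\Mcal$ contributes; for that $\tilde\sigma$, \Cref{def:projected_distribution} gives $\mu^\Mcal(\tilde\sigma)=\Pr_{\sigma'\sim\mu_\True^\Ccal}\sbra{\sigma'(\Mcal)=\sigma(\Mcal)}$, and unwinding \Cref{def:projected_constraint_satisfaction_problem} — together with the fact that, for atomic $\Phi$, a full assignment extending $\tilde\sigma$ satisfies $\Ccal$ iff it satisfies $\Ccal|_{\tilde\sigma}$ — gives $\mu_\True^{\Ccal|_{\tilde\sigma}}(\sigma)=\Pr_{\sigma'\sim\mu_\True^\Ccal}\sbra{\sigma'=\sigma\mid\sigma'(\Mcal)=\sigma(\Mcal)}$. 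Multiplying the two probabilities by the chain rule yields $\mu_\True^\Ccal(\sigma)$, proving the correctness part.

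For the running time I will use the tower rule: by \Cref{lem:finalsampling} the expected cost of \textsf{FinalSampling} conditioned on $\tilde\sigma$ is $O\pbra{kdQ\sum_{v\in V}(1-\Naturale\alpha)^{-\abs{\Ccal_v}}}$, where $\Ccal_v=\Ccal_v(\tilde\sigma)$ is the constraint set returned by \Component{$\Phi,\Mcal,\tilde\sigma,v$}, so it suffices to show $\E_{\tilde\sigma\sim\mu^\Mcal}\sbra{(1-\Naturale\alpha)^{-\abs{\Ccal_v(\tilde\sigma)}}}=O(d)$ for each fixed $v\in V$; summing over $v$ and multiplying by $O(kdQ)$ then gives $O\pbra{kd^2Q|V|}$, which is within the claimed $O\pbra{d^2kQ\Delta|V|}$. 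To bound this moment I will reprove, for the measure $\mu^\Mcal$, the tail estimate of \Cref{prop:component_size}. The $2$-tree combinatorics (\Cref{lem:2-tree_number}, \Cref{cor:2-tree_size}) is unchanged: if $\abs{\Ccal_v}\ge\ell\Delta$ then the line graph of the component contains a size-$\ell$ $2$-tree $S$ of constraints, each falsified by $\tilde\sigma$, one of whose constraints contains $v$, and there are at most $\tfrac d2(\Naturale\Delta^2)^{\ell-1}$ such $S$. The only new point is the per-$2$-tree probability: the constraints in $S$ have pairwise disjoint variable sets, so letting $B$ be the event that $\tilde\sigma$ agrees with $\sigma_\False^C$ on $\vbl(C)\cap\Mcal\setminus\cbra v$ for every $C\in S$, \Cref{thm:HSS_LLL} (applicable since $\Naturale p(\Phi)\Delta\le\Naturale\alpha\Delta\le1$) gives $\Pr_{\tilde\sigma\sim\mu^\Mcal}\sbra{B}\le(1-\Naturale\alpha)^{-\abs{\Gamma(B)}}\prod_{C\in S}\prod_{v'\in\vbl(C)\cap\Mcal\setminus\cbra v}\Dcal_{v'}(\sigma_\False^C(v'))$; since $\abs{\Gamma(B)}\le\abs S\Delta$ we have $(1-\Naturale\alpha)^{-\abs{\Gamma(B)}}\le 4^{\abs S}$ (using $\Naturale\alpha\Delta\le1$ and $\Delta\ge2$), and bounding the factor for the unique $C\in S$ containing $v$ by $1$ and each of the remaining $\ell-1$ factors by $\rho$ (via $\beta\ge1$ and \Cref{eq:rho}) yields $\Pr_{\tilde\sigma\sim\mu^\Mcal}\sbra{\text{all }C\in S\text{ falsified}}\le 4\cdot(4\rho)^{\ell-1}$. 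Hence $\Pr_{\tilde\sigma\sim\mu^\Mcal}\sbra{\abs{\Ccal_v}\ge\ell\Delta}\le 2d\,(4\Naturale\Delta^2\rho)^{\ell-1}\le 2d\,8^{-(\ell-1)}$ by $\Naturale\Delta^2\rho\le1/32$, and bucketing $\abs{\Ccal_v}$ into length-$\Delta$ intervals while using $(1-\Naturale\alpha)^{-m}\le 4^{\lceil m/\Delta\rceil}$ turns the moment into a convergent geometric series summing to $O(d)$.

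The only real obstacle is this last moment bound: \Cref{prop:component_size} cannot be quoted verbatim because it controls the bounding chain's \emph{intermediate} state through \Cref{lem:single_update}, whereas here $\tilde\sigma$ is drawn from $\mu^\Mcal$, so the per-constraint estimate must instead come from the local lemma (\Cref{thm:HSS_LLL}). The subtlety is that applying \Cref{thm:HSS_LLL} in one shot to the whole $2$-tree costs a multiplicative bias $(1-\Naturale\alpha)^{-\abs{\Gamma(B)}}$ with $\abs{\Gamma(B)}$ as large as $\abs S\Delta$, contributing an extra $4^{\abs S}$; this is harmless only because the hypotheses of the proposition force $\Naturale\Delta^2\rho$ — hence $4\Naturale\Delta^2\rho$ — to be a small absolute constant, so the geometric series still converges. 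Everything else, namely the correctness mixture identity and the arithmetic of the geometric sum, is routine given the lemmas already established.
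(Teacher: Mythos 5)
Your proposal is correct. The correctness half is the same route as the paper: integrate \Cref{lem:finalsampling} against \Cref{prop:boundingchain_distribution} and verify the chain-rule identity $\mu^\Mcal(\tilde\sigma)\cdot\mu_\True^{\Ccal|_{\tilde\sigma}}(\sigma)=\mu_\True^\Ccal(\sigma)$, whose crux (an extension of $\tilde\sigma$ satisfies $\Ccal$ iff it satisfies $\Ccal|_{\tilde\sigma}$) you state and the paper spells out in a longer computation.

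The efficiency half is where you genuinely diverge. The paper never re-proves a tail bound at stationarity: it observes that $\mu^\Mcal$ is stationary for each $\Psf_i$ (\Cref{prop:systematicscan}) and that each \textsf{BoundingChain} iteration is a coupling of $\Psf_{i_t}$ (\Cref{prop:coupling}), so one may append $|V|$ fictitious \textsf{RejectionSampling} updates without changing the law of $\tilde\sigma$; the components seen by \textsf{FinalSampling} then inherit the tail bound of \Cref{prop:component_size} verbatim, via \Cref{lem:single_update} applied to those last updates. You instead bound $\Pr_{\tilde\sigma\sim\mu^\Mcal}\sbra{\abs{\Ccal_v}\ge\ell\Delta}$ directly: the same $2$-tree extraction and enumeration, but with the per-tree probability estimated under $\mu^\Mcal$ by \Cref{thm:HSS_LLL} (valid since $p\le\alpha$, so $\Naturale p\Delta\le1$), at the cost of the bias factor $(1-\Naturale p)^{-\abs{\Gamma(B)}}\le 4^{\abs S}$, which the slack in $\Naturale\Delta^2\rho\le1/32$ absorbs. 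Both arguments give the stated bound (yours lands at $O(kd^2Q|V|)$, in fact slightly sharper than $O(d^2kQ\Delta|V|)$). The trade-off: your route is self-contained at the stationary measure and makes no use of the coupling/stationarity machinery, while the paper's trick reuses the existing concentration bound with zero new LLL computation and no loss of the factor $4^{\abs S}$; under a tighter constant than $1/32$ the paper's version would survive where a one-shot \Cref{thm:HSS_LLL} estimate might not.
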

\begin{proof}
By Item (1) of \Cref{lem:finalsampling}, it halts almost surely. 
Fix an arbitrary $\sigma'\in\sigma_\True^\Ccal$. Define $\tilde\sigma$ by setting $\tilde\sigma(\Mcal)=\sigma'(\Mcal)$ and $\tilde\sigma(V\setminus\Mcal)=\Qmark^{V\setminus\Mcal}$. 
Combining \Cref{prop:boundingchain_distribution} and \Cref{def:projected_distribution}, we have
\begin{align*}
\Pr\sbra{\sigma=\sigma'}
&=\mu^\Mcal(\tilde\sigma)\cdot\mu_\True^{\Ccal|_{\tilde\sigma}}(\sigma')
=\Pr_{\hat\sigma\sim\mu_\True^\Ccal}\sbra{\hat\sigma(\Mcal)=\sigma'(\Mcal)}\cdot\mu_\True^{\Ccal|_{\tilde\sigma}}(\sigma')\\
&=\Pr_{\hat\sigma\sim\prod_{v\in V}\Dcal_v}\sbra{\hat\sigma(\Mcal)=\sigma'(\Mcal)\mid\hat\sigma\in\sigma_\True^\Ccal}\cdot\Pr_{\sigma''\sim\prod_{v\in V}\Dcal_v|_{\tilde\sigma}}\sbra{\sigma''=\sigma'\mid\sigma''\in\sigma_\True^{\Ccal|_{\tilde\sigma}}}\\
&=\frac{\Pr_{\hat\sigma\sim\prod_{v\in V}\Dcal_v}\sbra{\hat\sigma(\Mcal)=\sigma'(\Mcal),\hat\sigma\in\sigma_\True^\Ccal}}{\Pr_{\hat\sigma\sim\prod_{v\in V}\Dcal_v}\sbra{\hat\sigma\in\sigma_\True^\Ccal}}
\cdot\frac{\Pr_{\sigma''\sim\prod_{v\in V}\Dcal_v|_{\tilde\sigma}}\sbra{\sigma''=\sigma'}}{\Pr_{\sigma''\sim\prod_{v\in V}\Dcal_v|_{\tilde\sigma}}\sbra{\sigma''\in\sigma_\True^{\Ccal|_{\tilde\sigma}}}}.
\end{align*}
By \Cref{def:projected_constraint_satisfaction_problem}, $\Dcal_v|_{\tilde\sigma}$ equals $\Dcal_v$ if $v\in V\setminus\Mcal$; and equals the point distribution of $\sigma'(v)$ if $v\in\Mcal$. Let $\sigma_1\sim\prod_{v\in V\setminus\Mcal}\Dcal_v$ and $\sigma_2\sim\prod_{v\in\Mcal}\Dcal_v$ be independent.
Then
\begin{align*}
\frac{\Pr_{\sigma''\sim\prod_{v\in V}\Dcal_v|_{\tilde\sigma}}\sbra{\sigma''=\sigma'}}{\Pr_{\sigma''\sim\prod_{v\in V}\Dcal_v|_{\tilde\sigma}}\sbra{\sigma''\in\sigma_\True^{\Ccal|_{\tilde\sigma}}}}
&=\frac{\Pr_{\sigma_1}\sbra{\sigma_1=\sigma'(V\setminus\Mcal)}}{\Pr_{\sigma_1}\sbra{\sigma_1\circ\sigma'(\Mcal)\in\sigma_\True^{\Ccal|_{\tilde\sigma}}}}
\tag{$\circ$ represents vector concatenation}\\
&=\frac{\Pr_{\sigma_1,\sigma_2}\sbra{\sigma_1=\sigma'(V\setminus\Mcal),\sigma_2=\sigma'(\Mcal)}}{\Pr_{\sigma_1,\sigma_2}\sbra{\sigma_1\circ\sigma'(\Mcal)\in\sigma_\True^{\Ccal|_{\tilde\sigma}},\sigma_2=\sigma'(\Mcal)}}
\tag{since $\sigma_1,\sigma_2$ are independent}\\
&=\frac{\Pr_{\sigma_1,\sigma_2}\sbra{\sigma_1\circ\sigma_2=\sigma'}}{\Pr_{\sigma_1,\sigma_2}\sbra{\sigma_1\circ\sigma_2\in\sigma_\True^{\Ccal|_{\tilde\sigma}},\sigma_2=\sigma'(\Mcal)}}\\
&=\frac{\Pr_{\hat\sigma\sim\prod_{v\in V}\Dcal_v}\sbra{\hat\sigma=\sigma'}}{\Pr_{\hat\sigma\sim\prod_{v\in V}\Dcal_v}\sbra{\hat\sigma\in\sigma_\True^{\Ccal|_{\tilde\sigma}},\hat\sigma(\Mcal)=\sigma'(\Mcal)}}\\
&=\frac{\Pr_{\hat\sigma\sim\prod_{v\in V}\Dcal_v}\sbra{\hat\sigma=\sigma'}}{\Pr_{\hat\sigma\sim\prod_{v\in V}\Dcal_v}\sbra{\hat\sigma\in\sigma_\True^\Ccal,\hat\sigma(\Mcal)=\sigma'(\Mcal)}}
\end{align*}
where the last equality is because when $\hat\sigma(\Mcal)=\sigma'(\Mcal)$, we have $\hat\sigma(\Mcal)=\tilde\sigma(\Mcal)$ and thus $\hat\sigma\in\sigma_\True^{\Ccal|_{\tilde\sigma}}$ iff $\hat\sigma\in\sigma_\True^\Ccal$.
Hence in all, we have 
$$
\Pr\sbra{\sigma=\sigma'}
=\frac{\Pr_{\hat\sigma\sim\prod_{v\in V}\Dcal_v}\sbra{\hat\sigma=\sigma'}}{\Pr_{\hat\sigma\sim\prod_{v\in V}\Dcal_v}\sbra{\hat\sigma\in\sigma_\True^\Ccal}}
=\Pr_{\hat\sigma\sim\prod_{v\in V}\Dcal_v}\sbra{\hat\sigma=\sigma'\mid\hat\sigma\in\sigma_\True^\Ccal}
=\mu_\True^\Ccal(\sigma')
$$
as desired.

Note that $\mu^\Mcal$ is a stationary distribution for $\Psf_i$ by Item (3)  \Cref{prop:systematicscan}. Meanwhile by Item (4) of \Cref{prop:coupling}, the $t$-th \textsf{for} iteration in \textsf{BoundingChain} is a coupling of $\Psf_{i_t}$. 
Thus in \textsf{FinalSampling}, upon receiving $\tilde\sigma$ which has distribution $\mu^\Mcal$, we can execute $|V|$ more rounds of \textsf{Line 3-11} in \Cref{alg:boundingchain} on $\tilde\sigma$ using fresh randomness, and the resulted assignment still has distribution $\mu^\Mcal$.
In other words, we may safely assume the last $|V|$ rounds of update in the final \textsf{BoundingChain} procedure are all using \textsf{RejectionSampling}.
Thus each $\abs{\Ccal_v}$ in \Cref{lem:finalsampling} also satisfies the concentration bound in \Cref{prop:component_size}. 
By a similar calculation in the proof of the efficiency part of \Cref{prop:boundingchain} and noticing $|\Ccal|\le d|V|$, the expected running time here is at most
\begin{equation*}
O\pbra{d^2kQ|V|\sum_{\ell=1}^{+\infty}\pbra{\frac1{32}}^{\ell-1}\cdot\Delta\cdot4^{\ell+1}}
=O\pbra{d^2kQ\Delta|V|}.
\tag*{\qedhere}
\end{equation*}
\end{proof}

\subsection{Putting Everything Together}\label{sec:putting_everything_together}

Now we put everything together to prove our main theorem.
\begin{proof}[Proof of \Cref{thm:atomiccspsampling}]
The correctness part follows immediately from \Cref{prop:finalsampling} and \Cref{prop:boundingchain_distribution}.
Thus recall measures defined in \Cref{def:(atomic)_constraint_satisfaction_problem} and we focus on the efficiency part. 
\begin{itemize}
\item Let $X$ be the total running time of \AtomicCSPSampling{$\Phi,\pibm$}.
\item Let $A$ be the time for computing $\beta(\Phi,\Mcal)$ in \Cref{eq:beta}. Then $A=O(k|\Ccal|)\le O(dk|V|)$.
\item For integer $i\ge1$ and $j\in[i]$, let $X_{i,j}$ be the running time of the $(-j)$-th \textsf{for} iteration in \BoundingChain{$\Phi,\Mcal,-i,r_{-i},\ldots,r_{-1}$}. Then $\E\sbra{X_{i,j}^2}=O\pbra{dk^2\Delta^5Q^2}$ by \Cref{prop:boundingchain}.
\item Let $T_\Final$ be the $T$ when the \textsf{while} iterations stop. Then by \Cref{prop:boundingchain}, we have $\Pr\sbra{T_\Final\ge t}\le4|V|\cdot2^{-t/|V|}$  for $t\ge2|V|-1$.
\item Let $Y$ be the running time of the \textsf{FinalSampling} in the end. Then by \Cref{prop:finalsampling} we have $\E[Y]=O\pbra{d^2kQ\Delta|V|}$.
\end{itemize}
Therefore we have $X=A+\sum_{t=0}^{\log\pbra{T_\Final}}\sum_{j=1}^{2^t}X_{2^t,j}+Y$\footnote{Technically we also need to initialize the randomness at \textsf{Line 1} of \Cref{alg:atomiccspsampling}, and check for \textsf{Line 5} in \Cref{alg:atomiccspsampling}, and initialize the assignment at \textsf{Line 1} of \Cref{alg:boundingchain}. However these can be done on the fly and their cost will be minor compared with the parts we listed.} and
\begin{align*}
\E[X]
&=\E[A]+\E[Y]+\sum_{t=0}^{+\infty}\sum_{j=1}^{2^t}\E\sbra{X_{2^t,j}\cdot[T_\Final\ge 2^t]}\\
&\le\E[A]+\E[Y]+\sum_{t=0}^{+\infty}\sum_{j=1}^{2^t}\sqrt{\E\sbra{X^2_{2^t,j}}\Pr\sbra{T_\Final\ge 2^t}}
\tag{by Cauchy-Schwarz inequality}\\
&\le\E[A]+\E[Y]+\sum_{t=0}^m\sum_{j=1}^{2^t}\sqrt{\E\sbra{X^2_{2^t,j}}}+\sum_{t=m+1}^{+\infty}\sum_{j=1}^{2^t}\sqrt{\E\sbra{X^2_{2^t,j}}\Pr\sbra{T_\Final\ge 2^t}}
\tag{$m\ge\lfloor\log(2|V|-1)\rfloor$ to be determined later}\\
&\le O\pbra{d^2kQ\Delta|V|+\sqrt{dk^2\Delta^5Q^2}\cdot\pbra{2^m+\sqrt{|V|}\sum_{t=m+1}^{+\infty}2^t\cdot2^{-\frac{2^t}{|V|}}}}.
\end{align*}
We pick $m=\lceil\log(|V|)+\log\log(|V|)+10\rceil$ then 
\begin{align*}
2^m+\sqrt{|V|}\sum_{t=m+1}^{+\infty}2^{t-\frac{2^t}{|V|}}
&\le2^m+\sqrt{|V|}\int_m^{+\infty}2^{x-\frac{2^x}{|V|}}\sd\!x
\tag{$2^{x-\frac{2^x}n}$ is decreasing when $2^x\ge\frac n{\ln(2)}$}\\
&=2^m+\sqrt{|V|}\cdot\frac{|V|}{\ln^2(2)}\cdot2^{-\frac{2^m}{|V|}}
\tag{since $\pbra{\frac{-n}{\ln^2(2)}\cdot2^{-\frac{2^x}n}}'=2^{x-\frac{2^x}n}$}\\
&=O\pbra{|V|\log(|V|)}.
\end{align*}
Since $d\le\Delta$, we have $\E[X]=O\pbra{kQ\Delta^3|V|\log(|V|)}$.
\end{proof}

\begin{remark}
Computing higher moments of $X_{i,j},Y$ and using possibly stronger assumption, one can improve the dependency on $k,\Delta,Q$ in the expected running time. However we view these as constants compared with $|V|$. Thus we do not make the effort here.
\end{remark}

\section{Applications}\label{sec:applications}

Here we instantiate \Cref{thm:atomiccspsampling} to special CSPs. 
We will use the following algorithmic Lov\'asz local lemma for constructing the marking $\Mcal$.
\begin{theorem}[\cite{moser2010constructive}]\label{thm:MT_LLL}
Let $\Phi=\pbra{V,\pbra{\Omega_v,\Dcal_v}_{v\in V},\Ccal}$ be a CSP. If $\Naturale p\Delta\le1$, then $\sigma_\True^\Ccal\neq\emptyset$ and there exists a randomized algorithm which outputs some $\sigma\in\sigma_\True^\Ccal$ in time $O(k\Delta|V|)$ with probability at least $0.99$.
\end{theorem}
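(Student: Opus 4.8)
The plan is to use the Moser--Tardos resampling algorithm together with its witness-tree analysis; since \Cref{thm:MT_LLL} is precisely the algorithmic LLL of \cite{moser2010constructive}, one could simply cite it, but I sketch the argument for completeness. Run the following procedure: draw $\sigma\sim\prod_{v\in V}\Dcal_v$; while some constraint is violated by $\sigma$, select a violated $C$ (say the least-indexed one) and independently resample $\sigma(v)\sim\Dcal_v$ for every $v\in\vbl(C)$; output $\sigma$ once no constraint is violated. Whenever the procedure halts, its output lies in $\sigma_\True^\Ccal$ by construction, so both the existence claim and the running-time claim follow once we show the procedure halts after few steps. (The existence claim $\sigma_\True^\Ccal\neq\emptyset$ also follows directly from \Cref{thm:LLL}.)

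The heart of the proof is the Moser--Tardos bound: if $x\colon\Ccal\to(0,1)$ satisfies $\Pr_{\sigma\sim\prod_v\Dcal_v}\sbra{C(\sigma)=\False}\le x_C\prod_{C'\colon C'\neq C,\,\vbl(C')\cap\vbl(C)\neq\emptyset}\pbra{1-x_{C'}}$ for every $C\in\Ccal$, then the expected number of times the algorithm resamples $C$ is at most $x_C/(1-x_C)$. This is proved by assigning to each resampling event a \emph{witness tree} recording its causal history: distinct resampling events of the same constraint yield distinct witness trees, a fixed witness tree occurs with probability equal to the product of the violation probabilities over its nodes, and summing these products over all witness trees rooted at $C$ gives a Galton--Watson--type series bounded by $x_C/(1-x_C)$ under the displayed inequality. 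I would either quote this verbatim from \cite{moser2010constructive} or reproduce the (short) witness-tree argument.

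To instantiate it, set $x_C\equiv1/\Delta$ for all $C$ (valid since $\Delta\ge2$). Each $C$ has at most $\Delta-1$ neighbours $C'\neq C$ in the dependency graph, so
$$
x_C\prod_{C'\colon C'\neq C,\,\vbl(C')\cap\vbl(C)\neq\emptyset}\pbra{1-x_{C'}}
\ge\frac1\Delta\pbra{1-\frac1\Delta}^{\Delta-1}
\ge\frac1{\Naturale\Delta}\ge p,
$$
using the elementary inequality $(1-1/\Delta)^{\Delta-1}\ge1/\Naturale$ and the hypothesis $\Naturale p\Delta\le1$; hence the Moser--Tardos hypothesis holds. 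Therefore the expected number of resamplings of each $C$ is at most $\frac{1/\Delta}{1-1/\Delta}=\frac1{\Delta-1}$, and the expected total number of resampling steps is at most $\frac{|\Ccal|}{\Delta-1}\le\frac{d|V|}{\Delta-1}=O(|V|)$, since each constraint depends on at least one variable (so $|\Ccal|\le d|V|$) and $d\le\Delta$. By Markov's inequality, with probability at least $0.99$ the procedure performs $O(|V|)$ resampling steps.

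It remains to bound the running time. Precompute an inverted index from each variable to the constraints depending on it, and maintain the set of currently-violated constraints; this preprocessing, together with the initial sampling and the initial scan of all constraints, costs $O(k|\Ccal|)=O(k\Delta|V|)$. Each resampling step resamples $k$ variables in $O(k)$ time, after which only the at most $\Delta$ constraints sharing a variable with $C$ can change status; locating them through the inverted index and re-evaluating each in $O(k)$ time costs $O(k\Delta)$ per step with standard deduplication bookkeeping. Multiplying by the high-probability bound $O(|V|)$ on the number of steps and adding the preprocessing cost yields total running time $O(k\Delta|V|)$ with probability at least $0.99$. The main obstacle is the witness-tree analysis behind the Moser--Tardos bound; the remaining pieces are the symmetric-LLL substitution and a routine data-structure accounting.
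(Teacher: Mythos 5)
Your proposal is correct: the paper itself proves nothing here and simply cites Moser--Tardos, and your sketch is the standard witness-tree argument with the symmetric instantiation $x_C\equiv1/\Delta$, which under $\Naturale p\Delta\le1$ gives expected $O(|V|)$ resamplings (via $|\Ccal|\le d|V|\le\Delta|V|$), plus a routine Markov and data-structure accounting to reach $O(k\Delta|V|)$ time with probability $0.99$. This matches the intended justification of the cited theorem, so nothing further is needed.
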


We define the \emph{smooth parameter} of a CSP $\Phi$ by 
\begin{equation}\label{eq:kappa}
\kappa=\kappa(\Phi)=\max_{v\in V}\max_{q,q'\in\Omega_v}\frac{\Dcal_v(q)}{\Dcal_v(q')}.
\end{equation}
Note that $\kappa(\Phi)\ge1$ always.
When context is clear, we will simply write $\kappa$.

\subsection{Binary Domains}\label{sec:binary_domains}

Let $\Mcal\subseteq V$ be a marking. We specialize $\alpha,\beta,\lambda$ when all domains are of size $2$.
\begin{itemize}
\item $\alpha=\alpha(\Phi,\Mcal)=\max_{C\in\Ccal}\alpha(\Phi,\Mcal,C)$ where 
$$
\alpha(\Phi,\Mcal,C)=\prod_{v\in\vbl(C)\setminus\Mcal}\Dcal_v(\sigma_\False^C(v));
$$
\item When $\Naturale\alpha\le1$, define
\begin{itemize}
\item $\beta=\beta(\Phi,\Mcal)=(1-\Naturale\alpha)^{-d}\le(1-\Naturale\alpha)^{-\Delta}$;
\item $\lambda=\lambda(\Phi,\Mcal)=\max_{C\in\Ccal}\lambda(\Phi,\Mcal,C)$ where 
$$
\lambda(\Phi,\Mcal,C)
=\abs{\vbl(C)}^2\cdot\beta^{|\vbl(C)\cap\Mcal|}\cdot\prod_{v\in\vbl(C)\cap\Mcal}\Dcal_v(\sigma_\False^C(v)).
$$
\end{itemize}
\end{itemize}

By \Cref{rmk:atomiccspsampling}, we will use the following version of \Cref{thm:atomiccspsampling}.
\begin{theorem}[\Cref{thm:atomiccspsampling}, Binary Domains]\label{thm:binary_domains}
Let $\Phi=\pbra{V,\pbra{\Omega_v,\Dcal_v}_{v\in V},\Ccal}$ be an atomic CSP such that $|\Omega_v|=2$ for all $v\in V$. Let $\Mcal\subseteq V$ be a marking.
If 
$$
\Naturale\cdot\alpha(\Phi,\Mcal,C)\cdot\Delta\le1
\quad\text{and}\quad
\Delta^2\cdot\lambda(\Phi,\Mcal,C)\le1/100
\quad\text{for all }C\in\Ccal,
$$
then the following holds for \AtomicCSPSampling{$\Phi,\Mcal$}.
\begin{itemize}
\item \textsc{Correctness.} It halts almost surely and outputs $\sigma\sim\mu_\True^\Ccal$ when it halts.
\item \textsc{Efficiency.} Its expected total running time is $O\pbra{k\Delta^3|V|\log(|V|)}$.
\end{itemize}
\end{theorem}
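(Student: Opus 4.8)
The plan is to deduce this statement directly from \Cref{thm:atomiccspsampling}: it suffices to check that, under the present hypotheses together with the specialization $|\Omega_v|=2$ for all $v$, the three conditions $\Naturale\alpha\Delta\le1$, $\Naturale\Delta^2\rho\le1/32$, and $\Delta^2\lambda\le1/16$ required there all hold (with $\alpha=\alpha(\Phi,\Mcal)$, $\beta=\beta(\Phi,\Mcal)$, $\rho=\rho(\Phi,\Mcal)$, $\lambda=\lambda(\Phi,\Mcal)$), and then read off the conclusions with $Q=2$. First I would note that $\alpha(\Phi,\Mcal)=\max_{C\in\Ccal}\alpha(\Phi,\Mcal,C)$, so $\Naturale\cdot\alpha(\Phi,\Mcal,C)\cdot\Delta\le1$ for every $C$ is precisely $\Naturale\alpha\Delta\le1$; in particular $\Naturale\alpha\le1$ since $\Delta\ge1$, so $\beta=(1-\Naturale\alpha)^{-d}$ is well-defined and $\beta\ge1$.

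Next I would specialize $\lambda$ and $\rho$ to binary domains. In \Cref{eq:lambda} the term $(\beta-1)\pbra{|\Omega_v|-2}$ vanishes because $|\Omega_v|=2$, so $\lambda(\Phi,\Mcal)=\max_{C\in\Ccal}\abs{\vbl(C)}^2\prod_{v\in\vbl(C)\cap\Mcal}\beta\cdot\Dcal_v(\sigma_\False^C(v))=\max_{C\in\Ccal}\lambda(\Phi,\Mcal,C)$, which matches the specialized definition stated before the theorem; consequently the hypothesis $\Delta^2\lambda(\Phi,\Mcal,C)\le1/100$ for all $C$ gives $\Delta^2\lambda\le1/100\le1/16$. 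By \Cref{rmk:atomiccspsampling} (applicable since every domain has at least $2$ elements) we have $\rho\le\lambda$, hence $\Naturale\Delta^2\rho\le\Naturale\Delta^2\lambda\le\Naturale/100<1/32$, the last step being the one genuine numerical check ($\Naturale/100\approx0.0272<0.03125$).

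All three hypotheses of \Cref{thm:atomiccspsampling} now hold, so its correctness statement gives that \AtomicCSPSampling{$\Phi,\Mcal$} halts almost surely and outputs $\sigma\sim\mu_\True^\Ccal$ on halting, and its efficiency statement, with $Q=Q(\Phi)=2$, gives expected total running time $O\pbra{kQ\Delta^3|V|\log(|V|)}=O\pbra{k\Delta^3|V|\log(|V|)}$. The proof therefore introduces no new probabilistic or algorithmic content; the only things one must get exactly right are the vanishing of $(\beta-1)\pbra{|\Omega_v|-2}$ for binary domains and the numerical slack $\Naturale/100<1/32$, which is precisely the point of replacing the two conditions $\Naturale\Delta^2\rho\le1/32$ and $\Delta^2\lambda\le1/16$ by the single cleaner bound $\Delta^2\lambda(\Phi,\Mcal,C)\le1/100$.
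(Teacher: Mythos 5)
Your proposal is correct and matches the paper's (implicit) proof: the paper obtains this theorem exactly by specializing \Cref{thm:atomiccspsampling} via \Cref{rmk:atomiccspsampling}, i.e.\ noting $\rho\le\lambda$ (the $(\beta-1)(|\Omega_v|-2)$ term vanishes for binary domains) so that $\Delta^2\lambda(\Phi,\Mcal,C)\le1/100$ for all $C$ implies both $\Naturale\Delta^2\rho\le1/32$ and $\Delta^2\lambda\le1/16$, and then setting $Q=2$ in the running time. Your write-up just makes these checks explicit, including the numerical slack $\Naturale/100<1/32$, which is exactly the intended argument.
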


We now construct a valid marking when the underlying distributions are arbitrary.
\begin{lemma}\label{lem:binary_domains}
Let $\Phi=\pbra{V,\pbra{\Omega_v,\Dcal_v}_{v\in V},\Ccal}$ be an atomic CSP such that $|\Omega_v|=2$ for all $v\in V$. 
For any $\zeta\in(0,1)$, if $p^\gamma\cdot\Delta\le0.01\cdot\zeta/\kappa$ where 
$$
\gamma=\frac{3-9\zeta+\ln(\kappa+1)-\sqrt{\ln^2(\kappa+1)+6\cdot(1-3\zeta)\ln(\kappa+1)}}9,
$$ 
then there exists a marking $\Mcal\subseteq V$ such that 
$$
\Naturale\cdot\alpha(\Phi,\Mcal,C)\cdot\Delta\le1
\quad\text{and}\quad
\Delta^2\cdot\lambda(\Phi,\Mcal,C)\le1/100
\quad\text{for all }C\in\Ccal.
$$ 
Moreover $\Mcal$ can be constructed in time $O(k\Delta|V|)$ with success probability at least $0.99$.
\end{lemma}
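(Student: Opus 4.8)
The plan is to construct $\Mcal$ by an independent random process and certify its existence with the algorithmic Lov\'asz local lemma (\Cref{thm:MT_LLL}), applied to the auxiliary CSP whose ``constraints'' are the bad marking events of the constraints of $\Phi$. First, the bookkeeping forced by $|\Omega_v|=2$: the smoothness bound gives $\Dcal_v(q)\in[\tfrac1{\kappa+1},\tfrac\kappa{\kappa+1}]$ for every $v$ and both $q\in\Omega_v$, so for each $C\in\Ccal$ and $v\in\vbl(C)$ the number $x_{C,v}:=\Dcal_v(\sigma_\False^C(v))$ lies in $[\tfrac1{\kappa+1},\tfrac\kappa{\kappa+1}]$ and $\prod_{v\in\vbl(C)}x_{C,v}=p_C\le p$. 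Hence $\tfrac{\ln(1/p_C)}{\ln(\kappa+1)}\le|\vbl(C)|\le\tfrac{\ln(1/p_C)}{\ln((\kappa+1)/\kappa)}$: the width of $C$ is simultaneously large (when $p$ is small) and only $O(\kappa\ln(1/p_C))$, so the $|\vbl(C)|^2$ factor inside $\lambda$ will cost only a $p_C^{-o(1)}$ multiplicative loss.

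I then mark each $v\in V$ independently with a probability $\theta$ fixed by the optimization below (depending on $\kappa$ and $\zeta$). For $C\in\Ccal$ let $\Bad_C$ be the union of the events ``$\alpha(\Phi,\Mcal,C)=\prod_{v\in\vbl(C)\setminus\Mcal}x_{C,v}>p_C^{\,c}$'' and ``$\prod_{v\in\vbl(C)\cap\Mcal}x_{C,v}>p_C^{\,c'}$'', for exponents $c\approx1-\theta$, $c'\approx\theta$ (pinned down by the optimization). If no $\Bad_C$ occurs then: the first family of events forces $\alpha:=\max_C\alpha(\Phi,\Mcal,C)\le p^{\,c}$ (since $p_C\le p$), which together with the hypothesis gives $\Naturale\alpha\Delta\le1$ and also makes $\ln\beta=-d\ln(1-\Naturale\alpha)=O(\Naturale d\alpha)$ so small that, using $|\vbl(C)\cap\Mcal|=O(\kappa\ln(1/p_C))$ and $\Naturale d\kappa\,p^{\,c}=O(\zeta)$ from the hypothesis, $\beta^{|\vbl(C)\cap\Mcal|}=p_C^{-O(\zeta)}$; combined with the second family of events this yields $\beta^{|\vbl(C)\cap\Mcal|}\prod_{v\in\vbl(C)\cap\Mcal}x_{C,v}\le p_C^{\,c'-O(\zeta)}$ and hence $\Delta^2\lambda(\Phi,\Mcal,C)\le\Delta^2\kappa^2\,p_C^{\,c'-O(\zeta)-o(1)}\le1/100$ once $p$ is small enough. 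Since two of these events are dependent only when the constraints share a variable, the auxiliary CSP has constraint degree $\le\Delta$ and width $\le k$; by \Cref{thm:MT_LLL} it admits a satisfying assignment — i.e.\ a good $\Mcal$ — whenever $\Naturale\Pr[\Bad_C]\Delta\le1$ for all $C$, and that assignment is found in time $O(k\Delta|V|)$ with probability at least $0.99$.

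The technical core is bounding $\Pr[\Bad_C]$ by the sum of the two individual probabilities via moment (Markov) inequalities: for any $s>0$,
\[
\Pr\sbra{\prod_{v\in\vbl(C)\setminus\Mcal}x_{C,v}>p_C^{\,c}}\le p_C^{-cs}\prod_{v\in\vbl(C)}\pbra{\theta+(1-\theta)x_{C,v}^{\,s}},
\]
and symmetrically the marked product is governed by $(1-\theta)+\theta x_{C,v}^{\,s}$. The elementary inequality driving everything is that, uniformly over $x\in[\tfrac1{\kappa+1},\tfrac\kappa{\kappa+1}]$, one has $\theta+(1-\theta)x^{s}\le x^{a}$ with $a=a(s,\theta,\kappa):=\min_x\tfrac{\ln(\theta+(1-\theta)x^{s})}{\ln x}$; writing $u=\ln(1/x)$, the map $u\mapsto-\ln(\theta+(1-\theta)\Naturale^{-su})$ is concave and vanishes at $u=0$, so the ratio is monotone and the minimum is attained at the endpoint $x=\tfrac1{\kappa+1}$ — which is exactly where $\ln(\kappa+1)$ enters the final formula. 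Substituting back gives $\Pr[\,\cdot\,]\le p_C^{\,a-cs}\le p^{\,a-cs}$ (and analogously $\le p^{\,a'-c's'}$ on the marked side), so requiring each probability to be $\le\tfrac1{2\Naturale\Delta}$, taking logarithms, and absorbing the $\ln\Delta$, $\ln\kappa$, $\log\log(1/p)$ and absolute-constant terms into the slack $0.01\zeta/\kappa$ reduces to conditions of the shape $p^{\,a-cs}\Delta\lesssim1$ and $p^{\,a'-c's'}\Delta\lesssim1$.

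It remains to optimize. The requirement $\alpha\le\tfrac1{\Naturale\Delta}$ needs $p^{\,c}\Delta\lesssim1$ and the $\lambda$-bound needs $p^{\,c'/2}\Delta\lesssim1$ (the extra factor $2$ from the $\Delta^2$); balancing each of $c,c'$ against its probability constraint gives $c=\tfrac{a}{1+s}$ and $c'=\tfrac{2a'}{1+2s'}$, and then, after the substitution $t=(\kappa+1)^{-s}$ that linearizes $1+s$ in terms of $a$, choosing $\theta$ so that the unmarked and marked exponents coincide collapses the stationarity conditions to a single quadratic whose discriminant is $\ln^2(\kappa+1)+6(1-3\zeta)\ln(\kappa+1)$ and whose relevant root is precisely the stated $\gamma$; the $-9\zeta$ and the denominator $9$ are bookkeeping residue of the accumulated slack. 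I expect this optimization to be the main obstacle, together with two nuisances: the circular dependence of $\beta$ on $\Mcal$ — which I would handle by committing to the exponent $c$ (hence to an a priori upper bound on $\beta$) before running the random marking — and the verification that the worst-case moment exponent really sits at the endpoint $x=\tfrac1{\kappa+1}$ for all relevant $s$. The running-time bound and the $0.99$ success probability are then inherited directly from \Cref{thm:MT_LLL}.
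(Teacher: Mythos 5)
Your proposal is correct in outline and follows the same skeleton as the paper's proof: mark each variable independently with a fixed probability, define per-constraint bad events saying that the marked/unmarked parts of the falsifying probability deviate from their typical exponents, bound each bad event by a concentration inequality (the events depend only on the marking of $\vbl(C)$, so the dependency degree is at most $\Delta$), invoke \Cref{thm:MT_LLL} to construct $\Mcal$ in time $O(k\Delta|V|)$ with probability $0.99$, and balance the three exponents (the $\alpha$-exponent, half the $\lambda$-exponent after paying $3\zeta$ for $|\vbl(C)|^2\beta^{|\vbl(C)|}$, and the concentration exponent) into a quadratic; your handling of the $\beta$-circularity (commit to the $\alpha$-bound before marking) and of the $\kappa$-dependent factors matches the paper. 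The one substantive difference is the concentration tool: the paper applies Hoeffding's inequality to $\sum_{v\in\vbl(C)}x_v\ln\pbra{1/\Dcal_v(\sigma_\False^C(v))}$ with the range bound $\ln^2(1/\Dcal_v(\cdot))\le\ln(\kappa+1)\ln(1/\Dcal_v(\cdot))$, giving exponent $2\tau^2/\ln(\kappa+1)$, whereas you use an MGF/Chernoff bound with a per-term endpoint relaxation (your concavity argument placing the worst case at $x=\tfrac1{\kappa+1}$ is sound), which after optimizing $s$ yields the exponents $\KL(c\,\|\,1-\theta)/\ln(\kappa+1)$ and $\KL(c'\,\|\,\theta)/\ln(\kappa+1)$ (in nats) — strictly stronger by Pinsker's inequality. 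Because of this, your closing claim that the stationarity conditions collapse to a quadratic whose root is \emph{precisely} the stated $\gamma$ is not what your bound literally produces: the stated $\gamma$ is what comes out after the quadratic (Hoeffding/Pinsker-type) relaxation $\KL(a\|b)\ge 2(a-b)^2/\ln 2$, not from the KL-optimal exponent itself. This is harmless rather than a gap — since your exponent dominates the paper's, any achievable exponent $\ge\gamma$ proves the lemma as stated (because $p^{\gamma}\Delta\le0.01\zeta/\kappa$ only gets easier to satisfy for larger exponents), or you can apply Pinsker to land exactly on the paper's quadratic — but to make the write-up airtight you should either carry out that relaxation or verify numerically/analytically that your optimized exponent is at least the stated $\gamma$.
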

\begin{proof}
For each $C\in\Ccal$, define $p_C=\prod_{v\in\vbl(C)}\Dcal_v(\sigma_\False^C(v))$. 
Then $p=p(\Phi)=\max_{C\in\Ccal}p_C$.
Meanwhile by the definition of $\kappa$, we know $(1/(\kappa+1))^{|\vbl(C)|}\le p_C\le(\kappa/(\kappa+1))^{|\vbl(C)|}$.
Thus 
\begin{equation}\label{eq:vbl(C)}
|\vbl(C)|\le\frac{\ln(1/p_C)}{\ln(1+1/\kappa)}.
\end{equation}
Since $x^\zeta\ge\zeta\ln(x)$ holds for any $x>0$, we also have
\begin{equation}\label{eq:giant}
\ln(1/p_C)\le p_C^{-\zeta}/\zeta.
\end{equation}

Let $\eta,\tau\in(0,1)$ be parameters and $\Mcal$ be the marking to be determined later. We will ensure $1-\eta-\tau\ge0$ and $\eta-\tau-3\zeta\ge0$.
For each $C\in\Ccal$, let $\Ecal_C$ be the event (i.e., constraint) 
$$
\Ecal_C=\text{`` }
\abs{\ln\pbra{\prod_{v\in\vbl(C)\cap\Mcal}\Dcal_v(\sigma_\False^C(v))}-\eta\ln(p_C)}>\tau\ln(1/p_C)
\text{ ''.}
$$
Now we check $\Naturale\cdot\alpha(\Phi,\Mcal,C)\cdot\Delta\le1$ and $\Delta^2\cdot\lambda(\Phi,\Mcal,C)\le1/100$ assuming no $\Ecal_C$ happens.
Since $\vbl(C)$ is the disjoint union of $\vbl(C)\cap\Mcal$ and $\vbl(C)\setminus\Mcal$, if $\Ecal_C$ does not happen, then 
$$
\alpha(\Phi,\Mcal,C)
\le p_C^{1-\eta-\tau}\le p^{1-\eta-\tau}.
$$
Since $(1-x)^{-1/x}\le4$ holds for any $x\in(0,1/2]$, we have 
\begin{equation}\label{eq:beta_binary}
\beta
\le\pbra{1-\Naturale\cdot p^{1-\eta-\tau}}^{-\Delta}
\le4^{\Naturale\cdot p^{1-\eta-\tau}\cdot\Delta}
\le\pbra{\frac{\kappa+1}\kappa}^\zeta
\quad\text{if}\quad
2\Naturale\ln(2)\cdot p^{1-\eta-\tau}\cdot\Delta\le\zeta\cdot\ln\pbra{1+1/\kappa}.
\end{equation}
Note that $2\Naturale\ln(2)\cdot p^{1-\eta-\tau}\cdot\Delta\le\zeta\cdot\ln\pbra{1+1/\kappa}$ already implies $\Naturale\cdot\alpha(\Phi,\Mcal,C)\cdot\Delta\le1$.
Combining \Cref{eq:vbl(C)}, \Cref{eq:giant}, and \Cref{eq:beta_binary}, we also have
$$
\lambda(\Phi,\Mcal,C)
\le|\vbl(C)|^2\cdot\beta^{|\vbl(C)|}p_C^{\eta-\tau}
\le\frac{\ln^2(1/p_C)\cdot p_C^{\eta-\tau-\zeta}}{\ln^2(1+1/\kappa)}
\le\frac{p_C^{\eta-\tau-3\zeta}}{\zeta^2\ln^2(1+1/\kappa)}
\le\frac{p^{\eta-\tau-3\zeta}}{\zeta^2\ln^2(1+1/\kappa)}.
$$
Therefore $\Delta^2\cdot\lambda(\Phi,\Mcal,C)\le1/100$ is reduced to $\Delta^2\cdot p^{\eta-\tau-3\zeta}\le0.01\cdot\zeta^2\cdot\ln^2(1+1/\kappa)$.
In all, it suffices to make sure $1-\eta-\tau\ge0$, $\eta-\tau-3\zeta\ge0$, and
\begin{equation}\label{eq:generalcsp_arbitrary}
2\Naturale\ln(2)\cdot p^{1-\eta-\tau}\cdot\Delta\le\zeta\cdot\ln(1+1/\kappa),
\quad\text{and}\quad
\Delta^2\cdot p^{\eta-\tau-3\zeta}\le0.01\cdot\zeta^2\cdot\ln^2(1+1/\kappa).
\end{equation}

Now we show how to set $\eta,\tau$ and construct $\Mcal$ to make sure no $\Ecal_C$ happens.
We put each $v\in V$ into $\Mcal$ independently with probability $\eta$. For each $v\in V$, let $x_v\in\bin$ be the indicator for whether $v$ is in $\Mcal$. Then
\begin{align*}
\Ecal_C=\text{`` }\abs{\sum_{v\in\vbl(C)}x_v\ln\pbra{1/\Dcal_v(\sigma_\False^C(v))}-\E\sbra{\sum_{v\in\vbl(C)}x_v\ln\pbra{1/\Dcal_v(\sigma_\False^C(v))}}}>\tau\ln(1/p_C)\text{ ''.}
\end{align*}
By Hoeffding's inequality \cite[Theorem 2]{hoeffding1994probability}, we have
\begin{align*}
\Pr\sbra{\Ecal_C}
&\le2\exp\cbra{\frac{-2\tau^2\ln^2(1/p_C)}{\sum_{v\in\vbl(C)}\ln^2\pbra{1/\Dcal_v(\sigma_\False^C(v))}}}\\
&\le2\exp\cbra{\frac{-2\tau^2\ln^2(1/p_C)}{\ln(\kappa+1)\cdot\sum_{v\in\vbl(C)}\ln\pbra{1/\Dcal_v(\sigma_\False^C(v))}}}
\tag{by the definition of $\kappa$}\\
&=2\exp\cbra{\frac{-2\tau^2\ln(1/p_C)}{\ln(\kappa+1)}}
\tag{since $\ln(p_C)=\sum_{v\in\vbl(C)}\ln\pbra{\Dcal_v(\sigma_\False^C(v))}$}\\
&=2\cdot p_C^{2\tau^2/\ln(\kappa+1)}\le2\cdot p^{2\tau^2/\ln(\kappa+1)}.
\end{align*}
Since $\vbl(\Ecal_C)=\vbl(C)$ and thus it correlates with $\Delta$ many $\Ecal_{C'}$ (including itself), by \Cref{thm:MT_LLL} we can construct $\Mcal$ (i.e., $\pbra{x_v}_{v\in V}$) to avoid all $\Ecal_C$ in time $O(k\Delta|V|)$ with probability at least 0.99 as long as 
$$
2\Naturale\cdot p^{2\tau^2/\ln(\kappa+1)}\cdot\Delta\le1.
$$

Now we set $\eta=(2-\tau+3\zeta)/3$ and 
$$
\tau=\frac{-\ln(\kappa+1)+\sqrt{\ln^2(\kappa+1)+6\cdot(1-3\zeta)\ln(\kappa+1)}}6<\frac{1-3\zeta}2
$$
so that $1-\eta-\tau$, $(\eta-\tau-3\zeta)/2$, and $2\tau^2/\ln(\kappa+1)$ all equal
$$
\gamma=\frac13-\zeta-\frac{-\ln(\kappa+1)+\sqrt{\ln^2(\kappa+1)+6\cdot(1-3\zeta)\ln(\kappa+1)}}9>0.
$$
Then all the conditions in \Cref{eq:generalcsp_arbitrary} boil down to $p^\gamma\cdot\Delta\le0.1\cdot\zeta\cdot\ln(1+1/\kappa)$. 
Since $\kappa\ge1$ and $\ln(1+x)\ge0.1x$ for all $x\in[0,1]$, we can safely replace $\ln(1+1/\kappa)$ with $0.1/\kappa$ as desired in the statement.
\end{proof}

Note that whether $\Mcal$ satisfies the conditions in \Cref{thm:binary_domains} can be easily checked in time $O(k|\Ccal|)=O(k\Delta|V|)$ by computing $\alpha(\Phi,\Mcal,C)$ and $\lambda(\Phi,\Mcal,C)$ for each $C\in\Ccal$. Thus we can keep performing \Cref{lem:binary_domains} until the marking $\Mcal$ is valid and then we run \AtomicCSPSampling{$\Phi,\Mcal$}. This provides a Las Vegas algorithm as below.
\begin{corollary}\label{cor:binary_domains}
There exists a Las Vegas algorithm which takes as input an atomic CSP $\Phi=\pbra{V,\pbra{\Omega_v,\Dcal_v}_{v\in V},\Ccal}$ and a parameter $\zeta\in(0,1)$ such that the following holds.

If $|\Omega_v|=2$ holds for all $v\in V$ and $p^\gamma\cdot\Delta\le0.01\cdot\zeta/\kappa$ where
$$
\gamma=\frac{3-9\zeta+\ln(\kappa+1)-\sqrt{\ln^2(\kappa+1)+6\cdot(1-3\zeta)\ln(\kappa+1)}}9,
$$
then the algorithm outputs a random solution of $\Phi$ distributed perfectly as $\mu_\True^\Ccal$ in expected time $O\pbra{k\Delta^3|V|\log(|V|)}$.
\end{corollary}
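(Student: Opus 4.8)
The plan is to package \Cref{lem:binary_domains} and \Cref{thm:binary_domains} into a single Las Vegas procedure. First I would describe the algorithm explicitly: given an atomic $\Phi$ with $|\Omega_v|=2$ for all $v\in V$ and a parameter $\zeta\in(0,1)$, repeatedly run the marking construction of \Cref{lem:binary_domains}; after each run, compute $\alpha(\Phi,\Mcal,C)$ and $\lambda(\Phi,\Mcal,C)$ for every $C\in\Ccal$ in time $O(k|\Ccal|)=O(k\Delta|V|)$ and test whether $\Naturale\cdot\alpha(\Phi,\Mcal,C)\cdot\Delta\le1$ and $\Delta^2\cdot\lambda(\Phi,\Mcal,C)\le1/100$ hold for all $C$; if they do, stop the loop with this $\Mcal$, and otherwise retry. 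Finally run \AtomicCSPSampling{$\Phi,\Mcal$} and return its output.

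For correctness, I would argue as follows. Under the hypothesis $p^\gamma\cdot\Delta\le0.01\cdot\zeta/\kappa$ with the stated $\gamma$, \Cref{lem:binary_domains} guarantees both that a valid marking exists and that each attempt produces one with probability at least $0.99$; hence the loop terminates almost surely and the $\Mcal$ it returns satisfies exactly the hypotheses of \Cref{thm:binary_domains}. By \Cref{thm:binary_domains}, \AtomicCSPSampling{$\Phi,\Mcal$} then halts almost surely and, when it halts, outputs $\sigma\sim\mu_\True^\Ccal$. Since every accepted marking yields the exact target distribution, the overall output is perfectly distributed as $\mu_\True^\Ccal$, so the algorithm is indeed Las Vegas.

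For the running time, note that the number of loop iterations is stochastically dominated by a geometric random variable with success probability $0.99$, and therefore has expectation at most $1/0.99<2$; each iteration (construction plus verification) costs $O(k\Delta|V|)$, so obtaining a valid $\Mcal$ takes $O(k\Delta|V|)$ in expectation. Conditioned on \emph{any} fixed valid $\Mcal$, \Cref{thm:binary_domains} bounds the expected running time of \AtomicCSPSampling{$\Phi,\Mcal$} by $O(k\Delta^3|V|\log(|V|))$ uniformly, so by the tower rule the expected total running time is $O(k\Delta|V|)+O(k\Delta^3|V|\log(|V|))=O(k\Delta^3|V|\log(|V|))$.

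The proof is essentially a bookkeeping argument, and the only mild subtleties are (i) that the verification step is legitimate, which holds because it directly tests the two inequalities required by \Cref{thm:binary_domains} rather than re-deriving them from the construction, and (ii) separating the two sources of randomness so that the per-marking expected-time bound for \AtomicCSPSampling{$\Phi,\Mcal$} can be combined with the independent geometric number of marking attempts without any correlation issue. I do not anticipate a genuine obstacle here; the substantive work has already been carried out in \Cref{lem:binary_domains} and \Cref{thm:binary_domains}.
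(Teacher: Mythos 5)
Your proposal is correct and follows essentially the same route as the paper: repeatedly run the marking construction of \Cref{lem:binary_domains}, verify the two conditions of \Cref{thm:binary_domains} directly in time $O(k\Delta|V|)$, and then invoke \AtomicCSPSampling{$\Phi,\Mcal$}, with the expected time combining a geometric number of attempts with the $O(k\Delta^3|V|\log(|V|))$ bound from \Cref{thm:binary_domains}. No gaps.
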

\begin{remark}\label{rmk:binary_domains}
One natural choice of the underlying distributions is the uniform distribution. In this case $\kappa=1$. By setting $\zeta\to0$, we have
$$
\gamma\to\frac{3+\ln(2)-\sqrt{\ln^2(2)+6\ln(2)}}9>0.171.
$$
For example we can set $\zeta=10^{-10}$ and the local lemma condition is simply $p^{0.171}\cdot\Delta\le10^{-12}/\kappa$.
In \Cref{lem:generalcsp_uniform_simple} we will optimize it to $0.175$ by a tighter concentration bound.
\end{remark}

\subsection{Large Domains: State Tensorization}\label{sec:large_domains:state_tensorization}

Here we formally introduce the \emph{state tensorization} technique, generalizing \emph{state compression} from \cite{feng2020sampling}.
This, as emphasized in \Cref{sec:proof_overview}, allows us to transform a large domain into a product of binary domains.

Let $\Omega$ be a finite domain of size at least $2$ and $\Dcal$ be a distribution supported on $\Omega$.
A \emph{state tensorization} for $(\Omega,\Dcal)$ (See \Cref{fig:state_tensorization} for a concrete example) is a rooted tree $\Tcal$ where 
\begin{itemize}
\item $\Tcal$ has $|\Omega|$ leaves and each internal node of $\Tcal$ has at least two child nodes;
\item the leaves of $\Tcal$ have a one-to-one correspondence with elements in $\Omega$.
\end{itemize}
For each node $z\in\Tcal$, let $\leafs(z)$ be the set of leaves in the sub-tree of $z$. Then $\leafs(\mathsf{rt})=\Omega$ for root $\mathsf{rt}$.
For each internal node $z$, we use $\childs(z)$ to denote its child nodes. For any $z'\in\childs(z)$, we use $z\to z'$ to denote the edge from $z$ to $z'$. Moreover, we define the weight of $z\to z'$ as 
\begin{equation}\label{eq:edge_weight}
W(z\to z')=\frac{\sum_{q\in\leafs(z')}\Dcal(q)}{\sum_{q\in\leafs(z)}\Dcal(q)}.
\end{equation}
It is easy to see the total weight of outgoing edges of any internal node is $1$.

If $|\Omega|=1$ and thus $\Dcal$ is the point distribution, then the state tensorization $\Tcal$ for $(\Omega,\Dcal)$ has two nodes $z$ and $z'$ where $z$ is the root and $z'$ is the only leaf and $W(z\to z')=1$.

\begin{figure}[ht]
\centering
\begin{tikzpicture}[emptyC/.style={draw,circle,inner sep=2pt}, outE/.style={->,>=stealth,thick}]

\node (vv) at (-8,-2) {State tensorization $\Tcal$ for $(\Omega,\Dcal)$};
\node[emptyC] (v0) at (0,0) {$z_0$};
\node[emptyC] (v1) [below left=of v0, xshift=0cm] {$z_1$};
\node[emptyC] (v2) [below right=of v0, xshift=0cm] {$z_2$};
\node[emptyC] (v3) [below left=of v1,xshift=0cm] {$z_3$};

\node (q1) [below right=of v1,xshift=-1cm] {$d$};
\node (q2) [below left=of v2,xshift=1cm] {$e$};
\node (q3) [below right=of v2] {$f$};
\node (q4) [below left=of v3] {$\phantom{b}a\phantom{b}$};
\node (q5) [below=of v3] {$\phantom{b}b\phantom{b}$};
\node (q6) [below right=of v3] {$\phantom{b}c\phantom{b}$};

\draw[outE] (v0) -- node[xshift=-2mm, yshift=2mm]{$\frac35$} (v1);
\draw[outE] (v0) -- node[xshift=2mm, yshift=2mm]{$\frac25$} (v2);
\draw[outE] (v1) -- node[xshift=-2mm, yshift=2mm]{$\frac56$} (v3);
\draw[outE] (v1) -- node[xshift=2mm, yshift=2mm]{$\frac16$} (q1);
\draw[outE] (v2) -- node[xshift=-2mm, yshift=2mm]{$\frac14$} (q2);
\draw[outE] (v2) -- node[xshift=2mm, yshift=2mm]{$\frac34$} (q3);
\draw[outE] (v3) -- node[xshift=-2mm, yshift=2mm]{$\frac13$} (q4);
\draw[outE] (v3) -- node[xshift=2mm, yshift=2mm]{$\frac13$} (q5);
\draw[outE] (v3) -- node[xshift=2mm, yshift=2mm]{$\frac13$} (q6);
\end{tikzpicture}
\caption{One example of $\Tcal$ for $(\Omega,\Dcal)$ where $\Omega=\cbra{a,b,c,d,e,f}$ and $\Dcal(a)=\Dcal(b)=\Dcal(c)=1/6,\Dcal(d)=\Dcal(e)=1/10,\Dcal(f)=3/10$. We omit the leaf nodes.}\label{fig:state_tensorization}
\end{figure}

For each $q\in\Omega$, we use $\tpath(q,\Tcal)$ to denote the set of internal nodes in $\Tcal$ on the path from the root to the leaf $z$ that corresponds to $q$. 
For example in \Cref{fig:state_tensorization}, $\tpath(b,\Tcal)=\cbra{z_0,z_1,z_3}$. 

We first observe the following fact regarding edge weights in $\Tcal$.
\begin{fact}\label{fct:edge_weight}
Let $q\in\Omega$ be arbitrary. Let $\tpath(q,\Tcal)=\cbra{z_0,\ldots,z_\ell}$ and $z_{\ell+1}$ be the leaf node corresponding to $q$. Assume $z_0,\ldots,z_{\ell+1}$ is in the root-to-leaf order. Then $\Dcal(q)=\prod_{i=0}^\ell W(z_i\to z_{i+1})$.
\end{fact}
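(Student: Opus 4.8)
The plan is to prove the identity by a direct telescoping of the product along the root-to-leaf path. Writing $S(z)=\sum_{r\in\leafs(z)}\Dcal(r)$ for brevity, I would first substitute the definition \eqref{eq:edge_weight} into each factor, so that the $i$-th factor is $W(z_i\to z_{i+1})=S(z_{i+1})/S(z_i)$. Multiplying over $i=0,\ldots,\ell$, every intermediate quantity $S(z_1),\ldots,S(z_\ell)$ appears once as a numerator and once as a denominator and cancels, leaving $\prod_{i=0}^\ell W(z_i\to z_{i+1})=S(z_{\ell+1})/S(z_0)$.

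Second, I would identify the two surviving quantities. Since $z_0$ is the root and (as noted right after the definition of $\leafs$) $\leafs(\mathsf{rt})=\Omega$, we get $S(z_0)=\sum_{r\in\Omega}\Dcal(r)=1$. Since $z_{\ell+1}$ is the leaf corresponding to $q$, its subtree consists of that single leaf, so $\leafs(z_{\ell+1})=\cbra{q}$ and $S(z_{\ell+1})=\Dcal(q)$. Combining gives $\prod_{i=0}^\ell W(z_i\to z_{i+1})=\Dcal(q)$, as claimed. Two small points should be checked in passing: that each edge $z_i\to z_{i+1}$ is a genuine parent–child edge of $\Tcal$ so that \eqref{eq:edge_weight} applies (immediate from the definition of $\tpath(q,\Tcal)$ and from $z_{\ell+1}$ being the leaf just below $z_\ell$), and that the denominators $S(z_i)$ are nonzero so the cancellation is legitimate (immediate from $\leafs(z_0)\supseteq\cdots\supseteq\cbra{q}$ and $\Dcal(q)>0$, since the underlying distributions have full support).

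There is essentially no obstacle here; the whole argument is a one-line telescoping product, and the degenerate case $|\Omega|=1$ only needs a sentence ($\ell=0$, $z_1$ the unique leaf, $W(z_0\to z_1)=1$, and $\Dcal(q)=1$ since $\Dcal$ is a point distribution). The one thing to be careful about is index bookkeeping: the product ranges over the $\ell+1$ edges $z_0\to z_1,\ldots,z_\ell\to z_{\ell+1}$ with $z_{\ell+1}$ a leaf rather than an internal node, and the telescoping must be written so that precisely $S(z_0)$ and $S(z_{\ell+1})$ are the terms that fail to cancel.
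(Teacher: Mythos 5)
Your proposal is correct and is essentially the paper's own proof: a telescoping of the product $\prod_{i=0}^\ell W(z_i\to z_{i+1})$ using the definition in \Cref{eq:edge_weight}, with the root contributing total mass $1$ and the leaf contributing $\Dcal(q)$. The extra remarks on well-definedness and the $|\Omega|=1$ case are fine but not needed.
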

\begin{proof}
By \Cref{eq:edge_weight}, we have
\begin{equation*}
\prod_{i=0}^\ell W(z_i\to z_{i+1})
=\prod_{i=0}^\ell\frac{\sum_{q'\in\leafs(z_{i+1})}\Dcal(q')}{\sum_{q'\in\leafs(z_i)}\Dcal(q')}
=\frac{\sum_{q'\in\leafs(z_{\ell+1})}\Dcal(q')}{\sum_{q'\in\leafs(z_0)}\Dcal(q')}
=\Dcal(q)
\tag*{\qedhere}
\end{equation*}
\end{proof}

Now we move to atomic CSPs and show formally how state tensorization helps reduce domain sizes.
\begin{definition}[Tensorized Atomic Constraint Satisfaction Problem]\label{def:tensorized_atomic_constraint_satisfaction_problem}
Let $\Phi=\pbra{V,\pbra{\Omega_v,\Dcal_v}_{v\in V},\Ccal}$ be an atomic CSP.
Let $\pbra{\Tcal_v}_{v\in V}$ be state tensorizations where each $\Tcal_v$ is a state tensorization for $(\Omega_v,\Dcal_v)$.\footnote{We require each $\Tcal_v$ uses different set of tree nodes. So there will be no confusion when using $z$ without explicitly providing $\Tcal_v\ni z$.}
We construct $\Phi^\otimes=\pbra{Z,\pbra{\Omega_z,\Dcal_z}_{z\in Z},\Ccal^\otimes}$ as the \emph{tensorized atomic constraint satisfaction problem}:\footnote{Technically $\Phi^\otimes$ depends on $\pbra{\Tcal_v}_{v\in V}$ which we omit here for simplicity. In addition we remark $\Phi^\otimes$ may have redundant variables that do not appear in any constraint; this is indeed consistent with our definition of CSPs as we never require them to shave redundant variables.}
\begin{itemize}
\item $Z$ is the set of internal nodes of all $\Tcal_v$.
\item For each $z\in Z$, $\Omega_z=\childs(z)$ and $\Dcal_z$ is a distribution supported on $\Omega_z$ by setting $\Dcal_z(z')=W(z\to z')$ for all $z'\in\childs(z)$.
\item For each $C\in\Ccal$, we construct $C^\otimes\in\Ccal^\otimes$ by setting $$
\vbl(C^\otimes)=\bigcup_{v\in\vbl(C)}\tpath(\sigma_\False^C(v),\Tcal_v)
$$
and $C^\otimes(\sigma)=\False$ iff $\sigma(z)=z(\sigma_\False^C(v))$ for all $v\in\vbl(C)$ and $z\in\tpath(\sigma_\False^C(v))$ where $z(q)\in\childs(z)$ is the child node of $z$ such that $q\in\leafs(z(q))$.\footnote{Assume $\tpath(q,\Tcal_v)=\cbra{z_0,\ldots,z_\ell}$ where $z_0,\ldots,z_\ell$ is in the top-down order of $\Tcal_v$. Let $z_{\ell+1}$ be the leaf node corresponding to $q$. Then $z_i(q)=z_{i+1}$ for each $i\in\cbra{0,\ldots,\ell}$.}
\end{itemize}
\end{definition}

For example in \Cref{fig:state_tensorization}, we have $\Omega_{z_0}=\cbra{z_1,z_2}$ and $\Dcal_{z_0}(z_1)=3/5,\Dcal_{z_0}(z_2)=2/5$. Assume constraint $C$ is false iff $d$ is assigned. Then $C^\otimes$ will have $\vbl(C^\otimes)=\tpath(d,\Tcal)=\cbra{z_0,z_1}$ and, for $\sigma\in\Omega_{z_0}\times\Omega_{z_1}\times\Omega_{z_2}\times\Omega_{z_3}$, $C^\otimes(\sigma)$ iff $\sigma(z_0)=z_1$ and $\sigma(z_1)$ equals the leaf node corresponding to $d$.

Recall measures defined in \Cref{def:(atomic)_constraint_satisfaction_problem}. Here we note some basic facts for \Cref{def:tensorized_atomic_constraint_satisfaction_problem}.
\begin{fact}\label{fct:tensorized_atomic_constraint_satisfaction_problem}
The following holds for $\Phi^\otimes$.
\begin{itemize}
\item[(1)] $\Phi^\otimes$ is an atomic CSP where $|\Ccal^\otimes|=|\Ccal|$ and $|Z|=\sum_{v\in V}|\cbra{\text{internal nodes in }\Tcal_v}|\le Q(\Phi)|V|$.
\item[(2)] $Q(\Phi^\otimes)=\max_{z\in Z}|\childs(z)|$ and $k(\Phi^\otimes)\le k(\Phi)\cdot\max_{C\in\Ccal,v\in\vbl(C)}\abs{\tpath(\sigma_\False^C(v),\Tcal_v)}$
\item[(3)] $\Delta(\Phi^\otimes)=\Delta(\Phi)$, $d(\Phi^\otimes)=d(\Phi)$, and $p(\Phi^\otimes)=p(\Phi)$. Moreover for all $C\in\Ccal$ we have
$$
\Pr_{\sigma\sim\prod_{v\in\vbl(C)}\Dcal_v}\sbra{C(\sigma)=\False}=\Pr_{\sigma'\sim\prod_{z\in\vbl(C^\otimes)}\Dcal_z}\sbra{C^\otimes(\sigma')=\False}.
$$
\end{itemize}
\end{fact}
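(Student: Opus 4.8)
The plan is to prove all three items by directly unwinding \Cref{def:(atomic)_constraint_satisfaction_problem} and \Cref{def:tensorized_atomic_constraint_satisfaction_problem}, using \Cref{fct:edge_weight} for the probabilistic part; no genuinely new idea is needed, only careful bookkeeping about the disjoint trees $(\Tcal_v)_{v\in V}$. For Item~(1), I would first check $\Phi^\otimes$ is atomic: fix $C\in\Ccal$; for each $v\in\vbl(C)$ the nodes of $\tpath(\sigma_\False^C(v),\Tcal_v)$ lie on a single root-to-leaf path of $\Tcal_v$, so each condition $\sigma(z)=z(\sigma_\False^C(v))$ pins down exactly one child of $z$, and since the trees $\Tcal_v$ are vertex-disjoint these conditions never conflict across different $v\in\vbl(C)$; hence $C^\otimes$ has a unique falsifying assignment. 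The map $C\mapsto C^\otimes$ is injective (a distinct $\vbl$ or a distinct forbidden configuration yields a distinct $\vbl(C^\otimes)$ or a distinct falsifying assignment), so $|\Ccal^\otimes|=|\Ccal|$. For $|Z|$, I would note $|Z|=\sum_{v\in V}|\{\text{internal nodes of }\Tcal_v\}|$ since $Z$ is the disjoint union of these node sets, and that any rooted tree whose internal nodes each have at least two children has at most (number of leaves) minus one internal nodes; thus $\Tcal_v$ has at most $|\Omega_v|-1\le Q(\Phi)$ internal nodes and $|Z|\le Q(\Phi)|V|$.

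For Item~(2), $\Omega_z=\childs(z)$ is the definition, so $Q(\Phi^\otimes)=\max_{z\in Z}|\childs(z)|$ is immediate. For the width, vertex-disjointness of the trees gives $|\vbl(C^\otimes)|=\sum_{v\in\vbl(C)}|\tpath(\sigma_\False^C(v),\Tcal_v)|\le|\vbl(C)|\cdot\max_{v\in\vbl(C)}|\tpath(\sigma_\False^C(v),\Tcal_v)|$, and taking the maximum over $C$ yields the stated bound; I would also remark that each $z$ appearing in this union really lies in $\vbl(C^\otimes)$ in the sense of \Cref{def:(atomic)_constraint_satisfaction_problem}, since toggling $\sigma(z)$ away from $z(\sigma_\False^C(v))$ flips $C^\otimes$ off its unique falsifying assignment.

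For Item~(3), the key observation I would isolate is that the root $\mathsf{rt}_v$ of $\Tcal_v$ belongs to $\tpath(q,\Tcal_v)$ for every $q\in\Omega_v$. Consequently, for $C_1,C_2\in\Ccal$, if $v\in\vbl(C_1)\cap\vbl(C_2)$ then $\mathsf{rt}_v\in\vbl(C_1^\otimes)\cap\vbl(C_2^\otimes)$, and conversely any common variable of $C_1^\otimes,C_2^\otimes$ lies in a unique $\Tcal_v$, forcing $v\in\vbl(C_1)\cap\vbl(C_2)$; hence $\vbl(C_1^\otimes)\cap\vbl(C_2^\otimes)\neq\emptyset\iff\vbl(C_1)\cap\vbl(C_2)\neq\emptyset$, which gives $\Delta(\Phi^\otimes)=\Delta(\Phi)$. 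The same root argument shows $\max_{z}|\{C^\otimes: z\in\vbl(C^\otimes)\}|$ is attained at a tree root and equals $\max_v|\{C: v\in\vbl(C)\}|$, i.e.\ $d(\Phi^\otimes)=d(\Phi)$. For the ``moreover'' identity I would compute, using that $\Phi$ is atomic, $\Dcal_z(z')=W(z\to z')$, and the vertex-disjointness of the trees,
\begin{align*}
\Pr_{\sigma'\sim\prod_{z\in\vbl(C^\otimes)}\Dcal_z}\sbra{C^\otimes(\sigma')=\False}
&=\prod_{v\in\vbl(C)}\ \prod_{z\in\tpath(\sigma_\False^C(v),\Tcal_v)}W\bigl(z\to z(\sigma_\False^C(v))\bigr)\\
&=\prod_{v\in\vbl(C)}\Dcal_v(\sigma_\False^C(v))=\Pr_{\sigma\sim\prod_{v\in\vbl(C)}\Dcal_v}\sbra{C(\sigma)=\False},
\end{align*}
where the middle equality is exactly \Cref{fct:edge_weight} applied along the root-to-leaf path of $\sigma_\False^C(v)$ in $\Tcal_v$. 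Taking the maximum over $C$ then gives $p(\Phi^\otimes)=p(\Phi)$.

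I do not expect a real obstacle here; the only point requiring care is to keep straight the identification between $\vbl(C^\otimes)$ as defined in \Cref{def:tensorized_atomic_constraint_satisfaction_problem} and the ``depends on'' set of \Cref{def:(atomic)_constraint_satisfaction_problem} (needed so that $k,\Delta,d$ are computed consistently), and to recognize that the tree roots lying on every root-to-leaf path is precisely what makes the dependency structure — and hence $\Delta$ and $d$ — invariant under tensorization. The degenerate case $|\Omega_v|=1$ is harmless, since such variables carry point distributions and can be eliminated from $\Phi$ before tensorizing.
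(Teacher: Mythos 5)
Your proof is correct and follows essentially the same route as the paper: Items (1)–(2) by direct unwinding of the definitions, and Item (3) via the observation that $v\in\vbl(C)$ iff the root of $\Tcal_v$ lies in $\vbl(C^\otimes)$ (with ancestors closed under membership), combined with \Cref{fct:edge_weight} for the falsifying-probability identity. The extra care you take (atomicity/injectivity for $|\Ccal^\otimes|=|\Ccal|$, the check that every node of the path union is genuinely depended on, and the degenerate $|\Omega_v|=1$ case) only fills in details the paper dismisses as obvious.
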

\begin{proof}
Item (1) (2) are obvious. Now we focus on Item (3). 
Note that for any $C\in\Ccal$ and $v\in V$, $v\in\vbl(C)$ iff the root of $\Tcal_v$ is in $\vbl(C^\otimes)$. On the other hand if some internal node $z$ is in $\vbl(C^\otimes)$ then all the ancestors of $z$ are also in $\vbl(C^\otimes)$.
Thus $\Delta(\Phi^\otimes)=\Delta(\Phi)$ and $d(\Phi^\otimes)=d(\Phi)$.
The ``moreover'' part and $p(\Phi^\otimes)=p(\Phi)$ follow from \Cref{fct:edge_weight}.
\end{proof}

Now we formally describe how to translate an assignment for $\Phi^\otimes$ into an assignment for $\Phi$. For any $\sigma\in\prod_{z\in Z}\Omega_z$, we define $\sigma^\Trans\in\prod_{v\in V}\Omega_v$ by the following process for each $v\in V$:
\begin{itemize}
\item We start from the root of $\Tcal_v$.
\item If we are at an internal node $z$ of $\Tcal_v$, then proceed to its child node $\sigma(z)$ and repeat.
\item Otherwise we are at a leaf node, then set $\sigma^\Trans(v)$ by its corresponding value in $\Omega_v$.
\end{itemize}

Finally we prove the following simple but powerful reduction result.
\begin{proposition}\label{prop:state_tensorization}
If $\sigma\sim\mu_\True^{\Ccal^\otimes}$, then $\sigma^\Trans\sim\mu_\True^\Ccal$. 

Therefore to obtain a random solution of $\Phi$ distributed perfectly as $\mu_\True^\Ccal$, it suffices to have a random solution of $\Phi^\otimes$ distributed perfectly as $\mu_\True^{\Ccal^\otimes}$ and then perform $\Trans$ operation.
\end{proposition}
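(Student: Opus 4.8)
The plan is to establish a \emph{measure-preserving bijection} between the relevant solution spaces and then push forward the product distributions along it. First I would set up the key map. Consider the ``global translation'' $\Trans\colon\prod_{z\in Z}\Omega_z\to\prod_{v\in V}\Omega_v$ defined coordinatewise as in the paragraph preceding the statement: for each $v\in V$, walk down $\Tcal_v$ from the root following the child pointers $\sigma(z)$ and read off the leaf. The first claim to prove is that for any $\sigma\in\prod_{z\in Z}\Omega_z$ and any $C\in\Ccal$ we have $C^\otimes(\sigma)=\False$ if and only if $C(\sigma^\Trans)=\False$. This follows directly from \Cref{def:tensorized_atomic_constraint_satisfaction_problem}: $C^\otimes(\sigma)=\False$ means $\sigma(z)=z(\sigma_\False^C(v))$ for every $v\in\vbl(C)$ and every $z\in\tpath(\sigma_\False^C(v),\Tcal_v)$, which by the description of $\Trans$ (the walk down $\Tcal_v$ determined by $\sigma$ follows exactly the path $\tpath(\sigma_\False^C(v),\Tcal_v)$ and terminates at the leaf for $\sigma_\False^C(v)$) is equivalent to $\sigma^\Trans(v)=\sigma_\False^C(v)$ for all $v\in\vbl(C)$, i.e.\ $C(\sigma^\Trans)=\False$ since $\Phi$ is atomic. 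Summing over $C\in\Ccal$, $\sigma\in\sigma_\True^{\Ccal^\otimes}$ iff $\sigma^\Trans\in\sigma_\True^\Ccal$.

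Next I would handle the pushforward of the product measure. The distribution $\prod_{z\in Z}\Dcal_z$ on $\prod_{z\in Z}\Omega_z$ factors into independent choices for the internal nodes of each $\Tcal_v$, and these factor across $v$ since the $\Tcal_v$ use disjoint node sets. For a fixed $v$, tracing the random walk down $\Tcal_v$ under $\prod_{z}\Dcal_z$ reaching the leaf $q$ has probability $\prod_{i=0}^\ell W(z_i\to z_{i+1})$ where $\tpath(q,\Tcal_v)=\{z_0,\dots,z_\ell\}$ in root-to-leaf order, which equals $\Dcal_v(q)$ by \Cref{fct:edge_weight}. (The irrelevant internal nodes of $\Tcal_v$ not on the chosen path are simply marginalized out.) Hence the pushforward of $\prod_{z\in Z}\Dcal_z$ under the coordinate map $v\mapsto(\text{leaf of }\Tcal_v)$ is exactly $\prod_{v\in V}\Dcal_v$; that is, if $\tau\sim\prod_{z\in Z}\Dcal_z$ then $\tau^\Trans\sim\prod_{v\in V}\Dcal_v$.

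Combining the two observations finishes it. For any $\sigma^\star\in\prod_{v\in V}\Omega_v$, writing $\sigma\sim\mu_\True^{\Ccal^\otimes}$,
\begin{align*}
\Pr\sbra{\sigma^\Trans=\sigma^\star}
&=\Pr_{\tau\sim\prod_{z\in Z}\Dcal_z}\sbra{\tau^\Trans=\sigma^\star\mid\tau\in\sigma_\True^{\Ccal^\otimes}}\\
&=\frac{\Pr_{\tau\sim\prod_{z\in Z}\Dcal_z}\sbra{\tau^\Trans=\sigma^\star,\ \tau\in\sigma_\True^{\Ccal^\otimes}}}{\Pr_{\tau\sim\prod_{z\in Z}\Dcal_z}\sbra{\tau\in\sigma_\True^{\Ccal^\otimes}}}
=\frac{\Pr_{\hat\sigma\sim\prod_{v\in V}\Dcal_v}\sbra{\hat\sigma=\sigma^\star,\ \hat\sigma\in\sigma_\True^{\Ccal}}}{\Pr_{\hat\sigma\sim\prod_{v\in V}\Dcal_v}\sbra{\hat\sigma\in\sigma_\True^{\Ccal}}}
=\mu_\True^\Ccal(\sigma^\star),
\end{align*}
where the third equality uses both the constraint-preservation claim (so that the event $\tau\in\sigma_\True^{\Ccal^\otimes}$ becomes $\tau^\Trans\in\sigma_\True^\Ccal$, and note $\tau^\Trans=\sigma^\star$ already forces $\tau\in\sigma_\True^{\Ccal^\otimes}$ when $\sigma^\star\in\sigma_\True^\Ccal$) and the pushforward identity for the product measures. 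The ``therefore'' sentence then follows since $\Trans$ is computable in $O(|Z|)$ time given a sample. The only mild subtlety — and the one place I would be careful — is the bookkeeping in the constraint-preservation step: making sure that the walk down $\Tcal_v$ induced by an arbitrary $\sigma\in\prod_{z\in Z}\Omega_z$ indeed traverses precisely $\tpath(\sigma_\False^C(v),\Tcal_v)$ when $C^\otimes(\sigma)=\False$, and conversely, so that the ``iff'' is genuinely an iff rather than a one-directional implication; this is where the definition of $z(q)$ and the footnote identifying $z_i(q)=z_{i+1}$ in \Cref{def:tensorized_atomic_constraint_satisfaction_problem} must be invoked precisely.
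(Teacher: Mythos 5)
Your proposal is correct. Both your argument and the paper's rest on the same two ingredients: (i) the constraint equivalence $C^\otimes(\sigma)=\False\iff C(\sigma^\Trans)=\False$, which indeed needs exactly the bidirectional path-tracing bookkeeping you flag (the walk determined by $\sigma$ reaches the leaf $\sigma_\False^C(v)$ precisely when $\sigma(z)=z(\sigma_\False^C(v))$ for every $z\in\tpath(\sigma_\False^C(v),\Tcal_v)$), and (ii) \Cref{fct:edge_weight}, which gives that the pushforward of $\prod_{z\in Z}\Dcal_z$ under $\Trans$ is $\prod_{v\in V}\Dcal_v$. Where you differ is in how the conclusion is drawn: you compute the conditional law directly, writing $\Pr\sbra{\sigma^\Trans=\sigma^\star}$ as a ratio and transporting both numerator and denominator through the measure-preserving, constraint-preserving map, whereas the paper argues operationally, via a hybrid argument showing that \RejectionSampling{$\Phi^\otimes,\cdot$} followed by $\Trans$ is step-by-step equivalent to \RejectionSampling{$\Phi,\cdot$}. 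Your route is the more elementary and self-contained one (it never mentions an algorithm, and the one subtle point — that $\tau^\Trans=\sigma^\star$ with $\sigma^\star\in\sigma_\True^\Ccal$ already forces $\tau\in\sigma_\True^{\Ccal^\otimes}$ — is handled correctly); the paper's packaging buys a picture that meshes with the rest of the algorithmic framework, making it transparent that the reduction composes with any perfect sampler for $\Phi^\otimes$ and costs only the $O(|Z|)$ translation, which is the same observation you make in your final sentence.
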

\begin{proof}
Recall \RejectionSampling{$\Phi^\otimes,\cdot$} in \Cref{alg:rejectionsampling}:
\begin{itemize}
\item[(1)] Sample $\sigma\sim\prod_{z\in Z}\Dcal_z$.
\item[(2)] If $C^\otimes(\sigma)=\True$ for all $C^\otimes\in\Ccal^\otimes$, then accept $\sigma$; otherwise resample $\sigma$.
\end{itemize}
Obviously $\sigma\sim\mu_\True^{\Ccal^\otimes}$. 
By the definition of $C^\otimes$ in \Cref{def:tensorized_atomic_constraint_satisfaction_problem} and the definition of $\Trans$ above, we have $C^\otimes(\sigma)=\True$ iff $C(\sigma^\Trans)=\True$. Thus we can safely replace Step (2) by
\begin{itemize}
\item[(2a)] If $C(\sigma^\Trans)=\True$ for all $C\in\Ccal$, then accept $\sigma$; otherwise resample $\sigma$.
\end{itemize}
On the other hand for Step (1), we can first sample the $\sigma^\Trans$ part and then complete it to $\sigma$:
\begin{itemize}
\item[(1a)] For each $v\in V$, we start from the root of $\Tcal_v$.
\begin{itemize}
\item If we are at an internal node $z$ of $\Tcal_v$, then sample $\sigma(z)\sim\Dcal_z$ and proceed to $\sigma(z)$.
\item Otherwise we are at a leaf node, then move to the next variable in $V$.
\end{itemize}
\item[(1b)] For each $z\in Z$ that $\sigma(z)$ is not sampled in Step (1a), we complete it by $\sigma(z)\sim\Dcal_z$.
\end{itemize}
By \Cref{fct:edge_weight} and the definition of $\Trans$, Step (1a) is equivalent to sample $\sigma^\Trans\sim\prod_{v\in V}\Dcal_v$ and $\sigma^\Trans$ does not depend on the values sampled in Step (1b).
More intuitively, we express this hybrid argument on rejection sampling through the following equivalence of flow charts:
\begin{align*}
&\text{Step (1)}\leftrightarrows\text{Step (2)}\rightarrow\sigma^\Trans\\
\text{equals}\quad&\text{Step (1a) (1b)}\leftrightarrows\text{Step (2a)}\rightarrow\sigma^\Trans\\
\text{equals}\quad&\text{Step (1a)}\leftrightarrows\text{Step (2a)}\rightarrow\text{Step (1b)}\rightarrow\sigma^\Trans\\
\text{equals}\quad&\text{Step (1a)}\leftrightarrows\text{Step (2a)}\rightarrow\sigma^\Trans,
\end{align*}
where the last line is exactly \RejectionSampling{$\Phi,\cdot$}. Thus $\sigma^\Trans\sim\mu_\True^\Ccal$.
\end{proof}

\subsection{General Atomic Constraint Satisfaction Problem: Arbitrary Distribution}\label{sec:general_atomic_constraint_satisfaction_problem:arbitrary_distribution}

Now we deal with general atomic CSPs where domains may be large. Firstly we prove the following lemma which describes a balanced way to construct state tensorization.
\begin{lemma}\label{lem:construct_state_tensorization}
There exists a deterministic algorithm such that the following holds.
Let $\Omega$ be a finite domain of size at least $2$ and $\Dcal$ be a distribution supported on $\Omega$. Let $\kappa=\max_{q,q'\in\Omega}\Dcal(q)/\Dcal(q')$.
The algorithm constructs a state tensorization $\Tcal$ for $(\Omega,\Dcal)$ where
\begin{itemize}
\item[(1)] the algorithm runs in time $O(|\Omega|\log(|\Omega|))$ and $\Tcal$ is a binary tree;
\item[(2)] for any internal node $z\in\Tcal$ and $\cbra{z_1,z_2}=\childs(z)$, we have $\frac{W(z\to z_1)}{W(z\to z_2)}\le\max\cbra{\kappa,2}$.
\end{itemize}
\end{lemma}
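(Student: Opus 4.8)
The plan is to build $\Tcal$ as a full binary tree by recursively splitting $\Omega$ into a prefix and a suffix of one fixed sorted order, always cutting as close to the halfway point of the probability mass as the discreteness allows.

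\textbf{The algorithm.} First sort the elements as $q_1,\dots,q_n$ (with $n=|\Omega|\ge 2$) so that $\Dcal(q_1)\ge\cdots\ge\Dcal(q_n)$, and precompute the prefix sums $P_j=\sum_{i\le j}\Dcal(q_i)$, with $P_0=0$ and $P_n=1$; this takes $O(n\log n)$. Every recursive call handles a contiguous range $\cbra{q_l,\dots,q_r}$: if $l=r$ the range is a leaf, and otherwise we pick the cut $i^*\in\cbra{l,\dots,r-1}$ minimizing $\abs{P_i-\tfrac{P_{l-1}+P_r}{2}}$ — computable by binary search in $O(\log(r-l))$ time since $P$ is monotone — create an internal node whose two children recurse on $\cbra{q_l,\dots,q_{i^*}}$ and $\cbra{q_{i^*+1},\dots,q_r}$, and set the two outgoing edge weights by the formula \eqref{eq:edge_weight}. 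Each range of size $\ge 2$ is split into two nonempty ranges, so $\Tcal$ is a full binary tree with $n$ leaves and $n-1$ internal nodes; the total running time is $O(n\log n)$ for the sort plus $O(\log n)$ per internal node, which gives Item (1).

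\textbf{The ratio bound.} Consider an internal node $z$ processing a range $S$; list its elements in sorted order as $w_1\ge\cdots\ge w_m$ ($m\ge 2$), put $M=\sum_i w_i$ and $\pi_j=w_1+\cdots+w_j$, and let $j^*$ be the chosen cut. By \eqref{eq:edge_weight}, $\tfrac{W(z\to z_1)}{W(z\to z_2)}$ is the ratio of the subtree masses $\pi_{j^*}$ and $M-\pi_{j^*}$, so it suffices to prove $\max(\pi_{j^*},M-\pi_{j^*})\le\max\cbra{\kappa,2}\cdot\min(\pi_{j^*},M-\pi_{j^*})$; note $w_1/w_m\le\kappa$ because $S\subseteq\Omega$. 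If $w_1\ge M/2$, then $j^*=1$ (every later prefix sum is at least $\pi_1\ge M/2$), and the ratio is $w_1/(M-w_1)\le w_1/w_2\le\kappa$ since $M-w_1\ge w_2$. If instead $w_1<M/2$, I claim some valid cut $j$ (i.e.\ $1\le j\le m-1$) has $\pi_j\in(M/3,2M/3)$; then $\abs{\pi_{j^*}-M/2}\le\abs{\pi_j-M/2}<M/6$, so $\pi_{j^*}\in(M/3,2M/3)$ and the ratio is below $2$. To prove the claim, let $j_0$ be the largest index with $\pi_{j_0}\le M/2$ (so $1\le j_0\le m-1$ and $\pi_{j_0+1}>M/2$) and set $g=\pi_{j_0+1}-\pi_{j_0}=w_{j_0+1}$. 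If $g\le M/3$, then either $\pi_{j_0}\ge M/3$, in which case $\pi_{j_0}\in[M/3,M/2]$, or $\pi_{j_0}<M/3$, in which case $\pi_{j_0+1}=\pi_{j_0}+g<2M/3$ while $\pi_{j_0+1}>M/2$; either way one of the cuts $j_0,j_0+1$ works. If $g>M/3$, then $w_1,\dots,w_{j_0+1}$ are all at least $g>M/3$, and since at most two elements can exceed $M/3$ we must have $j_0=1$, whence $M/3<g=w_2\le w_1=\pi_1<M/2$ and the cut $j=1$ works.

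\textbf{The main obstacle.} The delicate point is squeezing the ratio down to $\max\cbra{\kappa,2}$ rather than a weaker constant: a priori the prefix sum nearest $M/2$ can be off by up to half the largest element, which only yields a factor $3$, and the fix is the pigeonhole fact that at most two elements of $S$ can exceed $M/3$ — this rules out any large gap straddling $M/2$ except in a case that is directly controlled by $\kappa$. A secondary point, needed for Item (1), is that every subset arising in the recursion must be a contiguous range of the single global sorted order, so that each split is an $O(\log n)$ binary search rather than a fresh $O(|S|)$ scan, which would blow the running time up to $O(n^2)$ on a deep recursion.
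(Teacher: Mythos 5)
Your proof is correct, but it takes a genuinely different route from the paper. The paper builds $\Tcal$ bottom-up, Huffman-style: it repeatedly merges the two nodes of smallest mass and maintains the invariant that the ratio between the largest and smallest masses in the current pool never exceeds $\max\cbra{\kappa,2}$ --- when a merge overshoots the current maximum, the new mass is the sum of the two smallest values, hence at most twice the third smallest, which caps the ratio at $2$ in one line. Your construction is top-down: sort once, recurse on contiguous ranges of the sorted order, and cut at the prefix sum nearest half of the range mass; the edge-weight bound then comes from a case analysis in which a single dominant element is controlled by $\kappa$, and otherwise the pigeonhole observation that at most two elements of a range can exceed a third of its mass guarantees a cut with mass in $[M/3,2M/3]$, giving ratio at most $2$. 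Both constructions produce a binary tree in $O(|\Omega|\log(|\Omega|))$ time with the same ratio bound, so this is a valid alternative proof; the paper's greedy-merge argument is somewhat shorter, while yours keeps the leaves in a single sorted order and makes the mass balance of every split explicit. One cosmetic remark: in the case $w_1<M/2$ you claim a cut with $\pi_j$ in the open interval $(M/3,2M/3)$, but the branch $\pi_{j_0}\ge M/3$ only yields the closed interval (e.g.\ $\pi_{j_0}=M/3$ exactly), so the subsequent strict inequality $\abs{\pi_{j^*}-M/2}<M/6$ should be non-strict; this is harmless, since the closed interval still gives ratio at most $2$, which is all the lemma requires.
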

\begin{proof}
$\Tcal$ is constructed like a Huffman tree as follows:
\begin{itemize}
\item For each $q\in\Omega$ create a node $z_q$ with value $\mathsf{val}(z_q)=\Dcal(q)$. Initialize set $S=\cbra{z_q\mid q\in\Omega}$.
\item While $|S|\ge2$, 
\begin{itemize}
\item select two distinct nodes $z_1,z_2\in S$ with minimum value, i.e., $\mathsf{val}(z_1),\mathsf{val}(z_2)$ are the smallest among all nodes in $S$,
\item create a parent node $z$ of $z_1$ and $z_2$; then set $\mathsf{val}(z)=\mathsf{val}(z_1)+\mathsf{val}(z_2)$ and update $S\gets\pbra{S\cup\cbra{z}}\setminus\cbra{z_1,z_2}$.
Note that $W(z\to z_1)=\frac{\mathsf{val}(z_1)}{\mathsf{val}(z)}$ and $W(z\to z_2)=\frac{\mathsf{val}(z_2)}{\mathsf{val}(z)}$.
\end{itemize}
\item The final node in $S$ when $|S|=1$ is the root of $\Tcal$.
\end{itemize}
Then Item (1) is obvious if $S$ is implemented as a balanced binary search tree or a heap.

Now we turn to Item (2). Define $\kappa(S)=\max_{z,z'\in S}\mathsf{val}(z)/\mathsf{val}(z')$ when $|S|\ge2$. Then each time we select $z_1,z_2\in S$ and link them to $z$, we have $\frac{W(z\to z_1)}{W(z\to z_2)}\le\kappa(S)$. Therefore it suffices to show $\kappa(S)\le\max\cbra{\kappa,2}$ throughout the construction. 
\begin{itemize}
\item \fbox{Initialization.} 
Then we simply have $\kappa(S)=\kappa$.
\item \fbox{Afterwards.}
Assume $S=\cbra{z_1,z_2,\ldots,z_\ell}$ for $\ell\ge2$ where $\mathsf{val}(z_1)\le\mathsf{val}(z_2)\le\cdots\le\mathsf{val}(z_\ell)$. Let $S'=\cbra{z,z_3,\ldots,z_\ell}$ be $S$ after the update where $\mathsf{val}(z)=\mathsf{val}(z_1)+\mathsf{val}(z_2)$. Now we have two possible cases.
\begin{itemize}
\item If $\mathsf{val}(z)\le\mathsf{val}(z_\ell)$, then 
$$
\kappa(S')
=\max\cbra{\frac{\mathsf{val}(z_\ell)}{\mathsf{val}(z)},\frac{\mathsf{val}(z_\ell)}{\mathsf{val}(z_3)}}
\le\frac{\mathsf{val}(z_\ell)}{\mathsf{val}(z_1)}
=\kappa(S)\le\max\cbra{\kappa,2}.
$$
\item If $\mathsf{val}(z)>\mathsf{val}(z_\ell)$, then 
\begin{equation*}
\kappa(S')
=\frac{\mathsf{val}(z)}{\mathsf{val}(z_3)}
=\frac{\mathsf{val}(z_1)+\mathsf{val}(z_2)}{\mathsf{val}(z_3)}
\le2.
\tag*{\qedhere}
\end{equation*}
\end{itemize}
\end{itemize}
\end{proof}

By \Cref{lem:construct_state_tensorization}, \Cref{prop:state_tensorization}, and \Cref{cor:binary_domains}, we have the following theorem.
\begin{theorem}\label{thm:generalcsp_arbitrary}
There exists a Las Vegas algorithm which takes as input an atomic CSP $\Phi=\pbra{V,\pbra{\Omega_v,\Dcal_v}_{v\in V},\Ccal}$ and a parameter $\zeta\in(0,1)$ such that the following holds.

Let $\tilde\kappa=\max\cbra{\kappa,2}$ (Recall $\kappa=\kappa(\Phi)$ from \Cref{eq:kappa}). 
If $p^\gamma\cdot\Delta\le0.01\cdot\zeta/\tilde\kappa$ where
$$
\gamma=\frac{3-9\zeta+\ln(\tilde\kappa+1)-\sqrt{\ln^2(\tilde\kappa+1)+6\cdot(1-3\zeta)\ln(\tilde\kappa+1)}}9,
$$
then the algorithm outputs a random solution of $\Phi$ distributed perfectly as $\mu_\True^\Ccal$ in expected time $O\pbra{k\Delta^3Q^2|V|\log(Q|V|)}$.
\end{theorem}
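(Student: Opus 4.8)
The plan is to reduce the general-domain problem to the binary-domain result of \Cref{cor:binary_domains} via state tensorization. We may assume $|\Omega_v|\ge2$ for every $v\in V$, since a singleton domain forces a point distribution and such a variable can be fixed and removed from $\Phi$ without affecting $\mu_\True^\Ccal$, $p$, $\Delta$, or $\kappa$. For each $v\in V$ I would run the deterministic algorithm of \Cref{lem:construct_state_tensorization} on $(\Omega_v,\Dcal_v)$ to obtain a \emph{binary} state tensorization $\Tcal_v$ in which every internal node $z$ with children $\{z_1,z_2\}$ satisfies $W(z\to z_1)/W(z\to z_2)\le\tilde\kappa=\max\{\kappa,2\}$; this costs $\sum_{v}O(|\Omega_v|\log|\Omega_v|)=O(Q|V|\log Q)$ time in total. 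Then assemble the tensorized atomic CSP $\Phi^\otimes=\pbra{Z,\pbra{\Omega_z,\Dcal_z}_{z\in Z},\Ccal^\otimes}$ from $\Phi$ and $\pbra{\Tcal_v}_{v\in V}$ as in \Cref{def:tensorized_atomic_constraint_satisfaction_problem}; since $\vbl(C^\otimes)=\bigcup_{v\in\vbl(C)}\tpath(\sigma_\False^C(v),\Tcal_v)$, this costs $O\pbra{kQ|\Ccal|}=O(kQ\Delta|V|)$ time.

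Next I would read off the parameters of $\Phi^\otimes$ from \Cref{fct:tensorized_atomic_constraint_satisfaction_problem}: $\Phi^\otimes$ is atomic with $\Delta(\Phi^\otimes)=\Delta$, $p(\Phi^\otimes)=p$, $|Z|\le Q|V|$, and $k(\Phi^\otimes)\le k\cdot\max_{C,v}|\tpath(\sigma_\False^C(v),\Tcal_v)|\le kQ$ (a binary tree with at most $Q$ leaves has at most $Q-1$ internal nodes on any root-to-leaf path); moreover each $\Omega_z=\childs(z)$ has size $2$, and by the weight-ratio guarantee of \Cref{lem:construct_state_tensorization} every $\Dcal_z$ gives $\kappa(\Phi^\otimes)\le\tilde\kappa$. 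I would then feed $\Phi^\otimes$ and the same $\zeta$ into \Cref{cor:binary_domains}. Since $p(\Phi^\otimes)=p\le1$ and the displayed exponent is non-increasing in the smoothness parameter (see the next paragraph), the hypothesis of \Cref{cor:binary_domains} evaluated at $\kappa(\Phi^\otimes)$ is implied by our hypothesis $p^\gamma\Delta\le0.01\zeta/\tilde\kappa$. \Cref{cor:binary_domains} then produces, in expected time $O\pbra{k(\Phi^\otimes)\,\Delta(\Phi^\otimes)^3\,|Z|\log|Z|}=O\pbra{k\Delta^3Q^2|V|\log(Q|V|)}$, a random $\sigma\in\prod_{z\in Z}\Omega_z$ with $\sigma\sim\mu_\True^{\Ccal^\otimes}$. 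Finally \Cref{prop:state_tensorization} gives $\sigma^\Trans\sim\mu_\True^\Ccal$, and computing $\sigma^\Trans$ costs only $O(Q|V|)$; all the preprocessing costs above are dominated by $O\pbra{k\Delta^3Q^2|V|\log(Q|V|)}$, and the whole pipeline is Las Vegas because each of its constituent steps is.

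The one point that needs genuine care is that the local-lemma exponent does not deteriorate under the reduction. Since $p(\Phi^\otimes)=p$ and $\Delta(\Phi^\otimes)=\Delta$ \emph{exactly} by \Cref{fct:tensorized_atomic_constraint_satisfaction_problem}, the only slack is in the smoothness: $\kappa(\Phi^\otimes)\le\tilde\kappa$. I would therefore check that $\gamma$, viewed as a function of the smoothness parameter, is non-increasing. Writing $x=\ln(\kappa+1)$ and $c=6(1-3\zeta)>0$, the relevant term is $x-\sqrt{x^2+cx}$, whose derivative $1-\tfrac{2x+c}{2\sqrt{x^2+cx}}$ is negative because $(2x+c)^2>4(x^2+cx)$; hence $x\mapsto x-\sqrt{x^2+cx}$ is decreasing, so $\gamma$ is non-increasing in $\kappa$, giving $\gamma(\kappa(\Phi^\otimes))\ge\gamma(\tilde\kappa)$. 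Combined with $p\le1$ this yields $p^{\gamma(\kappa(\Phi^\otimes))}\Delta\le p^{\gamma(\tilde\kappa)}\Delta\le0.01\zeta/\tilde\kappa\le0.01\zeta/\kappa(\Phi^\otimes)$, which is precisely the hypothesis \Cref{cor:binary_domains} requires for $\Phi^\otimes$. Everything else in the proof is a direct invocation of the already-established \Cref{lem:construct_state_tensorization}, \Cref{fct:tensorized_atomic_constraint_satisfaction_problem}, \Cref{cor:binary_domains}, and \Cref{prop:state_tensorization}.
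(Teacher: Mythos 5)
Your proposal is correct and follows essentially the same route as the paper: construct binary state tensorizations via \Cref{lem:construct_state_tensorization}, read off the parameters of $\Phi^\otimes$ from \Cref{fct:tensorized_atomic_constraint_satisfaction_problem} (in particular $p$, $\Delta$ unchanged, $\kappa(\Phi^\otimes)\le\tilde\kappa$, $k(\Phi^\otimes)\le kQ$, $|Z|\le Q|V|$), invoke \Cref{cor:binary_domains}, and translate back via \Cref{prop:state_tensorization}. Your explicit check that $\gamma$ is non-increasing in the smoothness parameter (so that the hypothesis stated with $\tilde\kappa$ implies the one \Cref{cor:binary_domains} needs at $\kappa(\Phi^\otimes)$) is a detail the paper leaves implicit, and it is a correct and welcome addition.
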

\begin{proof}
By fixing the variable which has domain size $1$, we may safely assume $|\Omega_v|\ge2$ for all $v\in V$.

We use \Cref{lem:construct_state_tensorization} to construct $\Tcal_v$ for each $(\Omega_v,\Dcal_v)$. This takes time $O(Q(\Phi)\log(Q(\Phi)))\cdot|V|$.
Then let $\Phi^\otimes=\pbra{Z,\pbra{\Omega_z,\Dcal_z}_{z\in Z},\Ccal^\otimes}$ be the tensorized atomic CSP (Defined in \Cref{def:tensorized_atomic_constraint_satisfaction_problem}).
By Item (2) of \Cref{lem:construct_state_tensorization}, we have $\kappa(\Phi^\otimes)\le\max\cbra{\kappa(\Phi),2}$.
Also by \Cref{fct:tensorized_atomic_constraint_satisfaction_problem}, we have $\Delta(\Phi^\otimes)=\Delta(\Phi)$, $p(\Phi^\otimes)=p(\Phi)$, $k(\Phi^\otimes)\le k(\Phi)\cdot Q(\Phi)$\footnote{It is possible to get a better bound on $k(\Phi^\otimes)$ (for example $k(\Phi^\otimes)\le k(\Phi)\cdot\frac{\ln(\kappa(\Phi)Q(\Phi)+1)}{\ln(1+1/\kappa(\Phi))}$) by analyzing the depth of each $\Tcal_v$. This is because the construction of $\Tcal_v$ is ``balanced'' and the support of $\Dcal_v$ has size $|\Omega_v|\le Q(\Phi)$ only. However this only slightly improves the bound on the running time which is not our main focus.}, and $|Z|\le Q(\Phi)|V|$.
By \Cref{prop:state_tensorization}, it suffices to obtain a random solution of $\Phi^\otimes$ distributed perfectly as $\mu_\True^{\Ccal^\otimes}$, which, by \Cref{cor:binary_domains}, gives the claimed bounds.
\end{proof}

\begin{remark}\label{rmk:generalcsp_arbitrary}
Applying \Cref{thm:generalcsp_arbitrary} to the uniform distributions, i.e., $\kappa=1$, and setting $\zeta\to0$, we have
$$
\gamma\to\frac{3+\ln(3)-\sqrt{\ln^2(3)+6\ln(3)}}9>0.145.
$$
This already beats $1/7$ from \cite{Vishesh20towards} and $0.142$ from \cite{Vishesh21sampling}.
In \Cref{cor:generalcsp_uniform} we will further optimize it to $0.175$ with a more refined construction.
\end{remark}

\subsection{Hypergraph Coloring}\label{sec:hypergraph_coloring}

The previous bounds can be improved if the domains are large enough and the underlying distributions are smooth. Here we take the hypergraph coloring problem as an example.

\begin{definition}[Hypergraph Coloring]\label{def:hypergraph_coloring}
Let $Q$ and $k$ be positive integers. Let $H=(V,E)$ be a $k$-uniform hypergraph, i.e., each edge $e\in E$ contains exactly $k$ distinct variables.
We associate it with an atomic CSP $\Phi=\Phi(H,Q)=\pbra{V,\pbra{[Q],\Ucal}^V,\Ccal}$ where $\Ucal$ is the uniform distribution over $[Q]$ and $\Ccal=\cbra{C_{e,i}\colon[Q]^V\to\binTF\mid e\in E,i\in[Q]}$ and $C_{e,i}(\sigma)=\False$ iff $\sigma(v)=i$ for all $v\in e$. A solution to $\Phi$ is called a \emph{proper coloring} for $H$.
\end{definition}

To avoid confusion, $k,d,\Delta$ will only be referred to $k(H),d(H),\Delta(H)$.
It is easy to see $k(\Phi)=k$, $d(\Phi)=Q\cdot d$, $\Delta(\Phi)=Q\cdot\Delta$, $Q(\Phi)=Q$, and $p(\Phi)=Q^{-k}$. 

\begin{theorem}\label{thm:hypergraph_coloring}
There exists a Las Vegas algorithm which takes as input a $k$-uniform hypergraph $H=(V,E)$ and an integer $Q$. If $\Delta=\Delta(H)\le Q^{\pbra{1/3-o_{Q,k}(1)}k}$, then the algorithm outputs a perfect uniform random proper coloring for $H$ in expected time $O\pbra{k\Delta^3Q^4\log(Q)|V|\log(Q|V|)}$.
\end{theorem}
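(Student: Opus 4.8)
The plan is to reduce the hypergraph coloring instance $\Phi=\Phi(H,Q)$ to a CSP with binary domains by state tensorization, using a \emph{balanced} tensorization together with a \emph{deterministic} marking tailored to the uniform structure, and then apply \Cref{thm:binary_domains}. The point is that the generic marking of \Cref{cor:binary_domains} would only give $\Delta\lesssim Q^{\gamma k}$ with $\gamma\approx0.145$ (the binary-domain exponent for $\kappa\le2$), whereas exploiting balance pushes this to $1/3$.

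First I would build, for each $v\in V$, a balanced binary state tensorization $\Tcal_v$ for $([Q],\Ucal)$ by recursively splitting a subtree of size $m$ into parts of sizes $\lceil m/2\rceil$ and $\lfloor m/2\rfloor$ (for $Q$ a power of $2$ this is the perfect binary tree). All $\Tcal_v$ are identical, have depth $\le\lceil\log_2 Q\rceil$, have every edge weight in $[1/3,2/3]$, and — crucially — an internal node at depth $j$ has a subtree of size $\approx Q/2^{j}$, so its two edge weights are $\tfrac12(1+O(2^{j}/Q))$; hence along any root-to-leaf path the product of the first $t$ edge weights equals $2^{-t}(1+o(1))$ uniformly over paths and over $v$ whenever $t\le(1-\delta)\log_2 Q$ for a constant $\delta>0$. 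Let $\Phi^\otimes=(Z,(\Omega_z,\Dcal_z)_{z\in Z},\Ccal^\otimes)$ be the tensorized atomic CSP of \Cref{def:tensorized_atomic_constraint_satisfaction_problem}. By \Cref{fct:tensorized_atomic_constraint_satisfaction_problem}, $\Phi^\otimes$ is atomic with binary domains, $\Delta(\Phi^\otimes)=\Delta(\Phi)=Q\Delta$, $d(\Phi^\otimes)\le\Delta(\Phi^\otimes)$, $p(\Phi^\otimes)=p(\Phi)=Q^{-k}$, $k(\Phi^\otimes)\le k\lceil\log_2 Q\rceil$, $|Z|\le Q|V|$, and every constraint is of the form $C_{e,i}^\otimes$ with $\vbl(C_{e,i}^\otimes)=\bigcup_{v\in e}\tpath(i,\Tcal_v)$ (a disjoint union) and, by \Cref{fct:edge_weight}, $\prod_{z\in\vbl(C_{e,i}^\otimes)}\Dcal_z(\sigma_\False^{C_{e,i}^\otimes}(z))=Q^{-k}$. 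By \Cref{prop:state_tensorization} it then suffices to perfectly sample $\mu_\True^{\Ccal^\otimes}$ and return $\sigma^\Trans$.

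Next I would take the \emph{deterministic} marking $\Mcal\subseteq Z$ consisting of all internal nodes of depth less than $t:=\lceil\tfrac23\log_2 Q\rceil$ (so $\delta=\tfrac13$ above). Every root-to-leaf path has length $\ge\lfloor\log_2 Q\rfloor>t$, hence contributes exactly $t$ marked nodes, and by the uniform near-balance the marked-edge-weight product along each path equals $2^{-t}(1+o(1))=Q^{-2/3+o(1)}$. Consequently, uniformly over \emph{all} constraints,
\[
\prod_{z\in\vbl(C_{e,i}^\otimes)\cap\Mcal}\Dcal_z(\sigma_\False^{C_{e,i}^\otimes}(z))=Q^{-2k/3+o(k)},\qquad
\alpha(\Phi^\otimes,\Mcal,C_{e,i}^\otimes)=\prod_{z\in\vbl(C_{e,i}^\otimes)\setminus\Mcal}\Dcal_z(\sigma_\False^{C_{e,i}^\otimes}(z))=Q^{-k/3+o(k)}
\]
(the $o(k)$ absorbing the $(1+o(1))^{k}$ and the ceiling in $t$). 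Then I would verify the hypotheses of \Cref{thm:binary_domains} with $\Delta(\Phi^\otimes)=Q\Delta$ and $\Delta\le Q^{(1/3-o_{Q,k}(1))k}$: the condition $\Naturale\,\alpha(\Phi^\otimes,\Mcal,C_{e,i}^\otimes)\,\Delta(\Phi^\otimes)\le1$ reads $\Naturale\,Q^{-k/3+o(k)}\cdot Q^{1+(1/3-o_{Q,k}(1))k}\le1$, which holds once the $o_{Q,k}(1)$ slack in the hypothesis dominates the internal $o(k)$ error plus a constant; the same estimate makes $\Naturale\,\alpha\cdot d(\Phi^\otimes)$ be $o(1/(k\log Q))$, so $\beta(\Phi^\otimes,\Mcal)=(1-\Naturale\alpha)^{-d(\Phi^\otimes)}=1+o(1/(k\log Q))$ and therefore $\beta^{|\vbl(C^\otimes)\cap\Mcal|}\le\beta^{k\lceil\log_2 Q\rceil}=1+o(1)$; finally $\lambda(\Phi^\otimes,\Mcal,C^\otimes)=|\vbl(C^\otimes)|^2\,\beta^{|\vbl(C^\otimes)\cap\Mcal|}\prod_{z\in\vbl(C^\otimes)\cap\Mcal}\Dcal_z(\sigma_\False^{C^\otimes}(z))\le(k\lceil\log_2 Q\rceil)^2(1+o(1))Q^{-2k/3+o(k)}=Q^{-2k/3+o(k)}$, so $\Delta(\Phi^\otimes)^2\lambda\le Q^{2+2(1/3-o_{Q,k}(1))k-2k/3+o(k)}\le1/100$ for the same reason. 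Since the marking is deterministic and provably valid, no rejection loop around it is needed.

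Finally I would invoke \Cref{thm:binary_domains}: \AtomicCSPSampling{$\Phi^\otimes,\Mcal$} halts almost surely, outputs $\sigma\sim\mu_\True^{\Ccal^\otimes}$, and runs in expected time $O\big(k(\Phi^\otimes)\,\Delta(\Phi^\otimes)^3\,|Z|\log|Z|\big)=O\big(k\log Q\cdot(Q\Delta)^3\cdot Q|V|\log(Q|V|)\big)=O\big(k\Delta^3Q^4\log(Q)\,|V|\log(Q|V|)\big)$; building the $\Tcal_v$ and $\Mcal$ (each $O(Q\log Q)|V|$) and applying $\Trans$ ($O(\log Q)|V|$) are all dominated, so the overall procedure is Las Vegas with the claimed expected running time, and by \Cref{prop:state_tensorization} $\sigma^\Trans$ is a perfect uniform random proper coloring of $H$.

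I expect the main obstacle to be the delicate quantitative bookkeeping that underlies the jump from the generic exponent ($\approx0.145$ here) to $1/3$. Two features drive it: the tensorization must be genuinely \emph{balanced}, so that $\prod_{z\in\vbl(C^\otimes)\cap\Mcal}\Dcal_z$ is as small as $Q^{-2k/3}$ rather than merely $(2/3)^{\Theta(k\log Q)}$; and $\beta$ must be $1+o(1)$ rather than just $O(1)$ because it enters $\lambda$ with an exponent of order $k\log Q$ — it is precisely this $\beta=1+o(1)$ requirement that forces the hypothesis to be $\Delta\le Q^{(1/3-o_{Q,k}(1))k}$ instead of $\Delta\le Q^{k/3}$. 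A third, more routine, point is checking that the single depth-threshold marking simultaneously controls $\alpha$ and $\prod_{z\in\vbl\cap\Mcal}\Dcal_z$ for all $Q|E|$ constraints uniformly, which is exactly where the near-$1/2$ splits at the top levels are used.
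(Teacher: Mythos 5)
Your proposal is correct and follows essentially the same route as the paper: a (near-)complete binary state tensorization of each color domain $[Q]$, a deterministic marking given by a depth threshold at about $\tfrac23\log_2 Q$ so that each variable's unmarked falsifying mass is $Q^{-1/3+o(1)}$ and its marked mass is $Q^{-2/3+o(1)}$, verification of the two conditions of \Cref{thm:binary_domains} (with $\beta^{k\lceil\log_2 Q\rceil}=O(1)$), and the identical running-time bookkeeping via $k(\Phi^\otimes)=O(k\log Q)$, $\Delta(\Phi^\otimes)=Q\Delta$, $|Z|\le Q|V|$. The only caveats are cosmetic: your claim that every root-to-leaf path has strictly more than $\lceil\tfrac23\log_2 Q\rceil$ internal nodes fails for a handful of small $Q$, but there the hypothesis $\Delta\le Q^{(1/3-o_{Q,k}(1))k}$ is vacuous anyway (the paper's explicit condition has the same feature), and your orientation of the marking (shallow nodes marked, deep nodes unmarked) is the one consistent with the paper's own $\alpha$ and $\lambda$ computations, even though the paper's prose states the threshold the other way around.
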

\begin{proof}
For each $v\in V$, we construct the state tensorization $\Tcal_v$ for $(\Omega_v,\Dcal_v)=([Q],\Ucal)$ as a complete binary tree, i.e., $\Tcal_v$ has depth $D=\lceil\log(Q)\rceil$ and $\Tcal_v$ has $2^i$ nodes at level $i$ for all $0\le i<\lceil\log(Q)\rceil$.

Given the state tensorizations, we obtain the tensorized atomic CSP $\Phi^\otimes=\pbra{Z,\pbra{\Omega_z,\Dcal_z}_{z\in Z},\Ccal^\otimes}$ for $\Phi=\Phi(H,Q)$.
By \Cref{def:hypergraph_coloring} and \Cref{prop:state_tensorization} it suffices to obtain $\sigma\sim\mu_\True^{\Ccal^\otimes}$.

Define $R=\lfloor\frac23\log(Q)\rfloor$. 
We construct the marking $\Mcal$ for $\Phi^\otimes$ by putting all internal nodes in $\Tcal_v$ of level at least $R$ into $\Mcal$ for each $v\in V$.
To apply \Cref{thm:binary_domains} to $(\Phi^\otimes,\Mcal)$, we compute the constants $\alpha,\beta,\lambda$ (See in \Cref{sec:binary_domains}) for $(\Phi^\otimes,\Mcal)$.

Fix any $C^\otimes\in\Ccal^\otimes$ and $v\in\vbl(C)$. Assume $\tpath(\sigma_\False^C(v),\Tcal_v)=\cbra{z_0,\ldots,z_\ell}$ where $z_0,\ldots,z_\ell$ is in the top-down order of $\Tcal_v$. 
Then $\ell\in\cbra{D-2,D-1}$ and $z_R,\ldots,z_\ell\in\Mcal$.
Since $\Tcal_v$ is a complete binary tree, by \Cref{eq:edge_weight} and the definition of $\Dcal_z$ in \Cref{def:tensorized_atomic_constraint_satisfaction_problem}, if $\ell\ge R$ then we have
$$
\prod_{i=R}^\ell\Dcal_{z_i}(\sigma_\False^{C^\otimes}(z_i))=\frac1{|\leafs(z_R)|}\le2^{-(\ell-R)}\le2^{-(D-2-R)}\le 4/Q^{1/3}.
$$
Thus 
\begin{equation}\label{eq:coloring_alpha}
\alpha(\Phi^\otimes,\Mcal,C^\otimes)\le\pbra{4/Q^{1/3}}^k.
\end{equation}
By Item (3) of \Cref{fct:tensorized_atomic_constraint_satisfaction_problem}, $\Delta(\Phi^\otimes)=\Delta(\Phi)=Q\cdot\Delta$. Since $(1-x)^{-1/x}\le4$ holds for any $x\in(0,1/2]$, we have
\begin{equation}\label{eq:coloring_beta}
\beta\le 4^{\Naturale\cdot\pbra{4/Q^{1/3}}^k\cdot Q\Delta}\le4^{\frac1{kD}}
\quad\text{if}\quad\Naturale\cdot\pbra{4/Q^{1/3}}^k\cdot Q\Delta\le\frac1{kD}.
\end{equation}
Meanwhile
$$
\prod_{i=0}^{R-1}\Dcal_{z_i}(\sigma_\False^{C^\otimes}(z_i))=\frac{|\leafs(z_R)|}Q\le\frac{2^{\ell-R+2}}Q\le\frac{2^{D-R+1}}Q\le8/Q^{2/3}.
$$
By Item (2) of \Cref{fct:tensorized_atomic_constraint_satisfaction_problem}, $k(\Phi^\otimes)\le k(\Phi)\cdot D=kD$.
Thus combining \Cref{eq:coloring_beta}, we have
\begin{equation}\label{eq:coloring_lambda}
\lambda(\Phi^\otimes,\Mcal,C^\otimes)\le k^2D^2\cdot4\cdot\pbra{8/Q^{2/3}}^k.
\end{equation}
Therefore, it suffices to make sure
$$
D-2\ge R,\quad
\Naturale\cdot\pbra{4/Q^{1/3}}^k\cdot Q\Delta\le\frac1{kD},\quad\text{and}\quad
k^2D^2\cdot4\cdot\pbra{8/Q^{2/3}}^k\cdot(Q\Delta)^2\le1/100.
$$
Since $D=\lceil\log(Q)\rceil$ and $R=\lfloor\frac23\log(Q)\rfloor$, it suffices to ensure
$$
Q\ge5\quad\text{and}\quad\Delta\le\pbra{\frac{Q^{1/3}}4}^k\cdot\frac1{40kQ\log(Q)}=Q^{\pbra{1/3-o_{Q,k}(1)}k}.
$$

Now we compute the running time. Note that the reduction to $\Phi^\otimes$ and the construction of $\Mcal$ only take $O(Q|V|)$ time.
Since $|Z|\le Q|V|$, $k(\Phi^\otimes)=O(k\log(Q))$, and $\Delta(\Phi^\otimes)=Q\Delta$, by \Cref{thm:binary_domains} the algorithm runs in time $O\pbra{k\Delta^3Q^4\log(Q)|V|\log(Q|V|)}$.
\end{proof}

\subsection{General Atomic Constraint Satisfaction Problem: Uniform Distribution}\label{sec:general_atomic_constraint_satisfaction_problem:uniform_distribution}

The analysis in \Cref{sec:general_atomic_constraint_satisfaction_problem:arbitrary_distribution} is purely a black-box reduction from large domains to binary domains: the construction of the marking does not use the fact that the CSP after reduction is actually a tensorized atomic CSP. 
Here we provide a unified construction for the state tensorizations and the marking when the underlying distributions are the uniform distribution.

Let $\Phi=\pbra{V,\pbra{\Omega_v,\Dcal_v}_{v\in V},\Ccal}$ be an atomic CSP such that $|\Omega_v|\ge2$ and $\Dcal_v$ is the uniform distribution for all $v\in V$. 
Our strategy is similar as before:
\begin{itemize}
\item[(1)] Construct state tensorizations $\pbra{\Tcal_v}_{v\in V}$ to obtain $\Phi^\otimes=\pbra{Z,\pbra{\Omega_z,\Dcal_z}_{z\in Z},\Ccal^\otimes}$ as the tensorized atomic CSP for $\Phi$. We will make sure each $\Omega_z$ is a binary domain, i.e., $Q(\Phi^\otimes)=2$, to apply \Cref{thm:binary_domains}.
\item[(2)] Construct a marking $\Mcal\subseteq Z$ to satisfy 
$$
\Naturale\cdot\alpha(\Phi^\otimes,\Mcal,C^\otimes)\cdot\Delta(\Phi^\otimes)\le1
\quad\text{and}\quad
\Delta(\Phi^\otimes)^2\cdot\lambda(\Phi^\otimes,\Mcal,C^\otimes)\le1/100
\quad\text{for all }C^\otimes\in\Ccal^\otimes,
$$
where we recall $\alpha,\beta,\lambda$ from \Cref{sec:binary_domains}.
\item[(3)] Apply \Cref{thm:binary_domains} to $\Phi^\otimes$ and $\Mcal$. By \Cref{prop:state_tensorization}, this provides a perfect uniform solution to $\Phi$ after performing $\Trans$ operation.
\end{itemize}
For each $C\in\Ccal$, define $p_C=\prod_{v\in\vbl(C)}\Dcal_v(\sigma_\False^C(v))=\prod_{v\in\vbl(C)}|\Omega_v|^{-1}$. 
For each $C^\otimes\in\Ccal^\otimes$, define $p_{C^\otimes}=\prod_{z\in\vbl(C^\otimes)}\Dcal_z(\sigma_\False^{C^\otimes}(z))$. 
To avoid confusion and by Item (3) of \Cref{fct:tensorized_atomic_constraint_satisfaction_problem}, we will use $\Delta$ for both $\Delta(\Phi^\otimes)$ and $\Delta(\Phi)$; use $p$ for both $p(\Phi^\otimes)$ and $p(\Phi)$, and use $p_C$ for both $p_C$ and $p_{C^\otimes}$.

We construct the state tensorization $\Tcal_v$ for each $(\Omega_v,\Dcal_v)$ in a random and independent way. The marking $\Mcal\subseteq Z$ will be constructed along with each $\Tcal_v$. This will be similar as \Cref{lem:binary_domains}: $\pbra{\Tcal_v}_{v\in V}$ and $\Mcal$ are constructed with high success probability using \Cref{thm:MT_LLL}.

Recall $\tpath(\cdot,\cdot)$ from \Cref{sec:large_domains:state_tensorization} and $z(\cdot)$ from \Cref{def:tensorized_atomic_constraint_satisfaction_problem}.
For each $v\in V$ and $q\in\Omega_v$, define 
$$
X(v,q)=\log\pbra{\prod_{z\in\tpath(q,\Tcal_v)\cap\Mcal}\Dcal_z(z(q))}.
$$
By \Cref{fct:edge_weight} and noticing $\Dcal_v$ is uniform, we know 
$$
X(v,q)=\log\pbra{|\Omega_v|^{-1}}-\log\pbra{\prod_{z\in\tpath(q,\Tcal_v)\setminus\Mcal}\Dcal_z(z(q))}.
$$

Recall the definition of $C^\otimes$ from \Cref{def:tensorized_atomic_constraint_satisfaction_problem}, then we have
\begin{equation}\label{eq:generalcsp_arbitrary_alpha}
\log\pbra{\alpha(\Phi^\otimes,\Mcal,C^\otimes)}=\log(p_C)-\sum_{v\in\vbl(C)}X(v,\sigma_\False^C(v))
\end{equation}
and $\beta=\beta(\Phi^\otimes,\Mcal)
\le\pbra{1-\Naturale\cdot\max_{C^\otimes\in\Ccal^\otimes}\alpha\pbra{\Phi^\otimes,\Mcal,C^\otimes}}^{-\Delta}$.
Then we also have
$$
\log\pbra{\lambda(\Phi^\otimes,\Mcal,C^\otimes)}\le\log\pbra{\abs{\vbl(C^\otimes)}^2\cdot\beta^{|\vbl(C^\otimes)|}}+\sum_{v\in\vbl(C)}X(v,\sigma_\False^C(v)).
$$

Let $\eta,\tau_1,\tau_2,\zeta\in(0,1)$ be parameters to be determined later. We will ensure $1-\eta-\tau_1\ge0$ and $\eta-\tau_2-3\zeta\ge0$.
For each $C\in\Ccal$, let $\Ecal_C^{(1)}$ be the event (i.e., constraint)
$$
\Ecal_C^{(1)}=\text{`` }
\pbra{\sum_{v\in\vbl(C)}X(v,\sigma_\False^C(v))}-\eta\log(p_C)<-\tau_1\log(1/p_C)
\text{ ''}
$$
and let $\Ecal_C^{(2)}$ be 
$$
\Ecal_C^{(2)}=\text{`` }
\pbra{\sum_{v\in\vbl(C)}X(v,\sigma_\False^C(v))}-\eta\log(p_C)>\tau_2\log(1/p_C)
\text{ ''}.
$$
Now we check $\Naturale\cdot\alpha(\Phi^\otimes,\Mcal,C^\otimes)\cdot\Delta\le1$ and $\Delta^2\cdot\lambda(\Phi^\otimes,\Mcal,C^\otimes)\le1/100$ assuming no $\Ecal_C^{(1)}$ or $\Ecal_C^{(2)}$ happens.
Then firstly $\alpha(\Phi^\otimes,\Mcal,C^\otimes)\le p_C^{1-\eta-\tau_1}\le p^{1-\eta-\tau_1}$.
Since $(1-x)^{-1/x}\le4$ holds for any $x\in(0,1/2]$, we have
$$
\beta\le 4^{\Naturale\cdot p^{1-\eta-\tau_1}\cdot\Delta}\le(3/2)^\zeta
\quad\text{if}\quad
2\Naturale\ln(2)\cdot p^{1-\eta-\tau_1}\cdot\Delta\le\zeta\cdot\ln(3/2).
$$
Note that $2\Naturale\ln(2)\cdot p^{1-\eta-\tau_1}\cdot\Delta\le\zeta\cdot\ln(3/2)$ already implies $\Naturale\cdot\alpha(\Phi^\otimes,\Mcal,C^\otimes)\cdot\Delta\le1$.

Now we turn to $\lambda$.
Since $x^\zeta\ge\zeta\ln(x)$ holds for any $x>0$, we have $\ln(1/p_C)\le p_C^{-\zeta}/\zeta$.
Our state tensorizations will have $\kappa(\Phi^\otimes)\le2$ (Recall $\kappa(\cdot)$ from \Cref{eq:kappa}), thus $p_C\le(2/3)^{|\vbl(C^\otimes)|}$ and
$$
\lambda(\Phi^\otimes,\Mcal,C^\otimes)
\le\abs{\vbl(C^\otimes)}^2\cdot\beta^{|\vbl(C^\otimes)|}p_C^{\eta-\tau_2}
\le\frac{\ln^2(1/p_C)\cdot p_C^{\eta-\tau_2-\zeta}}{\zeta^2\ln^2(3/2)}
\le\frac{p_C^{\eta-\tau_2-3\zeta}}{\zeta^2\ln^2(3/2)}
\le\frac{p^{\eta-\tau_2-3\zeta}}{\zeta^2\ln^2(3/2)}.
$$
Therefore $\Delta^2\cdot\lambda(\Phi^\otimes,\Mcal,C^\otimes)\le1/100$ is reduced to $\Delta^2\cdot p^{\eta-\tau_2-3\zeta}\le0.01\cdot\zeta^2\cdot\ln^2(3/2)$. 

In all, it suffices to make sure we can construct the state tensorizations and the marking to avoid all $\Ecal_C^{(1)}$ and $\Ecal_C^{(2)}$ with the following additional conditions:
\begin{equation}\label{eq:generalcsp_uniform_all}
\begin{gathered}
1-\eta-\tau_1\ge0,\quad\eta-\tau_2-3\zeta\ge0,\\
2\Naturale\ln(2)\cdot p^{1-\eta-\tau_1}\cdot\Delta\le\zeta\cdot\ln(3/2),
\quad\text{and}\quad
\Delta^2\cdot p^{\eta-\tau_2-3\zeta}\le0.01\cdot\zeta^2\cdot\ln^2(3/2).
\end{gathered}
\end{equation}
 
We first consider the simple case where all domains are binary to verify \Cref{rmk:binary_domains}.
\begin{lemma}\label{lem:generalcsp_uniform_simple}
Assume $|\Omega_v|=2$ for all $v\in V$. 
If $p^{0.175}\cdot\Delta\le10^{-7}$, then $\Phi^\otimes$ and $\Mcal$ can be constructed in time $O(k(\Phi)\cdot\Delta|V|)$ with success probability at least $0.99$ such that $Q(\Phi^\otimes)=2$, $\kappa(\Phi^\otimes)=1$, and 
$$
\Naturale\cdot\alpha(\Phi^\otimes,\Mcal,C^\otimes)\cdot\Delta\le1
\quad\text{and}\quad
\Delta^2\cdot\lambda(\Phi^\otimes,\Mcal,C^\otimes)\le1/100
\quad\text{for all }C^\otimes\in\Ccal^\otimes.
$$
\end{lemma}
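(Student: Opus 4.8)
The plan is to exploit that binary domains make each state tensorization trivial, reducing the construction of $\Mcal$ to an application of \Cref{thm:MT_LLL} governed by a \emph{relative-entropy} concentration bound that is sharper than the Hoeffding bound used in \Cref{lem:binary_domains}. First I would note that when $|\Omega_v|=2$ the unique state tensorization $\Tcal_v$ for $(\Omega_v,\Dcal_v)$ has a single internal node, its root $z_v$, with two leaves, and since $\Dcal_v$ is uniform we have $W(z_v\to z')=1/2$ for both children $z'\in\childs(z_v)$. Hence $\Phi^\otimes$ is simply $\Phi$ with each variable $v$ renamed $z_v$: we get $Q(\Phi^\otimes)=2$, $\kappa(\Phi^\otimes)=1$, $\Delta(\Phi^\otimes)=\Delta$, $p(\Phi^\otimes)=p$, $k(\Phi^\otimes)=k(\Phi)$, $\vbl(C^\otimes)=\{z_v:v\in\vbl(C)\}$ with $|\vbl(C^\otimes)|=|\vbl(C)|$, and $p_C=2^{-|\vbl(C^\otimes)|}$ for every $C\in\Ccal$.

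Next I would construct $\Mcal$ by putting each root $z_v$ into $\Mcal$ independently with probability $\eta$, for a constant $\eta\in(0.35,0.825)$ to be fixed. Since $\tpath(\sigma_\False^C(v),\Tcal_v)=\{z_v\}$, the quantity $X(v,q)$ from \Cref{sec:general_atomic_constraint_satisfaction_problem:uniform_distribution} equals $-\mathbf{1}[z_v\in\Mcal]$ (recall $\log$ is base $2$), so $\sum_{v\in\vbl(C)}X(v,\sigma_\False^C(v))=-|\vbl(C^\otimes)\cap\Mcal|$ and $\log(p_C)=-|\vbl(C^\otimes)|$. Writing $m=|\vbl(C^\otimes)|$ and letting $s=|\vbl(C^\otimes)\cap\Mcal|$ be $\mathrm{Binomial}(m,\eta)$-distributed, the bad events become $\Ecal_C^{(1)}=$``$s>(\eta+\tau_1)m$'' and $\Ecal_C^{(2)}=$``$s<(\eta-\tau_2)m$''. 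Applying the Chernoff--Hoeffding bounds $\Pr[s\ge(\eta+\tau_1)m]\le\exp(-m\cdot\KL(\eta+\tau_1\,\|\,\eta))$ and $\Pr[s\le(\eta-\tau_2)m]\le\exp(-m\cdot\KL(\eta-\tau_2\,\|\,\eta))$, where $\KL(a\,\|\,b)$ is the binary relative entropy in nats, and rewriting $\exp(-mc)=p_C^{c/\ln2}\le p^{c/\ln2}$, I obtain $\Pr[\Ecal_C^{(1)}],\Pr[\Ecal_C^{(2)}]\le p^{\gamma_{\mathrm{tail}}}$ with $\gamma_{\mathrm{tail}}=\min\{\KL(\eta+\tau_1\,\|\,\eta),\KL(\eta-\tau_2\,\|\,\eta)\}/\ln2$. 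This relative-entropy rate, rather than the weaker Hoeffding rate $2\tau^2/\ln2$, is precisely what improves $0.171$ (from \Cref{rmk:binary_domains}) to $0.175$.

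Then I would run \Cref{thm:MT_LLL} on the CSP over the indicator variables $\{\mathbf{1}[z_v\in\Mcal]\}_v$ with constraints $\{\Ecal_C^{(1)},\Ecal_C^{(2)}\}_{C\in\Ccal}$, which has width at most $k(\Phi)$, constraint degree $O(\Delta)$, and $|V|$ variables; the LLL condition $\Naturale\cdot\Delta\cdot\max_C(\Pr[\Ecal_C^{(1)}]+\Pr[\Ecal_C^{(2)}])\le1$ then suffices, and the construction runs in time $O(k(\Phi)\Delta|V|)$ with success probability at least $0.99$. It remains to fix $\zeta$ small (e.g.\ $\zeta=10^{-5}$) and choose $\eta,\tau_1,\tau_2>0$ with $1-\eta-\tau_1\ge0.175$, $\eta-\tau_2-3\zeta\ge0.35$, and $\gamma_{\mathrm{tail}}\ge0.175$; taking $1-\eta-\tau_1=0.175$ and $\eta-\tau_2-3\zeta=0.35$ (so $\eta+\tau_1=0.825$ and $\eta-\tau_2=0.35+3\zeta$), a short numerical check shows that $\eta$ near $0.6$ makes both $\KL(0.825\,\|\,\eta)/\ln2$ and $\KL(0.35\,\|\,\eta)/\ln2$ at least $0.175$. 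Feeding these into \Cref{eq:generalcsp_uniform_all} and the MT condition, every requirement collapses to an inequality of the shape $p^{0.175}\Delta\le(\text{small constant})\cdot\zeta$ or $p^{0.175}\Delta\le 1/(2\Naturale)$, all implied by the hypothesis $p^{0.175}\Delta\le10^{-7}$ once $\zeta$ is chosen accordingly. I expect the main obstacle to be this last step: the exponent $0.175$ is essentially tight for this construction (this is the $k$-CNF bottleneck noted after \Cref{thm:uniform-informal}), so the window of $\eta$ for which both relative-entropy rates exceed $0.175$ while $\tau_1,\tau_2$ remain positive is narrow, and the verification has very little slack.
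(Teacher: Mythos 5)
Your proposal is correct and follows essentially the same route as the paper: the trivial one-internal-node tensorization, independent marking with probability $\eta$, a binomial tail bound with the relative-entropy rate $\min\{\KL(\eta+\tau_1\|\eta),\KL(\eta-\tau_2\|\eta)\}$ (the paper derives exactly this via an optimized exponential-moment/Markov argument), an application of \Cref{thm:MT_LLL}, and the same parameter regime ($\eta\approx0.595$, $\eta+\tau_1=0.825$, $\eta-\tau_2-3\zeta=0.35$, $\zeta=10^{-5}$) making all constraints collapse to $p^{0.175}\Delta\le10^{-7}$. Your closing observation about the near-zero slack matches the paper's numerics.
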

\begin{proof}
Here we simply let each $\Tcal_v$ have one root with two child nodes. Then $\Phi^\otimes$ is essentially $\Phi$ itself.
We put each variable into $\Mcal$ with probability $\eta$ independently. 
Then for each $v\in V$ and $q\in\Omega_v$, we have
$$
X(v,q)=\log\pbra{\frac12}\cdot[v\in\Mcal]=\begin{cases}
-1 & \text{with probability }\eta,\\
0 & \text{with probability }1-\eta.
\end{cases}
$$
Hence for any $t\in\Rbb$, we have $\E\sbra{\Naturale^{t\cdot X(v,q)}}=1-\eta+\eta\cdot\Naturale^{-t}$.
Note that $p_C=2^{-|\vbl(C)|}$. 
Let $t_1,t_2>0$ be some parameters to be optimized soon. Then we have
\begin{align*}
\Pr\sbra{\Ecal_C^{(1)}}
&=\Pr\sbra{-t_1\cdot\sum_{v\in\vbl(C)}X(v,\sigma_\False^C(v))>(\eta+\tau_1)\cdot t_1\cdot|\vbl(C)|}\\
&=\Pr\sbra{\exp\cbra{-t_1\cdot\sum_{v\in\vbl(C)}X(v,\sigma_\False^C(v))}>\exp\cbra{(\eta+\tau_1)\cdot t_1\cdot|\vbl(C)|}}\\
&\le\E\sbra{\exp\cbra{-t_1\cdot\sum_{v\in\vbl(C)}X(v,\sigma_\False^C(v))}}\cdot\Naturale^{-(\eta+\tau_1)\cdot t_1\cdot|\vbl(C)|}
\tag{by Markov's inequality}\\
&=\pbra{\pbra{1-\eta+\eta\cdot\Naturale^{t_1}}\cdot\Naturale^{-(\eta+\tau_1)\cdot t_1}}^{|\vbl(C)|}
\tag{since $X(v,q)$'s are independent for different $v$}\\
&=\pbra{\pbra{\frac{1-\eta}{1-\eta-\tau_1}}^{1-\eta-\tau_1}\pbra{\frac\eta{\eta+\tau_1}}^{\eta+\tau_1}}^{|\vbl(C)|}
\tag{setting $\Naturale^{t_1}=\frac{(1-\eta)(\eta+\tau_1)}{\eta\cdot(1-\eta-\tau_1)}$}\\
&=2^{-|\vbl(C)|\cdot\KL(\eta+\tau_1\|\eta)}\le p^{\KL(\eta+\tau_1\|\eta)},
\tag{since $2^{-|\vbl(C)|}=p_C\le p$}
\end{align*}
where $\KL(a\|b)=a\log\pbra{\frac ab}+(1-a)\log\pbra{\frac{1-a}{1-b}}$ is the \emph{Kullback-Leibler divergence}.
Similarly
\begin{align*}
\Pr\sbra{\Ecal_C^{(2)}}
&=\Pr\sbra{t_2\cdot\sum_{v\in\vbl(C)}X(v,\sigma_\False^C(v))>-(\eta-\tau_2)\cdot t_2\cdot|\vbl(C)|}\\
&\le\pbra{\pbra{1-\eta+\eta\cdot\Naturale^{-t_2}}\cdot\Naturale^{(\eta-\tau_2)\cdot t_2}}^{|\vbl(C)|}\\
&=\pbra{\pbra{\frac{1-\eta}{1-\eta+\tau_2}}^{1-\eta+\tau_2}\pbra{\frac\eta{\eta-\tau_2}}^{\eta-\tau_2}}^{|\vbl(C)|}
\tag{setting $\Naturale^{-t_2}=\frac{(1-\eta)(\eta-\tau_2)}{\eta\cdot(1-\eta+\tau_2)}$}\\
&\le p^{\KL(\eta-\tau_2\|\eta)}.
\end{align*}
Let $\Ecal_C=\Ecal_C^{(1)}\lor\Ecal_C^{(2)}$.
Since each variable is put into $\Mcal$ independently, $\Ecal_C$ depends only on the constructions over $\vbl(C)$ and thus correlates with at most $\Delta$ many $\Ecal_{C'}$ (including itself) and
$$
\Pr\sbra{\Ecal_C}\le2\cdot p^{\min\cbra{\KL(\eta+\tau_1\|\eta),\KL(\eta-\tau_2\|\eta)}}.
$$
To apply \Cref{thm:MT_LLL}, it suffices to make sure
\begin{equation}\label{eq:generalcsp_uniform_binary}
2\Naturale\cdot p^{\min\cbra{\KL(\eta+\tau_1\|\eta),\KL(\eta-\tau_2\|\eta)}}\cdot\Delta\le1.
\end{equation}
Combining \Cref{eq:generalcsp_uniform_all} and \Cref{eq:generalcsp_uniform_binary}, we can pick $\eta,\tau_1,\tau_2,\zeta$ satisfying $1-\eta-\tau_1\ge0, \eta-\tau_2-3\zeta\ge0$; then let
$$
\gamma=\min\cbra{1-\eta-\tau_1,\frac{\eta-\tau_2-3\zeta}2,\KL(\eta+\tau_1\|\eta),\KL(\eta-\tau_2\|\eta)}
$$
and all the conditions boil down to $p^\gamma\cdot\Delta\le0.01\cdot\zeta$.

Numerically maximizing $\gamma$, we have $\gamma=0.175$, together with $\eta=0.595$, $\tau_1=0.23$, $\tau_2=0.245-3\cdot10^{-5}$, and $\zeta=10^{-5}$.
\end{proof}

Now we proceed to the general case. Restricted by the binary case, we cannot hope for a better bound than $p^{0.175}\cdot\Delta\lesssim1$. Therefore our goal is to show this bound is obtainable using the numerical constants $\eta,\tau_1,\tau_2,\zeta$ determined above.

\begin{lemma}\label{lem:generalcsp_uniform_full}
Assume $|\Omega_v|\ge2$ for all $v\in V$. 
If $p^{0.175}\cdot\Delta\le10^{-7}$, then $\Phi^\otimes$ and $\Mcal$ can be constructed in time $O(k(\Phi)Q(\Phi)\cdot\Delta|V|)$ with success probability at least $0.99$ such that $Q(\Phi^\otimes)=2$, $\kappa(\Phi^\otimes)\le2$, and 
$$
\Naturale\cdot\alpha(\Phi^\otimes,\Mcal,C^\otimes)\cdot\Delta\le1
\quad\text{and}\quad
\Delta^2\cdot\lambda(\Phi^\otimes,\Mcal,C^\otimes)\le1/100
\quad\text{for all }C^\otimes\in\Ccal^\otimes.
$$
\end{lemma}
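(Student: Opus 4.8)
The plan is to mirror the proof of \Cref{lem:generalcsp_uniform_simple}, but now with honest state tensorizations instead of the trivial one-edge trees. First I would build, for each $v\in V$, a balanced binary state tensorization $\Tcal_v$ for $(\Omega_v,\Dcal_v)$ in the style of \Cref{lem:construct_state_tensorization}; since $\Dcal_v$ is uniform this makes $\Tcal_v$ essentially a complete binary tree, so $Q(\Phi^\otimes)=2$, $\kappa(\Phi^\otimes)\le\max\{\kappa(\Phi),2\}=2$, and --- crucially --- along every root-to-leaf path all but $O(1)$ edge weights equal $\frac12$ exactly while the remaining ones lie in $[\frac13,\frac23]$ (subtree weights are proportional to leaf counts). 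Next I would construct $\Mcal\subseteq Z$ by putting each internal node of each $\Tcal_v$ into $\Mcal$ independently with probability $\eta$, the constant found in \Cref{lem:generalcsp_uniform_simple}. Then $X(v,q)=\sum_{z\in\tpath(q,\Tcal_v)\cap\Mcal}\log\Dcal_z(z(q))$ is a sum of independent $\{0\}\cup\{\log W(z\to z(q))\}$-valued terms, and by \Cref{fct:edge_weight} and uniformity $\E[X(v,q)]=\eta\log\Dcal_v(q)$ for every $q$; hence by \Cref{eq:generalcsp_arbitrary_alpha} the quantity $\sum_{v\in\vbl(C)}X(v,\sigma_\False^C(v))$ has mean exactly $\eta\log p_C$ and is a sum of independent bounded contributions.

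The heart of the argument is then a Chernoff bound on the deviation of $\sum_{v\in\vbl(C)}X(v,\sigma_\False^C(v))$ from $\eta\log p_C$, controlling $\Pr[\Ecal_C^{(1)}]$ and $\Pr[\Ecal_C^{(2)}]$. Writing this sum as $\sum_e B_e\log W_e$ over the edges $e$ on the relevant paths, with $B_e\sim\mathrm{Bern}(\eta)$ independent, $W_e\in[\frac13,\frac23]$ and $\sum_e\log W_e=\log p_C$, the moment generating function is $\prod_e(1-\eta+\eta W_e^{-t})$; I would compare this, edge by edge, to the all-$\frac12$ value $(1-\eta+\eta 2^t)^{|\log_2 p_C|}$ used in \Cref{lem:generalcsp_uniform_simple}, so that optimizing $t$ reproduces $\Pr[\Ecal_C^{(1)}]\le2\,p^{\KL(\eta+\tau_1\|\eta)}$ and $\Pr[\Ecal_C^{(2)}]\le2\,p^{\KL(\eta-\tau_2\|\eta)}$. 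With these bounds in hand, $\Ecal_C=\Ecal_C^{(1)}\lor\Ecal_C^{(2)}$ depends only on markable nodes of $\bigcup_{v\in\vbl(C)}\Tcal_v$, so it correlates with at most $\Delta(\Phi^\otimes)=\Delta$ many $\Ecal_{C'}$ (\Cref{fct:tensorized_atomic_constraint_satisfaction_problem}), and \Cref{thm:MT_LLL} produces the trees and $\Mcal$ avoiding all $\Ecal_C$ with probability $\ge0.99$ once $2\Naturale\cdot p^{\gamma}\cdot\Delta\le1$ together with the inequalities of \Cref{eq:generalcsp_uniform_all}, where $\gamma=\min\{1-\eta-\tau_1,(\eta-\tau_2-3\zeta)/2,\KL(\eta+\tau_1\|\eta),\KL(\eta-\tau_2\|\eta)\}=0.175$ --- all implied by $p^{0.175}\Delta\le10^{-7}=0.01\zeta$. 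When no $\Ecal_C$ holds, the two displayed inequalities $\Naturale\alpha(\Phi^\otimes,\Mcal,C^\otimes)\Delta\le1$ and $\Delta^2\lambda(\Phi^\otimes,\Mcal,C^\otimes)\le1/100$ drop out exactly as in the derivation of \Cref{eq:generalcsp_uniform_all} ($\alpha\le p^{1-\eta-\tau_1}$, then $(1-x)^{-1/x}\le4$ bounds $\beta$, then $x^\zeta\ge\zeta\ln x$ and $p_C\le(2/3)^{|\vbl(C^\otimes)|}$ bound $\lambda$). The construction time is dominated by building the $\Tcal_v$ and by running \Cref{thm:MT_LLL} on the bad-event system, whose width is $O(k(\Phi)Q(\Phi))$ over $|Z|\le Q(\Phi)|V|$ markable nodes, giving the claimed bound.

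The main obstacle is the edge-by-edge comparison of moment generating functions: an edge of weight $w<\frac12$ is genuinely worse than two $\frac12$-edges of the same combined weight --- because $a\mapsto\log(1-\eta+\eta e^{ta})$ is convex and vanishes at $a=0$, a path built entirely of weight-$\frac13$ edges would give a strictly larger moment generating function than the all-$\frac12$ path. So the plan really relies on the tree construction confining each path to $O(1)$ irregular edges with weights in $[\frac13,\frac23]$, and on absorbing the resulting $O(|\vbl(C^\otimes)|)$ additive slack in the exponent; this is where a careful (possibly randomized) choice of $\Tcal_v$ and, if needed, a reduced marking probability on the irregular edges enters, together with a numerical check that the (slightly) degraded exponents still meet $0.175$ for the fixed constants $\eta,\tau_1,\tau_2,\zeta$. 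A secondary but necessary point is bookkeeping: both $\Tcal_v$ and $\Mcal$ are random, so the ``variables'' fed to \Cref{thm:MT_LLL} must encode the joint per-variable randomness, and one must keep this compatible with the independence used in the Chernoff step --- distinct $\Tcal_v$ use disjoint node sets by the convention of \Cref{def:tensorized_atomic_constraint_satisfaction_problem}, which is what makes the $\sum_e B_e\log W_e$ decomposition a genuine sum of independent terms.
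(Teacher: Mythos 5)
Your overall frame matches the paper: the events $\Ecal_C^{(1)},\Ecal_C^{(2)}$, the reduction of the two displayed inequalities to \Cref{eq:generalcsp_uniform_all} with the constants $\eta,\tau_1,\tau_2,\zeta$ from \Cref{lem:generalcsp_uniform_simple}, the bounded dependence of $\Ecal_C$ and the application of \Cref{thm:MT_LLL} are all exactly how the paper proceeds. But the heart of the lemma is the per-variable moment bound, and there your mechanism does not work, as you yourself half-observe. With a fixed (near-)complete binary tree and independent Bernoulli$(\eta)$ marking of its internal nodes, the per-variable requirement you need at the \emph{fixed} $t_1,t_2$ is precisely $\E\sbra{\Naturale^{-t_1X(v,q)-(\eta+\tau_1)t_1\log|\Omega_v|}}\le|\Omega_v|^{-\KL(\eta+\tau_1\|\eta)}$ (and its $t_2$ analogue); since $a\mapsto\ln(1-\eta+\eta\Naturale^{ta})$ is strictly convex and vanishes at $0$, a single marked/unmarked decision over an edge of weight $1/3$ is strictly worse than the corresponding share of weight-$1/2$ decisions, so this bound \emph{fails} already for $|\Omega_v|=3$ (and similarly for other small non-powers of two, where the whole root-to-leaf path is ``irregular'' and there is no regular bulk to dilute the loss). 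Because the binary optimization is tight at $\gamma=0.175$, a constant per-variable loss multiplies across $\vbl(C)$ and degrades the exponent of $p_C$ below $0.175$, e.g.\ for a CSP all of whose domains have size $3$; your proposal to ``absorb the $O(|\vbl(C^\otimes)|)$ additive slack'' or ``reduce the marking probability on irregular edges'' is exactly the unproven step, not a detail.

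The paper closes this gap with \Cref{prop:binary_is_the_worst}, which is a genuinely different construction from yours: for each domain size $N$ it jointly randomizes the \emph{tree and the marking} (a mixture of two explicit trees with deterministic markings for $N\in\cbra{3,4,5,6,7}$, and for $N\ge8$ a forest of subtrees of size $\approx N^{1-\eta}$ whose top is fully marked, mixed over three sizes via \Cref{fct:balanced_projection}) so that $\E\sbra{X(v,q)}$ equals $\eta\log(1/N)$ \emph{exactly} and $X(v,q)$ is confined to an interval of length at most $\log 3$ (resp.\ $1$, $\log(5/2)$); then Hoeffding's lemma (\Cref{fct:sub-gaussian}) together with the numerical inequalities of \Cref{fct:numerical} yields the per-variable bound $|\Omega_v|^{-\KL(\cdot\|\cdot)}$, i.e.\ ``binary is the worst case.'' Independence across variables (your Chernoff step) is then legitimate because the randomness is per-variable, not because nodes are marked i.i.d. Without something playing the role of this proposition — including the case analysis for small $N$ and the accompanying numerics — the claimed constants $0.175$ and $10^{-7}$ are not established by your argument.
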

\begin{proof}
Note that the choice of $\eta,\tau_1,\tau_2,\zeta$ above and the assumption $p^{0.175}\cdot\Delta\le10^{-7}$ already ensure \Cref{eq:generalcsp_uniform_all}, which guarantees $\Naturale\cdot\alpha(\Phi^\otimes,\Mcal,C^\otimes)\cdot\Delta\le1$ and $\Delta^2\cdot\lambda(\Phi^\otimes,\Mcal,C^\otimes)\le1/100$ for all $C^\otimes\in\Ccal^\otimes$. 
Thus in the following we only need to show how to construct the state tensorizations (and thus $\Phi^\otimes$) and the marking $\Mcal$ so that $Q(\Phi^\otimes)=2$, $\kappa(\Phi^\otimes)\le2$, and no $\Ecal_C^{(1)}$ or $\Ecal_C^{(2)}$ happens.

For each $C\in\Ccal$ and integer $m\ge2$, define $S(C,m)=\cbra{v\in\vbl(C)\mid|\Omega_v|=m}$. 
Then $p_C=\prod_{m=2}^{+\infty}m^{-|S(C,m)|}$.
Let $t_1,t_2>0$ be the constants used in the proof of \Cref{lem:generalcsp_uniform_simple}, i.e.,
$$
t_1=\ln\pbra{\frac{(1-\eta)(\eta+\tau_1)}{\eta\cdot(1-\eta-\tau_1)}}\in\sbra{1.1659,1.1660}
\quad\text{and}\quad
t_2=\ln\pbra{\frac{\eta\cdot(1-\eta+\tau_2)}{(1-\eta)(\eta-\tau_2)}}\in\sbra{1.0035,1.0036}.
$$

Our construction will satisfy the following proposition. We will provide its detail and proof soon.
The key part is Item (3) which intuitively says the simple binary case is actually the worst case.

\begin{proposition}\label{prop:binary_is_the_worst}
For each $v\in V$, the construction of the state tensorization $\Tcal_v$ and the marking on internal nodes of $\Tcal_v$ is randomized and satisfies the following properties.
\begin{itemize}
\item[(1)] For each $v\in V$, the construction is independent and takes $O(|\Omega_v|)=O(Q(\Phi))$ time.
\item[(2)] Each possible $\Tcal_v$ is a binary tree where $\frac{W(z\to z_1)}{W(z\to z_2)}\le2$ (Recall $W(\cdot)$ from \Cref{eq:edge_weight}) holds for any internal node $z\in\Tcal_v$ and $\cbra{z_1,z_2}=\childs(z)$.
\item[(3)] For each $v\in V$ and $q\in\Omega_v$, we have
$$
\E\sbra{\Naturale^{-t_1\cdot X(v,q)-(\eta+\tau_1)\cdot t_1\cdot\log(|\Omega_v|)}}\le|\Omega_v|^{-\KL(\eta+\tau_1\|\eta)}
$$
and
$$
\E\sbra{\Naturale^{t_2\cdot X(v,q)+(\eta-\tau_2)\cdot t_2\cdot\log(|\Omega_v|)}}\le|\Omega_v|^{-\KL(\eta-\tau_2\|\eta)}.
$$
\end{itemize}
\end{proposition}

We first finish the proof assuming \Cref{prop:binary_is_the_worst}. Note that both $Q(\Phi^\otimes)=2$ and $\kappa(\Phi^\otimes)\le2$ follow from Item (2) of \Cref{prop:binary_is_the_worst}. Thus we focus on how to make sure no $\Ecal_C^{(1)}$ or $\Ecal_C^{(2)}$ happens.
Similarly as in the proof of \Cref{lem:generalcsp_uniform_full}, we have
\begin{align*}
\Pr\sbra{\Ecal_C^{(1)}}
&=\Pr\sbra{-t_1\cdot\sum_{m=2}^{+\infty}\sum_{v\in S(C,m)}X(v,\sigma_\False^C(v))>(\eta+\tau_1)\cdot t_1\cdot\sum_{m=2}^{+\infty}|S(C,m)|\cdot\log(m)}\\
&\le\E\sbra{\exp\cbra{-t_1\cdot\sum_{m=2}^{+\infty}\sum_{v\in S(C,m)}X(v,\sigma_\False^C(v))-(\eta+\tau_1)\cdot t_1\cdot\sum_{m=2}^{+\infty}|S(C,m)|\cdot\log(m)}}\\
&=\prod_{m=2}^{+\infty}\prod_{v\in S(C,m)}\E\sbra{\Naturale^{-t_1\cdot X(v,\sigma_\False^C(v))-(\eta+\tau_1)\cdot t_1\cdot\log(m)}}
\tag{by Item (1) of \Cref{prop:binary_is_the_worst}}\\
&\le\prod_{m=2}^{+\infty}\prod_{v\in S(C,m)}m^{-\KL(\eta+\tau_1\|\eta)}
\tag{by Item (3) of \Cref{prop:binary_is_the_worst}}\\
&=p_C^{\KL(\eta+\tau_1\|\eta)}\le p^{\KL(\eta+\tau_1\|\eta)}.
\tag{since $p_C=\prod_{m=2}^{+\infty}m^{-|S(C,m)|}$}
\end{align*}
We also have
\begin{align*}
\Pr\sbra{\Ecal_C^{(2)}}
&=\Pr\sbra{t_2\cdot\sum_{m=2}^{+\infty}\sum_{v\in S(C,m)}X(v,\sigma_\False^C(v))>-(\eta-\tau_2)\cdot t_2\cdot\sum_{m=2}^{+\infty}|S(C,m)|\cdot\log(m)}\\
&\le\prod_{m=2}^{+\infty}\prod_{v\in S(C,m)}\E\sbra{\Naturale^{t_2\cdot X(v,\sigma_\False^C(v))+(\eta-\tau_2)\cdot t_2\cdot\log(m)}}\\
&\le p^{\KL(\eta-\tau_2\|\eta)}.
\end{align*}
Let $\Ecal_C=\Ecal_C^{(1)}\lor\Ecal_C^{(2)}$. 
By Item (1) of \Cref{prop:binary_is_the_worst}, the constructions are independent for each variable. Therefore $\Ecal_C$ depends only on the constructions over $\vbl(C)$ and thus correlates with at most $\Delta$ many $\Ecal_{C'}$ (including itself) and
$$
\Pr\sbra{\Ecal_C}\le2\cdot p^{\min\cbra{\KL(\eta+\tau_1\|eta),\KL(\eta-\tau_2\|\eta)}}\le2\cdot p^{0.175}.
$$
By our assumption $p^{0.175}\cdot\Delta\le10^{-7}$, we apply \Cref{thm:MT_LLL} to find a construction of the state tensorizations (and thus $\Phi^\otimes$) and the marking $\Mcal$ to avoid all $\Ecal_C=\Ecal_C^{(1)}\lor\Ecal_C^{(2)}$. The running time is $O(k(\Phi)\Delta|V|)\cdot O(Q(\Phi))$ where $O(Q(\Phi))$ is from Item (1) of \Cref{prop:binary_is_the_worst}.
\end{proof}

Putting everything together, we have the following corollary which justifies \Cref{rmk:generalcsp_arbitrary}.
\begin{corollary}\label{cor:generalcsp_uniform}
There exists a Las Vegas algorithm which takes as input an atomic CSP $\phi=\pbra{V,\pbra{\Omega_v,\Dcal_v}_{v\in V},\Ccal}$ such that the following holds.

If each $\Dcal_v$ is the uniform distribution and $p^{0.175}\cdot\Delta\le10^{-7}$, then the algorithm outputs a perfect uniform random solution of $\Phi$ in expected time $O\pbra{k\Delta^3Q^2|V|\log(Q|V|)}$.
\end{corollary}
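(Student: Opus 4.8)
The plan is to assemble the corollary from three ingredients already established: the state-tensorization reduction (\Cref{prop:state_tensorization}), the joint construction of the tensorized CSP together with its marking (\Cref{lem:generalcsp_uniform_full}, itself resting on \Cref{prop:binary_is_the_worst}), and the perfect sampler for binary domains (\Cref{thm:binary_domains}). First I would dispose of the trivial preprocessing: any $v\in V$ with $|\Omega_v|=1$ carries a point distribution, so I fix it to its unique value and delete it; this changes nothing and lets me assume $|\Omega_v|\ge2$ for all $v$, as \Cref{lem:generalcsp_uniform_full} requires. The output distribution we aim for is $\mu_\True^\Ccal$, which for uniform $\Dcal_v$ is exactly the uniform measure on $\sigma_\True^\Ccal$.

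Next I would describe the Las Vegas loop. Repeatedly invoke the randomized construction of \Cref{lem:generalcsp_uniform_full}; each invocation produces candidate state tensorizations $(\Tcal_v)_{v\in V}$ — hence a tensorized atomic CSP $\Phi^\otimes=\pbra{Z,\pbra{\Omega_z,\Dcal_z}_{z\in Z},\Ccal^\otimes}$ — together with a marking $\Mcal\subseteq Z$, in time $O(k(\Phi)Q(\Phi)\Delta|V|)$. After each invocation I verify, in time $O(k(\Phi^\otimes)|\Ccal^\otimes|)=O(kQ\Delta|V|)$ by reading off the edge weights of the $\Tcal_v$, whether $\Naturale\cdot\alpha(\Phi^\otimes,\Mcal,C^\otimes)\cdot\Delta\le1$ and $\Delta^2\cdot\lambda(\Phi^\otimes,\Mcal,C^\otimes)\le1/100$ hold for every $C^\otimes\in\Ccal^\otimes$. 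Since $p^{0.175}\cdot\Delta\le10^{-7}$, \Cref{lem:generalcsp_uniform_full} guarantees each invocation succeeds with probability at least $0.99$, so the expected number of invocations is $O(1)$ and the expected preprocessing cost is $O(kQ\Delta|V|)$. By \Cref{fct:tensorized_atomic_constraint_satisfaction_problem} the accepted $\Phi^\otimes$ has $Q(\Phi^\otimes)=2$, $\Delta(\Phi^\otimes)=\Delta$, $p(\Phi^\otimes)=p$, and $|Z|\le Q|V|$; moreover each $\Tcal_v$ is binary with child-weight ratio at most $2$, so along any root-to-leaf path the leaf weight $1/|\Omega_v|$ shrinks by a factor at most $2/3$ per step, forcing depth $O(\log|\Omega_v|)=O(\log Q)$ and hence $k(\Phi^\otimes)=O(k\log Q)$.

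Then I would run \AtomicCSPSampling$(\Phi^\otimes,\Mcal)$. Its hypotheses are precisely the two inequalities just verified, so by \Cref{thm:binary_domains} it halts almost surely and returns $\sigma\sim\mu_\True^{\Ccal^\otimes}$ in expected time
$$
O\pbra{k(\Phi^\otimes)\cdot\Delta(\Phi^\otimes)^3\cdot|Z|\cdot\log|Z|}
=O\pbra{k\log Q\cdot\Delta^3\cdot Q|V|\cdot\log(Q|V|)}
=O\pbra{k\Delta^3Q^2|V|\log(Q|V|)},
$$
using $\log Q\le Q$; this term dominates the preprocessing. Finally I apply the $\Trans$ operation of \Cref{sec:large_domains:state_tensorization} to $\sigma$ in $O(Q|V|)$ time, and by \Cref{prop:state_tensorization} the result $\sigma^\Trans$ is distributed as $\mu_\True^\Ccal$, i.e. uniformly over solutions of $\Phi$; re-attaching the fixed unit-domain variables gives the output within the claimed running time.

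The proof is essentially a splicing of prior results, so the only points needing a line of care are: (i) \emph{perfectness} — conditioned on any valid $(\Phi^\otimes,\Mcal)$ the sampler returns exactly $\mu_\True^{\Ccal^\otimes}$, and $\mu_\True^\Ccal$ does not depend on which valid pair the loop happens to accept, so resampling the marking introduces no bias; and (ii) the bookkeeping $k(\Phi^\otimes)=O(k\log Q)=O(kQ)$ coming from the bounded depth of the tensorization trees, which is what turns the bound of \Cref{thm:binary_domains} into $O(k\Delta^3Q^2|V|\log(Q|V|))$. The genuinely difficult content does not live in this proof at all but in \Cref{lem:generalcsp_uniform_full}/\Cref{prop:binary_is_the_worst}, where the randomized tensorization must be designed so that the exponential-moment estimates match those of the plain binary case — the assertion that binary domains are the worst case; once that is granted, the corollary follows immediately.
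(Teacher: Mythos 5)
Your proposal is correct and follows essentially the same route as the paper: fix unit-domain variables, repeatedly invoke \Cref{lem:generalcsp_uniform_full} until the conditions of \Cref{thm:binary_domains} are verified, run \AtomicCSPSampling on $(\Phi^\otimes,\Mcal)$, and pull the sample back with $\Trans$ via \Cref{prop:state_tensorization}. Your sharper bookkeeping $k(\Phi^\otimes)=O(k\log Q)$ from the bounded child-weight ratio is exactly the refinement the paper relegates to a footnote, and it yields the same stated running time.
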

\begin{proof}
By fixing the variable which has domain size $1$, we may safely assume $|\Omega_v|\ge2$ for all $v\in V$.

We keep using \Cref{lem:generalcsp_uniform_full} to construct $\Phi^\otimes=\pbra{Z,\pbra{\Omega_z,\Dcal_z}_{z\in Z},\Ccal^\otimes}$ and $\Mcal$ until they satisfy the conditions in \Cref{thm:binary_domains}. Then we run the algorithm in \Cref{thm:binary_domains} to obtain a random solution of $\Phi^\otimes$ distributed perfectly as $\mu_\True^{\Ccal^\otimes}$. Then by \Cref{prop:state_tensorization}, we perform $\Trans$ operation to obtain $\sigma^\Trans\sim\mu_\True^\Ccal$ which is just a perfect uniform random solution of $\Phi$.

To check the running time, by \Cref{fct:tensorized_atomic_constraint_satisfaction_problem} we have $\Delta(\Phi^\otimes)=\Delta(\Phi)$, $p(\Phi^\otimes)=p(\Phi)$, $k(\Phi^\otimes)\le k(\Phi)\cdot Q(\Phi)$\footnote{Actually we can upper bound $k(\Phi^\otimes)$ by $k(\Phi)\cdot O(\log(Q(\Phi)))$. This is because the state tensorizations here are all ``balanced'' binary trees of depth $O(\log(Q(\Phi)))$. However this only slightly improves the bound on the running time which is not our main focus.}, and $|Z|\le Q(\Phi)|V|$.
Therefore by \Cref{lem:generalcsp_uniform_full} and \Cref{thm:binary_domains}, it takes $O(kQ\Delta|V|)$ time in expectation to construct $\Phi^\otimes$ and $\Mcal$, and $O\pbra{k\Delta^3Q^2|V|\log(Q|V|)}$ time in expectation to do the sampling.
\end{proof}

Before proving \Cref{prop:binary_is_the_worst}, we set up some technical lemmas.
\begin{fact}[e.g., {\cite[Lemma 1 and Equation (4.16)]{hoeffding1994probability}}]\label{fct:sub-gaussian}
Let $X$ be an arbitrary random variable with $a\le X\le b$ almost surely. Let $c=\E[X]$. 
Then for any $t\in\Rbb$, we have
$$
\E\sbra{\Naturale^{t\cdot X}}\le\frac{b-c}{b-a}\cdot\Naturale^{a\cdot t}+\frac{c-a}{b-a}\cdot\Naturale^{b\cdot t}\le\Naturale^{\frac{(b-a)^2t^2}8}\cdot\Naturale^{c\cdot t}.
$$
\end{fact}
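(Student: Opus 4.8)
The plan is to prove the two inequalities separately. The first one is pure convexity: for any $x\in[a,b]$ we may write $x=\lambda a+(1-\lambda)b$ with $\lambda=\frac{b-x}{b-a}\in[0,1]$, so by convexity of $y\mapsto\Naturale^{ty}$,
$$
\Naturale^{tx}\le\frac{b-x}{b-a}\cdot\Naturale^{at}+\frac{x-a}{b-a}\cdot\Naturale^{bt}.
$$
Taking expectations and using $\E[X]=c$ turns the right-hand side into $\frac{b-c}{b-a}\Naturale^{at}+\frac{c-a}{b-a}\Naturale^{bt}$, which is exactly the middle term of the claim. (If $a=b$ then $X\equiv a=c$ almost surely and every term equals $\Naturale^{at}$, so I would dispose of this degenerate case first and assume $a<b$ throughout.)

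For the second inequality I would follow the standard route for Hoeffding's lemma. Write $h=b-a>0$ and $p=\frac{c-a}{h}\in[0,1]$, so that $c=a+ph$ and the middle term equals $\Naturale^{at}\pbra{1-p+p\Naturale^{ht}}$. Since $\Naturale^{ct}=\Naturale^{at}\Naturale^{pht}$, the claimed bound is equivalent to
$$
\varphi(ht)\le\frac{(ht)^2}{8},\quad\text{where}\quad\varphi(u):=\ln\pbra{1-p+p\Naturale^{u}}-pu.
$$
Here I would compute $\varphi(0)=0$, then $\varphi'(u)=\frac{p\Naturale^u}{1-p+p\Naturale^u}-p$ so that $\varphi'(0)=0$, and $\varphi''(u)=s(u)\pbra{1-s(u)}$ where $s(u):=\frac{p\Naturale^u}{1-p+p\Naturale^u}\in[0,1]$. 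The elementary bound $s(1-s)\le1/4$ gives $\varphi''\le1/4$ everywhere, and then Taylor's theorem with Lagrange remainder yields $\varphi(u)=\tfrac12\varphi''(\xi)u^2\le u^2/8$ for some $\xi$ between $0$ and $u$. Substituting $u=ht=(b-a)t$ finishes the proof.

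There is no serious obstacle here — the statement is a textbook fact — so the only things to be careful about are (i) handling the $a=b$ degeneracy cleanly, (ii) getting the substitution $u=(b-a)t$ right so that the generic Hoeffding-lemma shape matches the stated form with $(b-a)^2t^2/8$ in the exponent, and (iii) noting that the argument works for every real $t$ (positive, negative, or zero), since both convexity and the bound $\varphi''\le1/4$ are sign-agnostic. Since the paper only applies this fact with the fixed finite constants $t_1,t_2$ and bounded summands $X(v,q)$, one could alternatively just cite \cite{hoeffding1994probability} directly, but the self-contained argument above is short enough to include.
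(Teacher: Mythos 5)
Your proof is correct: the convexity step gives the first inequality exactly, and your computation $\varphi''(u)=s(u)(1-s(u))\le 1/4$ with Taylor's theorem is the standard proof of Hoeffding's lemma, valid for all real $t$ and with the degenerate case $a=b$ handled. The paper does not prove this fact but cites Hoeffding directly, and your self-contained argument is precisely the one underlying that citation, so there is nothing to add.
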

\begin{fact}\label{fct:numerical}
Our choice of $\eta,\tau_1,\tau_2,t_1,t_2$ satisfies
$$
\tfrac{\log^2(3)t_1^2}8\le\pbra{\tau_1t_1-\tfrac{\KL(\eta+\tau_1\|\eta)}{\log(\Naturale)}}\log(x)
\text{ and }
\tfrac{\log^2(3)t_2^2}8\le\pbra{\tau_2t_2-\tfrac{\KL(\eta-\tau_2\|\eta)}{\log(\Naturale)}}\log(x)
\text{ for all }x\ge8,
$$
and
$$
\tfrac{t_1^2}8\le\pbra{\tau_1t_1-\tfrac{\KL(\eta+\tau_1\|\eta)}{\log(\Naturale)}}\log(x)
\text{ and }
\tfrac{t_2^2}8\le\pbra{\tau_2t_2-\tfrac{\KL(\eta-\tau_2\|\eta)}{\log(\Naturale)}}\log(x)
\text{ for all }x\in\cbra{3,4,6,7},
$$
and
$$
\tfrac{\log^2(5/2)t_1^2}8\le\pbra{\tau_1t_1-\tfrac{\KL(\eta+\tau_1\|\eta)}{\log(\Naturale)}}\log(x)
\text{ and }
\tfrac{\log^2(5/2)t_2^2}8\le\pbra{\tau_2t_2-\tfrac{\KL(\eta-\tau_2\|\eta)}{\log(\Naturale)}}\log(x)
\text{ for }x=5.
$$
\end{fact}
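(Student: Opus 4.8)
The plan is to reduce every asserted inequality to one of a handful of explicit numerical checks. Each displayed inequality has the common shape $c\,t_i^2/8 \le B_i\log(x)$ with $i\in\{1,2\}$, where $c\in\{\log^2(3),\,1,\,\log^2(5/2)\}$, the argument $x$ ranges over $[8,\infty)$, over $\cbra{3,4,6,7}$, or over $\cbra{5}$ in the three groups respectively, and
$$
B_1:=\tau_1t_1-\tfrac{\KL(\eta+\tau_1\|\eta)}{\log(\Naturale)},\qquad
B_2:=\tau_2t_2-\tfrac{\KL(\eta-\tau_2\|\eta)}{\log(\Naturale)}.
$$
First I would record that $B_1,B_2>0$ by substituting the numerical values $\eta=0.595$, $\tau_1=0.23$, $\tau_2=0.245-3\cdot10^{-5}$ fixed in the proof of \Cref{lem:generalcsp_uniform_simple}, together with the bracketings $t_1\in[1.1659,1.1660]$ and $t_2\in[1.0035,1.0036]$ recorded in the proof of \Cref{lem:generalcsp_uniform_full}; one gets $B_1\approx0.1454$ and $B_2\approx0.1241$ (here $\KL(\cdot\|\cdot)/\log(\Naturale)$ is simply the Kullback--Leibler divergence measured in nats).

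Second, once $B_i>0$ is known, the map $x\mapsto B_i\log(x)$ is strictly increasing, so in each group it suffices to verify the inequality at the \emph{smallest} admissible value of $x$: $x=8$ in the first group, $x=3$ in the second, and $x=5$ in the third. Since $\log(8)=3$, this collapses the whole fact to the six inequalities $\tfrac{\log^2(3)}{24}t_i^2\le B_i$, $\tfrac1{8\log(3)}t_i^2\le B_i$, $\tfrac{\log^2(5/2)}{8\log(5)}t_i^2\le B_i$ for $i\in\{1,2\}$. Comparing the three leading coefficients gives $\tfrac{\log^2(3)}{8\log(8)}>\tfrac{\log^2(5/2)}{8\log(5)}>\tfrac1{8\log(3)}$, so the first-group inequality dominates the other two; hence it is enough to prove $\tfrac{\log^2(3)}{24}t_i^2\le B_i$ for $i=1,2$, from which all six follow. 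For $i=2$ this holds with roughly $15\%$ slack, while for $i=1$ it reads about $0.1423\le0.1454$.

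The main obstacle is exactly this $i=1$ instance: its margin is only about $2\%$, so it cannot be settled by casual rounding. I would instead verify it by rigorous interval arithmetic --- upper-bounding the left-hand side via $t_1\le1.1660$ and $\log_2(3)\le1.58497$, and lower-bounding $B_1=\tau_1t_1-\ln(2)\cdot\KL(\eta+\tau_1\|\eta)$ via $t_1\ge1.1659$ together with certified bounds on $\ln(0.825/0.595)$ and $\ln(0.175/0.405)$ (the two logarithms entering $\KL(\eta+\tau_1\|\eta)$). The remaining cases have ample slack and need only crude estimates. There is no conceptual subtlety here: the content of the fact is precisely the numerical optimization that pinned down $\eta,\tau_1,\tau_2$, and this is exactly where the $0.175$ exponent in \Cref{cor:generalcsp_uniform} gets bottlenecked.
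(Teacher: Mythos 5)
Your proposal is correct and follows essentially the same route as the paper: use monotonicity of $\log(x)$ (valid since the bracketed quantities are positive) to reduce each group to its smallest admissible $x$, then verify the remaining inequalities numerically. Your extra observation that $\tfrac{\log^2(3)}{8\log(8)}>\tfrac{\log^2(5/2)}{8\log(5)}>\tfrac1{8\log(3)}$, so that only the two $x=8$ checks (with the tight $i=1$ case handled by interval arithmetic, $\approx0.1423\le0.1454$) are needed, is a minor but sound refinement of the paper's "verify numerically" step.
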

\begin{proof}
Since $\log(x)$ is an increasing function for $x>0$, it only needs to verify the first two inequalities at $x=8$ and the middle two at $x=3$. Then all of them can be verified numerically.
\end{proof}

Now we give the proof of \Cref{prop:binary_is_the_worst}.
\begin{proof}[Proof of \Cref{prop:binary_is_the_worst}]
Our construction will be independent for each $v\in V$ and depend only on $|\Omega_v|$. Item (1) (2) will be evident as we describe the construction.

Let $N=|\Omega_v|\ge2$. 
The simplest case $N=2$ is essentially \Cref{lem:generalcsp_uniform_simple}: $\Tcal_v$ has one root with two leaf nodes, and we put the root into $\Mcal$ with probability $\eta$. By the choice of $t_1,t_2$ (See the calculation in \Cref{lem:generalcsp_uniform_simple}), the two inequalities in Item (3) are actually equality.

With foresight, our construction for $N\ge3$ will satisfy the following conditions.
\begin{itemize}
\item[(A)] $\E\sbra{X(v,q)}=\eta\log(1/N)$ for each $q\in\Omega_v$.
\item[(B)] If $N=5$, then $X(v,q)\in\sbra{a_N,b_N}$ always holds for all $q\in\Omega_v$ where $b_N-a_N\le\log(5/2)$.
\item[(C)] If $N\in\cbra{3,4,6,7}$, then $X(v,q)\in\sbra{a_N,b_N}$ always holds for all $q\in\Omega_v$ where $b_N-a_N\le1$.
\item[(D)] If $N\ge8$, then $X(v,q)\in\sbra{a_N,b_N}$ always holds for all $q\in\Omega_v$ where $b_N-a_N\le\log(3)$.
\end{itemize}
Then we verify Item (3) given Item (A) (B) (C) (D) here:
\begin{align*}
\E\sbra{\Naturale^{-t_1\cdot X(v,q)-(\eta+\tau_1)\cdot t_1\cdot\log(N)}}
&=\E\sbra{\Naturale^{-t_1\cdot\pbra{X(v,q)-\eta\log(1/N)}-\tau_1\cdot t_1\cdot\log(N)}}\\
&\le\exp\cbra{\tfrac{\pbra{b_N-a_N}^2t_1^2}8-\tau_1\cdot t_1\cdot\log(N)}
\tag{by \Cref{fct:sub-gaussian} and Item (A)}\\
&\le\exp\cbra{-\KL(\eta+\tau_1\|\eta)\cdot\tfrac{\log(N)}{\log(\Naturale)}}
=N^{-\KL(\eta+\tau_1\|\eta)}
\tag{by Item (B) (C) (D) and \Cref{fct:numerical}}
\end{align*}
and similarly
\begin{align*}
\E\sbra{\Naturale^{t_2\cdot X(v,q)+(\eta-\tau_2)\cdot t_2\cdot\log(N)}}
&=\E\sbra{\Naturale^{t_2\cdot\pbra{X(v,q)-\eta\log(1/N)}-\tau_2\cdot t_2\cdot\log(N)}}\\
&\le\exp\cbra{\tfrac{\pbra{b_N-a_N}^2t_2^2}8-\tau_2\cdot t_2\cdot\log(N)}
\tag{by \Cref{fct:sub-gaussian} and Item (A)}\\
&\le N^{-\KL(\eta-\tau_2\|\eta)}.
\tag{by Item (B) (C) (D) and \Cref{fct:numerical}}
\end{align*}

Now we give the construction for $N\ge3$ and check Item (A) (B) (C) (D).

\paragraph*{\fbox{$N=5$.}}
Let $p_5\in[0,1]$ be a constant to be determined. The construction for $N=5$ is as follows.
\begin{itemize}
\item In \Cref{fig:N=5}, we select the left tree with probability $p_5$ and the right with probability $1-p_5$.
\item After fixing the tree, we assign elements in $\Omega_v$ to the leaf nodes uniformly to obtain $\Tcal_v$.
\item Finally we put the internal nodes boxed by dashed squares into the marking $\Mcal$.
\end{itemize}

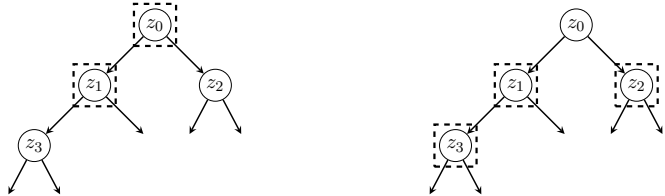
\begin{figure}[ht]
\centering
\scalebox{0.7}{
\begin{tikzpicture}[emptyC/.style={draw,circle,inner sep=2pt}, outE/.style={->,>=stealth,thick},node distance=1cm]
\node[emptyC] (v0) at (0,0) {$z_0$};
\node[draw,rectangle,dashed,minimum size=8mm,very thick] (vv0) at (v0) {};
\node[emptyC] (v1) [below left=of v0] {$z_1$};
\node[draw,rectangle,dashed,minimum size=8mm,very thick] (vv1) at (v1) {};
\node[emptyC] (v2) [below right=of v0] {$z_2$};
\node[emptyC] (v3) [below left=of v1] {$z_3$};

\node (q1) [below right=of v1] {};
\node (q2) [below left=of v2, xshift=5mm] {};
\node (q3) [below right=of v2, xshift=-5mm] {};
\node (q4) [below left=of v3, xshift=5mm] {};
\node (q5) [below right=of v3, xshift=-5mm] {};

\draw[outE] (v0) -- (v1);
\draw[outE] (v0) -- (v2);
\draw[outE] (v1) -- (v3);
\draw[outE] (v1) -- (q1);
\draw[outE] (v2) -- (q2);
\draw[outE] (v2) -- (q3);
\draw[outE] (v3) -- (q4);
\draw[outE] (v3) -- (q5);

\begin{scope}[xshift=8cm]
\node[emptyC] (u0) at (0,0) {$z_0$};
\node[emptyC] (u1) [below left=of u0] {$z_1$};
\node[draw,rectangle,dashed,minimum size=8mm,very thick] (uu1) at (u1) {};
\node[emptyC] (u2) [below right=of u0] {$z_2$};
\node[draw,rectangle,dashed,minimum size=8mm,very thick] (uu2) at (u2) {};
\node[emptyC] (u3) [below left=of u1] {$z_3$};
\node[draw,rectangle,dashed,minimum size=8mm,very thick] (uu3) at (u3) {};

\node (r1) [below right=of u1] {};
\node (r2) [below left=of u2, xshift=5mm] {};
\node (r3) [below right=of u2, xshift=-5mm] {};
\node (r4) [below left=of u3, xshift=5mm] {};
\node (r5) [below right=of u3, xshift=-5mm] {};

\draw[outE] (u0) -- (u1);
\draw[outE] (u0) -- (u2);
\draw[outE] (u1) -- (u3);
\draw[outE] (u1) -- (r1);
\draw[outE] (u2) -- (r2);
\draw[outE] (u2) -- (r3);
\draw[outE] (u3) -- (r4);
\draw[outE] (u3) -- (r5);
\end{scope}
\end{tikzpicture}}
\caption{The construction for $N=5$. Internal nodes boxed by dashed squares are put in $\Mcal$.}\label{fig:N=5}
\end{figure}

Recall $X(v,q)=\sum_{z\in\tpath(q,\Tcal_v)\cap\Mcal}\log(\Dcal_z(z(q)))$ where $z(q)\in\childs(z)$ is the child node of $z$ such that $q\in\leafs(z(q))$.
Since the leaf nodes are uniform, for any fixed $q\in\Omega_v$ we have
$$
\E\sbra{X(v,q)\mid\text{the left tree}}=\pbra{4\log(2/5)+\log(1/5)}/5<\eta\log(1/5)
$$
and
$$
\E\sbra{X(v,q)\mid\text{the right tree}}=\pbra{3\log(1/3)+2\log(1/2)}/5>\eta\log(1/5).
$$
Thus we can easily set $p_5$ to make sure 
$$
\E\sbra{X(v,q)}=p_5\cdot\E\sbra{X(v,q)\mid\text{the left tree}}+(1-p_5)\cdot\E\sbra{X(v,q)\mid\text{the right tree}}=\eta\log(1/5),
$$ 
which proves Item (A). As for Item (B), it suffices to observe 
$$
X(v,q)\in\cbra{\log(1/5),\log(1/3),\log(2/5),\log(1/2)}\subseteq[\log(1/5),\log(1/2)].
$$

\paragraph*{\fbox{$N\in\cbra{3,4,6,7}$.}}
The construction for $N\in\cbra{3,4,6,7}$ is the same as $N=5$ above except that we now use \Cref{fig:N=3467}: we will mix the left and right trees to make sure $\E\sbra{X(v,q)}=\eta\log(1/N)$ as demanded by Item (A). To do so it suffices to check $\eta\log(1/N)$ is sandwiched between $\E\sbra{X(v,q)\mid\text{the left tree}}$ and $\E\sbra{X(v,q)\mid\text{the right tree}}$.

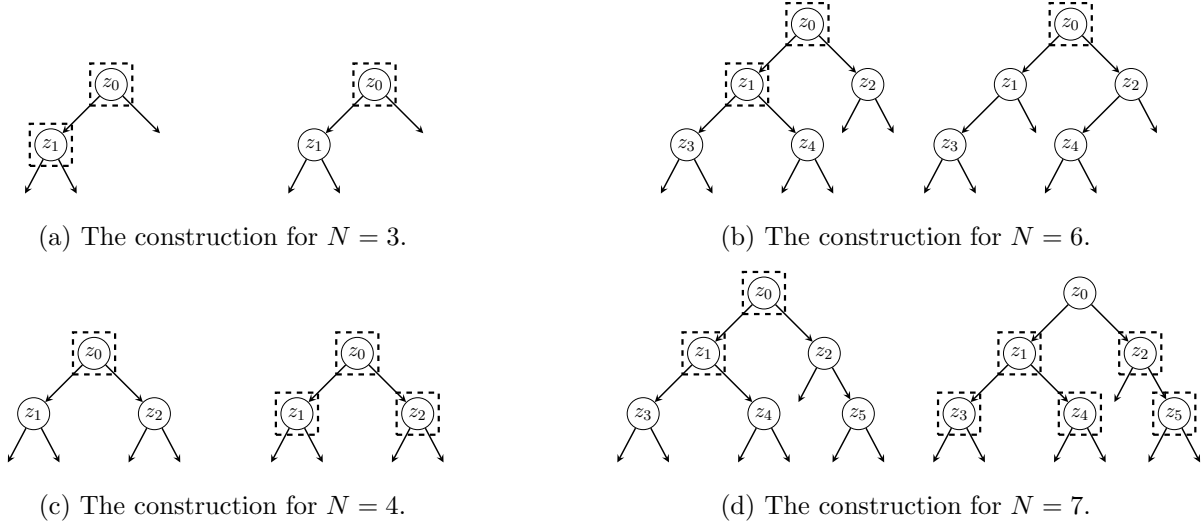
\begin{figure}[ht]
\centering
\begin{subfigure}[b]{0.4\textwidth}
\centering
\scalebox{0.7}{
\begin{tikzpicture}[emptyC/.style={draw,circle,inner sep=2pt}, outE/.style={->,>=stealth,thick},node distance=1cm]
\node[emptyC] (v0) at (0,0) {$z_0$};
\node[draw,rectangle,dashed,minimum size=8mm,very thick] (vv0) at (v0) {};
\node[emptyC] (v1) [below left=of v0] {$z_1$};
\node[draw,rectangle,dashed,minimum size=8mm,very thick] (vv1) at (v1) {};

\node (q1) [below right=of v0] {};
\node (q2) [below left=of v1, xshift=5mm] {};
\node (q3) [below right=of v1, xshift=-5mm] {};

\draw[outE] (v0) -- (v1);
\draw[outE] (v0) -- (q1);
\draw[outE] (v1) -- (q2);
\draw[outE] (v1) -- (q3);

\begin{scope}[xshift=5cm]
\node[emptyC] (u0) at (0,0) {$z_0$};
\node[draw,rectangle,dashed,minimum size=8mm,very thick] (uu0) at (u0) {};
\node[emptyC] (u1) [below left=of u0] {$z_1$};

\node (r1) [below right=of u0] {};
\node (r2) [below left=of u1, xshift=5mm] {};
\node (r3) [below right=of u1, xshift=-5mm] {};
\draw[outE] (u0) -- (u1);
\draw[outE] (u0) -- (r1);
\draw[outE] (u1) -- (r2);
\draw[outE] (u1) -- (r3);
\end{scope}
\end{tikzpicture}}
\caption{The construction for $N=3$.}\label{fig:N=3}
\end{subfigure}
\hfill
\begin{subfigure}[b]{0.5\textwidth}
\centering
\scalebox{0.7}{
\begin{tikzpicture}[emptyC/.style={draw,circle,inner sep=2pt}, outE/.style={->,>=stealth,thick},node distance=1cm]
\node[emptyC] (v0) at (0,0) {$z_0$};
\node[draw,rectangle,dashed,minimum size=8mm,very thick] (vv0) at (v0) {};
\node[emptyC] (v1) [below left=of v0] {$z_1$};
\node[draw,rectangle,dashed,minimum size=8mm,very thick] (vv1) at (v1) {};
\node[emptyC] (v2) [below right=of v0] {$z_2$};
\node[emptyC] (v3) [below left=of v1] {$z_3$};
\node[emptyC] (v4) [below right=of v1] {$z_4$};

\node (q1) [below left=of v2, xshift=5mm] {};
\node (q2) [below right=of v2, xshift=-5mm] {};
\node (q3) [below left=of v3, xshift=5mm] {};
\node (q4) [below right=of v3, xshift=-5mm] {};
\node (q5) [below left=of v4, xshift=5mm] {};
\node (q6) [below right=of v4, xshift=-5mm] {};

\draw[outE] (v0) -- (v1);
\draw[outE] (v0) -- (v2);
\draw[outE] (v1) -- (v3);
\draw[outE] (v1) -- (v4);
\draw[outE] (v2) -- (q1);
\draw[outE] (v2) -- (q2);
\draw[outE] (v3) -- (q3);
\draw[outE] (v3) -- (q4);
\draw[outE] (v4) -- (q5);
\draw[outE] (v4) -- (q6);

\begin{scope}[xshift=5cm]
\node[emptyC] (u0) at (0,0) {$z_0$};
\node[draw,rectangle,dashed,minimum size=8mm,very thick] (uu0) at (u0) {};
\node[emptyC] (u1) [below left=of u0] {$z_1$};
\node[emptyC] (u2) [below right=of u0] {$z_2$};
\node[emptyC] (u3) [below left=of u1] {$z_3$};
\node[emptyC] (u4) [below left=of u2] {$z_4$};

\node (r1) [below right=of u1, xshift=-5mm] {};
\node (r2) [below right=of u2, xshift=-5mm] {};
\node (r3) [below left=of u3, xshift=5mm] {};
\node (r4) [below right=of u3, xshift=-5mm] {};
\node (r5) [below left=of u4, xshift=5mm] {};
\node (r6) [below right=of u4, xshift=-5mm] {};

\draw[outE] (u0) -- (u1);
\draw[outE] (u0) -- (u2);
\draw[outE] (u1) -- (u3);
\draw[outE] (u1) -- (r1);
\draw[outE] (u2) -- (u4);
\draw[outE] (u2) -- (r2);
\draw[outE] (u3) -- (r3);
\draw[outE] (u3) -- (r4);
\draw[outE] (u4) -- (r5);
\draw[outE] (u4) -- (r6);
\end{scope}
\end{tikzpicture}}
\caption{The construction for $N=6$.}\label{fig:N=6}
\end{subfigure}\\\vspace{5pt}
\begin{subfigure}[b]{0.4\textwidth}
\centering
\scalebox{0.7}{
\begin{tikzpicture}[emptyC/.style={draw,circle,inner sep=2pt}, outE/.style={->,>=stealth,thick},node distance=1cm]
\node[emptyC] (v0) at (0,0) {$z_0$};
\node[draw,rectangle,dashed,minimum size=8mm,very thick] (vv0) at (v0) {};
\node[emptyC] (v1) [below left=of v0] {$z_1$};
\node[emptyC] (v2) [below right=of v0] {$z_2$};

\node (q1) [below left=of v1, xshift=5mm] {};
\node (q2) [below right=of v1, xshift=-5mm] {};
\node (q3) [below left=of v2, xshift=5mm] {};
\node (q4) [below right=of v2, xshift=-5mm] {};

\draw[outE] (v0) -- (v1);
\draw[outE] (v0) -- (v2);
\draw[outE] (v1) -- (q1);
\draw[outE] (v1) -- (q2);
\draw[outE] (v2) -- (q3);
\draw[outE] (v2) -- (q4);

\begin{scope}[xshift=5cm]
\node[emptyC] (u0) at (0,0) {$z_0$};
\node[draw,rectangle,dashed,minimum size=8mm,very thick] (uu0) at (u0) {};
\node[emptyC] (u1) [below left=of u0] {$z_1$};
\node[draw,rectangle,dashed,minimum size=8mm,very thick] (uu1) at (u1) {};
\node[emptyC] (u2) [below right=of u0] {$z_2$};
\node[draw,rectangle,dashed,minimum size=8mm,very thick] (uu2) at (u2) {};

\node (r1) [below left=of u1, xshift=5mm] {};
\node (r2) [below right=of u1, xshift=-5mm] {};
\node (r3) [below left=of u2, xshift=5mm] {};
\node (r4) [below right=of u2, xshift=-5mm] {};

\draw[outE] (u0) -- (u1);
\draw[outE] (u0) -- (u2);
\draw[outE] (u1) -- (r1);
\draw[outE] (u1) -- (r2);
\draw[outE] (u2) -- (r3);
\draw[outE] (u2) -- (r4);
\end{scope}
\end{tikzpicture}}
\caption{The construction for $N=4$.}\label{fig:N=4}
\end{subfigure}
\hfill
\begin{subfigure}[b]{0.5\textwidth}
\centering
\scalebox{0.7}{
\begin{tikzpicture}[emptyC/.style={draw,circle,inner sep=2pt}, outE/.style={->,>=stealth,thick},node distance=1cm]
\node[emptyC] (v0) at (0,0) {$z_0$};
\node[draw,rectangle,dashed,minimum size=8mm,very thick] (vv0) at (v0) {};
\node[emptyC] (v1) [below left=of v0] {$z_1$};
\node[draw,rectangle,dashed,minimum size=8mm,very thick] (vv1) at (v1) {};
\node[emptyC] (v2) [below right=of v0] {$z_2$};
\node[emptyC] (v3) [below left=of v1] {$z_3$};
\node[emptyC] (v4) [below right=of v1] {$z_4$};
\node[emptyC] (v5) [below right=of v2, xshift=-5mm] {$z_5$};

\node (q1) [below left=of v2, xshift=5mm] {};
\node (q2) [below left=of v3, xshift=5mm] {};
\node (q3) [below right=of v3, xshift=-5mm] {};
\node (q4) [below left=of v4, xshift=5mm] {};
\node (q5) [below right=of v4, xshift=-5mm] {};
\node (q6) [below left=of v5, xshift=5mm] {};
\node (q7) [below right=of v5, xshift=-5mm] {};

\draw[outE] (v0) -- (v1);
\draw[outE] (v0) -- (v2);
\draw[outE] (v1) -- (v3);
\draw[outE] (v1) -- (v4);
\draw[outE] (v2) -- (v5);
\draw[outE] (v2) -- (q1);
\draw[outE] (v3) -- (q2);
\draw[outE] (v3) -- (q3);
\draw[outE] (v4) -- (q4);
\draw[outE] (v4) -- (q5);
\draw[outE] (v5) -- (q6);
\draw[outE] (v5) -- (q7);

\begin{scope}[xshift=6cm]
\node[emptyC] (u0) at (0,0) {$z_0$};
\node[emptyC] (u1) [below left=of u0] {$z_1$};
\node[draw,rectangle,dashed,minimum size=8mm,very thick] (uu1) at (u1) {};
\node[emptyC] (u2) [below right=of u0] {$z_2$};
\node[draw,rectangle,dashed,minimum size=8mm,very thick] (uu2) at (u2) {};
\node[emptyC] (u3) [below left=of u1] {$z_3$};
\node[draw,rectangle,dashed,minimum size=8mm,very thick] (uu3) at (u3) {};
\node[emptyC] (u4) [below right=of u1] {$z_4$};
\node[draw,rectangle,dashed,minimum size=8mm,very thick] (uu4) at (u4) {};
\node[emptyC] (u5) [below right=of u2, xshift=-5mm] {$z_5$};
\node[draw,rectangle,dashed,minimum size=8mm,very thick] (uu5) at (u5) {};

\node (r1) [below left=of u2, xshift=5mm] {};
\node (r2) [below left=of u3, xshift=5mm] {};
\node (r3) [below right=of u3, xshift=-5mm] {};
\node (r4) [below left=of u4, xshift=5mm] {};
\node (r5) [below right=of u4, xshift=-5mm] {};
\node (r6) [below left=of u5, xshift=5mm] {};
\node (r7) [below right=of u5, xshift=-5mm] {};

\draw[outE] (u0) -- (u1);
\draw[outE] (u0) -- (u2);
\draw[outE] (u1) -- (u3);
\draw[outE] (u1) -- (u4);
\draw[outE] (u2) -- (u5);
\draw[outE] (u2) -- (r1);
\draw[outE] (u3) -- (r2);
\draw[outE] (u3) -- (r3);
\draw[outE] (u4) -- (r4);
\draw[outE] (u4) -- (r5);
\draw[outE] (u5) -- (r6);
\draw[outE] (u5) -- (r7);
\end{scope}
\end{tikzpicture}}
\caption{The construction for $N=7$.}\label{fig:N=7}
\end{subfigure}
\caption{The construction for $N\in\cbra{3,4,6,7}$.}\label{fig:N=3467}
\end{figure}

For $N=3$ we use \Cref{fig:N=3}. Then $X(v,q)\in\cbra{\log(1/3),\log(2/3)}$ which proves Item (C). Also
$$
\E\sbra{X(v,q)\mid\text{the left tree}}=\log(1/3)<\eta\log(1/3)
$$
and
$$
\E\sbra{X(v,q)\mid\text{the right tree}}=\pbra{\log(1/3)+2\log(2/3)}/3>\eta\log(1/3).
$$

For $N=4$ we use \Cref{fig:N=4}. Then $X(v,q)\in\cbra{\log(1/2),\log(1/4)}$ which proves Item (C). Also
$$
\E\sbra{X(v,q)\mid\text{the left tree}}=\log(1/2)>\eta\log(1/4)
$$
and
$$
\E\sbra{X(v,q)\mid\text{the right tree}}=\log(1/4)<\eta\log(1/4).
$$

For $N=6$ we use \Cref{fig:N=6}. Then $X(v,q)\in\cbra{\log(1/3),\log(1/2)}$ which proves Item (C). Also
$$
\E\sbra{X(v,q)\mid\text{the left tree}}=\log(1/3)<\eta\log(1/6)
$$
and
$$
\E\sbra{X(v,q)\mid\text{the right tree}}=\log(1/2)>\eta\log(1/6).
$$

For $N=7$ we use \Cref{fig:N=7}. Then $X(v,q)\in\cbra{\log(2/7),\log(3/7),\log(1/4),\log(1/3)}$ which proves Item (C). Also
$$
\E\sbra{X(v,q)\mid\text{the left tree}}=\pbra{4\log(2/7)+3\log(3/7)}/7>\eta\log(1/7)
$$
and
$$
\E\sbra{X(v,q)\mid\text{the right tree}}=\pbra{4\log(1/4)+3\log(1/3)}/7<\eta\log(1/7).
$$

\paragraph*{\fbox{$N\ge8$.}}
For each $x\in[N]$ define $A(N,x)=\lfloor N/x\rfloor\cdot(x+1)-N$ and $B(N,x)=N-\lfloor N/x\rfloor\cdot x$.
Let $R=\lfloor N^{1-\eta}\rfloor$. The proof of the following technical result is deferred to \Cref{sec:proof_of_fct:balanced_projection}.

\begin{fact}\label{fct:balanced_projection}
If $x\in\cbra{R-1,R,R+1}$, then $1\le x\le N$ and $A(N,x),B(N,x)\ge0$.
\end{fact}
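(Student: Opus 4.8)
The plan is to split \Cref{fct:balanced_projection} into the range claim $1\le x\le N$ and the positivity claims $A(N,x),B(N,x)\ge0$, and in each case reduce to an elementary inequality about $R=\lfloor N^{1-\eta}\rfloor$ that can be checked from $\eta=0.595$ and $N\ge8$. Since $x$ ranges over $\cbra{R-1,R,R+1}$, the range claim amounts to $R\ge2$ and $R+1\le N$. For the latter, $R=\lfloor N^{1-\eta}\rfloor\le N^{1-\eta}<N$ because $1-\eta\in(0,1)$ and $N\ge2$, and since $R$ is an integer this gives $R\le N-1$. For the former, $N^{1-\eta}$ is increasing in $N$ and $8^{1-\eta}=2^{3(1-\eta)}=2^{1.215}>2$, so $N^{1-\eta}\ge2$ for all $N\ge8$, whence $R\ge2$.

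Given the range claim, $B(N,x)=N-\lfloor N/x\rfloor x\ge0$ is immediate from the definition of the floor, so the only real content is $A(N,x)\ge0$. Writing $N=qx+r$ with $q=\lfloor N/x\rfloor\ge1$ (using $1\le x\le N$) and $0\le r=N\bmod x\le x-1$, a one-line computation gives $A(N,x)=q(x+1)-N=q-r$, so it suffices to prove $q\ge r$; since $r\le x-1$ it is enough to show $q\ge x-1$, i.e.\ $N\ge x(x-1)$. The worst case among the three values of $x$ is $x=R+1\le N^{1-\eta}+1$, for which $x(x-1)=(N^{1-\eta}+1)N^{1-\eta}=N^{2(1-\eta)}+N^{1-\eta}$, so the task reduces to checking
\[
N^{2(1-\eta)}+N^{1-\eta}\le N,\qquad\text{equivalently}\qquad N^{-(2\eta-1)}+N^{-\eta}\le1 .
\]
Both terms on the left are decreasing in $N$ (as $2\eta-1=0.19>0$ and $\eta=0.595>0$), so it suffices to verify the inequality at $N=8$, where $8^{-0.19}+8^{-0.595}=2^{-0.57}+2^{-1.785}<1$.

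I do not expect a genuine obstacle: every step is either a definitional manipulation ($A(N,x)=q-r$, $B(N,x)\ge0$) or a single monotone inequality in $N$ evaluated at $N=8$. The only point needing care is that the bound $x\le R+1$ together with $R\le N^{1-\eta}$ makes the final numeric check tight but true; I would therefore double-check the arithmetic at $N=8$ and note that the stated value $\eta=0.595$ is exactly what makes $2\eta-1$ positive, so any slightly larger $\eta$ would only help.
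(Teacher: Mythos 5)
Your proof is correct, and it handles the only substantive part, $A(N,x)\ge0$, by a somewhat different (and slightly sharper) route than the paper. The paper bounds $A(N,x)>N/x-x-1$, hence $A(N,x)\ge\lfloor N/x-x\rfloor$, which gives nonnegativity under the sufficient condition $x\le\sqrt N$; since $R\le N^{1-\eta}\le\sqrt N$ this covers $x\in\cbra{R-1,R}$ immediately, but for $x=R+1$ the condition $N^{1-\eta}\le\sqrt N-1$ only holds for $N\ge18$, so the paper finishes with a numerical check of the cases $8\le N\le17$. You instead compute $A(N,x)$ exactly as $q-r$ with $N=qx+r$, reduce to $q\ge x-1$, i.e.\ $N\ge x(x-1)$, and then bound the worst case $x=R+1\le N^{1-\eta}+1$ by the single monotone inequality $N^{-(2\eta-1)}+N^{-\eta}\le1$, verified at $N=8$ (indeed $2^{-0.57}+2^{-1.785}\approx0.96<1$). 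Because $N\ge x(x-1)$ is a weaker requirement than $x\le\sqrt N$, your reduction absorbs the $x=R+1$ case uniformly and eliminates the finite case analysis, at the cost of one extra line of algebra; the range claim $2\le R\le N-1$ and $B(N,x)\ge0$ are handled the same way in both arguments. One trivial nitpick: your closing remark that $\eta=0.595$ is ``exactly'' what makes $2\eta-1$ positive overstates things (any $\eta>1/2$ does), but nothing in the proof depends on it.
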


For each $x\in\cbra{R-1,R,R+1}$ we have the following construction (See \Cref{fig:N>=8} for an intuition), the correctness of which is guaranteed by \Cref{fct:balanced_projection}.
\begin{itemize}
\item Let $\Tcal_1,\ldots,\Tcal_{A(N,x)}$ (resp., $\Tcal_{A(N,x)+1},\ldots,\Tcal_{A(N,x)+B(N,x)}$) be balanced binary trees that each has $x$ (resp., $x+1$) leaf nodes.\footnote{Here ``balanced'' simply means the sub-trees of sibling nodes have size difference at most $1$. In particular, this guarantees Item (2) of \Cref{prop:binary_is_the_worst}.}
\item Define distribution $\Dcal_x$ supported on $\cbra{\Tcal_1,\ldots,\Tcal_{A(N,x)+B(N,x)}}$ by setting
$$
\Dcal_x(\Tcal_i)=\begin{cases}
x/N & i\le A(N,x),\\
(x+1)/N & i>A(N,x).
\end{cases}
$$
Then construct a binary tree on top of $\cbra{\Tcal_1,\ldots,\Tcal_{A(N,x)+B(N,x)}}$ using \Cref{lem:construct_state_tensorization} and $\Dcal_x$.\footnote{Note that $\kappa(\Dcal_x)\le(x+1)/x\le2$. By Item (2) of \Cref{lem:construct_state_tensorization}, Item (2) of \Cref{prop:binary_is_the_worst} is still preserved. Meanwhile, since $A(N,x)+B(N,x)\le N/x=O(N^\eta)$, this step takes $O(N^\eta\log(N))=O(N)$ time.}
\item Assign elements in $\Omega_v$ uniformly to $A(N,x)\cdot x+B(N,x)\cdot(x+1)=N$ leaf nodes to get $\Tcal_v$.
\item Finally we put all the nodes on top of $\cbra{\Tcal_1,\ldots,\Tcal_{A(N,x)+B(N,x)}}$ into the marking $\Mcal$.
\end{itemize}

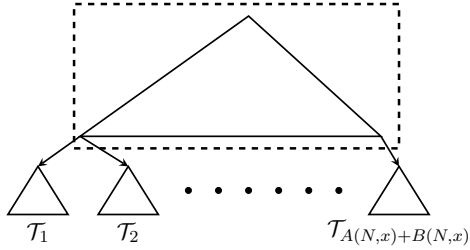
\begin{figure}[ht]
\centering
\scalebox{0.8}{
\begin{tikzpicture}[blackC/.style={draw,circle,inner sep=1pt,fill=black}, outE/.style={->,>=stealth,thick},node distance=1cm]
\draw[thick] (0,0) -- (1,0) -- (0.5,0.8) -- cycle;
\node (v0) at (0.5,-0.3) {$\Tcal_1$};
\draw[thick] (1.5,0) -- (2.5,0) -- (2,0.8) -- cycle;
\node (v1) at (2,-0.3) {$\Tcal_2$};
\node[blackC] (v2) at (3,0.4) {};
\node[blackC] (v3) at (3.5,0.4) {};
\node[blackC] (v4) at (4,0.4) {};
\node[blackC] (v5) at (4.5,0.4) {};
\node[blackC] (v6) at (5,0.4) {};
\node[blackC] (v7) at (5.5,0.4) {};
\draw[thick] (6,0) -- (7,0) -- (6.5,0.8) -- cycle;
\node (v8) at (6.5,-0.3) {$\Tcal_{A(N,x)+B(N,x)}$};

\draw[thick] (1.2,1.3) -- (6.2,1.3) -- (4,3.3) -- cycle;
\draw[very thick, dashed] (1.1,1.1) -- (6.5,1.1) -- (6.5,3.5) -- (1.1,3.5) -- cycle;
\draw[outE] (1.2,1.3) -- (0.5,0.8);
\draw[outE] (1.2,1.3) -- (2,0.8);
\draw[outE] (6.2,1.3) -- (6.5,0.8);
\end{tikzpicture}}
\caption{The construction for $N\ge8$ and $x\in\cbra{R-1,R,R+1}$. Nodes inside the dashed square are put in $\Mcal$.}\label{fig:N>=8}
\end{figure}

Observe that for any fixed $x\in\cbra{R-1,R,R+1}$ and any $q\in\Omega_v$, the construction gives $X(v,q)\in\cbra{\log(x/N),\log((x+1)/N)}$ and 
$$
\E\sbra{X(v,q)\mid x}=\frac{A(N,x)\cdot x}N\log\pbra{\frac xN}+\frac{B(N,x)\cdot(x+1)}N\log\pbra{\frac{x+1}N}\in\sbra{\log\pbra{\frac xN},\log\pbra{\frac {x+1}N}}.
$$
Therefore 
$$
\E\sbra{X(v,q)\mid x=R+1}\ge\log((R+1)/N)=\log\pbra{\pbra{\lfloor N^{1-\eta}\rfloor+1}/N}\ge\eta\log(1/N)
$$
and
$$
\E\sbra{X(v,q)\mid x=R-1}\le\log(R/N)=\log\pbra{\lfloor N^{1-\eta}\rfloor/N}\le\eta\log(1/N).
$$

Now we have two cases:
\begin{itemize}
\item If $\E\sbra{X(v,q)\mid x=R}\ge\eta\log(1/N)$\footnote{We emphasize that $\E\sbra{X(v,q)\mid x}$ has the same value for all $q\in\Omega_v$.}, we mix the construction of $x=R-1$ and $x=R$ to make sure $\E\sbra{X(v,q)}=\eta\log(1/N)$ for Item (A). Then $X(v,q)\in\cbra{\log\pbra{\frac{R-1}N},\log\pbra{\frac RN},\log\pbra{\frac{R+1}N}}$ which verifies Item (D) since $R\ge2$.
\item Otherwise, we mix the construction of $x=R$ and $x=R+1$ in the similar way. Then $X(v,q)\in\cbra{\log\pbra{\frac RN},\log\pbra{\frac{R+1}N},\log\pbra{\frac{R+2}N}}$ which also verifies Item (D).
\qedhere
\end{itemize}
\end{proof}

\section*{Acknowledgement}
KH wants to thank Weiming Feng for helpful discussion. 
KW wants to thank Chao Liao, Pinyan Lu, Jiaheng Wang, Kuan Yang, Yitong Yin, Chihao Zhang for helpful discussion on related topics in summer 2020. KW also wants to thank Kuan Yang for the help with \emph{Mathematica}. Finally we thank Weiming Feng, Chunyang Wang, and Kuan Yang for helpful comments on an earlier version of the paper.

\bibliographystyle{alphaurl} 
\bibliography{ref}

\appendix

\section[Proof of Fact 5.21]{Proof of \Cref{fct:balanced_projection}}\label{sec:proof_of_fct:balanced_projection}

Recall our setting: $N\ge8$ is an integer, $R=\lfloor N^{1-\eta}\rfloor$ where $\eta=0.595$, $A(N,x)=\lfloor N/x\rfloor\cdot(x+1)-N$, and $B(N,x)=N-\lfloor N/x\rfloor\cdot x$.
\begin{fact*}[\Cref{fct:balanced_projection} restated]
If $x\in\cbra{R-1,R,R+1}$, then $1\le x\le N$ and $A(N,x),B(N,x)\ge0$.
\end{fact*}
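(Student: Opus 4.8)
The plan is to reduce the entire claim to the single inequality $N\ge x(x-1)$ and then check that inequality for all $x\le R+1$, exploiting that $\eta=0.595$ makes $2(1-\eta)=0.81<1$. First I would dispose of $B(N,x)\ge 0$, which is free: for any $x\ge 1$ we have $\lfloor N/x\rfloor\cdot x\le N$, so $B(N,x)=N-\lfloor N/x\rfloor\cdot x\ge 0$, and moreover $N<(\lfloor N/x\rfloor+1)x$ gives $B(N,x)\le x-1$. Next I rewrite
\[
A(N,x)=\lfloor N/x\rfloor(x+1)-N=\bigl(\lfloor N/x\rfloor\cdot x-N\bigr)+\lfloor N/x\rfloor=\lfloor N/x\rfloor-B(N,x)\ge\lfloor N/x\rfloor-(x-1).
\]
Hence $A(N,x)\ge 0$ follows once $\lfloor N/x\rfloor\ge x-1$; since $x-1$ is a (nonnegative) integer this is equivalent to $N/x\ge x-1$, i.e. $N\ge x(x-1)$.

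\textbf{Key steps.} It then remains to verify, for $x\in\cbra{R-1,R,R+1}$, that $1\le x\le N$ and $x(x-1)\le N$. Since $N\ge 8$, I note $R=\lfloor N^{1-\eta}\rfloor\ge\lfloor 8^{0.405}\rfloor=2$, so $x\ge R-1\ge 1$. For the main bound, $x\le R+1\le N^{1-\eta}+1$ and $R\le N^{1-\eta}$, and the map $t\mapsto t(t-1)$ is increasing for $t\ge 1$, so
\[
x(x-1)\le (R+1)R\le\bigl(N^{1-\eta}+1\bigr)N^{1-\eta}=N^{2(1-\eta)}+N^{1-\eta}=N^{0.81}+N^{0.405}.
\]
Finally $N^{0.81}+N^{0.405}\le N$ is equivalent to $f(N):=N^{-0.19}+N^{-0.595}\le 1$; since $f$ is strictly decreasing on $[1,\infty)$ it suffices to evaluate $f(8)=8^{-0.19}+8^{-0.595}<1$ (numerically $\approx 0.67+0.29$). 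This yields $x(x-1)\le N$, and then $x\le N$ is immediate: if $x=1$ it is trivial, and if $x\ge 2$ then $x\le x(x-1)\le N$.

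\textbf{Main obstacle.} There is no genuine difficulty: the only non-formal step is the numerical check $f(8)<1$ combined with the monotonicity of $f$, and the whole argument hinges on the fact that $2(1-\eta)=0.81<1$, which is precisely the property of $\eta=0.595$ that forces $N^{0.81}+N^{0.405}<N$ already at $N=8$. Everything else is elementary bookkeeping with floors. If one wanted to avoid even the numerical evaluation, one could instead use the crude bounds $N^{0.81}\le N/2$ for $N$ beyond a fixed threshold and $N^{0.405}\le N/2$ for $N\ge 4$, and check the finitely many remaining small values of $N$ by hand; but the monotonicity argument above is cleaner.
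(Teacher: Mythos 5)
Your proof is correct. The bookkeeping is sound: $B(N,x)\ge 0$ is immediate, the identity $A(N,x)=\lfloor N/x\rfloor-B(N,x)$ together with $B(N,x)\le x-1$ reduces $A(N,x)\ge0$ to $\lfloor N/x\rfloor\ge x-1$, and since $x-1$ is a nonnegative integer this is exactly $N\ge x(x-1)$; the bounds $R\ge\lfloor 8^{0.405}\rfloor=2$ and $R\le N^{1-\eta}$ then reduce everything to $N^{0.81}+N^{0.405}\le N$, i.e. $N^{-0.19}+N^{-0.595}\le1$, which holds at $N=8$ (about $0.674+0.290<1$) and hence for all $N\ge8$ by monotonicity.

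The route differs from the paper's in a useful way. The paper bounds $A(N,x)\ge\lfloor N/x-x\rfloor$, so its sufficient condition is $x\le\sqrt N$ (i.e. $N\ge x^2$); this covers $x\in\{R-1,R\}$ directly via $R\le N^{1-\eta}\le\sqrt N$, but for $x=R+1$ it needs $N^{1-\eta}\le\sqrt N-1$, which only holds for $N\ge18$, so the paper finishes with a finite numerical check of $A(N,R+1)$ for $8\le N\le17$. Your sharper threshold $N\ge x(x-1)$ (gained by writing $A=\lfloor N/x\rfloor-B$ rather than discarding the $+1$ in $\lfloor N/x\rfloor(x+1)$) absorbs the $x=R+1$ case uniformly, replacing the case split by a single evaluation of a monotone function at $N=8$. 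Both proofs ultimately rest on one numerical check exploiting $2(1-\eta)<1$; yours is the cleaner and more uniform of the two, while the paper's is marginally more direct for the two easy values of $x$. Also note your side remark $x\le N$ via $x\le x(x-1)$ for $x\ge2$ is fine, and matches the paper's separate verification $R\le\sqrt N\le N-1$.
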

\begin{proof}
Since $1\le x\le N$ is equivalent to $2\le R\le N-1$, we only need to check
$$
2=\lfloor 8^{1-\eta}\rfloor\le\lfloor N^{1-\eta}\rfloor=R\le N^{1-\eta}\le\sqrt N\le N-1.
$$
Also $B(N,x)$ is always non-negative as $\lfloor N/x\rfloor\le N/x$. Hence we focus on the $A(N,x)\ge0$ part.

Since $\lfloor t\rfloor>t-1$, we have 
$$
A(N,x)>\pbra{\frac Nx-1}\cdot(x+1)-N=\frac Nx-x-1.
$$
Since $n>t$ is equivalent to $n\ge\lfloor t+1\rfloor$ for all integer $n$, we have
$$
A(N,x)\ge\left\lfloor\frac Nx-x\right\rfloor.
$$
Therefore if $x\le\sqrt N$, then $A(N,x)\ge0$. 
Since $R=\lfloor N^{1-\eta}\rfloor\le\sqrt N$, this shows $A(N,R-1)$ and $A(N,R)$ are all non-negative.
Now we deal with $A(N,R+1)$.
Note that $\lfloor N^{1-\eta}\rfloor\le N^{1-\eta}\le\sqrt N-1$ holds for all $N\ge18$, thus $R+1\le\sqrt N$ and $A(N,R+1)\ge0$ for all $N\ge18$. Finally for $8\le N\le17$, $A(N,R+1)\ge0$ can be verified numerically.
\end{proof}

\end{document}